% % % % % %
% Sensitivity of $\ell_{1}$ minimization to parameter choice
%
% Written by:
%   Aaron Berk (aberk@math.ubc.ca)
%   Yaniv Plan (yaniv@math.ubc.ca)
%   Ozgur Yilmaz (oyilmaz@math.ubc.ca)
%
% Created 21 November 2018 from the arXiv draft in:
%   ../proximal-denoising-parameter-instability/
%
% To be submitted to:
%  Information and Inference: A Journal of the IMA
%
\documentclass[numbib]{imaiaiai}

\usepackage{xcolor}
\usepackage{url}

\usepackage{enumitem}
\usepackage{color}
\usepackage{mathtools}
\usepackage{bbm}
\usepackage{caption}
% for subfigures
\usepackage{subcaption}

% for links
\definecolor{linkblue}{rgb}{0,0,0.7}
\definecolor{citeblue}{rgb}{0,0.3,0.5}
\usepackage[colorlinks = true, linkcolor = linkblue, urlcolor=blue, citecolor = citeblue]{hyperref}

% for math
\usepackage{amsmath,amssymb,amsthm}
% for images
\usepackage{graphicx}

% equation tags specific to this work
\def\lspd{$(\mathrm{LS}_{\tau}^{*})$}
\def\qppd{$(\mathrm{QP}_{\lambda}^{*})$}
\def\bppd{$(\mathrm{BP}_{\sigma}^{*})$}
\def\ls{$(\mathrm{LS}_{\tau, K})$}
\def\qp{$(\mathrm{QP}_{\lambda, K})$}
\def\bp{$(\mathrm{BP}_{\sigma, K})$}

% Handy math commands
\DeclareMathOperator*{\argmin}{\arg\min}
\DeclareMathOperator{\1}{\mathbbm{1}}
\DeclareMathOperator{\sgn}{\mathrm{sgn}}
\DeclareMathOperator{\prox}{\mathrm{prox}}

% More handy math commands

\newcommand{\iid}{\ensuremath{\overset{\text{iid}}{\sim}}}
\newcommand{\E}{\mathbb E}

\newcommand{\reals}{\ensuremath{\mathbb{R}}}
\newcommand{\nats}{\ensuremath{\mathbb{N}}}

\newcommand{\sph}{\ensuremath{\mathbb{S}}}
\newcommand{\dee}[2]{\ensuremath{\frac{\mathrm{d}#1}{\mathrm{d}#2}}}

\newcommand{\ip}[1]{\ensuremath{\langle #1\rangle}}

\renewcommand{\d}{\ensuremath{\,\mathrm{d}}}

% handy Latin shortcuts
\newcommand{\ie}{\emph{i.\@e.\@,~}}
\newcommand{\eg}{\emph{e.\@g.\@,~}}

% hyphenating weird words
\hyphenation{sub-expo-nen-tial}
\hyphenation{sub-gaussian}

% enumeration settings
\setlist[enumerate, 1]{align=left, label=\arabic*.}
\setlist[enumerate, 2]{align=left, label=(\alph*)}
\setlist[itemize,1]{align=left, topsep=-2pt}

\date{\normalsize\today}

%\jno{iarxxx}          %%% is for doi number
%\received{XX XX XXXX} %%% for received date
%\revised{XX XX XXXX}  %%% for revised date
%\accepted{XX XX XXXX} %%% for accepted date

\AtBeginDocument{
  \label{CorrectFirstPageLabel}
  
}

% Begin!
\begin{document}

% title information
\title{\Large Sensitivity of $\ell_{1}$ minimization to parameter choice}

\shorttitle{Sensitivity of $\ell_{1}$ minimization to parameter choice}
\shortauthorlist{Berk, A., Plan, Y., Yilmaz, \"O}
  
\author{%
  {\sc Aaron Berk}$^{*}$\\[2pt]
  Dept. Mathematics, University of British Columbia\\
  $ˆ*$\email{Corresponding author: aberk$@$math.ubc.ca}\\[6pt]
  {\sc Yaniv Plan}\\[2pt]
  Dept. Mathematics, University of British Columbia\\
  {yaniv$@$math.ubc.ca}\\[6pt]
  {\sc and}\\[6pt]
  {\sc \"Ozg\"ur Yilmaz} \\[2pt]
  Dept. Mathematics, University of British Columbia\\
  {oyilmaz$@$math.ubc.ca}}

\maketitle

% See 20190225 version for embedded sticky notes with the wording we came up
% with in our meeting.
\begin{abstract}
  {The use of generalized \textsc{Lasso} is a common technique for recovery of
    structured high-dimensional signals. Each generalized \textsc{Lasso}
    program has a governing parameter whose optimal value depends on properties
    of the data. At this optimal value, compressed sensing theory explains why
    \textsc{Lasso} programs recover structured high-dimensional signals with
    minimax order-optimal error. Unfortunately in practice, the optimal choice
    is generally unknown and must be estimated. Thus, we investigate stability
    of each \textsc{Lasso} program with respect to its governing parameter. Our
    goal is to aid the practitioner in answering the following question:
    \emph{given real data, which \textsc{Lasso} program should be used?} We
    take a step towards answering this by analyzing the case where the
    measurement matrix is identity (the so-called proximal denoising setup) and
    we use $\ell_{1}$ regularization. For each \textsc{Lasso} program, we
    specify settings in which that program is provably unstable with respect to
    its governing parameter. We support our analysis with detailed numerical
    simulations. For example, there are settings where a 0.1\% underestimate of
    a \textsc{Lasso} parameter can increase the error significantly; and a 50\%
    underestimate can cause the error to increase by a factor of $10^{9}$.}
  {Parameter instability, Sparse proximal denoising, \textsc{Lasso}, Compressed
    sensing, Convex optimization}
\end{abstract}

\section{Introduction}
\label{sec:introduction}

A fundamental problem of signal processing centers the development and analysis
of efficacious methods for structured signal recovery that are widely applicable
in practice. Frequently in applications, the signal is assumed to be structured
according to some data model and measured by a particular acquisition
method. For example, in image deblurring one might assume the objects of
interest lie in the dual of a Besov space \cite{meyer2001oscillating, garnett2007image}, while in
MRI applications, one might assume the images are sparse in a wavelet domain,
and measured by subsampling their Fourier coefficients
\cite{lustig2007sparse}. There is extensive literature concerned with those
applications in which the goal is to recover the ground-truth signal from
acquired measurements by a prescribed convex program that exploits the signal
structure. For example, compressed sensing (CS) has demonstrated that a
scale-invariant structure such as sparsity can be captured by convex
optimization.

The above paradigm can be put in the following mathematical language. Assume
that $K \subseteq \reals^{N}$ is a nonempty closed and convex set. Denote the
gauge of $K$ by $\|x\|_{K} := \inf \{\lambda > 0 : x \in \lambda K \}$ and
observe that $\|\cdot\|_{K}$ may be a norm for certain choices of $K$. Assume
that a signal $x_{0} \in \reals^{N}$ is ``structured'' in the sense that
$\|x_{0}\|_{K}$ is relatively small. Suppose $A \in \reals^{m \times N}$
defines the linear measurement process and define the measurements
$y = Ax_{0} + \eta z$ where $z \in \reals^{m}$ is a possibly stochastic noise
vector with noise level $\eta > 0$. Here, $1 \leq m, N < \infty$ are integers
and we do not yet make an assumption on the relative size of $m$ and $N$. For
$\tau, \sigma, \lambda > 0$, we define the following three \emph{generalized
\textsc{Lasso}} programs, which are convex, where the goal is to best
approximate the original signal $x_{0}$.
\begin{align}
  \label{eq:ls-tau-K}
  \hat x (\tau; y, A, K) %
  &:= \argmin \Big\{ \|y - A x\|_{2} : x \in \tau K \Big\}
    \tag{$\mathrm{LS}_{\tau, K}$}
  \\
  \label{eq:qp-lambda-K}
  x^{\sharp}(\lambda; y, A, K) %
  &:= \argmin \Big\{ \frac12 \|y - Ax\|_{2}^{2} + \lambda \|x\|_{K} : %
    x \in \reals^{N} \Big\}%
    \tag{$\mathrm{QP}_{\lambda, K}$}
  \\
  \label{eq:bp-sigma-K}
  \tilde x(\sigma; y, A, K) %
  &:= \argmin \Big\{ \|x\|_{K} : \|y - Ax \|_{2} \leq \sigma \Big\}
    \tag{$\mathrm{BP}_{\sigma, K}$}
\end{align}
For brevity of notation, when it is clear from context, we omit explicit
dependence of $\hat x, \tilde x, x^{\sharp}$ on $y, A$ and $K$. We include below
several examples of this general set-up:
\begin{enumerate}

\item To obtain total variation (TV) denoising for [continuous-valued discrete]
  images, define for $x \in \reals^{N\times N}$,
  \begin{align*}
    \|x\|_{\mathrm{BV}} %
    := \|x\|_{1} + \sum_{\alpha \in [N]^{2}} \sum_{\beta \in \nu(\alpha)}
    |x_{\alpha} - x_{\beta}|,
  \end{align*}
  where $[N] = \{1, 2, \ldots, N\}$ and
  $\nu : [N]^{2} \to \mathcal{P}([N]^{2})$ is the neighbour map that determines
  which ``pixels'' $x_{\beta}$ of the image are the neighbours of the pixel
  $x_{\alpha}$. If $\alpha = (i,j)$ and $2 \leq i,j \leq N-1$ then one
  typically has $\nu(i,j) = \{ (i-1, j), (i, j-1), (i+1,j), (i, j+1)\}$ with a
  variety of choices for the remaining indices. So defined,
  $x^{\sharp}(\lambda; y, I, K)$ is a well-known denoising model for
  two-dimensional images when $A = I$ is the identity matrix and
  $K := \{ \|x\|_{\mathrm{BV}} \leq 1 \}$ \cite{rudin1992nonlinear}. Instead
  defining
  $\|x\|_{\mathrm{BV}} := \|x\|_{1} + \sum_{i = 1}^{N-1} |x_{i+1} - x_{i}|$ for
  $x \in \reals^{N}$, one obtains an equivalent denoising method for
  one-dimensional signals. With minor modification of $x^{\sharp}(\lambda)$ to
  allow for $A$ to act as a bounded linear operator on
  $x \in \reals^{N\times N}$ (\eg convolution with a Gaussian kernel), one may
  extend the model for image deblurring \cite{daubechies2005variational}.

\item Say that $x \in \reals^{N}$ is $s$-sparse if
  $x \in \Sigma_{s}^{N} := \{ x \in \reals^{N} : \|x\|_{0} \leq s\}$ where
  $\|x\|_{0} = \#\{j : x_{j} \neq 0 \}$. Define $K := B_{1}^{N}$, suppose
  $x_{0} \in \reals^{N}$ is $s$-sparse for some $s \geq 1$ and suppose that
  $A \in \reals^{m\times N}$ is a Gaussian random matrix with
  $A_{ij} \iid \mathcal{N}(0, m^{-1/2})$. Then we obtain three common variants
  of the \textsc{Lasso} that solve the ``vanilla'' CS problem:
  the constrained \textsc{Lasso} yielding $\hat x(\tau; y, A, K)$, basis pursuit
  denoise yielding $\tilde x(\sigma; y, A, K)$, and the unconstrained
  \textsc{Lasso} yielding $x^{\sharp}(\lambda; y, A, K)$.

\item When $A = I$ is the identity matrix, $(\mathrm{LS}_{\tau, K})$ yields the
  orthogonal projection onto $\tau K$, which we denote by
  $\mathrm{P}_{\tau K}(y) := \hat x (\tau; y, I, K)$. Similarly,
  $(\mathrm{QP}_{\lambda, K})$ yields the proximal operator for the gauge
  induced by $K$, which we denote by 
  $\prox_{\lambda^{-1}K}(y) := x^{\sharp}(\lambda; y, I, K)$. Proximal operators
  are the workhorses of proximal algorithms. Projected gradient descent methods
  rely on $\mathrm{P}_{\tau K}(y)$, while $\prox_{\lambda^{-1}K}(y)$ is central
  to proximal gradient descent methods. 

\item For example, suppose $y = \Phi x_{0} + \eta z$ where $x_{0}$ is
  $s$-sparse, $\Phi\in \reals^{m \times N}$ is a Gaussian random matrix with
  $m \ll N$ and $\eta z$ is scaled normal random noise. A well-known way of
  solving for $\hat x(\tau; y, \Phi, B_{1}^{N})$ where $B_{1}^{N}$ is the unit
  $\ell_{1}$ ball, is to compute the following projected gradient descent
  scheme:
  \begin{align*}
    x^{t + 1} %
    := \mathrm{P}_{\tau B_{1}^{N}} ( x^{t} - \mu^{t}\nabla \|\Phi x - y\|_{2}^{2}).
  \end{align*}

\item Assume that $x' \in \reals^{N}$ is $s$-sparse and let
  $x_{0} = \Psi^{-1} x'$ where $\Psi$ is the orthonormal DFT matrix. Given
  $y = x_{0} + \eta z$, the vector $\hat x(\tau; y, \Psi^{-1}, B_{1}^{N})$ gives
  an analogue of running so-called constrained proximal denoising in Fourier
  space.

\item Consider a matrix $x \in \reals^{N\times N}$, let $\|x \|_{*}$ denote its
  nuclear norm and define
  $K := \{ x \in \reals^{N\times N} : \|x\|_{*} \leq 1 \}$. Then
  $\tilde x(\sigma)$ gives the standard optimization program for recovering a
  low-rank matrix $x_{0}\in \reals^{N \times N}$ from measurements
  $Ax := \ip{A_{i}, x} = \sum_{\alpha \in [N]^{2}} A_{i, \alpha} x_{\alpha}$.

% \item When $A = I$ is the identity matrix, one obtains a group of proximal
%   operators which are the workhorses of proximal point methods. Projected
%   gradient descent methods rely on $\hat x(\tau)$, while $x^{\sharp}(\lambda)$
%   is used in proximal gradient descent methods. For example, a well-known way of
%   solving for $\hat x(\tau; y, \Phi, B_{1}^{N})$ when $\Phi$ is a short-and-wide
%   Gaussian random matrix is to compute the following projected gradient descent
%   scheme
%   \begin{align*}
%     x^{t+1} %
%     &:= \mathrm{P}_{\tau B_{1}^{N}}\left( %
%       x^{t} - \mu^{t} \nabla \|\Phi x - y\|_{2}^{2} \right)
%     \\
%     \mathrm{P}_{C}(z)
%     &:= \argmin_{w \in C} \|w - z\|_{2}
%   \end{align*}
\end{enumerate}
In both the second and final examples, the signal $x_{0}$ does not (necessarily)
belong to the structure set $K$. Instead, $K$ serves as a kind of structural
proxy. To clarify, $K = B_{1}^{N}$ in the second example, which is a structural
proxy for sparse vectors in the sense that if $x \in \reals^{N}$ is $s$-sparse
then $\|x\|_{1} / \|x\|_{2}$ is relatively small compared to non-sparse vectors. A similar statement holds for low-rank matrices and the nuclear norm, as in the final example.

Because of the myriad applications of this class of programs to real-world
problems, it is imperative to fully characterize the performance and stability
of these algorithms. For example, the error rates of $\hat x(\tau)$ are
well-known when $\tau$ is equal to the optimal parameter choice, $A$ is a
subgaussian random matrix and $K$ is a symmetric, closed, convex set containing
the origin \cite{foucart2013mathematical, liaw2017simple,
  oymak2013squared}. However, the error of the estimator $\hat x(\tau)$ is not
fully characterized in this setting for values of $\tau$ that are not the
optimal choice. Similarly, there lacks a full comparison of the error behaviour
between the three estimators $\hat x(\tau), \tilde x(\sigma)$ and
$x^{\sharp}(\lambda)$ as a function of their governing parameters. It is an
open question if there are settings in which one estimator is always preferable
to another.

% While the examples above (save one) had non-trivial operators for $A$, proximal
% denoising and proximal operators are still mathematically interesting and
% challenging to analyze. As alluded to above, $\mathrm{prox}_{\alpha K}(y)$ is of
% fundamental importance to modern numerical methods. Further, the constrained
% proximal operator may be viewed as a proximal operator for the indicator
% function of a set:
% \begin{align*}
%   \mathrm{prox}_{i_{K}}(y; \lambda) %
%   = \argmin_{x} \frac{1}{2}\|y - x\|_{2}^{2} + \lambda i_{K}(x) = \argmin_{x\in K} \frac{1}{2}\|y - x\|_{2}^{2} = \mathrm{P}_{K}(x)
% \end{align*}
% These operators, known as projection operators are also of fundamental importance to modern numerical methods. % Moreover, it has been shown that each operator is relevant to particular numerical contexts that depend, for example, on the separability of the objective function or the loss function that one is studying (does Mark Schmidt have something on this?). 

Perhaps the most common example of where these programs are used is CS. CS is a
provably stable and robust technique for simultaneous data acquisition and
dimension reduction \cite{foucart2013mathematical}. Take the linear measurement
model $y = Ax_{0}$, where $x_{0} \in \reals^{N}$ is $s$-sparse. The now
classical CS result \cite{candes2006robust, candes2006stable, candes2006near,
  davenport2011introduction, donoho2006compressed, foucart2013mathematical}
shows if $A$ is suitably random and has $m \geq C s\log (N/s)$ rows, then one
may efficiently recover $x_{0}$ from $(y, A)$. Numerical implementations of CS
are commonly tied to one of three convex $\ell_{1}$ programs: constrained
\textsc{Lasso}, unconstrained \textsc{Lasso}, and quadratically constrained
basis pursuit \cite{van2008probing}. The advent of suitable fast and scalable
algorithms has made the associated family of convex $\ell_{1}$ minimization
problems extremely useful in practice \cite{friedlander2014gauge,
  friedman2010regularization, park2007l1, van2008probing}.

Proximal Denoising (PD) is a simplification of its more general CS counterpart,
in which the measurement matrix is identity. PD uses convex optimization as a
means to recover a structured signal corrupted by additive noise. We define
three convex programs for PD: constrained proximal denoising, basis pursuit
proximal denoising, and unconstrained proximal denoising. To bear greatest
relevance to CS, we assume that $x_{0}$ is $s$-sparse, having no more than $s$
non-zero entries, and that $y = x_{0} + \eta z$, where $z \iid \mathcal{N}(0,1)$
and $\eta > 0$.  For $\tau, \sigma, \lambda > 0$, respectively,
\begin{align}
  \label{eq:lspd}
  \hat x(\tau) %
  &:= \argmin_{x\in \reals^{N}} \big\{ \|y - x\|_{2}^{2} : \|x\|_{1} \leq \tau \big\} %
    \tag{$\mathrm{LS}_{\tau}^{*}$}
  \\
  \label{eq:bppd}
  \tilde x(\sigma) %
  &:= \argmin_{x\in \reals^{N}} \big\{ \|x\|_{1} : \|y - x\|_{2}^{2} \leq \sigma^{2} \big\} %
    \tag{$\mathrm{BP}_{\sigma}^{*}$}
  \\
  \label{eq:qppd}
  x^{\sharp}(\lambda) %
  &:= \argmin_{x\in \reals^{N}} \big\{ \frac12 \|y - x\|_{2}^{2} + \lambda \|x\|_{1} \big\}. %
    \tag{$\mathrm{QP}_{\lambda}^{*}$}
\end{align}
These are clear simplifications of $(\mathrm{LS}_{\tau, K})$,
$(\mathrm{QP}_{\lambda, K})$ and $(\mathrm{BP}_{\sigma, K})$ introduced above,
in which $K = B_{1}^{N}$ is the $\ell_{1}$ ball and where we use ${}^{*}$ to
denote that the measurement matrix $A \in \reals^{N \times N}$ is identity.

Following the dicussion above, minimax order-optimal recovery results for CS
and PD programs rely on the ability to make a specific choice of the program's
governing parameter (\ie ``using an oracle'')
\cite{foucart2013mathematical}. However, the optimal choice of the governing
parameter for these programs is generally unknown in practice. Consequently, it
is desirable that the error of the solution exhibit stability with respect to
variation of the parameter about its optimal setting. If the optimal choice of
parameter yields order-optimal recovery error, then one may hope that a
``nearly'' optimal choice of parameter admits ``nearly'' order-optimal recovery
error, too, in the sense that the discrepancy in error is no greater than a
multiplicative constant that depends smoothly on the discrepancy in parameter
choice. For example, if $R(\alpha)$ is the mean-squared error of a convex
program with parameter $\alpha > 0$, and $\alpha^{*} > 0$ is the value yielding
minimal error, then one may hope for smooth dependence on $\alpha$, such as
\begin{align*}
  R(\alpha) %
  \lesssim A(\alpha) R(\alpha^{*}),
\end{align*}
where $A: \reals \to \reals^{+}$ is a nonnegative smooth function with
$A(\alpha^{*}) = 1$. For example, the risk for {\qppd} satisfies this
expression with $A(\lambda) = (\lambda / \lambda^{*})^{2}$ when
$\lambda \geq \lambda^{*}$.

Unfortunately, such a hope cannot be guaranteed in general. We prove the
existence of regimes in which PD programs exhibit \emph{parameter instability}
--- small changes in parameter values can lead to blow-up in risk. Moreover,
since the three versions of PD are equivalent in a sense (\emph{cf.}
Proposition \ref{prop:pd-equivalence}), one might think it does not matter
which to choose in practice. However, in this paper we demonstrate regimes in
which one program exhibits parameter instability, while the other two do
not. For example, in the very sparse regime, our theory and simulations suggest
not to use {\bppd}, while in the low-noise regime, they suggest not to use
{\lspd}. At the same time, we identify situations where PD programs perform
well in theory and simulations alike.

We explore the connection between PD and CS numerically, observing that our
theoretical results for PD are mirrored in the CS setup. This holds in both
completely synthetic experiments, and for a more realistic example using the
Shepp-Logan phantom. Thus, the theoretical results in this paper can help
practitioners decide which program to use in CS problems with real data.

\section{Summary of results to follow}
\label{sec:results-summary}

This section contains three sibling results that simplify the main results in
the next sections by considering asymptotic versions of them. By
``risk'', we mean the noise-normalized expected squared error (nnse) of an
estimator. The risks for the estimators $\hat x(\tau), x^{\sharp}(\lambda)$ and
$\tilde x(\sigma)$ are, respectively:
\begin{align*}
  \hat R(\tau ; x_{0}, N, \eta) %
  &:= \eta^{-2} \E \|\hat x (\tau) - x_{0} \|_{2}^{2},
  \\[.2cm] %& %
  R^{\sharp}(\lambda; x_{0}, N, \eta) %
  &:= \eta^{-2} \E \|x^{\sharp}(\eta\lambda) - x_{0} \|_{2}^{2},
  \\[.2cm] %& %
  \tilde R(\sigma; x_{0}, N, \eta) %
  &:=  \eta^{-2} \E \|\tilde x(\sigma) - x_{0}\|_{2}^{2}.
\end{align*}
Denote $\Sigma_{s}^{N} := \{ x\in \reals^{N} : \|x\|_{0} \leq s\}$ where
$\|x\|_{0}$ gives the number of non-zero entries of $x$; it is not a
norm. Denote by $R^{*}(s, N)$ the following optimally tuned worst-case risk for
{\lspd}:
\begin{align*}
  R^{*}(s,N) %
  := \sup_{x_{0} \in \Sigma_{s}^{N}} %
  \hat R(\|x_{0}\|_{1} ; x_{0}, N, \eta) %
  = \max_{\substack{x_{0} \in \Sigma_{s}^{N}\\\|x_{0}\|_{1}=1}}%
  \lim_{\eta\to 0} \hat R(1; x_{0}, N, \eta).
  % R^{*}(s, N) %
  % := \max_{x_{0} \in \Sigma_{s}^{N}} \min_{\tau > 0} %
  % \hat R(\tau; x_{0}, N, \eta)
  % = \max_{x_{0} \in \Sigma_{s}^{N}} 
  % \hat R(\|x_{0}\|_{1}; x_{0}, N, \eta). 
\end{align*}
A proof of the second equality appears in Proposition \ref{prop:risk-equivalence}. We
use $R^{*}(s,N)$ as a benchmark, noting it is order-optimal in
Proposition \ref{prop:lspd-optimal-risk}.

In section \ref{sec:param-inst-lspd}, we show that {\lspd} exhibits an asymptotic
phase transition in the low-noise regime. There is exactly one value $\tau^{*}$
of the governing parameter yielding minimax order-optimal error, with any
choice $\tau \neq \tau^{*}$ yielding markedly worse behaviour. The intuition for
this result is that {\lspd} is extremely sensitive to the value of $\tau$ in the
low-noise regime, making empirical use of {\lspd} woefully unstable in this
regime.
\begin{theorem}%[thm:simplified-lspd]
  \label{thm:simplified-lspd}
  \begin{align*}
    \lim_{N\to\infty} %
    \max_{\substack{x_{0} \in \Sigma_{s}^{N}\\\|x_{0}\|_{1} = 1}} %
    \lim_{\eta \to 0} %
    \frac{\hat R(\tau; x_{0}, N, \eta)}{R^{*}(s, N)} %
    & =
      \begin{cases}
        \infty & \tau < \tau^{*}\\
        1 & \tau = \tau^{*} = 1 \\
        \infty & \tau > \tau^{*}
      \end{cases}
  \end{align*}
\end{theorem}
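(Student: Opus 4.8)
The plan is to recognize $\hat x(\tau) = \mathrm{P}_{\tau B_1^N}(y)$ as the Euclidean projection of $y = x_0 + \eta z$ onto the $\ell_1$-ball of radius $\tau$, and to track the bias and variance of this projection in the low-noise limit, case by case. The normalization $\|x_0\|_1 = 1$ makes $\tau^* = 1$ the critical radius. Since $R^*(s,N)$ depends on neither $x_0$ nor $\eta$, the inner quantity factors as $R^*(s,N)^{-1}\,\max_{x_0}\lim_{\eta\to0}\hat R(\tau;x_0,N,\eta)$, reducing the problem to computing $\lim_{\eta\to0}\hat R(\tau;x_0,N,\eta)$. The balanced case $\tau = \tau^* = 1$ is then immediate: by the second equality in the definition of $R^*$ (Proposition \ref{prop:risk-equivalence}), $\max_{x_0}\lim_{\eta\to0}\hat R(1;x_0,N,\eta) = R^*(s,N)$, so the ratio is identically $1$ and its limit in $N$ is $1$.

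For $\tau > 1$, the point $x_0$ lies strictly inside $\tau B_1^N$, so $\mathrm{P}_{\tau B_1^N}(x_0) = x_0$ and non-expansiveness of the projection yields $\|\hat x(\tau) - x_0\|_2 \le \eta\|z\|_2$ pointwise, whence $\eta^{-2}\|\hat x(\tau) - x_0\|_2^2 \le \|z\|_2^2$. On the event $\{\|y\|_1 \le \tau\}$ the projection acts as the identity and this normalized error equals exactly $\|z\|_2^2$; since $\|y\|_1 \to \|x_0\|_1 = 1 < \tau$ almost surely, $\mathbb P(\|y\|_1 > \tau) \to 0$, and as $\|z\|_2^2$ is integrable the contribution of the complementary event vanishes. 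Hence $\lim_{\eta\to0}\hat R(\tau;x_0,N,\eta) = \E\|z\|_2^2 = N$ for every admissible $x_0$, so the maximum over $x_0$ equals $N$. Invoking the sublinearity $R^*(s,N) = o(N)$ of the order-optimal benchmark (Proposition \ref{prop:lspd-optimal-risk}) gives $N / R^*(s,N) \to \infty$.

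For $\tau < 1$, the point $x_0$ lies outside $\tau B_1^N$---indeed $\mathrm{P}_{\tau B_1^N}$ is soft-thresholding at a strictly positive level---so that $b := \|\mathrm{P}_{\tau B_1^N}(x_0) - x_0\|_2 > 0$ is a fixed, noise-independent bias. Lipschitz continuity of the projection gives $\hat x(\tau) \to \mathrm{P}_{\tau B_1^N}(x_0)$ almost surely as $\eta\to0$, and the crude bound $\|\hat x(\tau) - x_0\|_2 \le \eta\|z\|_2 + b$ supplies an integrable dominating function; dominated convergence then yields $\E\|\hat x(\tau) - x_0\|_2^2 \to b^2 > 0$. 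Therefore $\hat R(\tau;x_0,N,\eta) = \eta^{-2}\E\|\hat x(\tau) - x_0\|_2^2 \to +\infty$ for any fixed admissible $x_0$, so both the maximum over $x_0$ and the outer limit over $N$ are $+\infty$.

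The main obstacle is the rigorous interchange of the $\eta\to0$ limit with the expectation: establishing the dominated-convergence and integrability estimates above, and in particular showing in the $\tau > 1$ case that the rare event $\{y \notin \tau B_1^N\}$ contributes negligibly even after normalization by $\eta^{-2}$. The only other non-elementary ingredient is the sublinearity $R^*(s,N) = o(N)$ of the order-optimal benchmark, which I would import from Proposition \ref{prop:lspd-optimal-risk} and which is precisely what forces the ratio to diverge when $\tau > 1$.
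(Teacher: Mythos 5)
Your proposal is correct and follows essentially the same route as the paper's (Theorem \ref{thm:constr-pd} plus Propositions \ref{prop:risk-equivalence} and \ref{prop:lspd-optimal-risk}): the $\tau=\tau^{*}$ case is read off from the definition of $R^{*}(s,N)$, the low-noise limit for $\tau>\tau^{*}$ is shown to equal $N$ and then compared against $R^{*}(s,N)=\Theta(s\log(N/s))$, and the $\tau<\tau^{*}$ case already diverges for each fixed $N$. The only differences are technical devices: for $\tau>\|x_{0}\|_{1}$ the paper controls the rare event via a Hoeffding bound on $\|z\|_{1}$ rather than your cleaner non-expansiveness-plus-dominated-convergence argument, and for $\tau<\|x_{0}\|_{1}$ it uses the deterministic bound $\|\hat x(\tau)-x_{0}\|_{2}\geq(\|x_{0}\|_{1}-\tau)/\sqrt{N}$ in place of your limiting-bias argument.
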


Next, in section \ref{sec:param-inst-qppd}, we show that {\qppd} exhibits an
asymptotic phase transition. The worst-case risk over $x_{0} \in \Sigma_{s}^{N}$
is minimized for parameter choice $\lambda^{*} = O(\sqrt{\log(N/s)})$
\cite{oymak2016sharp}.  While $\lambda^{*}$ has no closed form expression, it
satisfies $\lambda^{*}/\sqrt{2\log(N)} \xrightarrow{N\to\infty} 1$ for $s$ fixed
(Proposition \ref{prop:asymptotic-equivalence}).  Thus, we consider the normalized
parameter $\mu = \lambda/\sqrt{2 \log(N)}$.  The risk
$R^{\sharp}(\lambda; x_{0}, N, \eta)$ is minimax order-optimal when $\mu >1$ and
suboptimal for $\mu <1$.
\begin{theorem}%[thm:simplified-qppd]
  Let $\lambda(\mu, N) := \mu \sqrt{2 \log N}$ for $\mu > 0$. Then, 
  \label{thm:simplified-qppd}
  \begin{align*}
    \lim_{N\to\infty} \sup_{x_{0} \in \Sigma_{s}^{N}} %
    \frac{R^{\sharp}(\lambda(\mu, N); x_{0}, N, \eta)}{R^{*}(s, N)} %
    & =
      \begin{cases}
        O(\mu^{2}) & \mu \geq 1\\
        \infty & \mu < 1
      \end{cases} 
  \end{align*}
\end{theorem}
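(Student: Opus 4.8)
The plan is to exploit the separability of {\qppd} when the measurement matrix is the identity: the program decouples across coordinates into scalar soft-thresholding, $x^{\sharp}(\eta\lambda)_{i} = S_{\eta\lambda}(y_{i})$, where $S_{t}(w) := \sgn(w)(|w| - t)_{+}$. First I would use the positive homogeneity $S_{\eta\lambda}(\eta w) = \eta S_{\lambda}(w)$ together with $y_{i} = x_{0,i} + \eta z_{i}$ to write the noise-normalized error coordinatewise as $\eta^{-1}(x^{\sharp}(\eta\lambda)_{i} - x_{0,i}) = S_{\lambda}(u_{i} + z_{i}) - u_{i}$, where $u_{i} := x_{0,i}/\eta$. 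Squaring and summing yields the decomposition
\begin{align*}
  R^{\sharp}(\lambda; x_{0}, N, \eta) = \sum_{i=1}^{N} r(\lambda, u_{i}),
  \qquad
  r(\lambda, u) := \E_{z\sim\mathcal{N}(0,1)}\big[(S_{\lambda}(u + z) - u)^{2}\big],
\end{align*}
so that the risk is a sum of scalar soft-threshold risks and, crucially, depends on $\eta$ only through the rescaled means $u_{i}$.

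I would then evaluate the worst case over $\Sigma_{s}^{N}$. The scalar risk $r(\lambda, \cdot)$ is even and nondecreasing in $|u|$ (a classical monotonicity property of the soft-threshold risk), with $r(\lambda, 0) = 2\big[(1 + \lambda^{2})\bar\Phi(\lambda) - \lambda\phi(\lambda)\big]$ and $\lim_{|u|\to\infty} r(\lambda, u) = 1 + \lambda^{2}$, where $\phi, \Phi$ are the standard normal density and distribution and $\bar\Phi = 1 - \Phi$. Hence the maximizing configuration places all $s$ admissible nonzero coordinates at $|u_{i}|\to\infty$ and leaves the remaining $N - s$ coordinates at zero, giving
\begin{align*}
  \sup_{x_{0}\in\Sigma_{s}^{N}} R^{\sharp}(\lambda; x_{0}, N, \eta)
  = (N - s)\, r(\lambda, 0) + s\,(1 + \lambda^{2}).
\end{align*}
This quantity is independent of $\eta$, which is why $\eta$ does not appear in the statement. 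Since $R^{*}(s,N)$ does not depend on $x_{0}$, it factors out of the supremum, and the theorem reduces to the $N\to\infty$ asymptotics of the explicit ratio $[(N - s)\,r(\lambda, 0) + s(1 + \lambda^{2})]/R^{*}(s,N)$ evaluated at $\lambda = \mu\sqrt{2\log N}$.

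The last step is this asymptotic analysis, with $s$ fixed and using the order-optimality $R^{*}(s,N) = \Theta(s\log(N/s)) = \Theta(s\log N)$ from Proposition~\ref{prop:lspd-optimal-risk}. The term $s(1 + \lambda^{2}) = s(1 + 2\mu^{2}\log N)$ contributes a ratio of order $\mu^{2}$. The decisive term is $(N - s)\,r(\lambda, 0)$: sharp Gaussian tail estimates give $r(\lambda, 0) \sim 4\phi(\lambda)/\lambda^{3}$, and since $\phi(\mu\sqrt{2\log N}) = (2\pi)^{-1/2} N^{-\mu^{2}}$, this term scales like $N^{1 - \mu^{2}}/(\log N)^{3/2}$. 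After dividing by the $\Theta(\log N)$ benchmark, the piece vanishes when $\mu \geq 1$, so the ratio tends to a finite multiple of $\mu^{2}$ (the $O(\mu^{2})$ branch), while it diverges polynomially when $\mu < 1$ (the $\infty$ branch).

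I expect the main obstacle to be twofold. First one must justify rigorously that $r(\lambda, \cdot)$ is nondecreasing in $|u|$ with supremum exactly $1 + \lambda^{2}$, and handle the fact that this value is approached but never attained (so that the outer supremum over $\Sigma_{s}^{N}$ is genuinely a supremum). Second, and more delicate, is locating the phase transition precisely at $\mu = 1$: this hinges on Gaussian tail asymptotics for $r(\lambda, 0)$ that are sharp at the exponential scale, so that the exponent $1 - \mu^{2}$ of the polynomial factor $N^{1 - \mu^{2}}$ is captured exactly and its sign change at $\mu = 1$ cleanly separates the bounded and divergent regimes; the subexponential $(\log N)^{-3/2}$ correction must be tracked to confirm it does not perturb the transition.
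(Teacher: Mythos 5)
Your proposal is correct and follows essentially the same route as the paper: the coordinatewise soft-thresholding decomposition and the worst-case formula $\sup_{x_{0}\in\Sigma_{s}^{N}}R^{\sharp}(\lambda;x_{0},N,\eta)=s(1+\lambda^{2})+2(N-s)G(\lambda)$ are exactly Proposition \ref{prop:qppd-smooth} and Corollary \ref{coro:qppd-max-formulation}, and your asymptotics $G(\mu\sqrt{2\log N})\sim 2\phi(\lambda)/\lambda^{3}\asymp N^{-\mu^{2}}(\log N)^{-3/2}$ measured against the $\Theta(s\log N)$ benchmark of Proposition \ref{prop:lspd-optimal-risk} reproduce Theorems \ref{thm:qppd-instability} and \ref{thm:qppd-rhs-stability}, including the correct handling of $\mu=1$. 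The only cosmetic difference is on the $\mu<1$ branch, where you evaluate $G$ directly via the three-term Gaussian tail expansion while the paper bounds the risk derivative and integrates (Lemma \ref{lem:qppd-instability} plus the fundamental theorem of calculus); both rest on the same estimate $\Phi(-\lambda)\leq(\lambda^{-1}-\lambda^{-3}+3\lambda^{-5})\phi(\lambda)$ and yield the same $N^{1-\mu^{2}}$-type blow-up, with your route if anything slightly more direct.
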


Lastly, we show in section \ref{sec:param-inst-bppd} that {\bppd} is poorly behaved
for all $\sigma > 0$ when $x_{0}$ is very sparse. Namely,
$\tilde R(\sigma; x_{0}, N, \eta)$ is asymptotically suboptimal for \emph{any}
$\sigma > 0$ when $s/N$ is sufficiently small.
\begin{theorem}%[thm:simplified-bppd]
  \label{thm:simplified-bppd}
  \begin{align*}
    {\adjustlimits\lim_{N\to\infty} \sup_{x_{0} \in \Sigma_{s}^{N}} \inf_{\sigma > 0}}
    \frac{\tilde R(\sigma ; x_{0}, N, \eta)}{R^{*}(s, N)} %
    &= \infty
  \end{align*}
\end{theorem}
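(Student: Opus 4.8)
The plan is to exhibit a single sparse signal for which \emph{every} choice of $\sigma$ is bad, and to quantify the two distinct mechanisms by which \bppd\ fails. Fix $s$ (the regime $s/N\to 0$; fixed $s$ is the clean case) and recall from Proposition \ref{prop:lspd-optimal-risk} that $R^{*}(s,N)\asymp s\log(N/s)$, which is of order $\log N$ for fixed $s$. Since the supremum runs over all of $\Sigma_{s}^{N}$, it suffices to lower bound $\inf_{\sigma>0}\tilde R(\sigma;x_{0},N,\eta)$ for one well-chosen $x_{0}$; I take the single spike $x_{0}=a\,e_{1}$ with $a=\eta N^{1/4}$. The point of this scaling is that $a$ sits far above the noise floor $\eta\sqrt{2\log N}$ (so the spike is easily detectable), yet its energy $a^{2}=\eta^{2}\sqrt N$ is only of the same order as the fluctuations of $\|y\|_{2}^{2}$. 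This coincidence of scales is the source of the instability.

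First I would record the structural fact that \bppd\ is soft-thresholding with a \emph{data-dependent} level. Writing the KKT conditions (equivalently, via Proposition \ref{prop:pd-equivalence}), $\tilde x(\sigma)=0$ whenever $\sigma\geq\|y\|_{2}$, and otherwise $\tilde x(\sigma)$ is the coordinatewise soft-thresholding of $y$ at the unique $t=t(\sigma,y)>0$ solving $\sum_{i}\min(|y_{i}|,t)^{2}=\sigma^{2}$, since $t\mapsto\sum_{i}\min(|y_{i}|,t)^{2}$ increases continuously from $0$ to $\|y\|_{2}^{2}$. The decisive contrast with the optimally tuned \qppd, whose threshold $\eta\sqrt{2\log N}$ is deterministic, is that $t(\sigma,y)$ must track the random residual energy. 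From this I isolate two failure modes. \emph{Signal loss:} if $t\gtrsim a$ the spike is over-shrunk to nearly $0$, and in particular $\sigma\geq\|y\|_{2}$ forces $\tilde x(\sigma)=0$ with normalized error exactly $\eta^{-2}a^{2}=\sqrt N$. \emph{Noise overfitting:} if $t$ drops below the universal threshold, many noise coordinates survive; I quantify this through the noise depth $D_{n}:=\sum_{i\geq2}\bigl(|y_{i}|^{2}-\min(|y_{i}|,t)^{2}\bigr)\approx\|y\|_{2}^{2}-\sigma^{2}$, where standard Gaussian order-statistics estimates give that $D_{n}\asymp\eta^{2}m$ produces $\asymp m$ survivors at an induced threshold $t/\eta\approx\sqrt{\log N}$, hence an overfitting contribution $\approx\eta^{2}m/\log N$ to the squared error. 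Thus a depth of order $\eta^{2}\sqrt N$ already yields normalized risk of order $\sqrt N/\log N$.

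Next I would run the trade-off using concentration. Both $\|y\|_{2}^{2}$ and $W:=\eta^{2}\sum_{i\geq2}z_{i}^{2}$ concentrate at $\eta^{2}N$ (up to the lower-order signal term) with sub-exponential fluctuations of order $\eta^{2}\sqrt N$, which, by the choice of $a$, is the \emph{same} order as the signal energy $a^{2}$. Consequently, for every fixed $\sigma$: if $\sigma^{2}$ lies within $O(\eta^{2}\sqrt N)$ of $\eta^{2}N$ or above, then $\|y\|_{2}^{2}\leq\sigma^{2}$ on a constant-probability event, triggering signal loss and risk $\gtrsim\sqrt N$; whereas if $\sigma^{2}$ lies a growing multiple $C$ of $\eta^{2}\sqrt N$ below $\eta^{2}N$, then $D_{n}\gtrsim C\eta^{2}\sqrt N$ with constant probability, triggering overfitting and risk $\gtrsim C\sqrt N/\log N$. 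The needle cannot be threaded because the ``good'' window $t\in[\eta\sqrt{2\log N},\,O(\eta\sqrt{\log N})]$ has image in $\sigma^{2}$ of width only $O(\eta^{2}\log N)$ about the random center $W$, far narrower than the $\eta^{2}\sqrt N$ fluctuation, so $\mathbb{P}(\text{good})\lesssim\log N/\sqrt N\to0$ uniformly in $\sigma$. Balancing the two modes then gives $\inf_{\sigma>0}\tilde R(\sigma)\gtrsim\sqrt N$ up to polylogarithmic factors; dividing by $R^{*}(s,N)\asymp\log N$ yields a ratio tending to infinity, and taking the supremum over $x_{0}$ only increases it, proving the claim.

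I expect the main obstacle to be the sharp overfitting estimate in the near-critical regime $\sigma^{2}=\eta^{2}N-\Theta(\sqrt{\log N}\,\eta^{2}\sqrt N)$, precisely where the two modes balance. There the crude bound $\|\tilde x_{\mathrm{noise}}\|_{2}\geq(\sqrt W-\sigma)_{+}$ degrades to $O(1)$ and one genuinely needs Gaussian order statistics to relate the depth $D_{n}$, the number of surviving coordinates, the induced threshold $t/\eta\approx\sqrt{\log N}$, and the resulting $\ell_{2}$ overfitting error. Controlling the random threshold $t(\sigma,y)$ uniformly in $\sigma$ --- through monotonicity of $t\mapsto\sum_{i}\min(|y_{i}|,t)^{2}$ and concentration of the empirical count $\#\{i:|y_{i}|>t\}$ --- is the technical heart of the argument; the signal-loss half and the concentration of $\|y\|_{2}^{2}$ are comparatively routine.
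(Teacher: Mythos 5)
Your proposal is correct in outline but takes a genuinely different route from the paper's. The paper proves Theorem~\ref{thm:simplified-bppd} via Theorem~\ref{thm:bppd-maximin}: it fixes the huge spike $x_{0}=Ne_{1}$ and splits at $\sigma=\eta\sqrt N$. For $\sigma>\eta\sqrt N$ it combines the KKT identity with the descent-cone condition (Proposition~\ref{prop:descent-cone}) to force $\|h\|_{2}^{2}\gtrsim\sqrt N$ on a constant-probability event (Lemma~\ref{lem:bppd-uc}); for $\sigma\leq\eta\sqrt N$ it reduces to $x_{0}\equiv0$ by restricting to the off-support (Lemma~\ref{lem:bppd-oc-sge1}), shows via the localized Gaussian-mean-width bounds of Propositions~\ref{prop:bellec1}--\ref{prop:bellec2} together with Borell--TIS that $\|\tilde x(\eta\sqrt N)\|_{2}\gtrsim N^{q}$ with constant probability (Lemma~\ref{lem:bppd-oc-3a}), and extends to all smaller $\sigma$ with the projection lemma. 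You instead tune the spike to the fluctuation scale, $a^{2}=\eta^{2}\sqrt N$, and play the signal-loss event $\{\|y\|_{2}\leq\sigma\}$ against a noise-overfitting estimate phrased through the data-dependent soft-threshold $t(\sigma,y)$; your characterization of $\tilde x(\sigma)$ as soft-thresholding at the level solving $\sum_{i}\min(|y_{i}|,t)^{2}=\sigma^{2}$ is correct, the two branches do cover all $\sigma>0$ once the split point $\sigma^{2}=\eta^{2}(N-C_{0}\sqrt N)$ is chosen with $C_{0}$ large enough, and the resulting bound $\inf_{\sigma}\tilde R\gtrsim\sqrt N/\log N$ is actually \emph{stronger} than the paper's $N^{q}$ (which has a small $q$ coming from the overconstrained branch). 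What your route buys is a sharper rate and a more transparent mechanism; what the paper's route buys is that the GMW/Borell--TIS machinery sidesteps any fine analysis of Gaussian order statistics.

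The one place where your argument is not yet a proof is exactly the step you flag: the lower bound ``depth $D$ at threshold $t$ implies $\ell_{2}$ error $\gtrsim D/t^{2}$'' is genuinely delicate, because the competing quantities $\sum_{|z_i|>t}(z_i^2-t^2)$ and $\sum_{|z_i|>t}(|z_i|-t)^2$ differ by a factor that is polynomial in $N$ once $t$ exceeds $\sqrt{2\log(N/m)}$ by even a constant factor; concretely, $\E\sum(|z_i|-C\sqrt{\log N})_{+}^{2}\sim N^{1-C^{2}/2}/\log^{3/2}N$, which drops below $\sqrt N/\log N$ as soon as $C>1$, so you need two-sided control of the random threshold $t(\sigma,y)$ at the level of its leading constant, not just its order. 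This is achievable by Bernstein's inequality applied to the depth and error functionals on a grid of $t$ values together with monotonicity in $t$ (and you must also subtract the $\approx\eta^{2}\sqrt N$ of depth absorbed by the spike coordinate before invoking the estimate), but until that concentration argument is written out the overfitting branch remains heuristic. The signal-loss branch and the division by $R^{*}(s,N)\asymp s\log(N/s)$ are fine as stated.
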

% The intuition for this result is thus. If $\sigma$ is chosen too small, then
% $\tilde x(\sigma)$ destroys too much of the signal content by over-minimizing;
% if $\sigma$ is chosen too large, then $\tilde x(\sigma)$ fails to destroy the
% off-support noise. Because the noise is assumed normally distributed, ``too
% large'' and ``too small'' happen to overlap sufficiently to yield a parameter
% unstable regime for all $\sigma > 0$.

All numerical results are discussed in section \ref{sec:numerical-results}, and
proofs of most theoretical results are deferred to section \ref{sec:proofs}. Next,
we add two clarifications. First, the three PD programs are equivalent in a
sense.
% For something like this result, cf. Proposition 3.2 of Foucart
\begin{proposition}
  \label{prop:pd-equivalence}
  Let $0 \neq x_{0} \in \reals^{N}$ and $\lambda > 0$. Where
  $x^{\sharp}(\lambda)$ solves {\qppd}, define
  $\tau := \|x^{\sharp}(\lambda)\|_{1}$ and
  $\sigma := \|y - x^{\sharp}(\lambda)\|_{2}$. Then $x^{\sharp}(\lambda)$ solves
  {\lspd} and {\bppd}.
\end{proposition}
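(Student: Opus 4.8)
The plan is to exploit the single fact that, by definition, $x^{\sharp} := x^{\sharp}(\lambda)$ minimizes the \qppd objective $f(x) := \tfrac12\|y - x\|_2^2 + \lambda\|x\|_1$ over all of $\reals^{N}$, and to transfer this optimality to the other two programs by a direct comparison of objective values. No subgradient calculus or duality is required: in each case I would assume a strictly better competitor for \lspd (resp. \bppd) exists and show it would strictly beat $x^{\sharp}$ in the \qppd objective, contradicting optimality. A pleasant feature of this route is that the comparison simultaneously establishes that a minimizer exists and that $x^{\sharp}$ is one, so I need not separately argue attainment for the constrained programs.

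First I would handle \lspd. Note $x^{\sharp}$ is feasible because $\|x^{\sharp}\|_1 = \tau$. Suppose, for contradiction, that some $x'$ with $\|x'\|_1 \le \tau$ achieves $\|y - x'\|_2^2 < \|y - x^{\sharp}\|_2^2$. Since $\|x'\|_1 \le \tau = \|x^{\sharp}\|_1$ and $\lambda > 0$, the $\ell_1$-term does not increase while the quadratic term strictly decreases, so $f(x') < f(x^{\sharp})$ --- contradicting optimality of $x^{\sharp}$ for \qppd. Hence $x^{\sharp}$ attains the minimal residual over $\{\|x\|_1 \le \tau\}$ and solves \lspd.

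The argument for \bppd is the mirror image. Here $x^{\sharp}$ is feasible because $\|y - x^{\sharp}\|_2^2 = \sigma^2$. Suppose some $x''$ satisfies $\|y - x''\|_2^2 \le \sigma^2 = \|y - x^{\sharp}\|_2^2$ yet $\|x''\|_1 < \|x^{\sharp}\|_1$. Now the quadratic term does not increase while the $\ell_1$-term strictly decreases (again using $\lambda > 0$), giving $f(x'') < f(x^{\sharp})$ and the same contradiction. Thus $x^{\sharp}$ minimizes $\|\cdot\|_1$ subject to $\|y - x\|_2^2 \le \sigma^2$ and solves \bppd.

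I anticipate no genuine obstacle: the content is a one-line comparison in each direction. The only delicate point is ensuring the strict inequality falls on the term that is being optimized while the other term is merely constrained not to worsen; this is exactly where $\lambda > 0$ is essential and is the sole place the sign of $\lambda$ enters. For completeness I would also remark that $x^{\sharp}$ is unambiguously defined --- and hence so are $\tau$ and $\sigma$ --- because $x \mapsto \tfrac12\|y-x\|_2^2 + \lambda\|x\|_1$ is strictly convex (its quadratic part has identity Hessian), so \qppd has a unique minimizer; the hypothesis $x_0 \neq 0$ plays no role in the comparison itself.
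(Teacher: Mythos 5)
Your argument is correct: in each case the competitor would strictly decrease one term of the \qppd{} objective while not increasing the other, contradicting optimality of $x^{\sharp}(\lambda)$, and the strict-convexity remark correctly disposes of well-definedness. The paper does not actually supply its own proof of this proposition (it defers to cited references), but your objective-comparison argument is the standard one and is complete; you are also right that the hypothesis $x_{0}\neq 0$ is not needed for this direction of the equivalence.
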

However, $\tau$ and $\sigma$ have \emph{stochastic} dependence on $z$, and this
mapping may not be smooth. Thus, parameter stability of one program is not
implied by that of another. Second, $R^{*}(s, N)$ has the desirable property
that it is computable up to multiplicative constants. The proof follows by
\cite{oymak2016sharp} and standard bounds in \cite{foucart2013mathematical}. We
don't claim novelty for this result, and defer its full proof to
section \ref{sec:proof-lspd-optimal-risk}.
\begin{proposition}%[prop:lspd-optimal-risk]
  \label{prop:lspd-optimal-risk}
  Let $s \geq 1, N \geq 2$ be integers, let $\eta > 0$ and suppose
  $y = x_{0} + \eta z$ for $z \in \reals^{N}$ with
  $z_{i} \iid \mathcal{N}(0,1)$. Let
  $M^{*}(s, N) := \inf_{x_{*}} \sup_{x_{0} \in \Sigma_{s}^{N}} \eta^{-2} \|x_{*}
  - x_{0}\|_{2}^{2}$ be the minimax risk over arbitrary estimators
  $x_{*} = x_{*}(y)$. There is $c, C_{1}, C_{2} > 0$ such that for
  $N \geq N_{0} = N_{0}(s)$, with $N_{0} \geq 2$ sufficiently large,
  \begin{align*}
    cs\log(N/s) \leq M^{*}(s, N) \leq \adjustlimits \inf_{\lambda > 0} \sup_{x_{0} \in \Sigma_{s}^{N}} %
    R^{\sharp}(\lambda; x_{0}, N, \eta) %
    \leq C_{1} R^{*}(s, N) %
    \leq C_{2} s\log(N/s).
  \end{align*}
\end{proposition}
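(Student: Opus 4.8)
The statement is a chain of four inequalities, and the plan is to dispatch them in increasing order of difficulty. The central inequality $M^*(s,N) \leq \inf_{\lambda>0}\sup_{x_0\in\Sigma_s^N} R^\sharp(\lambda;x_0,N,\eta)$ is essentially definitional: for each fixed $\lambda>0$ the map $y \mapsto x^\sharp(\eta\lambda)$ is an admissible estimator of the form $x_* = x_*(y)$, so its noise-normalized worst-case risk $\sup_{x_0} R^\sharp(\lambda)$ is at least the minimax value $M^*(s,N)$; taking the infimum over $\lambda$ preserves this, since $M^*$ does not depend on $\lambda$. For the leftmost inequality $cs\log(N/s) \leq M^*(s,N)$, I would invoke the classical minimax lower bound for denoising $s$-sparse vectors in Gaussian noise \cite{foucart2013mathematical}: one exhibits a packing of $\Sigma_s^N$ by $\ell_2$-separated vectors whose associated Gaussian laws have controlled pairwise KL divergence, and applies Fano's inequality (or Assouad's lemma) to conclude $M^* \gtrsim s\log(N/s)$.

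The two rightmost inequalities exploit the fact that, since the measurement matrix is the identity, both PD programs reduce to operators with sharp risk formulas. I would first establish $R^*(s,N) \asymp s\log(N/s)$. By the second equality in the definition of $R^*$ (Proposition \ref{prop:risk-equivalence}), $R^*(s,N)$ is the worst-case small-noise limit of the constrained-denoising risk --- the projection $\mathrm{P}_{\tau B_1^N}$ at $\tau = \|x_0\|_1$; as $\eta \to 0$ this limit equals the statistical dimension (equivalently, the squared Gaussian width) of the tangent cone of $B_1^N$ at the $s$-sparse point $x_0$. The worst case of this quantity over $x_0 \in \Sigma_s^N$ with $\|x_0\|_1 = 1$ is pinned between constant multiples of $s\log(N/s)$ by the standard descent-cone width estimates for the $\ell_1$ ball \cite{foucart2013mathematical}. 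The upper half of this bound is precisely the final inequality $C_1 R^*(s,N) \leq C_2 s\log(N/s)$.

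It remains to prove $\inf_{\lambda>0}\sup_{x_0} R^\sharp(\lambda) \leq C_1 R^*(s,N)$. With the identity matrix, $x^\sharp(\eta\lambda) = \prox_{\eta\lambda\|\cdot\|_1}(y)$ is exactly the soft-thresholding estimator, so $R^\sharp$ is the soft-thresholding risk. By the sharp analysis of \cite{oymak2016sharp} (equivalently, classical Donoho--Johnstone minimax thresholding theory), the noise-normalized threshold $\lambda \asymp \sqrt{2\log(N/s)}$ yields worst-case risk $\lesssim s\log(N/s)$ uniformly over $\Sigma_s^N$. Combined with the lower bound $R^*(s,N) \gtrsim s\log(N/s)$ from the previous step, this gives $\inf_\lambda \sup_{x_0} R^\sharp \lesssim s\log(N/s) \lesssim R^*(s,N)$, and the constant $C_1$ absorbs both comparisons. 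Assembling the four bounds yields the claimed chain for $N \geq N_0(s)$.

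The main obstacle is the careful invocation of \cite{oymak2016sharp}: their characterizations are typically phrased as high-probability concentration statements or asymptotic identities, whereas here one needs a clean finite-$N$ bound on the \emph{expected} soft-thresholding risk that holds uniformly over $\Sigma_s^N$ for all $N \geq N_0(s)$. Passing from concentration to expectation, and verifying that the sparsity-adapted threshold scaling $\sqrt{2\log(N/s)}$ (rather than the universal $\sqrt{2\log N}$) is what produces the correct $s\log(N/s)$ rate uniformly in the sparsity level, is the delicate point; the packing lower bound and the descent-cone width estimates are otherwise standard.
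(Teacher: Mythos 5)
Your proposal is correct and follows essentially the same route as the paper: the central inequality is definitional, the two-sided bound $R^{*}(s,N)\asymp s\log(N/s)$ comes from the statistical dimension/Gaussian width of the $\ell_{1}$ tangent cone (via Theorem \ref{thm:hassibi-2-1} and the standard descent-cone estimates), the soft-thresholding risk bound of \cite{oymak2016sharp} handles the third inequality, and the classical sparse-denoising minimax lower bound gives the first. The only cosmetic differences are that the paper cites the minimax lower bound rather than re-deriving it by Fano, and reaches the third inequality through Theorem \ref{thm:qppd-rhs-stability} with $\bar\lambda=\sqrt{2\log N}$, absorbing the $\log N$ versus $\log(N/s)$ discrepancy into the requirement $N\geq N_{0}(s)$.
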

Thus, in the simplified theorems above, we could have normalized by any of the
above expressions instead of $R^{*}(s,N)$, because all three expressions are
asymptotically equivalent up to constants. In contrast, a consequence of
Proposition \ref{prop:lspd-optimal-risk} using Theorem \ref{thm:simplified-bppd} is that
\begin{align*}
  \adjustlimits \inf_{\sigma > 0} \sup_{x_{0} \in \Sigma_{s}^{N}} %
  \tilde R(\sigma; x_{0}, N, \eta) %
  \geq \adjustlimits \sup_{x_{0} \in \Sigma_{s}^{N}} \inf_{\sigma > 0} %
  \tilde R(\sigma; x_{0}, N, \eta)
  \gg R^{*}(s, N).
  % \min_{\sigma,\lambda > 0} \max_{x_{0} \in \Sigma_{s}^{N}} %
  % \frac{\tilde R(\sigma; x_{0}, N, \eta)}{%
  % R^{\sharp} (\lambda; x_{0}, N, \eta)} %
  % = \infty. 
\end{align*}
In particular, removing the parameters' noise dependence destroys the
equivalence attained in Proposition \ref{prop:pd-equivalence}.
%In particular, the equivalence attained in Proposition \ref{prop:pd-equivalence} requires that the parameters have dependence on the noise $z$. When this 

\subsection{Related work}
\label{ssec:related-work}

PD is a simple model that elucidates crucial properties of models in general
\cite{elad2012sparse}. As a central model for denoising, it lays the groundwork
for CS, deconvolution and inpainting problems \cite{elad2010role}. A fundamental
signal recovery phase transition in CS is predicted by geometric properties of
PD \cite{amelunxen2014living}, because the minimax risk for PD is equal to the
statistical dimension of the signal class \cite{oymak2016sharp}. This quantity
is a generalized version of $R^{*}(s, N)$ introduced above.

Robustness of PD to inexact information is discussed briefly in
\cite{oymak2016sharp}, wherein sensitivity to constraint set perturbation is
quantified, including an expression for right-sided stability of unconstrained
PD. Essentially, PD programs are proximal operators, a powerful tool
in convex and non-convex optimization \cite{bolte2010alternating,
  combettes2011proximal}. For a thorough treatment of proximal operators and
proximal point algorithms, we refer the reader to \cite{bertsekas2003convex,
  eckstein1992douglas, rockafellar1976monotone}. Thus is PD interesting in its
own right, as argued in \cite{oymak2016sharp}.

Equivalence of the above programs is illuminated from several perspectives
\cite{bertsekas2003convex, van2008probing, oymak2016sharp}. PD risk is
considered with more general convex constraints \cite{chatterjee2014new}. A
connection has been made between the risk of Unconstrained \textsc{Lasso} and
$R^{\sharp}(\lambda; x_{0}, N, \eta)$ \cite{bayati2012lasso,
  bayati2011dynamics}. In addition, there are near-optimal error bounds for
worst-case noise demonstrating that equality-constrained basis pursuit
($\sigma = 0$) performs well under the noisy CS model ($\eta \neq 0$)
\cite{wojtaszczyk2010stability}. It should be noted that these results do not
contradict those of this work, as random noise can be expected to perform better
than worst-case noise in general. Recently, a bound on the unconstrained
\textsc{Lasso} MSE has been proven, which is uniform in $\lambda$ and uniform in
$x_{0} \in B_{p}^{N}$ \cite[Thm 3.2]{miolane2018distribution}. Note that this
also does not run contrary to the left-sided parameter instability result
mentioned above as the uniformity in $\lambda$ is over a pre-specified interval
chosen independently of the optimal parameter choice $\lambda^{*}$, and the assumption on signal structure is different.

% \textsc{Lasso} programs are robust with respect to measurement error, and stable
% with respect to sparsity defect \cite{foucart2013mathematical}. The authors of
% \cite{wojtaszczyk2010stability} prove near-optimal error bounds for worst-case
% noise demonstrating that equality-constrained basis pursuit ($\sigma = 0$)
% performs well under the noisy CS model ($\eta \neq 0$). Here, one typically
% assumes the measurement matrix satisfies the restricted isometry property (RIP)
% \cite{foucart2013mathematical}. Some of these results may be further generalized
% \cite{liaw2017simple}.

% Sparse PD risk relates to yet another body of work. For example,
% \cite{chatterjee2015risk} proves PD risk bounds when the signal class is a
% polyhedral cone, and obtains convergence rates of the estimator for isotonic
% regression. For \textsc{Lasso}, the asymptotic risk for the unconstrained
% \textsc{Lasso} has been computed in \cite{bayati2012lasso, bayati2011dynamics},
% while non-asymptotic minimax risk estimates for generalized \textsc{Lasso}
% appear in \cite{oymak2013squared}.

\subsection{Notation}
\label{ssec:notation}

We use the standard notation for the Euclidean $p$ norm, $\|\cdot\|_{p}$, for
values $p \geq 1$, and occasionally make use of the overloaded notation
$\|x\|_{0} := \# \{ i \in [N] : x_{i} \neq 0\}$ to denote the number of nonzero
entries of a vector $x$. Let $N \in \nats$ be an integer representing
dimension. Let $x_{0} \in \Sigma_{s}^{N} \subseteq \reals^{N}$ be an $s$-sparse
signal with support set $T \subseteq [N] := \{1, 2, \ldots, N\}$, where
$s \ll N$ and
$\Sigma_{s}^{N} := \{ x \in \reals^{N} : 0 \leq \|x\|_{0} \leq s \}$ denotes the
set of $s$-sparse vectors. We use $x$ or $x'$ to denote an arbitrary $s$-sparse
signal, whereas $x_{0}$ denotes the signal for a given problem. Let
$z \in \reals^{N}$ be a normal random vector with covariance matrix equal to the
identity, $z_{i} \iid \mathcal{N}(0,1)$. Denote by $\eta \in (0,1)$ the standard
deviation and suppose $y = x_{0} + \eta z$. Moreover, let
$Z \sim \mathcal{N}(0,1)$ denote a standard normal random variable. Denote
$B_{p}^{N} := \{ x\in\reals^{N} : \|x\|_{p} \leq 1\}$ the standard $\ell_{p}$
ball and for a set $\mathcal{C}\subseteq \reals^{N}$, denote by
$\gamma \mathcal{C} := \{ \gamma x : x \in \mathcal{C}\}$ the scaling of
$\mathcal{C}$ by $\gamma$. All additional notation shall be introduced in
context.

%Lastly, we recall the
% \href{https://www.wikiwand.com/en/Big_O_notation#/Family_of_Bachmann.E2.80.93Landau_notations}{usual
%   notation for asymptotics}.
% \begin{itemize}
% \item[$o$:] $f(n) = o(g(n))$ iff $\forall C > 0 \exists n_{0} \forall n > n_{0} |f(n)| \leq C |g(n)|$, implying
%   \begin{align*}
%     \lim_{n\to \infty} \frac{f(n)}{g(n)} = 0 
%   \end{align*}

% \item[$\mathcal{O}$:] $f(n) = \mathcal{O}(g(n))$ iff $\exists C > 0 \exists n_{0} \forall n > n_{0} |f(n)| \leq k g(n)$, implying
%   \begin{align*}
%     \limsup_{n\to\infty} \frac{|f(n)|}{g(n)} < \infty
%   \end{align*}

% \item[$\Omega$:] $f(n) = \Omega(g(n))$ iff $g(n) = \mathcal{O}(f(n))$ (\ie $\exists C > 0 \exists n_{0} \forall n > n_{0} f(n) \geq k g(n)$), implying
%   \begin{align*}
%     \liminf_{n\to \infty} \frac{f(n)}{g(n)} > 0
%   \end{align*}

% \item[$\Theta$:] $f(n) = \Theta(g(n))$ iff $f(n) = \mathcal{O}(g(n))$ and $f(n) = \Omega(g(n))$. 
% \end{itemize}

% The above equalities are ``one-way'' inequalities. In particular, it is equivalent to say $f(n) = o(g(n))$, $f(n) \in o(g(n))$ or $f(n)$ is $o(g(n))$; however it is not equivalent to say $g(n) = o(f(n))$. Likewise for the other notations. 

\section{Main theoretical tools}
\label{sec:theoretical-tools}

In this section, we synthesize several known results from convex analysis and
probability theory, some with proof sketches to provide intuition. We outline
notation to refer to common objects from convex analysis. We introduce two
well-known tools for characterizing the \emph{effective dimension} of a set, and
state a result that connects these tools with PD estimators
\cite{oymak2016sharp}. We state a projection lemma that introduces a notion of
ordering for projection operators. To our knowledge, this final result in
\ref{sssec:projection-lemma} novel. In \ref{sssec:refin-bounds-gauss} we
state two recent results giving refined bounds on the Gaussian mean width of
convex polytopes intersected with Euclidean balls \cite{bellec2017localized}.

\subsection{Tools from convex analysis}
\label{ssec:tools-from-convex}

Let $f: \reals^N \to \reals$ be a convex function and let $x \in
\reals^N$. Denote by $\partial f(x)$ the subdifferential of $f$ at the point
$x$,
\begin{align*}
  \partial f (x) := \{ v \in \reals^N : \forall y, f(y) \geq f(x) + \ip{v, y-x} \}
\end{align*}
Note that $\partial f(x)$ is a nonempty, convex and compact set. Given $A
\subseteq \reals^N$ and $\lambda > 0$, denote
\begin{align*}
  \lambda A &:= \{ \lambda a : a \in A\},
  & % \\
  \mathrm{cone}(A) &:= \{ \lambda x : x \in A, \lambda \geq 0\}.
\end{align*}
For a nonempty set $\mathcal{C}$ and $x\in \reals^N$, denote the distance of $x$
to $\mathcal{C}$ by
$\mathrm{dist}(x, \mathcal{C}) := \inf_{w \in \mathcal{C}} \|x - w\|_2$. If
$\mathcal{C}$ is also closed and convex, then there exists a unique point in
$\mathcal{C}$ attaining the minimum, denoted
\begin{align*}
  \mathrm{P}_{\mathcal{C}}(x) := \argmin_{w \in \mathcal{C}} \|x - w\|_2.
\end{align*}
Denote by $C^{\circ} := \{ v \mid \forall x \in \mathcal{C}, \ip{v,x} \leq 0 \}$
the \emph{polar cone} of $\mathcal{C}$; and define the statistical dimension
\cite{amelunxen2014living} of $\mathcal{C}$ by
\begin{align*}
  \mathbf{D}(\mathcal{C}) := \E [\mathrm{dist}(g, \mathcal{C})^2], \quad g \sim
  \mathcal{N}(0, I_N)
\end{align*}
The descent set of a non-empty convex set $\mathcal{C}$ at a point
$x \in \reals^{N}$ is given by
$F_{\mathcal{C}}(x) := \{ h : x + h \in
\mathcal{C}\}$. The tangent cone is given by
$T_{\mathcal{C}}(x) := \mathrm{cl}(\mathrm{cone}(F_{\mathcal{C}}(x)))$ where
$\mathrm{cl}$ denotes the closure operation; it is the smallest closed cone
containing the set of feasible directions. With these tools, we recall the
result of \cite{oymak2016sharp} in the PD context, giving a precise
characterization of the risk for {\lspd}.

\begin{theorem}[{\cite[Theorem 2.1]{oymak2016sharp}}]
  \label{thm:hassibi-2-1}
  Let $\mathcal{C}$ be a non-empty closed and convex set, let $x \in C$ be an
  arbitrary vector and assume that $z \sim \mathcal{N}(0, I_N)$. Then
  \begin{align}\label{eq:oymakhassibi}
    \sup_{\eta > 0} \frac{1}{\eta^2} \E \|\mathrm{P}_\mathcal{C}(x + \eta z)
    - x\|_2^2 = \mathbf{D}(T_{\mathcal{C}}(x)^\circ).
  \end{align}
\end{theorem}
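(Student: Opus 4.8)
The plan is to reduce the denoising error to a projection onto a fixed convex set, rescale so that the tangent cone emerges in the small-noise limit, and then match a uniform upper bound against a limiting lower bound. Write $F := F_{\mathcal{C}}(x) = \mathcal{C} - x$, which is closed, convex, and contains $0$ since $x \in \mathcal{C}$. First I would observe that substituting $h = w - x$ into the defining minimization of $\mathrm{P}_{\mathcal C}(x+\eta z)$ turns it into $\min_{h \in F}\|h - \eta z\|_2$, so that $\mathrm{P}_{\mathcal{C}}(x + \eta z) - x = \mathrm{P}_F(\eta z)$. Using positive homogeneity of the projection, $\mathrm{P}_F(\eta z) = \eta\,\mathrm{P}_{\eta^{-1}F}(z)$, and hence
\[
  \frac{1}{\eta^2}\|\mathrm{P}_{\mathcal{C}}(x+\eta z) - x\|_2^2 = \|\mathrm{P}_{\eta^{-1}F}(z)\|_2^2 =: D(\eta).
\]
The key structural fact is that the dilations $\eta^{-1}F$ increase to $\mathrm{cone}(F)$ as $\eta \downarrow 0$, whose closure is exactly $T := T_{\mathcal{C}}(x)$.

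For the upper bound I would isolate an elementary but crucial lemma: if $A$ is a closed convex set with $0 \in A$ that is contained in a closed convex cone $K$, then $\|\mathrm{P}_A(z)\|_2 \le \|\mathrm{P}_K(z)\|_2$ pointwise. Writing $p = \mathrm{P}_A(z)$ and $k = \mathrm{P}_K(z)$: since $0 \in A$, the obtuse-angle condition (tested at $w=0$) gives $\|p\|_2^2 \le \ip{z,p}$; since $K$ is a cone, $z - k \in K^\circ$, so $\ip{z-k, p} \le 0$ because $p \in A \subseteq K$, giving $\ip{z,p} \le \ip{k,p} \le \|k\|_2\|p\|_2$. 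Chaining these yields $\|p\|_2 \le \|k\|_2$. Applying this with $A = \eta^{-1}F \subseteq T = K$ shows $D(\eta) \le \|\mathrm{P}_T(z)\|_2^2$ for every $\eta > 0$ and every $z$, whence $\sup_{\eta>0}\E D(\eta) \le \E\|\mathrm{P}_T(z)\|_2^2$. I would stress that this step genuinely exploits that the enveloping set is a \emph{cone}: for arbitrary nested convex sets the inequality can fail, so the cone structure of $T$ is doing essential work.

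For the matching lower bound, I would use that the closed convex sets $\eta^{-1}F$ increase to $\mathrm{cone}(F)$ as $\eta \downarrow 0$, and that projections onto an increasing family of closed convex sets converge to the projection onto the closure of their union; hence $\mathrm{P}_{\eta^{-1}F}(z) \to \mathrm{P}_T(z)$ for every $z$. Since $0 \in \eta^{-1}F$ forces $\|\mathrm{P}_{\eta^{-1}F}(z)\|_2 \le \|z\|_2$, which is square-integrable for Gaussian $z$, dominated convergence gives $\E D(\eta) \to \E\|\mathrm{P}_T(z)\|_2^2$; combined with the uniform upper bound this produces $\sup_{\eta>0}\E D(\eta) = \E\|\mathrm{P}_T(z)\|_2^2$. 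Finally, the Moreau decomposition $z = \mathrm{P}_T(z) + \mathrm{P}_{T^\circ}(z)$ with $\mathrm{P}_T(z) \perp \mathrm{P}_{T^\circ}(z)$ identifies $\|\mathrm{P}_T(z)\|_2 = \mathrm{dist}(z, T^\circ)$, so that $\E\|\mathrm{P}_T(z)\|_2^2 = \E\,\mathrm{dist}(z, T^\circ)^2 = \mathbf{D}(T^\circ)$, which is the claimed right-hand side.

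I expect the main obstacle to lie not in the upper-bound lemma, which is short once the cone structure is isolated, but in the careful justification of the small-noise limit: one must verify that the projections onto the increasing dilations $\eta^{-1}F$ converge to the projection onto the closed tangent cone, correctly handling the passage through the closure $\mathrm{cl}(\mathrm{cone}(F))$ (where $\mathrm{cone}(F)$ itself need not be closed), and then justify the exchange of limit and expectation. Everything downstream — the scaling identity, the obtuse-angle inequalities, and the Moreau identification — is routine.
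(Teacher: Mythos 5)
Your proof is correct, but note that the paper does not prove this statement at all: it is imported verbatim as \cite[Theorem 2.1]{oymak2016sharp}, so there is no in-paper argument to compare against. Your reduction $\mathrm{P}_{\mathcal{C}}(x+\eta z)-x=\mathrm{P}_{F}(\eta z)=\eta\,\mathrm{P}_{\eta^{-1}F}(z)$ with $F=\mathcal{C}-x$, the pointwise bound $\|\mathrm{P}_{A}(z)\|_2\le\|\mathrm{P}_{K}(z)\|_2$ for a convex $A\ni 0$ inside a closed convex cone $K$ (via $\|p\|_2^2\le\ip{z,p}$ from the obtuse-angle condition at $w=0$ and $\ip{z-k,p}\le 0$ from $z-k\in K^{\circ}$), the monotone increase of $\eta^{-1}F$ to $\mathrm{cone}(F)$ with the standard convergence of projections onto increasing closed convex sets, domination by $\|z\|_2^2$, and the Moreau identification $\|\mathrm{P}_{T}(z)\|_2=\mathrm{dist}(z,T^{\circ})$ are all sound, and your caveat that the enveloping set must be a cone (plain nestedness of convex sets does not suffice) is well taken. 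The one point worth flagging is that your argument is closely related to machinery the paper \emph{does} develop for other purposes: Lemma \ref{lem:projection-lemma} (the projection lemma, $\|\mathrm{P}_{K}(z)\|_2\le\|\mathrm{P}_{\lambda K}(z)\|_2$ for $\lambda\ge 1$, $0\in K$) shows directly that $\eta\mapsto\eta^{-2}\|\mathrm{P}_{\mathcal{C}}(x+\eta z)-x\|_2^2$ is monotone nondecreasing as $\eta\downarrow 0$, which is the route the original Oymak--Hassibi argument takes: the supremum over $\eta$ is then a monotone limit and one only needs to identify the limit, with no separate uniform upper bound through the cone. Your version trades that monotonicity for the cone comparison lemma plus dominated convergence; both close the argument, and yours has the mild advantage of making the role of the cone structure explicit.
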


In that work, the authors note
$\mathbf{D}(T_{\mathcal{C}}(x)^\circ) \approx w^{2}\big(T_{\mathcal{C}}(x) \cap
B_{2}^{N} \big)$, where $w(\cdot)$ denotes the Gaussian mean
width. Specifically, Gaussian mean width gives a near-optimal characterization
of the risk for {\lspd}. Thus, $w^{2}(\cdot)$ represents an \emph{effective
  dimension} of a structured convex set \cite{plan2013robust, plan2014dimension,
  plan2016generalized}.

\begin{definition}[Gaussian mean width]
  The Gaussian mean width (GMW) of a set $K \subseteq \reals^N$ is given by
  \begin{align*}
    w(K) := \E \sup_{x \in K} \ip{x,g}, \quad g \sim \mathcal{N}(0, I_N).
  \end{align*}
\end{definition}

% Next, we recall a classical result complementary to
% Proposition \ref{prop:pd-equivalence} that establishes an equivalence between {\bppd}
% and {\qppd}. A more general version of this result may be found in
% \cite{bertsekas2003convex}.
% \begin{prop}
%   \label{prop:bppd-to-qppd-equivalence}
%   For $s \geq 1$, fix $x_{0} \in \Sigma_{s}^{N}\setminus \Sigma_{s-1}^{N}$ and
%   $\eta, \sigma > 0$. Let $y = x_{0} + \eta z$ where
%   $z_{i} \iid \mathcal{N}(0,1)$. Where $\tilde x(\sigma)$ solves {\bppd}, there
%   exists $\lambda > 0$ such that $\tilde x (\sigma) = x^{\sharp}(\lambda)$ for
%   $x^{\sharp}(\lambda)$ as defined by {\qppd}.
% \end{prop}

% \begin{proof}
%   Regularity of the constraint sets guarantees this; the Fritz-John conditions
%   are satisfied, hence {\bppd} admits the Lagrange multiplier framework. This
%   framework is equivalent to {\qppd}, hence it is precisely the connection that
%   allows for this parameter correspondence \cite[Prop
%   5.2.1]{bertsekas2003convex}.
% \end{proof}

Next, we include one set of conditions under which $\tilde x(\sigma)$ lies in
the descent cone of the structure set, yielding a useful norm inequality. This
proposition is a simplification of classical results found in
\cite{foucart2013mathematical}.
\begin{proposition}[Descent cone condition]
  \label{prop:descent-cone}
  For $s \geq 0$, let $x_{0} \in \Sigma_{s}^{N}$. Suppose $y = x_{0} + \eta z$
  where $\eta > 0$ and $z \in \reals^{N}$ with $z_{i} \iid \mathcal{N}(0,1)$
  lies on the event $\mathcal{E} := \{ \|z\|_{2}^{2} \leq N - 2\sqrt N \}$. If
  $\tilde x (\sigma)$ solves {\bppd} with $\sigma \geq \eta \sqrt N$, then
  $\|\tilde x\|_1 \leq \|x_{0}\|_1$ and
  $\|\tilde x -x_{0} \|_1 \leq 2 \sqrt{s} \|\tilde x - x_{0}\|_2$.
\end{proposition}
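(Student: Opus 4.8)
The plan is to first observe that the stated hypotheses force $x_{0}$ to be feasible for {\bppd}, and then to run the classical compressed sensing cone argument. On the event $\mathcal{E}$ we have
\[
  \|y - x_{0}\|_{2}^{2} = \|\eta z\|_{2}^{2} = \eta^{2}\|z\|_{2}^{2} \leq \eta^{2}(N - 2\sqrt{N}) < \eta^{2} N \leq \sigma^{2},
\]
where the final inequality uses $\sigma \geq \eta\sqrt{N}$. Hence $x_{0}$ satisfies the constraint $\|y - x\|_{2}^{2} \leq \sigma^{2}$ that defines the feasible set of {\bppd}. Since $\tilde x(\sigma)$ minimizes $\|\cdot\|_{1}$ over precisely that feasible set and $x_{0}$ belongs to it, optimality of $\tilde x$ immediately gives the first claim $\|\tilde x\|_{1} \leq \|x_{0}\|_{1}$.

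For the second claim I would set $h := \tilde x - x_{0}$ and let $T \subseteq [N]$ denote the support of $x_{0}$, so that $|T| \leq s$ and $(x_{0})_{T^{c}} = 0$. Splitting the $\ell_{1}$ norm over $T$ and $T^{c}$ and applying the reverse triangle inequality on the $T$-block yields
\[
  \|x_{0}\|_{1} \geq \|\tilde x\|_{1} = \|(x_{0})_{T} + h_{T}\|_{1} + \|h_{T^{c}}\|_{1} \geq \|x_{0}\|_{1} - \|h_{T}\|_{1} + \|h_{T^{c}}\|_{1}.
\]
Cancelling $\|x_{0}\|_{1}$ produces the cone condition $\|h_{T^{c}}\|_{1} \leq \|h_{T}\|_{1}$.

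Finally I would combine the cone condition with Cauchy--Schwarz. Because $h_{T}$ is supported on a set of cardinality at most $s$, we have $\|h_{T}\|_{1} \leq \sqrt{s}\,\|h_{T}\|_{2} \leq \sqrt{s}\,\|h\|_{2}$, and therefore
\[
  \|h\|_{1} = \|h_{T}\|_{1} + \|h_{T^{c}}\|_{1} \leq 2\|h_{T}\|_{1} \leq 2\sqrt{s}\,\|h\|_{2},
\]
which is exactly $\|\tilde x - x_{0}\|_{1} \leq 2\sqrt{s}\,\|\tilde x - x_{0}\|_{2}$. I do not anticipate a genuine obstacle: the only substantive step is the feasibility verification in the first paragraph, where the precise constant $N - 2\sqrt{N}$ appearing in $\mathcal{E}$, together with the parameter hypothesis $\sigma \geq \eta\sqrt{N}$, is calibrated exactly so that $\|y - x_{0}\|_{2} < \sigma$. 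Everything after that is the textbook cone inequality and requires no new ideas.
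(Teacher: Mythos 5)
Your proof is correct, and the overall strategy is the same as the paper's: verify feasibility of $x_{0}$ on $\mathcal{E}$, deduce $\|\tilde x\|_{1}\leq\|x_{0}\|_{1}$ from optimality, and then convert that into the cone inequality via Cauchy--Schwarz on the $T$-block. The one place you diverge is in how you obtain the cone condition. The paper routes through Lemma \ref{lem:descent-cone-equivalence}, which characterizes the $\ell_{1}$ tangent cone as $\{h:\|h_{T^{C}}\|_{1}\leq-\ip{\sgn(x_{0}),h}\}$, and then bounds $\|h\|_{1}\leq\ip{\sgn(h_{T})-\sgn(x_{0}),h_{T}}\leq\|\sgn(h_{T})-\sgn(x_{0})\|_{2}\|h\|_{2}\leq 2\sqrt{s}\|h\|_{2}$. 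You instead derive $\|h_{T^{c}}\|_{1}\leq\|h_{T}\|_{1}$ directly from the reverse triangle inequality and conclude $\|h\|_{1}\leq 2\|h_{T}\|_{1}\leq 2\sqrt{s}\|h\|_{2}$. Your version is more elementary and self-contained (it does not need the tangent-cone lemma at all), and it yields the same constant; the paper's version is marginally sharper at the intermediate step, since $-\ip{\sgn(x_{0}),h}\leq\|h_{T}\|_{1}$, and has the side benefit of exercising the cone characterization that the paper reuses elsewhere. Either way the final bound $2\sqrt{s}\|h\|_{2}$ is identical, so nothing is lost.
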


\begin{proof}[Proof of Proposition \ref{prop:descent-cone}] %
  Since $\sigma^2 \geq N$ and $\|z\|_2^2 \leq N - 2\sqrt N \leq \sqrt N$, it
  follows by $\tilde x(\sigma)$ being the minimizer and $x_{0}$ being in the
  feasible set that $\|\tilde x(\sigma)\|_1 < \|x_{0}\|_1$ on
  $\mathcal{E}$. Hence, $\tilde x - x_{0} \in T_{B_{1}^{N}}(x_{0})$, the
  $\ell_{1}$ tangent cone of $x_{0}$. By
  Lemma \ref{lem:descent-cone-equivalence}, one obtains the desired identity,
  \begin{align*}
    \|\tilde x - x_{0} \|_{1} %
    &= \|h\|_{1} = \|h_{T}\|_{1} + \|h_{T^{C}}\|_{1} %
      \leq \ip{\mathrm{sgn}(\tilde x_{T} - x_{0}), h_{T}} - \ip{\sgn x_{0}, h}
    \\
    &\leq \|\mathrm{sgn}(\tilde x_{T} - x_{0}) - \mathrm{sgn}(x_{0})\|_{2}
      \|h\|_{2} %
      \leq 2\sqrt{s}\|h\|_{2}.
  \end{align*}

\end{proof}

\begin{lemma}[Equivalent $\ell_{1}$ descent cone characterization]
  \label{lem:descent-cone-equivalence}
  Let $x \in \Sigma_{s}^{N}$ with non-empty support set $T \subseteq [N]$ and
  define $\mathcal{C} := \|x\|_{1} B_{1}^{N}$. Let
  $T_{\mathcal{C}}(x) = \mathrm{cone}(F_{\mathcal{C}}(x))$ be the tangent cone
  of the scaled $\ell_{1}$ ball about the point $x$ and define the set
  $K(x) := \{ h \in \reals^{N} : \|h_{T^{C}}\|_{1} \leq -\ip{\mathrm{sgn}(x),
    h}\}$. Then $T_{\mathcal{C}}(x) = K(x)$.
\end{lemma}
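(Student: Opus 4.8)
The plan is to prove the two set inclusions $\mathrm{cone}(F_{\mathcal{C}}(x)) \subseteq K(x)$ and $K(x) \subseteq \mathrm{cone}(F_{\mathcal{C}}(x))$ separately, using the subgradient inequality for the absolute value in one direction and a local sign-preservation (linearization) argument in the other. Throughout I write a candidate direction as $h = h_{T} + h_{T^{C}}$ and use that $\sgn(x)$ is supported on $T$, so that $\ip{\sgn(x), h} = \sum_{i \in T} \sgn(x_{i}) h_{i}$ and $\sum_{i \in T}|x_{i}| = \|x\|_{1}$. Since $\|x\|_{1}$ is exactly the radius of $\mathcal{C} = \|x\|_{1} B_{1}^{N}$, the point $x$ lies on $\partial \mathcal{C}$, and a direction $h$ is feasible precisely when $\|x + h\|_{1} \leq \|x\|_{1}$.

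For the first inclusion, fix $h \in F_{\mathcal{C}}(x)$, so $\|x + h\|_{1} \leq \|x\|_{1}$. Applying the subgradient inequality $|x_{i} + h_{i}| \geq |x_{i}| + \sgn(x_{i}) h_{i}$ on $T$ and the trivial bound $|h_{i}| \geq 0 = |x_{i}| + \sgn(x_{i}) h_{i}$ on $T^{C}$ (where $x_{i} = 0$, so $\sgn(x_{i}) = 0$), I obtain
\begin{align*}
  \|x\|_{1} \geq \|x + h\|_{1} \geq \|x\|_{1} + \ip{\sgn(x), h} + \|h_{T^{C}}\|_{1},
\end{align*}
which rearranges to $\|h_{T^{C}}\|_{1} \leq -\ip{\sgn(x), h}$, i.e. $h \in K(x)$. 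Because $K(x)$ is a cone, $\mathrm{cone}(F_{\mathcal{C}}(x)) \subseteq K(x)$ follows at once.

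For the reverse inclusion, fix $h \in K(x)$; I will exhibit $t > 0$ with $t h \in F_{\mathcal{C}}(x)$, whence $h = (1/t)(th) \in \mathrm{cone}(F_{\mathcal{C}}(x))$. Choosing $t$ small enough that $t|h_{i}| < |x_{i}|$ for every $i \in T$ (possible since $T$ is non-empty and $x_{i} \neq 0$ there) preserves signs on the support, so that $|x_{i} + t h_{i}| = |x_{i}| + t\,\sgn(x_{i}) h_{i}$ for $i \in T$. Summing this over $T$ together with $\sum_{i \in T^{C}} |t h_{i}| = t\|h_{T^{C}}\|_{1}$ gives
\begin{align*}
  \|x + t h\|_{1} = \|x\|_{1} + t\big(\ip{\sgn(x), h} + \|h_{T^{C}}\|_{1}\big) \leq \|x\|_{1},
\end{align*}
the final inequality by $h \in K(x)$. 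This proves $K(x) \subseteq \mathrm{cone}(F_{\mathcal{C}}(x))$, and the two inclusions yield the claimed equality. I note that this argument establishes equality with the \emph{un-closed} cone, so $\mathrm{cone}(F_{\mathcal{C}}(x))$ is automatically closed, consistent with $K(x)$ being cut out by a single continuous inequality.

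The only genuine obstacle lies in the reverse inclusion: one must certify that the linearization $\|x + t h\|_{1} = \|x\|_{1} + t(\ip{\sgn(x),h} + \|h_{T^{C}}\|_{1})$ holds with \emph{equality} for small $t$, which is exactly where the sparse structure enters — sign-preservation on $T$ requires both that $x$ be nonzero on its support and that $t$ be taken small. Everything else reduces to the subgradient inequality for the absolute value, so the proof stays elementary.
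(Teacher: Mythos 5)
Your proof is correct and follows essentially the same route as the paper's: for one inclusion you show $F_{\mathcal{C}}(x)\subseteq K(x)$ and use that $K(x)$ is a cone (your subgradient inequality is just a repackaging of the paper's manipulation with $\ip{\sgn(z),z_{T}}$), and for the other you rescale $h$ so that signs on $T$ are preserved, exactly as the paper does with $z=x+\alpha w$ and $z_{j}x_{j}\geq 0$. The closing remark about the cone being automatically closed is a nice touch consistent with the lemma dropping the closure from the general definition of $T_{\mathcal{C}}(x)$.
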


\subsubsection{Projection lemma}
\label{sssec:projection-lemma}

We introduce a result that to our knowledge is novel: the projection
lemma. Given $z \in \reals^{N}$, this lemma orders the one-parameter family of
projections $z_{t} := \mathrm{P}_{t K}(z)$ as a function of $t > 0$ when $K$ is
a closed and convex set with $0 \in K$. Namely, as depicted in
Figure \ref{fig:projection-lemma},
$\|\mathrm{P}_{t K}(z)\|_{2} \leq \|\mathrm{P}_{u K}(z)\|_{2}$ for
$0 < t \leq u < \infty$.

This lemma has immediate consequences for the ability of proximal algorithms to
recover the $0$ vector from corrupted measurements. Note that the set $K$ need
be neither symmetric nor origin-centered, but it must be convex, in general; we
have included a pictorial counterexample in
Figure \ref{fig:projection-lemma-nonconvex} to depict why.

\begin{figure}[b!]
  \hfill
  \begin{subfigure}{.4\textwidth}
    \centering
    \includegraphics[width=.8\linewidth]{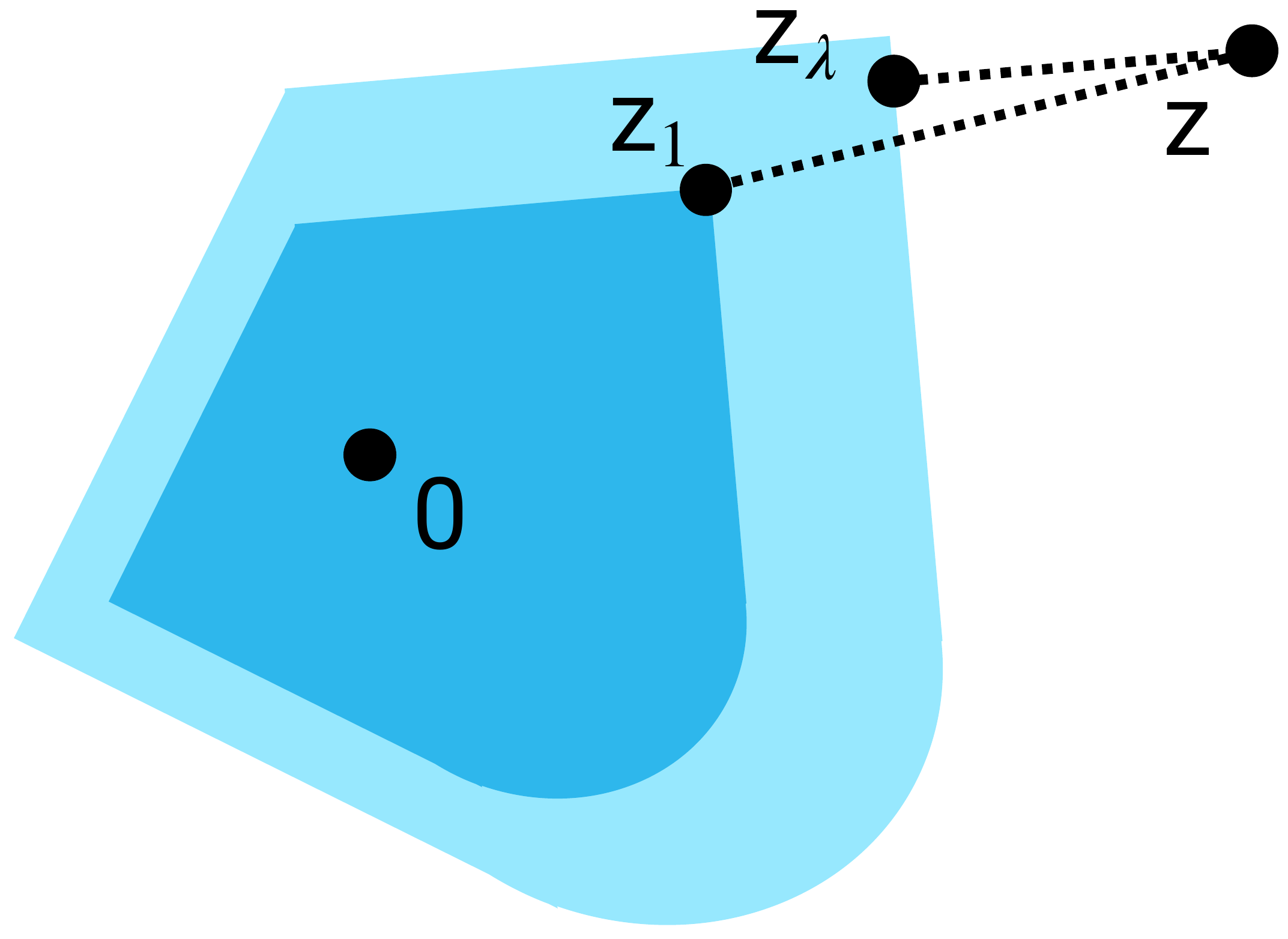}
    \subcaption{\label{fig:projection-lemma}}
  \end{subfigure}
  \hfill
  \begin{subfigure}{.4\textwidth}
    \includegraphics[width=.8\linewidth]{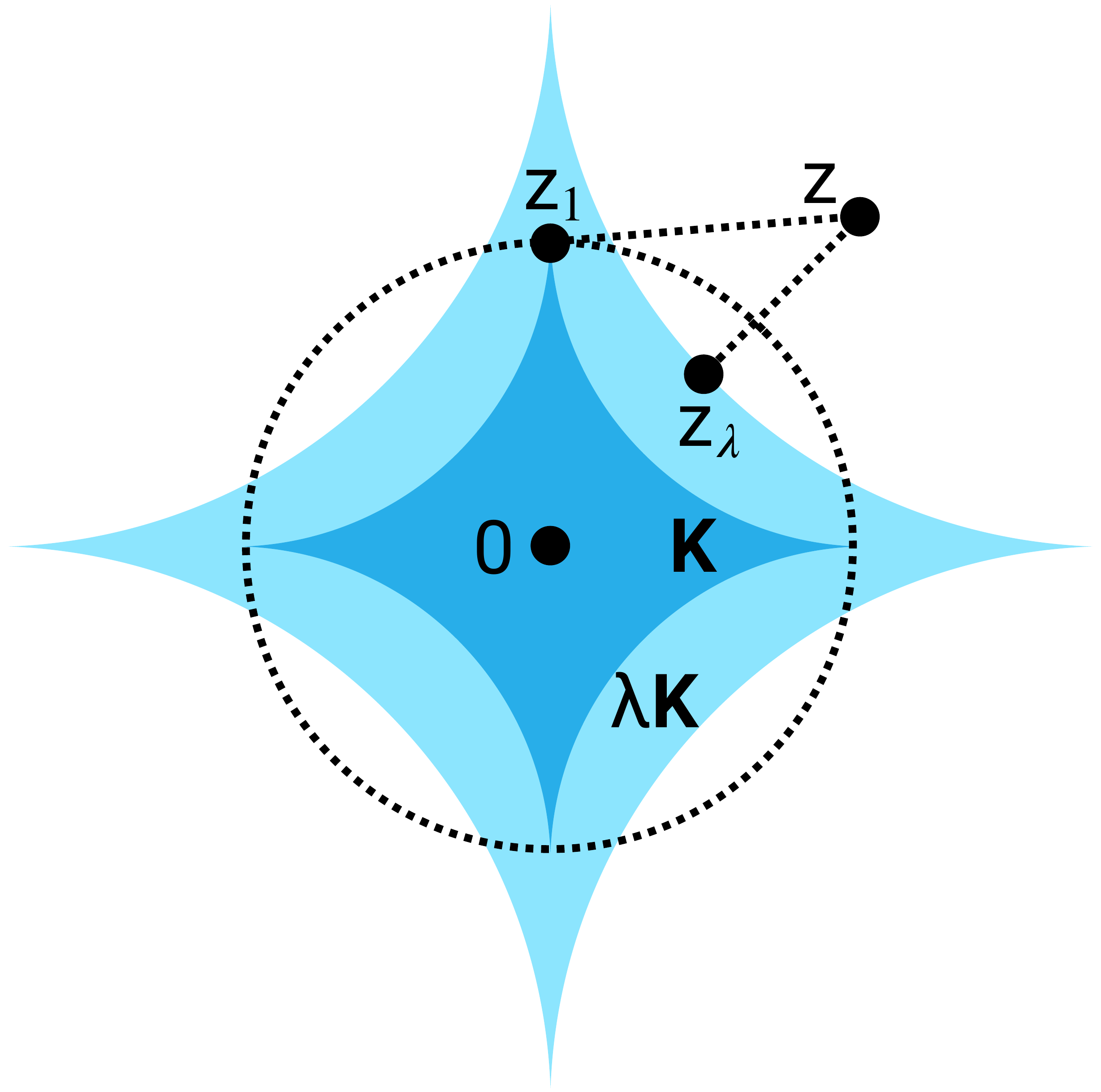}
    \subcaption{\label{fig:projection-lemma-nonconvex}}
  \end{subfigure}
  \caption{(\protect\subref{fig:projection-lemma}) A visualization of the
    lemma. Projecting $z$ onto the outer and inner set gives $z_{\lambda}$ and
    $z_{1}$, respectively; evidently, $\|z_{1}\|_{2} \leq
    \|z_{\lambda}\|_{2}$. (\protect\subref{fig:projection-lemma-nonconvex}) A
    counterexample using scaled $\ell_{p}$ balls for some $0 < p < 1$,
    suggesting why $K$ must be convex in general. Here, $z$ is projected inwards
    onto $\lambda K$, but towards a distal vertex when projected onto $K$. }
  \hfill
\end{figure}

\begin{lemma}[Projection lemma]
  \label{lem:projection-lemma}
  Let $K \subseteq \reals^{n}$ be a non-empty closed and convex set with
  $0 \in K$, and fix $\lambda \geq 1$. For $z \in \reals^{n}$,
  \begin{align*}
    \|\mathrm{P}_{K}(z)\|_{2} \leq \|\mathrm{P}_{\lambda K}(z)\|_{2}. 
  \end{align*}
\end{lemma}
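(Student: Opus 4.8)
The plan is to work entirely from the variational (obtuse-angle) characterization of the metric projection: for a nonempty closed convex set $C$ and any $z$, the point $p = \mathrm{P}_{C}(z)$ is the unique element of $C$ satisfying $\ip{z - p, w - p} \le 0$ for every $w \in C$. Writing $a := \mathrm{P}_{K}(z)$ and $b := \mathrm{P}_{\lambda K}(z)$, I would first record the two membership facts that make the comparison possible: since $a \in K$ we have $\lambda a \in \lambda K$, and since $b \in \lambda K$ we have $\lambda^{-1} b \in K$. These are exactly the ``cross'' test points needed to relate the two projections, and they require only $\lambda \geq 1$ together with the definitions.

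Next I would feed each cross test point into the opposite variational inequality: taking $w = \lambda^{-1} b$ in the inequality for $a$ gives $\ip{z - a,\, \lambda^{-1} b - a} \le 0$, and taking $w = \lambda a$ in the inequality for $b$ gives $\ip{z - b,\, \lambda a - b} \le 0$. The crucial step is to choose the combination that annihilates the unknown $z$: multiplying the first inequality by $\lambda$ and adding the second, the vectors multiplying $z$ are $b - \lambda a$ and $\lambda a - b$, which cancel. After simplification the $z$-free inequality that survives is $\ip{a - b,\, \lambda a - b} \le 0$, that is,
\begin{align*}
  \lambda \|a\|_{2}^{2} + \|b\|_{2}^{2} \le (1 + \lambda)\ip{a, b}.
\end{align*}

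To finish, I would apply Cauchy--Schwarz, $\ip{a, b} \le \|a\|_{2}\|b\|_{2}$, and set $v := \|a\|_{2}$ and $u := \|b\|_{2}$, turning the display into $u^{2} - (1 + \lambda) uv + \lambda v^{2} \le 0$, which factors as $(u - v)(u - \lambda v) \le 0$. Since $\lambda \ge 1$ the two roots satisfy $v \le \lambda v$, so $u$ is pinned to the interval $[v, \lambda v]$; in particular $u \ge v$, which is precisely $\|\mathrm{P}_{\lambda K}(z)\|_{2} \ge \|\mathrm{P}_{K}(z)\|_{2}$. The degenerate cases $a = 0$ or $b = 0$ can be read off directly from the displayed inequality.

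I expect the main obstacle to be locating the combination that eliminates $z$: naively adding the raw variational inequalities leaves a stray $\ip{z, \cdot}$ term, and in fact no nonnegative linear combination of the available inequalities can yield $\|a\|_{2}^{2} \le \|b\|_{2}^{2}$ on its own, since the $\|b\|_{2}^{2}$ contributions all enter with the wrong sign. The resolution is the two observations above: scale the inequality for $a$ by exactly $\lambda$ so that the $z$-coefficients become negatives of one another, and then pass to the quadratic form through Cauchy--Schwarz rather than remaining linear. It is worth flagging for the writeup that $\lambda \ge 1$ is used essentially --- it is what orders the roots $v \le \lambda v$ --- whereas the hypothesis $0 \in K$ is not actually needed for this one-sided bound.
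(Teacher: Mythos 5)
Your proposal is correct and takes essentially the same route as the paper's proof: you use the identical two cross variational inequalities (testing $\mathrm{P}_{K}(z)$ against $\lambda^{-1}\mathrm{P}_{\lambda K}(z)\in K$ and $\mathrm{P}_{\lambda K}(z)$ against $\lambda\mathrm{P}_{K}(z)\in\lambda K$), scale the first by $\lambda$ and add so that $z$ cancels, arriving at exactly the paper's intermediate inequality $\ip{\lambda z_{1} - z_{\lambda},\, z_{1} - z_{\lambda}} \leq 0$. Your finish --- expanding, applying Cauchy--Schwarz, and factoring $(u-v)(u-\lambda v)\le 0$ --- is only a cosmetic variant of the paper's (which extracts $\ip{z_{1}, z_{\lambda}-z_{1}}\ge 0$ and invokes convexity of $t\mapsto\|t z_{\lambda}+(1-t)z_{1}\|_{2}^{2}$, with the Cauchy--Schwarz shortcut already noted in its own remark), and your side observations (the bonus upper bound $\|\mathrm{P}_{\lambda K}(z)\|_{2}\le\lambda\|\mathrm{P}_{K}(z)\|_{2}$, and that $0\in K$ is not actually used) are both correct.
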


The following is an alternative version of Lemma \ref{lem:projection-lemma} which
quickly follows.

\begin{corollary}%[Signal processing projection lemma]
  \label{coro:projection-lemma}
  Let $K \subseteq \reals^{n}$ be a non-empty closed and convex set with
  $0 \in K$ and let $\|\cdot\|_{K}$ be the gauge of $K$. Given
  $y \in \reals^{n}$ define
  \begin{align*}
    x_{\alpha} := \argmin \{ \|x\|_{K} : \|x - y\|_{2} \leq \alpha \} 
  \end{align*}
  Then $\|x_{\alpha}\|_{2}$ is decreasing in $\alpha$. 
\end{corollary}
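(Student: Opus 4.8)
The plan is to reduce the corollary to the Projection lemma (Lemma \ref{lem:projection-lemma}) via the equivalence between the gauge-minimization program defining $x_\alpha$ and a projection onto a scaled copy of $K$. Write $\tau(\alpha) := \|x_\alpha\|_K$ for the optimal objective value. The first step is the elementary monotonicity of this value function: if $\alpha_1 \le \alpha_2$ then the feasible set $\{x : \|x - y\|_2 \le \alpha_1\}$ is contained in $\{x : \|x - y\|_2 \le \alpha_2\}$, so minimizing the same objective over a larger set can only decrease its value, giving $\tau(\alpha_2) \le \tau(\alpha_1)$. Thus $\alpha \mapsto \tau(\alpha)$ is non-increasing.

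The crux is the identity $x_\alpha = \mathrm{P}_{\tau(\alpha) K}(y)$. Recall that, since $K$ is closed and convex with $0 \in K$, the sublevel set representation $\tau K = \{x : \|x\|_K \le \tau\}$ holds. Consider first the nondegenerate case $\|y\|_2 > \alpha$, so that $\tau(\alpha) > 0$. I would argue that (i) the constraint is active, $\|x_\alpha - y\|_2 = \alpha$, since otherwise one could move $x_\alpha$ slightly toward the origin along the segment $[0, x_\alpha]$, which strictly decreases the positively homogeneous gauge while preserving feasibility, contradicting optimality; and (ii) $\mathrm{dist}(y, \tau(\alpha) K) = \alpha$, where the bound $\mathrm{dist}(y,\tau(\alpha)K) \le \alpha$ is immediate from $x_\alpha \in \tau(\alpha) K$, and a strict inequality would let one shrink the projection $\mathrm{P}_{\tau(\alpha)K}(y)$ slightly toward $0$ to obtain a feasible point of strictly smaller gauge, again contradicting optimality. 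Since the metric projection onto the closed convex set $\tau(\alpha) K$ is unique and $x_\alpha$ attains the minimal distance $\alpha$, it follows that $x_\alpha = \mathrm{P}_{\tau(\alpha) K}(y)$.

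With the equivalence in hand the conclusion is immediate. For $\alpha_1 \le \alpha_2$ set $\tau_1 := \tau(\alpha_1) \ge \tau_2 := \tau(\alpha_2) > 0$ and write $\tau_1 K = \lambda (\tau_2 K)$ with $\lambda := \tau_1 / \tau_2 \ge 1$. The set $\tau_2 K$ is nonempty, closed, convex and contains $0$, so applying Lemma \ref{lem:projection-lemma} to $\tau_2 K$ with scaling $\lambda$ yields
\[
  \|x_{\alpha_2}\|_2 = \|\mathrm{P}_{\tau_2 K}(y)\|_2 \le \|\mathrm{P}_{\lambda \tau_2 K}(y)\|_2 = \|\mathrm{P}_{\tau_1 K}(y)\|_2 = \|x_{\alpha_1}\|_2,
\]
which is exactly the claimed monotonicity. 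In the degenerate case $\|y\|_2 \le \alpha$ the origin is feasible and has minimal gauge $0$, forcing $x_\alpha = 0$ and hence $\|x_\alpha\|_2 = 0$, so the bound persists there as well.

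I expect the main obstacle to be rigorously establishing the program equivalence, specifically verifying that the constraint is active and that the distance from $y$ to $\tau(\alpha)K$ equals exactly $\alpha$; uniqueness of the metric projection is what upgrades ``$x_\alpha$ is a minimal-distance point'' to the identity $x_\alpha = \mathrm{P}_{\tau(\alpha)K}(y)$. A secondary point requiring care is degeneracy of the gauge when $K$ is unbounded, in which case $\|\cdot\|_K$ may vanish on a nontrivial cone; this does not arise in the generic denoising setting $K = B_1^N$, but it should be flagged in the general statement.
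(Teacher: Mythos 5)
Your proof is correct and follows exactly the route the paper intends: the paper gives no explicit argument for this corollary, asserting only that it ``quickly follows'' from Lemma \ref{lem:projection-lemma}, and your reduction via the identity $x_{\alpha} = \mathrm{P}_{\tau(\alpha)K}(y)$ together with the monotonicity of the optimal value $\tau(\alpha)$ supplies precisely the missing details. Your closing caveat about gauges that vanish on a nontrivial cone (unbounded $K$) is a fair observation about the generality of the statement, but it does not affect the setting in which the corollary is used ($K$ a scaled $\ell_{1}$ ball).
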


\begin{remark}
  The proof of Lemma \ref{lem:projection-lemma} examines the derivative of the
  function $f(t) := \|u_{t}\|_{2}^{2}$, where
  $u_{t} := t \mathrm{P}_{\lambda K}(z) + (1-t) \mathrm{P}_{K}(z)$, and yields a
  growth rate of this derivative at $t = 0$:
  \begin{align*}
    \frac12 \left.\dee{}{t} f(t)\right|_{t=0} %
    = \ip{z_{1}, z_{\lambda} - z_{1}} %
    \geq \frac{\|z_{\lambda} - z_{1}\|_{2}^{2}}{\lambda - 1}. 
  \end{align*}
\end{remark}

%%%%%%%%%%%%%%%%%%%%%%%%%
%
% Do we include the generalized version of the projection
% lemma instead of the simple version?
%
%%%%%%%%%%%%%%%%%%%%%%%%%
%
% \begin{lem}[Generalized projection lemma]
%   \label{lem:genzd-projection-lemma}
%   Let $K \subseteq \reals^{n}$ be a non-empty closed and convex set containing
%   the origin and let $\phi : \reals^{n} \to \reals$ be a [convex] gauge. Denote
%   by $\mathrm{P}_{C}^{\phi}(x) := \argmin_{w \in C} \phi(x - w)$ the
%   $\phi$-projection onto $C$. Fix $\lambda \geq 1$. For
%   $z \in \reals^{n}\setminus \lambda K $, $z_{1} := P_{K}^{\phi}(z)$ and
%   $z_{\lambda} := P_{\lambda K}^{\phi}(z)$, assume there exists
%   $v \in \reals^{n}$ such that
%   ${z_{1} - v \in \mathrm{cone}(\partial \phi(z - z_{1}))}$ and
%   ${z_{\lambda} - v \in \mathrm{cone}(\partial \phi(z - z_{\lambda}))}$, where
%   $\partial f(x)$ denotes the subdifferential of $f$ at the point $x$. Then
%   \begin{align*}
%     \|z_{\lambda}\|_{2} \geq \|z_{1}\|_{2}.
%   \end{align*}
%  
% \end{lem}
%
% \begin{coro}[Projection lemma]
%   \label{coro:projection-lemma}
%   Let $K \subseteq \reals^{n}$ be a non-empty closed and convex set containing
%   the origin, and fix $\lambda \geq 1$. For $z \in \reals^{n}$,
%   \begin{align*}
%     \|\mathrm{P}_{K}(z)\|_{2} \leq \|\mathrm{P}_{\lambda K}(z)\|_{2}. 
%   \end{align*}
% \end{coro}
%
% \begin{proof}[Proof of {\ref{coro:projection-lemma}}]
%   Observe that choosing $v = z$ satisfies the remaining condition of
%   Lemma \ref{lem:genzd-projection-lemma}.
% \end{proof}
%%%%%%%%%%%%%%%%%%%%%%%%%

\subsection{Tools from probability theory}
\label{ssec:tools-from-prob}

For a full treatment of the topics herein, we refer the reader to
\cite{foucart2013mathematical, vershynin2018high, van2014probability,
  adler2009random}. We start by defining subgaussian random variables and
stating Hoeffding's inequality, which characterizes how they concentrate in high
dimensions.

\begin{definition}[$\psi_{2}$-norm]
  The subgaussian norm of a random variable $X$ is
  \begin{align*}
    \|X\|_{\psi_{2}} := \sup_{p\geq 1} p^{-1/2}\big( \E |X|^{p}\big)^{1/p}
  \end{align*}
  A random variable $X$ is subgaussian iff $\|X\|_{\psi_{2}} < \infty$. 
\end{definition}

\begin{theorem}[General Hoeffding's inequality {\cite[Theorem 2.6.3]{vershynin2018high}}]
  \label{thm:hoeffding-subgaussian}
  Let $X_{i}$, $i = 1, \ldots n$, be mean-zero subgaussian random variables and let $a \in \reals^{n}$. For $t > 0$,
  \begin{align*}
    \mathbb{P}\big( \big| \sum_{i=1}^{n} a_{i} X_{i} \big| \geq t \big) \leq e \cdot \exp\Big( \frac{-t^{2}}{C \sum_{i=1}^{n}a_{i}^{2} \|X_{i}\|_{\psi_{2}}^{2}}\Big)
  \end{align*}
\end{theorem}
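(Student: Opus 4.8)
The plan is to run the classical Chernoff-bound argument: control the moment generating function (MGF) of the weighted sum, tensorize over the summands, and optimize the exponential parameter. The whole approach rests on one structural fact that must be extracted from the moment-based definition of $\|\cdot\|_{\psi_{2}}$ given above, namely that a mean-zero subgaussian $X$ has an exponentially bounded MGF: there is an absolute constant $C$ with $\E \exp(\theta X) \leq \exp(C \theta^{2} \|X\|_{\psi_{2}}^{2})$ for every $\theta \in \reals$. I also note that independence of the $X_{i}$ is needed in order to tensorize the MGF and should be read into the hypotheses, as in the cited source.

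First I would establish the MGF bound. Writing $K = \|X\|_{\psi_{2}}$, the definition gives $(\E|X|^{p})^{1/p} \leq K\sqrt{p}$, hence $\E|X|^{p} \leq K^{p} p^{p/2}$ for all $p \geq 1$. Expanding the exponential and using $\E X = 0$ to discard the linear term,
\begin{align*}
  \E \exp(\theta X)
  = 1 + \sum_{k \geq 2} \frac{\theta^{k} \E X^{k}}{k!}
  \leq 1 + \sum_{k \geq 2} \frac{|\theta|^{k} K^{k} k^{k/2}}{k!},
\end{align*}
and bounding $k! \geq (k/e)^{k}$ turns the $k$-th term into $(e K |\theta| /\sqrt{k})^{k}$, a series that sums to an expression of the form $\exp(C\theta^{2}K^{2})$ once $|\theta|K$ is below an absolute threshold; the regime of large $|\theta|$ is handled by a separate elementary bound to obtain the all-$\theta$ version. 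The essential point, and the step I expect to be most delicate, is that discarding the linear term requires the mean-zero hypothesis: without it one only obtains $\E\exp(\theta X) \leq \exp(\theta \E X + C\theta^{2}K^{2})$, and the stray first-order term would destroy the symmetric two-sided tail.

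With the MGF bound in hand, the remainder is routine. For $\theta > 0$, Markov's inequality applied to $\exp(\theta \sum_{i} a_{i} X_{i})$ gives, by independence,
\begin{align*}
  \mathbb{P}\Big( \sum_{i} a_{i} X_{i} \geq t \Big)
  \leq e^{-\theta t} \prod_{i} \E \exp(\theta a_{i} X_{i})
  \leq \exp\Big( -\theta t + C\theta^{2} \sum_{i} a_{i}^{2}\|X_{i}\|_{\psi_{2}}^{2} \Big).
\end{align*}
Optimizing over $\theta$ by taking $\theta = t / (2C\sigma^{2})$ with $\sigma^{2} := \sum_{i} a_{i}^{2}\|X_{i}\|_{\psi_{2}}^{2}$ yields the one-sided bound $\exp(-t^{2}/(4C\sigma^{2}))$. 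Applying the identical argument to $-\sum_{i} a_{i} X_{i}$ and combining by a union bound produces the two-sided estimate with a leading factor of $2 \leq e$, which is exactly the stated inequality after relabeling the constant $C$.
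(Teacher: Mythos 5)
The paper does not prove this statement—it is quoted verbatim from Vershynin \cite[Theorem 2.6.3]{vershynin2018high}—and your argument is precisely the standard MGF/Chernoff proof given in that reference: derive $\E\exp(\theta X)\leq\exp(C\theta^{2}\|X\|_{\psi_{2}}^{2})$ from the moment characterization (using mean-zero to kill the linear term), tensorize over independent summands, optimize $\theta$, and union-bound the two tails. Your observation that independence of the $X_{i}$ must be read into the hypotheses is correct; the paper's statement omits it, but it is assumed in the cited source and is genuinely needed for the tensorization step. The only place you wave your hands—the MGF bound in the regime $|\theta|\,\|X\|_{\psi_{2}}\gtrsim 1$—is a real step of the standard proof (it uses $\theta x\leq \theta^{2}K^{2}+x^{2}/K^{2}$ together with $\E\exp(X^{2}/K'^{2})\leq 2$), but it is routine and does not affect the validity of your outline.
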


One may define subexponential random variables in a way similar to subgaussian
random variables. They, too, admit a concentration inequality.

\begin{definition}[$\psi_{1}$ norm]
  The subexponential norm of a random variable is
  \begin{align*}
    \|X\|_{\psi_{1}} := \sup_{p \geq 1} p^{-1} \big( \E|X|^{p} \big)^{1/p}. 
  \end{align*}
  A random variable $X$ is subexponential iff $\|X\|_{\psi_{1}} < \infty$.
\end{definition}

\begin{theorem}[Bernstein's inequality {\cite[Theorem 2.8.1]{vershynin2018high}}]
  \label{thm:bernstein}
  Let $X_{1}, \ldots, X_{n}$ be independent mean-zero
  subexponential random variables. Then for all
  $\{a_{1}, \ldots, a_{n}\} \in \reals^{n}$,
  \begin{align*}
    \mathbb{P}\big( | \sum_{i=1}^{n}a_{i} X_{i} | \geq t \big) %
    \leq 2 \exp\Big( - C \min \Big\{ \frac{t^{2}}{k^{2}\|a\|_{2}^{2}},
    \frac{t}{k\|a\|_{\infty}}\Big\}\Big), %
    \qquad t \geq 0, k := \max_{i} \|X_{i}\|_{\psi_{1}}
  \end{align*}
\end{theorem}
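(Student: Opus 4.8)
The plan is to reduce everything to a moment generating function (MGF) estimate for a single centered subexponential random variable, and then run the standard Chernoff argument. The first step is to show that if $X$ is mean-zero with $\|X\|_{\psi_{1}} = K$, then there are absolute constants $C_{0}, c_{0} > 0$ such that
\[
  \E \exp(\lambda X) \leq \exp(C_{0} \lambda^{2} K^{2}) \qquad \text{whenever } |\lambda| \leq c_{0}/K.
\]
To establish this I would expand $\exp(\lambda X)$ as a power series, use the mean-zero assumption to discard the order-one term, and control the remaining moments via the $\psi_{1}$ definition, which yields $\E|X|^{p} \leq (K p)^{p}$. Summing the resulting series converges precisely when $|\lambda| K$ lies below an absolute constant, producing the stated quadratic MGF bound.

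The second step is the exponential Markov (Chernoff) bound combined with independence. For $\lambda > 0$,
\[
  \mathbb{P}\Big( \sum_{i=1}^{n} a_{i} X_{i} \geq t \Big) \leq e^{-\lambda t}\, \E \exp\Big( \lambda \sum_{i=1}^{n} a_{i} X_{i}\Big) = e^{-\lambda t} \prod_{i=1}^{n} \E \exp(\lambda a_{i} X_{i}),
\]
where the factorization uses independence. Each $a_{i} X_{i}$ is mean-zero and subexponential with $\|a_{i} X_{i}\|_{\psi_{1}} = |a_{i}|\,\|X_{i}\|_{\psi_{1}} \leq |a_{i}| k$. Applying the MGF bound from the first step to each factor is legitimate as long as $|\lambda a_{i}| k \leq c_{0}$ for every $i$, i.e. as long as $\lambda \leq c_{0}/(k \|a\|_{\infty})$; on that range I obtain
\[
  \prod_{i=1}^{n} \E \exp(\lambda a_{i} X_{i}) \leq \exp\Big( C_{0} \lambda^{2} k^{2} \sum_{i=1}^{n} a_{i}^{2} \Big) = \exp\big( C_{0} \lambda^{2} k^{2} \|a\|_{2}^{2}\big).
\]

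The third and final step is to optimize the resulting bound $\exp(-\lambda t + C_{0} \lambda^{2} k^{2} \|a\|_{2}^{2})$ over the admissible range $0 < \lambda \leq c_{0}/(k \|a\|_{\infty})$. The unconstrained optimizer is $\lambda = t/(2 C_{0} k^{2} \|a\|_{2}^{2})$; when this value respects the constraint --- the small-$t$ (subgaussian) regime --- it produces an exponent of order $-t^{2}/(k^{2}\|a\|_{2}^{2})$, and when it violates the constraint --- the large-$t$ (subexponential) regime --- I instead take $\lambda$ at the endpoint $c_{0}/(k\|a\|_{\infty})$, producing an exponent of order $-t/(k\|a\|_{\infty})$. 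The two regimes combine into the single bound $\exp(-C \min\{ t^{2}/(k^{2}\|a\|_{2}^{2}),\, t/(k\|a\|_{\infty})\})$. Repeating the argument for $-\sum_{i} a_{i} X_{i}$ and taking a union bound supplies the absolute value and the factor of $2$.

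The step I expect to be most delicate is the two-regime optimization in the last paragraph: one must track the breakpoint between the quadratic and linear regimes carefully and verify that the endpoint choice of $\lambda$ reproduces \emph{exactly} the $\min\{\cdot,\cdot\}$ structure with a single absolute constant $C$, rather than two separate constants in the two regimes. The MGF bound itself is routine, but its constants propagate through the entire argument, so the bookkeeping of absolute constants is where care is genuinely needed.
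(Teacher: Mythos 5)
The paper does not prove this statement; it is quoted verbatim from Vershynin \cite[Theorem 2.8.1]{vershynin2018high}, and your argument is precisely the standard MGF/Chernoff proof given there (bound $\E\exp(\lambda a_iX_i)$ quadratically on the restricted range $|\lambda|\le c_0/(k\|a\|_\infty)$, factor by independence, optimize $\lambda$ in the two regimes, union-bound for the two tails). Your outline is correct, including the delicate point you flag about merging the two regimes into a single constant $C$ via $c=\min\{1/(4C_0),\,c_0/2\}$.
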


Finally, we introduce a result of Borell, Tsirelson, Ibragimov, and Sudakov
about Gaussian processes, which states that the supremum of a Gaussian process
defined over a topological space $T$ behaves nearly like a normal random
variable. For a proof of this result, we refer the reader to
\cite{adler2009random}.

\begin{theorem}[Borell-TIS inequality \cite{borell1975brunn, cirel1976norms}]
  \label{thm:borell-tis}
  Let $T$ be a topological space and let $\{f_{t}\}_{t \in T}$ be a centred (\ie mean-zero) Gaussian process on $T$ with
  \begin{align*}
    \|f\|_{T} %
    &:= \sup_{t \in T} |f_{t}| %
    &
      \sigma_{T}^{2} %
    &:= \sup_{t\in T} \E \big[ |f_{t}|^{2} \big]
  \end{align*}
  such that $\|f\|_{T}$ is almost surely finite. Then $\E \|f\|_{T}$ and
  $\sigma_{T}$ are both finite and for each $u > 0$,
  \begin{align*}
    \mathbb{P}\big( \|f\|_{T} > \E \|f\|_{T} + u\big) %
    \leq \exp\big( - \frac{u^{2}}{2 \sigma_{T}^{2}} \big).
  \end{align*}

\end{theorem}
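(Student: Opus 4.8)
The plan is to reduce the statement to the fundamental Gaussian concentration inequality for Lipschitz functions, which serves as the analytic engine: if $F : \reals^{m} \to \reals$ is $L$-Lipschitz and $g \sim \mathcal{N}(0, I_{m})$, then $\mathbb{P}(F(g) \geq \E F(g) + u) \leq \exp(-u^{2}/(2L^{2}))$, and likewise with $\E F(g)$ replaced by a median of $F(g)$. This lemma is itself classical and I would cite it (it follows from the Gaussian isoperimetric inequality, or via the Herbst argument from the Gaussian log-Sobolev inequality), focusing the proof on the passage from it to the stated process inequality. First I would dispose of the absolute values by symmetrization: writing $|f_{t}| = \max_{\varepsilon \in \{\pm 1\}} \varepsilon f_{t}$, the enlarged family $g_{(t,\varepsilon)} := \varepsilon f_{t}$ indexed by $T \times \{\pm 1\}$ is again a centred Gaussian process with the same pointwise variance bound $\sigma_{T}^{2}$, and its supremum equals $\|f\|_{T}$. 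Hence it suffices to prove the one-sided statement for a supremum without absolute values, and I will keep writing $T$ for the (enlarged) index set.

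Next is the finite-dimensional core. Fix a finite subset $T_{0} = \{t_{1}, \ldots, t_{n}\} \subseteq T$ and factor the covariance matrix of $(f_{t_{1}}, \ldots, f_{t_{n}})$ as $\Sigma = AA^{\top}$, so that this vector has the same law as $(\ip{a_{1}, g}, \ldots, \ip{a_{n}, g})$ for $g \sim \mathcal{N}(0, I_{n})$, where $a_{i}$ is the $i$-th row of $A$ and $\|a_{i}\|_{2}^{2} = \E f_{t_{i}}^{2} \leq \sigma_{T}^{2}$. The map $F(g) := \max_{i} \ip{a_{i}, g}$ is Lipschitz with constant $\max_{i} \|a_{i}\|_{2} \leq \sigma_{T}$, since for any $g, g'$, picking $i^{*}$ attaining the first maximum gives $F(g) - F(g') \leq \ip{a_{i^{*}}, g - g'} \leq (\max_{i}\|a_{i}\|_{2})\,\|g - g'\|_{2}$, and symmetrically. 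Applying the concentration lemma to $F$ yields $\mathbb{P}(\|f\|_{T_{0}} \geq \E \|f\|_{T_{0}} + u) \leq \exp(-u^{2}/(2\sigma_{T}^{2}))$, which is precisely the claimed bound for the finite subfamily.

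Finally I would pass to the full supremum and establish the asserted finiteness statements. Finiteness of $\sigma_{T}$: were there $t_{n}$ with $\E f_{t_{n}}^{2} \to \infty$, then for every fixed $M$ one has $\mathbb{P}(\|f\|_{T} > M) \geq \sup_{n}\mathbb{P}(|f_{t_{n}}| > M) = 1$ because $|f_{t_{n}}|/\sqrt{\E f_{t_{n}}^{2}}$ is a standard half-normal, forcing $\|f\|_{T} = \infty$ almost surely and contradicting the hypothesis. For $\E\|f\|_{T}$: by a.s. finiteness choose $M$ with $\mathbb{P}(\|f\|_{T} \leq M) \geq 3/4$; for any finite $T_{0}$ the median $m_{0}$ of $\|f\|_{T_{0}}$ then satisfies $m_{0} \leq M$, and two-sided concentration about the median (again from the Lipschitz lemma, with constant $\leq \sigma_{T}$) bounds $\E\|f\|_{T_{0}}$ by a quantity depending only on $M$ and $\sigma_{T}$, uniformly in $T_{0}$. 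Choosing finite subsets $T_{0}^{(k)} \uparrow$ along which $\|f\|_{T_{0}^{(k)}} \uparrow \|f\|_{T}$, monotone convergence gives $\E\|f\|_{T} < \infty$ and $\E\|f\|_{T_{0}^{(k)}} \to \E\|f\|_{T}$; taking $k \to \infty$ in the finite-subset tail bound, using $\sigma_{T_{0}^{(k)}} \leq \sigma_{T}$, produces the stated inequality.

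The main obstacle is not the concentration step, which is routine once the Lipschitz constant is correctly identified as $\sigma_{T}$, but the measure-theoretic bookkeeping in the last paragraph: making $\|f\|_{T}$ measurable and expressible as an increasing limit of finite suprema for a general, possibly non-separable, topological index space, and converting a.s. finiteness into quantitative finiteness of $\E\|f\|_{T}$. The cleanest route is to invoke separability of the Gaussian process (or reduce to a countable determining subset of $T$) so that $\|f\|_{T} = \sup_{k}\|f\|_{T_{0}^{(k)}}$ holds almost surely, after which the monotone approximation and the uniform median bound close the argument.
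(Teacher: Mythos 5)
The paper offers no proof of this theorem to compare against: it is quoted as a classical result, with the reader referred to \cite{adler2009random} for the proof. Your blind argument is, in fact, essentially the standard proof found in that reference, and it is correct. The symmetrization over the doubled index set $T \times \{\pm 1\}$ legitimately removes the absolute values, since any finite collection $\{\varepsilon f_t\}$ remains jointly centred Gaussian with the same pointwise variances; the finite-dimensional core via the factorization $\Sigma = AA^{\top}$ correctly identifies the Lipschitz constant of $g \mapsto \max_i \ip{a_i, g}$ as $\max_i \|a_i\|_2 \leq \sigma_T$, so the Gaussian isoperimetric (Lipschitz concentration) lemma yields the finite-subset bound; and your derivations of $\sigma_T < \infty$ (half-normal tails forcing $\|f\|_T = \infty$ a.s.\ otherwise) and of $\E\|f\|_T < \infty$ (uniform median bound $m_0 \leq M$ from $\mathbb{P}(\|f\|_T \leq M) \geq 3/4$, two-sided concentration about the median, monotone convergence) are the standard ones. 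Two small points deserve to be written out rather than gestured at. First, in the final limit, "taking $k \to \infty$ in the finite-subset tail bound" hides a one-line step: you should pass to $\mathbb{P}\big(\|f\|_{T_0^{(k)}} > \E\|f\|_T + u\big) \leq \mathbb{P}\big(\|f\|_{T_0^{(k)}} > \E\|f\|_{T_0^{(k)}} + u\big)$, valid because $\E\|f\|_{T_0^{(k)}} \leq \E\|f\|_T$, and only then use monotone convergence of the events on the left. Second, the separability caveat you flag is genuine for abstract $T$ (and is implicitly assumed in the statement as used here), but it is harmless in this paper's applications: in Propositions \ref{prop:gmw-lb-ip} and \ref{prop:gmw-ub-ip} the index sets $K_1, K_2$ are closed bounded subsets of $\reals^N$ and $x \mapsto \ip{x,g}$ is continuous, so the supremum over a countable dense subset agrees with $\|f\|_T$ almost surely.
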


\subsubsection{Refined bounds on Gaussian mean width}
\label{sssec:refin-bounds-gauss}

Two recent results yield improved upper- and lower-bounds on the GMW of convex
polytopes intersected with Euclidean balls \cite{bellec2017localized}. Each is
integral to demonstrating {\bppd} parameter instability. The first describes how
local effective dimension of a convex hull scales with neighbourhood size.

\begin{proposition}[{\cite[Prop 1]{bellec2017localized}}]
  \label{prop:bellec1}
  Let $m \geq 1$ and $N \geq 2$. Let $T$ be the convex hull of $2N$ points in $\reals^{m}$ and assume $T\subseteq B_{2}^{m}$. Then for $\gamma \in (0, 1)$,
  \begin{align*}
    w (T \cap \gamma B_{2}^{m}) %
    \leq \min \big\{ %
    4 \sqrt{ \max \big\{1, \log(8e N \gamma^{2}) \big\} }, %
    \gamma \sqrt{ \min\{m, 2N\}} \big\}
  \end{align*}

\end{proposition}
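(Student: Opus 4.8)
The plan is to bound the two arguments of the minimum separately, since $w(T\cap\gamma B_2^m)$ is dominated by either one. The term $\gamma\sqrt{\min\{m,2N\}}$ is the easy one. On the one hand $T\cap\gamma B_2^m\subseteq\gamma B_2^m$, so $w(T\cap\gamma B_2^m)\le\gamma\,\E\|g\|_2\le\gamma\sqrt m$ by Jensen. On the other hand, writing $U:=\mathrm{span}(v_1,\dots,v_{2N})$ for the span of the $2N$ generating points $v_i$ (each with $\|v_i\|_2\le1$), every $x\in T$ satisfies $\ip{g,x}=\ip{\mathrm{P}_U g,\,x}$, so $w(T\cap\gamma B_2^m)\le\gamma\,\E\|\mathrm{P}_Ug\|_2\le\gamma\sqrt{\dim U}\le\gamma\sqrt{2N}$. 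Taking the smaller of the two handles this argument.

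The substantive work is the bound $4\sqrt{\max\{1,\log(8eN\gamma^2)\}}$, for which I would combine a Lagrangian relaxation of the Euclidean constraint with Maurey's sparsification. Assuming the intersection is nonempty (else the width is $-\infty$ and there is nothing to prove), for any $\beta>0$ and any $x$ with $\|x\|_2\le\gamma$ one has $\ip{g,x}\le\ip{g,x}-\tfrac\beta2\|x\|_2^2+\tfrac\beta2\gamma^2$, whence
\begin{align*}
  \sup_{x\in T\cap\gamma B_2^m}\ip{g,x}\le\sup_{x\in T}\Big(\ip{g,x}-\tfrac\beta2\|x\|_2^2\Big)+\tfrac\beta2\gamma^2=:\phi(p)+\tfrac\beta2\gamma^2,
\end{align*}
where $p\in T$ maximizes the concave objective $\phi$ over the compact set $T$. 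Writing $p=\sum_i\lambda_iv_i$ and letting $V$ be the random vertex equal to $v_i$ with probability $\lambda_i$, I form the $k$-fold average $\bar V=\tfrac1k\sum_{l=1}^kV_l$ of independent copies. Then $\E\bar V=p$ and $\E\|\bar V\|_2^2=\|p\|_2^2+\tfrac1k\mathrm{Var}(V)$ with $\mathrm{Var}(V)\le\E\|V\|_2^2\le1$; substituting into $\phi$ gives $\phi(p)=\E\,\phi(\bar V)+\tfrac\beta{2k}\mathrm{Var}(V)\le\max_{\bar v\in T_k}\phi(\bar v)+\tfrac\beta{2k}$, where $T_k$ is the finite set of all $k$-fold averages of the $v_i$. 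This reduces the supremum over $T\cap\gamma B_2^m$ to a maximum over a finite set.

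Taking expectations and applying a Gaussian moment-generating-function union bound with exponent exactly $\beta$ is what makes the penalty cancel: $\E\exp(\beta\phi(\bar v))=\exp(\tfrac{\beta^2}2\|\bar v\|_2^2-\tfrac{\beta^2}2\|\bar v\|_2^2)=1$ for every $\bar v$, so $\E\max_{\bar v\in T_k}\phi(\bar v)\le\beta^{-1}\log|T_k|$. Collecting terms,
\begin{align*}
  w(T\cap\gamma B_2^m)\le\frac{\log|T_k|}{\beta}+\frac\beta2\Big(\gamma^2+\frac1k\Big),
\end{align*}
and optimizing over $\beta$ yields $\sqrt{2\log|T_k|\,(\gamma^2+1/k)}$.

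The crux --- and the step I expect to be the main obstacle --- is to extract the localized rate $\log(N\gamma^2)$ rather than the global $\log N$ from the count $|T_k|$. The naive bound $|T_k|\le(2N)^k$ is too lossy: optimizing over $k$ then forces $k=1$ and returns only $\sqrt{\log N}$, which is far larger than the claim when $\gamma$ is small. The fix is to use the sharp multiset count $|T_k|\le\binom{2N+k-1}{k}\le(e(2N+k)/k)^k$ together with the choice $k=\lceil1/\gamma^2\rceil$, so that $\log|T_k|\lesssim\gamma^{-2}\log(eN\gamma^2)$ while $k\gamma^2+1\le2$; this turns the bound into $\lesssim\sqrt{\log(N\gamma^2)}$, matching the claim up to the stated constants. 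I would finish by checking the regime split: when $N\gamma^2\lesssim1$ one has $1/\gamma^2>2N$, so $k$ is capped at $2N$ and the same computation returns an absolute constant (consistent with the $\max\{1,\cdot\}$ clamp), while for such small $\gamma$ the minimum is in any case governed by the trivial term $\gamma\sqrt{2N}$. Tracking the constants through Maurey's variance estimate, the union bound, and the rounding of $k$ is routine, and is where the explicit factors $4$ and $8e$ are pinned down.
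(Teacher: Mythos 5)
Your argument is correct and complete in its essentials: the paper itself gives no proof of this statement (it is imported verbatim as Proposition 1 of the cited reference \cite{bellec2017localized}), and your route --- the trivial $\gamma\sqrt{\min\{m,2N\}}$ bound via containment and projection onto the span of the vertices, plus Maurey's empirical method with the quadratic Lagrangian penalty, the exact MGF cancellation, and the multiset count $\binom{2N+k-1}{k}$ with $k=\lceil\gamma^{-2}\rceil$ to localize the logarithm --- is precisely the standard proof given in that reference. The only loose ends (the explicit constants $4$ and $8e$, and the degenerate regime $N\gamma^{2}\lesssim 1$ where the clamp to $\max\{1,\cdot\}$ applies) work out with room to spare exactly as you indicate, since your computation yields roughly $\sqrt{6\log(3eN\gamma^{2})}$ and the second term of the minimum already forces $w\le\sqrt{2N\gamma^{2}}\le\sqrt{2}<4$ when the logarithm is small.
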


The second result shows that Proposition \ref{prop:bellec1} is tight up to
multiplicative constants.

\begin{proposition}[{\cite[Prop 2]{bellec2017localized}}]
  \label{prop:bellec2}
  Let $m \geq 1$ and $N \geq 2$. Let $\gamma \in (0, 1]$ and assume for
  simplicity that $s = 1/ \gamma^{2}$ is a positive integer such that
  $s \leq N /5$. Let $T$ be the convex hull of the $2N$ points
  $\{ \pm M_{1}, \ldots, \pm M_{N}\} \subseteq \sph^{m-1}$. Assume that for some
  real number $\kappa \in (0, 1)$ we have
  \begin{align*}
    \kappa \| \theta \|_{2} \leq \| M \theta \|_{2} \qquad %
    \text{for all $\theta \in \reals^{N}$ such that $\|\theta\|_{0} \leq 2s$},
  \end{align*}
  Then
  \begin{align*}
    w (T \cap \gamma B_{2}^{m}) %
    \geq ( \sqrt 2 / 4) \kappa \sqrt{ \log ( N \gamma^{2} / 5)}. 
  \end{align*}
\end{proposition}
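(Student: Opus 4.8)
The plan is to obtain the lower bound from Sudakov minoration applied to a carefully constructed finite subset of $T \cap \gamma B_{2}^{m}$. First I would record the convenient description $T = \{ M\theta : \|\theta\|_{1} \leq 1\}$, where $M$ is the $m \times N$ matrix with columns $M_{i}$, so that a point of $T$ is produced from any $\theta$ in the unit $\ell_{1}$ ball, and a point of $T \cap \gamma B_{2}^{m}$ additionally requires $\|M\theta\|_{2} \leq \gamma$. The natural test vectors are the flat $s$-sparse sign patterns $\theta_{S,\epsilon} := \tfrac1s \sum_{i \in S} \epsilon_{i} e_{i}$, indexed by a support $S$ of size $s$ and signs $\epsilon \in \{\pm1\}^{S}$; these satisfy $\|\theta_{S,\epsilon}\|_{1} = 1$ and $\|\theta_{S,\epsilon}\|_{2} = s^{-1/2} = \gamma$, and there are $\binom{N}{s} 2^{s} \approx \exp(c s \log(N/s))$ of them. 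This is exactly the count needed, since Sudakov minoration at separation scale $\delta$ over a set of this cardinality yields $w \gtrsim \delta \sqrt{s \log(N/s)}$, and I will arrange $\delta \asymp \kappa\gamma$, so that $\delta\sqrt{s\log(N/s)} \asymp \kappa \sqrt{\log(N/s)}$ because $\gamma\sqrt{s} = 1$.

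The separation is where the lower isometry hypothesis enters. I would first thin the supports to a code $\mathcal{S}$ of $s$-subsets that are pairwise far, say with $|S \triangle S'| \geq s$ for distinct $S, S' \in \mathcal{S}$; a standard Gilbert--Varshamov / counting argument produces such a code of size $\exp(c s\log(N/s))$ provided $s \leq N/5$ (this is where the hypothesis $s \leq N/5$ and the constant $N\gamma^{2}/5$ in the statement originate). For $\theta, \theta'$ built on two such supports, the coordinates in $S \triangle S'$ already force $\|\theta - \theta'\|_{2}^{2} \geq |S \triangle S'|/s^{2} \geq 1/s = \gamma^{2}$ irrespective of the signs, so using the supports rather than the signs to separate avoids any accidental cancellation. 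Since $\theta - \theta'$ is $2s$-sparse, the assumed bound $\kappa\|\cdot\|_{2} \leq \|M\cdot\|_{2}$ upgrades this to $\|M\theta - M\theta'\|_{2} \geq \kappa\gamma$, giving the needed separation inside $T$.

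The main obstacle is membership in $\gamma B_{2}^{m}$: the constraint $\|M\theta\|_{2} \leq \gamma$ demands an \emph{upper} bound on $\|M\theta\|_{2}$, whereas the hypothesis supplies only a lower isometry, and indeed an adversarial $M$ with lower-RIP can inflate $\|M\theta_{S,\epsilon}\|_{2}$ up to order $1$ when the signs happen to align the columns. The resolution is to select the signs by averaging rather than by aligning: because the columns lie on $\sph^{m-1}$, for a fixed support $S$ one has $\E_{\epsilon} \|M\theta_{S,\epsilon}\|_{2}^{2} = s^{-2} \sum_{i \in S} \|M_{i}\|_{2}^{2} = s^{-1} = \gamma^{2}$, so by Markov at least half of the sign vectors satisfy $\|M\theta_{S,\epsilon}\|_{2} \leq \sqrt2\,\gamma$. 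I would fix one such sign vector $\epsilon(S)$ for each support in $\mathcal{S}$ and then rescale every resulting point by $1/\sqrt2$; scaling down keeps $\|\theta\|_{1} \leq 1$ (so the point stays in $T$) and enforces $\|M\theta\|_{2} \leq \gamma$, while only reducing the separation to $\kappa\gamma/\sqrt2$.

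Finally I would assemble the pieces: the rescaled family $\{\tfrac{1}{\sqrt2} M\theta_{S,\epsilon(S)} : S \in \mathcal{S}\} \subseteq T \cap \gamma B_{2}^{m}$ has cardinality $\exp(cs\log(N/s))$ and pairwise separation at least $\kappa\gamma/\sqrt2$, so Sudakov minoration gives $w(T \cap \gamma B_{2}^{m}) \gtrsim \kappa\gamma\sqrt{s\log(N/s)} = \kappa\sqrt{\log(N/s)}$, which is the claimed estimate up to the universal constant $\sqrt2/4$ and the precise constant inside the logarithm. The delicate point to get right in a full write-up is tracking these constants simultaneously through the support code, the averaging step, and Sudakov's inequality so that they combine into exactly $(\sqrt2/4)\,\kappa\sqrt{\log(N\gamma^{2}/5)}$.
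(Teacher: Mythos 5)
The paper does not prove this proposition; it is imported verbatim from \cite[Prop.~2]{bellec2017localized}, so there is no in-paper argument to compare against. Your reconstruction is correct and follows the standard (and, as far as the strategy goes, the cited source's) route: Sudakov minoration applied to a Varshamov--Gilbert packing of flat $s$-sparse vectors on the $\ell_{1}$ sphere, with the lower isometry converting the $\ell_{2}$ separation of the $2s$-sparse differences into separation of their images under $M$. You correctly isolate the one genuinely delicate point --- that the hypothesis gives only a \emph{lower} bound on $\|M\theta\|_{2}$ while membership in $\gamma B_{2}^{m}$ needs an upper bound --- and your resolution (average over Rademacher signs using $\|M_{i}\|_{2}=1$, keep the at-least-half of sign patterns with $\|M\theta_{S,\epsilon}\|_{2}^{2}\le 2\gamma^{2}$ by Markov, then rescale by $1/\sqrt{2}$, which preserves membership in $T$ and costs only a factor $\sqrt2$ in separation) is sound. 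The only loose end is the one you flag yourself: the precise constants $\sqrt2/4$ and $N\gamma^{2}/5$ require carrying the explicit cardinality bound from the constant-weight code and the explicit constant in Sudakov's inequality through the computation, but with $\gamma\sqrt{s}=1$ the exponents combine as you indicate, so this is bookkeeping rather than a gap.
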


\section{{\lspd} parameter instability}
\label{sec:param-inst-lspd}

We describe a parameter instability regime for {\lspd}, revealing a regime in
which there is exactly one choice of parameter $\tau^{*} > 0$ such that
$\hat R(\tau^{*}; x_{0}, N, \eta)$ is minimax order-optimal. Specifically,
Theorem \ref{thm:constr-pd} shows that $\hat R(\tau; x_{0}, N, \eta)$ exhibits an
asymptotic singularity in the limiting low-noise regime (by low-noise regime, we
mean hereafter the regime in which $\eta \to 0$).

In \ref{ssec:lspd-numerics} we complement this asymptotic result with
numerical simulations that contrast how the three risks behave in a simplified
experimental context. The numerics support that Theorem \ref{thm:constr-pd} provides
accurate intuition to guide how {\lspd} can be expected to perform in practice
when the noise level is small relative to the magnitude of the signal's entries.

The analogue of the classical CS result is included in our result as the special
case $\tau = \tau^{*} = \|x_{0}\|_{1}$ (\emph{cf}.\@
Proposition \ref{prop:lspd-optimal-risk}). The cases for $\tau \neq \tau^{*}$ may seem
surprising initially, but can be understood with the following key intuition:
the approximation error is controlled by the effective dimension of the
constraint set.

First, one should generally not expect good recovery when the signal lies
outside the constraint set. When $\tau < \tau^{*}$, $y$ lies outside of the
constraint set with high probability in the limiting low-noise
regime. Accordingly, there is a positive distance between the true signal and
the recovered signal which may be lower-bounded by a dimension-independent
constant. Hence, the risk is determined by the reciprocal of the noise variance,
growing unboundedly as $\eta \to 0$.

On the other hand, when $\tau > \tau^{*}$, $y$ lies within the constraint set
with high probability in the limiting low-noise regime. Thus, the problem is
essentially unconstrained in this setting, so the effective dimension of the
constraint set for the problem should be considered equal to that of the ambient
dimension. In particular, one should expect that the error be proportional to
$N$.

\begin{theorem}[{\lspd} parameter instability]
  \label{thm:constr-pd}
  Let $s \geq 1, \eta > 0$ and let
  $x_{0} \in \Sigma_{s}^{N}\setminus\Sigma_{s-1}^{N}$. Given $\tau > 0$,
  \begin{align*}
    \lim_{\eta \to 0} \hat R(\tau; x_{0}, N, \eta) =
    \begin{cases}
      \infty & \tau < \|x_{0}\|_{1}\\
      R^{*}(s, N)
      % \Theta(s\log (N/s))
      % \mathbf{D}(T_{B_{1}^{N}}(x_{0})^{\circ})
      & \tau = \|x_{0} \|_{1}\\
      N & \tau > \|x_{0}\|_{1}
    \end{cases}
  \end{align*}

\end{theorem}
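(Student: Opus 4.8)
The plan is to organize the proof around the position of $x_{0}$ relative to the constraint set $\mathcal{C} := \tau B_{1}^{N}$, using that $\hat x(\tau) = \mathrm{P}_{\tau B_{1}^{N}}(y)$ is the metric projection onto a closed convex set, that $\mathrm{P}_{\mathcal{C}}$ is $1$-Lipschitz, and that $\mathrm{P}_{\mathcal{C}}(x_{0}) = x_{0}$ whenever $x_{0} \in \mathcal{C}$. The recurring device is to identify the pointwise (in $z$) limit of the integrand $\eta^{-2}\|\mathrm{P}_{\mathcal{C}}(x_{0}+\eta z) - x_{0}\|_{2}^{2}$ and then pass to the limit under the expectation by dominated convergence. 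When $x_{0} \in \mathcal{C}$, the Lipschitz bound $\|\mathrm{P}_{\mathcal{C}}(x_{0}+\eta z) - x_{0}\|_{2} = \|\mathrm{P}_{\mathcal{C}}(x_{0}+\eta z) - \mathrm{P}_{\mathcal{C}}(x_{0})\|_{2} \leq \eta\|z\|_{2}$ provides the integrable dominating function $\|z\|_{2}^{2}$, since $\E\|z\|_{2}^{2} = N$.

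The two cases $\tau \neq \|x_{0}\|_{1}$ are then direct. For $\tau < \|x_{0}\|_{1}$ the signal lies strictly outside $\mathcal{C}$, so $d := \mathrm{dist}(x_{0}, \tau B_{1}^{N}) > 0$; because $\hat x(\tau) \in \mathcal{C}$ for every realization, $\|\hat x(\tau) - x_{0}\|_{2} \geq d$ deterministically, whence $\hat R(\tau; x_{0}, N, \eta) \geq d^{2}\eta^{-2} \to \infty$, with no probabilistic input needed. For $\tau > \|x_{0}\|_{1}$ the signal lies in the interior of $\mathcal{C}$: for each fixed $z$ and all $\eta < (\tau - \|x_{0}\|_{1})/\|z\|_{1}$ the triangle inequality gives $\|x_{0}+\eta z\|_{1} \leq \tau$, so $y \in \mathcal{C}$ and $\hat x(\tau) = y$, making the integrand equal to $\|z\|_{2}^{2}$. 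Thus the integrand converges to $\|z\|_{2}^{2}$ almost surely and is dominated by it, and dominated convergence gives $\lim_{\eta\to0}\hat R = \E\|z\|_{2}^{2} = N$.

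The case $\tau = \|x_{0}\|_{1}$ is where I expect the real work. Here $x_{0} \in \partial\mathcal{C}$, and I would invoke directional differentiability of the metric projection onto a closed convex set, namely $\lim_{\eta\to0^{+}} \eta^{-1}\big(\mathrm{P}_{\mathcal{C}}(x_{0}+\eta z) - x_{0}\big) = \mathrm{P}_{T_{\mathcal{C}}(x_{0})}(z)$ for each $z$. Together with the dominating bound above, dominated convergence yields $\lim_{\eta\to0}\hat R(\|x_{0}\|_{1}; x_{0}, N, \eta) = \E\|\mathrm{P}_{T_{\mathcal{C}}(x_{0})}(z)\|_{2}^{2} = \mathbf{D}(T_{\mathcal{C}}(x_{0})^{\circ})$, the final equality following from the Moreau decomposition $z = \mathrm{P}_{T_{\mathcal{C}}(x_{0})}(z) + \mathrm{P}_{T_{\mathcal{C}}(x_{0})^{\circ}}(z)$ and the definition of $\mathbf{D}$; this matches the value $\sup_{\eta>0}\hat R(\|x_{0}\|_{1}; x_{0},N,\eta)$ supplied by Theorem \ref{thm:hassibi-2-1}. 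It then remains to identify $\mathbf{D}(T_{\mathcal{C}}(x_{0})^{\circ})$ with $R^{*}(s,N)$. By Lemma \ref{lem:descent-cone-equivalence} the tangent cone $T_{\mathcal{C}}(x_{0})$ depends only on the support and sign pattern of $x_{0}$, so signed-permutation invariance of both the cone and the law of $z$ makes $\mathbf{D}(T_{\mathcal{C}}(x_{0})^{\circ})$ a single value $V(s,N)$ for every exactly $s$-sparse $x_{0}$. Moreover, extending a sparser signal to a denser one with compatible signs yields a containment of the corresponding $\ell_{1}$ descent cones (checkable directly from the characterization in Lemma \ref{lem:descent-cone-equivalence}), and the cone identity $\|\mathrm{P}_{K}(z)\|_{2}^{2} = \|z\|_{2}^{2} - \mathrm{dist}(z,K)^{2}$ shows $\|\mathrm{P}_{K}(z)\|_{2}$ is monotone under nested cones; hence $V(\cdot,N)$ is nondecreasing in the sparsity level and the supremum defining $R^{*}(s,N)$ over $\Sigma_{s}^{N}$ is attained at the exactly $s$-sparse vectors, giving $V(s,N) = R^{*}(s,N)$.

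The principal obstacles are thus confined to the boundary case: first, making the directional-derivative limit and its domination fully rigorous (rather than merely heuristic), and second, verifying the descent-cone containment that underlies the monotonicity of $V(\cdot,N)$. The outer cases require only the deterministic distance bound and a routine dominated-convergence argument, so I would present them briefly and devote the bulk of the write-up to the $\tau = \|x_{0}\|_{1}$ analysis.
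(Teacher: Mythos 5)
Your proposal is correct, and all three cases go through, but the execution differs from the paper's in two of them. For $\tau<\|x_{0}\|_{1}$ you and the paper do essentially the same thing (a deterministic positive lower bound on $\|\hat x(\tau)-x_{0}\|_{2}$; the paper gets it from $\|\hat x(\tau)-x_{0}\|_{1}\geq\|x_{0}\|_{1}-\tau$ and norm comparison rather than from $\mathrm{dist}(x_{0},\tau B_{1}^{N})>0$). For $\tau>\|x_{0}\|_{1}$ the paper runs a quantitative argument: it applies Hoeffding's inequality (Theorem \ref{thm:hoeffding-subgaussian}) to $\|z\|_{1}$ to show $y\in\tau B_{1}^{N}$ with probability $\to 1$, splits the expectation over that event and its complement, and controls each piece; your route --- pointwise a.s.\ convergence of the integrand to $\|z\|_{2}^{2}$ plus the nonexpansiveness bound $\|\mathrm{P}_{\mathcal{C}}(x_{0}+\eta z)-\mathrm{P}_{\mathcal{C}}(x_{0})\|_{2}\leq\eta\|z\|_{2}$ as the dominating function --- is more elementary and avoids the concentration inequality entirely, at the cost of not yielding a rate. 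For $\tau=\|x_{0}\|_{1}$ the paper simply cites Theorem \ref{thm:hassibi-2-1} (together with the fact, established via Lemma \ref{lem:projection-lemma} in Proposition \ref{prop:risk-equivalence}, that the supremum over $\eta$ is attained as $\eta\to0$) and writes the answer as $\mathbf{D}(T_{B_{1}^{N}}(x_{0})^{\circ})=\Theta(s\log(N/s))$; you instead rederive the identity from Zarantonello-type directional differentiability of the projection and the Moreau decomposition, and then do the extra work of showing the value is the \emph{same} for every exactly $s$-sparse $x_{0}$ (signed-permutation invariance) and that the supremum defining $R^{*}(s,N)$ is attained at exactly $s$-sparse vectors (descent-cone containment plus monotonicity of $\|\mathrm{P}_{K}(z)\|_{2}$ under nested cones). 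That last piece is actually a point where your write-up is more careful than the paper's: the paper's displayed chain only asserts order equivalence $\Theta(s\log(N/s))$, whereas the theorem claims exact equality with $R^{*}(s,N)$, and the invariance/monotonicity argument you sketch (which does check out against the characterization in Lemma \ref{lem:descent-cone-equivalence}) is what makes that equality legitimate.
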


In summary, the surprising part of this result is that there is a sharp phase
transition between two unstable regimes, with the optimal regime lying on the
boundary of the two phases. We argue this suggests that there is only one
reasonable choice for $\tau$ in the low-noise regime. Observe, that
Theorem \ref{thm:constr-pd} connects with Theorem \ref{thm:simplified-lspd} by taking
the limit of the problem as $N \to \infty$ after first restricting to signals of
a finite norm (arbitrarily, $1$) so that the essence of the result is preserved.

\section{{\qppd} parameter instability}
\label{sec:param-inst-qppd}

We show that $R^{\sharp}(\lambda; x_{0}, N, \eta)$ is smooth in the low-noise
regime. This result becomes evident from the closed-form expression for $R^{\sharp}(\lambda; s, N)$ that emerges for this special case. At first, this smoothness result seems to stand in contrast to the
``cusp-like'' behaviour that we observe analytically and numerically for
$\lim_{\eta \to 0} \hat R(\tau; x_{0}, N, \eta)$ (\emph{cf}.\@
Figure \ref{fig:lspd-numerics}). However, $R^{\sharp}(\lambda; s, N)$ 
possesses unfavourable dependence on $N$ that is elucidated in
Theorem \ref{thm:qppd-instability}.

Briefly, if the governing parameter $\lambda$ is too small, then the risk grows
unboundedly as a power law of $N$ in high dimensions. This rate of growth
implies that the risk is minimax suboptimal for such $\lambda$. To our
knowledge, this result is novel. In contrast, for all suitably large $\lambda$,
$R^{\sharp}(\lambda; s, N)$ admits the desirable property suggested in
section \ref{sec:introduction}:
$R^{\sharp}(\lambda; s, N) \lesssim (\lambda / \lambda^{*})^{2} R^{*}(s,
N)$. The result, stated in Theorem \ref{thm:qppd-rhs-stability}, essentially follows
from known \textsc{Lasso} bounds for RIP matrices: $R(\lambda) \leq \lambda^{2} s$. Thus,
in the low-noise regime, $R^{\sharp}(\lambda; x_{0}, N, \eta)$ exhibits a phase
transition between order-optimal and suboptimal regimes.

The numerics of section \ref{sec:qppd-numerics} suggest a viable constant for the
growth rate of the risk when $\lambda$ is too small, and support
Theorem \ref{thm:qppd-rhs-stability} in the case where $\lambda$ is sufficiently
large. These numerics also clarify the role that the dimension-dependent growth
rate serves in the stability of {\qppd} about $\lambda^{*}$.

\subsection{Smoothness of the risk}
\label{ssec:smoothness-qppd-mse}

The {\qppd} estimator for a problem with noise level $\eta > 0$ and with
parameter $\lambda > 0$ is given by soft-thresholding by $\eta\lambda$. In
particular, $x^{\sharp}(\eta\lambda)$ is a smooth function with respect to the
problem parameters, hence so is $R^{\sharp}(\lambda; x_{0}, N, \eta)$ (being a
composition of smooth functions). However, the closed form expression for
$R^{\sharp}(\lambda; x_{0}, N, \eta)$ is unavailable, because the expectations
involved are untractable in general. When the noise-level vanishes this is no
longer true and we may compute an exact expression in terms of $\lambda, s$ and
$N$ for the risk. Specifically, we note that the smoothness result below is not
special to the case where $\eta \to 0$, but is notable because of the closed
form expression for the risk that is obtained.

Moreover, the result is notable, because the closed form expression is
equivalent (in some precisely definable sense) to
$R^{\sharp}(\lambda; x_{0}, N, \eta)$ when $\eta > 0$ and the magnitudes of the
entries of $x_{0}$ are all large (\ie ``the signal is well-separated from the
noise''). We make this connection after the main results discussed below. In
turn, this connects Theorem \ref{thm:qppd-instability} and
Theorem \ref{thm:qppd-rhs-stability} to Theorem \ref{thm:simplified-qppd}, where the
analytic expression is used to derive the so-called left-sided parameter
instability and right-sided parameter stability results.

\begin{proposition}[{$R^{\sharp}(\lambda; x_{0}, N, \eta)$ smoothness}]
  \label{prop:qppd-smooth}
  Let $s \geq 0, N \geq 1, x_{0} \in \Sigma_{s}^{N}$ and $\eta > 0$. For
  $\lambda > 0$,
  \begin{align}
    \label{eq:pd-lambda}
    \lim_{\eta\to 0 } R^{\sharp}(\lambda; x_{0}, N, \eta) %
    = s (1 + \lambda^{2}) + 2(N-s)\big[(1+\lambda^{2})\Phi(-\lambda) %
    - \lambda\phi(\lambda)\big]
  \end{align}

\end{proposition}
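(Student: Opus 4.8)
The plan is to use the fact, recalled just before the statement, that with the identity measurement matrix the \qppd\ objective separates across coordinates, so $x^{\sharp}(\eta\lambda)$ is the entrywise soft-threshold of $y$ at level $\eta\lambda$; write $S_{t}(u) := \sgn(u)\max\{|u|-t,0\}$ for the scalar operator. Letting $T$ denote the support of $x_{0}$ and using $y_{i} = x_{0,i} + \eta z_{i}$, I would split the risk as
\begin{align*}
  R^{\sharp}(\lambda; x_{0}, N, \eta)
  = \sum_{i \in T} \eta^{-2}\E\big[(S_{\eta\lambda}(x_{0,i} + \eta z_{i}) - x_{0,i})^{2}\big]
  + \sum_{i \notin T} \eta^{-2}\E\big[S_{\eta\lambda}(\eta z_{i})^{2}\big],
\end{align*}
and treat the off-support and on-support sums separately, since they behave quite differently as $\eta \to 0$.

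For the off-support sum I would invoke positive homogeneity, $S_{\eta\lambda}(\eta u) = \eta\,S_{\lambda}(u)$, which cancels the $\eta^{-2}$ factor and leaves each summand equal to $\E[S_{\lambda}(Z)^{2}]$ for $Z \sim \mathcal{N}(0,1)$, independent of $\eta$; no limit is needed here. Evaluating this integral is routine: by symmetry it equals $2\int_{\lambda}^{\infty}(z-\lambda)^{2}\phi(z)\d z$, and expanding the square reduces it to the standard tail integrals $\int_{\lambda}^{\infty}\phi(z)\d z = \Phi(-\lambda)$, $\int_{\lambda}^{\infty}z\phi(z)\d z = \phi(\lambda)$, and $\int_{\lambda}^{\infty}z^{2}\phi(z)\d z = \lambda\phi(\lambda) + \Phi(-\lambda)$ (the last by one integration by parts using $\phi' = -z\phi$). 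These combine to $\E[S_{\lambda}(Z)^{2}] = 2[(1+\lambda^{2})\Phi(-\lambda) - \lambda\phi(\lambda)]$, producing the second term of the claimed identity.

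For the on-support sum I would first pin down the pointwise limit of $g_{\eta}(z_{i}) := \eta^{-1}(S_{\eta\lambda}(x_{0,i} + \eta z_{i}) - x_{0,i})$. Since $x_{0,i} \neq 0$ for $i \in T$, once $\eta$ is small enough that $|x_{0,i} + \eta z_{i}| > \eta\lambda$ the threshold acts affinely and $g_{\eta}(z_{i}) = z_{i} - \sgn(x_{0,i})\lambda$, so $g_{\eta}(z_{i})^{2} \to (z_{i} - \sgn(x_{0,i})\lambda)^{2}$ pointwise, a random variable with expectation $1 + \lambda^{2}$ irrespective of the sign or magnitude of $x_{0,i}$; summed over $T$ this accounts for the $s(1+\lambda^{2})$ term.

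The main obstacle is justifying the interchange of limit and expectation in this last step, because the affine simplification above holds only for $\eta$ below a $z_{i}$-dependent threshold. I would resolve this by dominated convergence: the $1$-Lipschitz property of $S_{\eta\lambda}$ combined with the elementary estimate $|S_{\eta\lambda}(x_{0,i}) - x_{0,i}| \leq \eta\lambda$ gives, for \emph{every} $\eta > 0$, the bound $|g_{\eta}(z_{i})| \leq |z_{i}| + \lambda$, so that $g_{\eta}(z_{i})^{2} \leq (|z_{i}| + \lambda)^{2}$ is an integrable envelope independent of $\eta$. Passing the limit inside the finitely many expectations then completes the proof. (Should $\|x_{0}\|_{0}$ be strictly less than $s$, the identical computation returns the formula with $s$ replaced by the exact number of nonzeros; the stated form is the case $\|x_{0}\|_{0} = s$.)
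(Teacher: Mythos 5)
Your proposal is correct and follows essentially the same route as the paper: split the risk over the support and its complement, use positive homogeneity of soft-thresholding to reduce the off-support term to $\E[S_{\lambda}(Z)^{2}] = 2[(1+\lambda^{2})\Phi(-\lambda)-\lambda\phi(\lambda)]$, and observe that on the support the threshold acts affinely for small $\eta$, yielding $s(1+\lambda^{2})$. Your justification of the limit--expectation interchange via the explicit integrable envelope $(|z_{i}|+\lambda)^{2}$ is in fact cleaner than the paper's high-probability-event argument, and your closing remark about $\|x_{0}\|_{0}<s$ is a legitimate observation about the statement as written.
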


\begin{remark}
  Here and beyond, we denote the limiting low-noise risk by
  $R^{\sharp}(\lambda; s, N) := \lim_{\eta \to 0} R^{\sharp}(\lambda; x_{0}, N,
  \eta)$; and define the function $G(\lambda) := (1+\lambda^{2})\Phi(-\lambda) %
  - \lambda\phi(\lambda)$ for notational brevity, where $\phi$ and $\Phi$ denote
  the standard normal pdf and cdf, respectively.
\end{remark}

An equivalence in behaviour is seen between the low-noise regime $\eta \to 0$
and the large-entry regime $|x_{0,j}| \to \infty$ for
$j \in \mathrm{supp}(x_{0})$ with $\eta > 0$. For both programs, the noise level
is ``effectively'' zero by comparison to the size of the entries of
$x_{0}$. This type of scale invariance allows us to re-state the previous result
as a $\max$ formulation.

\begin{corollary}[$\max$-formulation]
  \label{coro:qppd-max-formulation}
  Let $s \geq 0, N\geq 1, x_{0} \in \Sigma_{s}^{N}$ and $\eta >
  0$. For $\lambda > 0$,
  \begin{align*}
    \sup_{x_{0} \in \Sigma_{s}^{N}} R^{\sharp}(\lambda; x_{0}, N, \eta)%
    = R^{\sharp}(\lambda; s, N)
    %= s (1+\lambda^{2}) + 2(N - s) G(\lambda)
  \end{align*}
\end{corollary}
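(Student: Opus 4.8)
The plan is to exploit the explicit soft-thresholding form of the {\qppd} estimator together with the coordinatewise decoupling of the squared error. Writing $S_{t}(u) := \sgn(u)\max\{|u|-t, 0\}$ for the scalar soft-thresholding operator, the text records that $x^{\sharp}(\eta\lambda)$ is obtained by applying $S_{\eta\lambda}$ to $y = x_{0} + \eta z$ entrywise. Using the elementary scaling identity $S_{\eta\lambda}(\eta u) = \eta\, S_{\lambda}(u)$ with $u = x_{0,j}/\eta + z_{j}$, the $j$-th normalized error $\eta^{-1}(S_{\eta\lambda}(x_{0,j}+\eta z_{j}) - x_{0,j})$ equals $S_{\lambda}(x_{0,j}/\eta + z_{j}) - x_{0,j}/\eta$. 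Squaring, taking expectations, and summing over $j$ therefore gives, for every $\eta > 0$,
\[
  R^{\sharp}(\lambda; x_{0}, N, \eta) = \sum_{j=1}^{N} \rho\!\left(\frac{x_{0,j}}{\eta}\right), \qquad \rho(\mu) := \E\big[(S_{\lambda}(\mu + Z) - \mu)^{2}\big].
\]
This exhibits the scale invariance emphasized in the text: the dependence of the risk on $(x_{0}, \eta)$ is only through the ratios $x_{0,j}/\eta$, so maximizing over $x_{0}$ at fixed $\eta$ ranges over the same values as sending $\eta \to 0$ at fixed nonzero entries.

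Second, I would record the properties of the scalar soft-threshold risk $\rho$ that drive the maximization. By symmetry of $Z$ and oddness of $S_{\lambda}$, the function $\rho$ is even. A direct Gaussian integral gives the value at a null coordinate, $\rho(0) = \E[\max\{|Z|-\lambda,0\}^{2}] = 2G(\lambda)$, reusing the computation already implicit in the $(N-s)$ off-support terms of Proposition \ref{prop:qppd-smooth}. For a coordinate of growing magnitude, $S_{\lambda}(\mu+Z) - \mu \to Z - \sgn(\mu)\lambda$ almost surely, and since $|S_{\lambda}(\mu+Z)-\mu| \le \lambda + |Z|$ uniformly in $\mu$, dominated convergence yields $\lim_{|\mu|\to\infty}\rho(\mu) = \E[(Z-\lambda)^{2}] = 1 + \lambda^{2}$. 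The one analytic fact I must establish is the uniform bound $\rho(\mu) \le 1 + \lambda^{2}$ for all $\mu$ (equivalently, that $\sup_{\mu}\rho(\mu) = 1 + \lambda^{2}$ is realized only in the large-signal limit); this is the classical monotonicity of the soft-threshold risk in $|\mu|$ and can be obtained by differentiating $\rho$ through Stein's identity. Finally, since $\Phi(-\lambda) < \tfrac12$ and $\lambda\phi(\lambda) > 0$ for $\lambda > 0$, one checks $2G(\lambda) < 1 + \lambda^{2}$, so an on-support coordinate is strictly more costly than an off-support one.

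With these facts the optimization is immediate. For any $x_{0} \in \Sigma_{s}^{N}$ with $s' \le s$ nonzero entries, bounding each on-support term by $1+\lambda^{2}$ and each off-support term by $\rho(0) = 2G(\lambda)$ gives $R^{\sharp}(\lambda; x_{0}, N, \eta) \le s'(1+\lambda^{2}) + (N-s')\,2G(\lambda)$, which is nondecreasing in $s'$ because $1+\lambda^{2} > 2G(\lambda)$; hence it is at most $s(1+\lambda^{2}) + 2(N-s)G(\lambda)$. Conversely, choosing $x_{0}$ with exactly $s$ nonzero entries of common magnitude $M$ and letting $M \to \infty$ drives each on-support term to $1+\lambda^{2}$ while each off-support term is exactly $2G(\lambda)$, so the supremum equals $s(1+\lambda^{2}) + 2(N-s)G(\lambda)$ (attained in the limit). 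By Proposition \ref{prop:qppd-smooth} this value is precisely $R^{\sharp}(\lambda; s, N)$, completing the argument. The main obstacle is the uniform upper bound $\rho \le 1+\lambda^{2}$; everything else is bookkeeping layered on top of Proposition \ref{prop:qppd-smooth}.
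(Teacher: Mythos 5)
Your proposal is correct, but it reaches the conclusion by a genuinely different route than the paper. You decouple the risk coordinatewise as $R^{\sharp}(\lambda; x_{0}, N, \eta) = \sum_{j}\rho(x_{0,j}/\eta)$ with $\rho(\mu) := \E[(S_{\lambda}(\mu+Z)-\mu)^{2}]$, and then drive the whole optimization off three scalar facts: $\rho(0) = 2G(\lambda)$, $\rho(\mu) \to 1+\lambda^{2}$ as $|\mu|\to\infty$, and the uniform bound $\rho(\mu) \leq 1+\lambda^{2}$. That last bound, which you rightly flag as the only nontrivial analytic input, is indeed the classical monotonicity of the scalar soft-threshold risk in $|\mu|$ (one computes $\frac{\mathrm{d}}{\mathrm{d}\mu}\rho(\mu) = 2\mu\,\mathbb{P}(|\mu+Z|\leq\lambda) \geq 0$ for $\mu \geq 0$), so the gap you leave is real but routine to fill. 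The paper instead stratifies $\Sigma_{s}^{N}$ by exact sparsity level $t$, asserts that the supremum over each stratum $\Sigma_{t}^{N}\setminus\Sigma_{t-1}^{N}$ equals $R^{\sharp}(\lambda; t, N)$ via the rescaling equivalence between $\eta\to 0$ and $|x_{0,j}|\to\infty$ (as in Proposition \ref{prop:risk-equivalence}), and then maximizes the resulting affine-in-$t$ expression using $1+\lambda^{2} \geq 2G(\lambda)$. The two arguments rest on the same underlying phenomena --- each coordinate's contribution is largest in the large-signal limit, and an on-support coordinate costs more than an off-support one --- but yours makes explicit the per-coordinate monotonicity that the paper's stratum-supremum step leaves implicit (the stratum contains exactly-$t$-sparse vectors with arbitrarily small nonzero entries, so identifying its supremum with the large-entry limit quietly uses the same monotonicity you isolate). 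In that sense your version is more self-contained and arguably more rigorous, at the cost of one extra lemma; the paper's is terser because it recycles the rescaling machinery already set up for {\lspd}.
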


\subsection{Left-sided parameter instability}
\label{ssec:left-sided-parm-inst}

We reveal an asymptotic regime in which $R^{\sharp}(\lambda; s, N)$ is minimax
suboptimal for all $\lambda$ sufficiently small. The result follows from showing
the risk derivative is large for all $\lambda < \bar \lambda$ when $s$ is
sufficiently small relative to $N$. Here, $\bar \lambda := \sqrt{2\log N}$ is an
Ansatz estimate of $\lambda^{*}$ used to make the proof proceed
cleanly. Finally, we show in what sense $\bar \lambda$ is asymptotically
equivalent to $\lambda^{*}$ in Proposition \ref{prop:asymptotic-equivalence}.

The proof for the bound on the risk derivative follows by calculus and a
standard estimate of $\Phi(-\lambda)$ in terms of $\phi(\lambda)$. Its scaling
with respect to the ambient dimension destroys the optimal behaviour of
$R^{\sharp}(\lambda; x_{0}, N)$ for all $\lambda < \bar \lambda$. The proof of
this result, stated in Theorem \ref{thm:qppd-instability}, follows immediately from
Lemma \ref{lem:qppd-instability} by the fundamental theorem of calculus.

\begin{lemma}[risk derivative instability]
  \label{lem:qppd-instability}
  Fix $s \geq 1$. For any $\varepsilon \in (0, 1)$, there exists $ C > 0$ and an
  integer $N_{0} = N_{0}(s) \geq s$ so that for all $N \geq N_{0}$
  \begin{align*}
    - \left.\dee{}{u} \right|_{u = 1- \varepsilon}\hspace{-16pt} R^{\sharp}(u\bar \lambda ; s, N) 
    \geq C N^{\varepsilon}
  \end{align*}
  where $\bar\lambda = \sqrt{2 \log (N)}$ is an estimate of the optimal
  parameter choice for {\qppd}.
  
\end{lemma}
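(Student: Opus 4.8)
The plan is to differentiate the closed-form risk from Proposition~\ref{prop:qppd-smooth} and isolate a single dominant term. Writing $R^{\sharp}(\lambda; s, N) = s(1+\lambda^{2}) + 2(N-s)G(\lambda)$ with $G(\lambda) = (1+\lambda^{2})\Phi(-\lambda) - \lambda\phi(\lambda)$, I would first compute $G'$. Using $\phi'(\lambda) = -\lambda\phi(\lambda)$ and $\dee{}{\lambda}\Phi(-\lambda) = -\phi(\lambda)$, the product-rule terms telescope to the clean expression $G'(\lambda) = 2\lambda\Phi(-\lambda) - 2\phi(\lambda) = -2h(\lambda)$, where $h(\lambda) := \phi(\lambda) - \lambda\Phi(-\lambda)$. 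Writing $\lambda_{\varepsilon} := (1-\varepsilon)\bar\lambda$ and applying the chain rule in $\lambda = u\bar\lambda$ then gives
\begin{align*}
  -\left.\dee{}{u}\right|_{u=1-\varepsilon} R^{\sharp}(u\bar\lambda; s, N)
  = -2s\,\bar\lambda\,\lambda_{\varepsilon} + 4(N-s)\,\bar\lambda\, h(\lambda_{\varepsilon}).
\end{align*}
The goal reduces to showing the positive term $4(N-s)\bar\lambda\, h(\lambda_{\varepsilon})$ overwhelms the negative term and exceeds $CN^{\varepsilon}$.

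The heart of the argument is a sharp lower bound on $h(\lambda)$. The first-order Mills inequality $\Phi(-\lambda) \leq \phi(\lambda)/\lambda$ only yields $h(\lambda) \geq 0$, so it is necessary to retain the next term in the Gaussian tail expansion. Repeated integration by parts in $\Phi(-\lambda) = \int_{\lambda}^{\infty}\phi(t)\d t$ gives, for all $\lambda > 0$, the rigorous bound $\Phi(-\lambda) < \big(\lambda^{-1} - \lambda^{-3} + 3\lambda^{-5}\big)\phi(\lambda)$, whence $h(\lambda) > \phi(\lambda)\big(\lambda^{-2} - 3\lambda^{-4}\big) \geq \tfrac12\phi(\lambda)\lambda^{-2}$ once $\lambda^{2} \geq 6$. (Equivalently, $h'(\lambda) = -\Phi(-\lambda)$ with $h(\infty)=0$ give $h(\lambda) = \int_{\lambda}^{\infty}\Phi(-t)\d t$, which one lower-bounds through the lower Mills estimate.) I would then substitute $\lambda_{\varepsilon} = (1-\varepsilon)\sqrt{2\log N}$, so that $\lambda_{\varepsilon}^{2} = 2(1-\varepsilon)^{2}\log N$ and, crucially, $\phi(\lambda_{\varepsilon}) = (2\pi)^{-1/2}N^{-(1-\varepsilon)^{2}}$.

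Assembling the estimates, the positive term is at least a constant depending on $\varepsilon$ times
\begin{align*}
  (N-s)\,\bar\lambda\,\frac{\phi(\lambda_{\varepsilon})}{\lambda_{\varepsilon}^{2}}
  \asymp \frac{(N-s)\,N^{-(1-\varepsilon)^{2}}}{\sqrt{\log N}}
  \gtrsim \frac{N^{\varepsilon(2-\varepsilon)}}{\sqrt{\log N}},
\end{align*}
using the exponent identity $1-(1-\varepsilon)^{2} = \varepsilon(2-\varepsilon)$ and $N - s \geq N/2$ (valid once $N_{0} \geq 2s$). The negative term equals $2s\bar\lambda\lambda_{\varepsilon} = 4s(1-\varepsilon)\log N$, of order $\log N$ for fixed $s$. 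Since $\varepsilon(2-\varepsilon) > \varepsilon > 0$ on $(0,1)$, the ratio $N^{\varepsilon(2-\varepsilon)}/(N^{\varepsilon}\sqrt{\log N}) = N^{\varepsilon(1-\varepsilon)}/\sqrt{\log N}$ diverges; hence one may choose $N_{0} = N_{0}(s)$ (also large enough that $\lambda_{\varepsilon}^{2} \geq 6$) and $C > 0$ for which the positive term dominates the $O(\log N)$ loss and the total exceeds $CN^{\varepsilon}$ for every $N \geq N_{0}$.

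The main obstacle is the lower bound on $h(\lambda)$: the naive Mills bound is exactly too weak, so the proof must genuinely capture the second-order term $-\lambda^{-3}\phi(\lambda)$, which is what produces the factor $\phi(\lambda)/\lambda^{2}$ and hence the polynomial blow-up. Everything downstream is bookkeeping --- the exponent arithmetic and the observation that a positive power of $N$ dominates the logarithmic contribution of the $s$ on-support coordinates. It is worth remarking that this in fact delivers the stronger rate $N^{\varepsilon(2-\varepsilon)}$; the weaker $N^{\varepsilon}$ stated here is all that is needed when Theorem~\ref{thm:qppd-instability} integrates the bound via the fundamental theorem of calculus.
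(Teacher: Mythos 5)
Your proposal is correct and follows essentially the same route as the paper: differentiate the closed-form risk $s(1+\lambda^{2})+2(N-s)G(\lambda)$, get $G'(\lambda)=2\lambda\Phi(-\lambda)-2\phi(\lambda)$, apply the three-term Mills bound $\Phi(-\lambda)\leq(\lambda^{-1}-\lambda^{-3}+3\lambda^{-5})\phi(\lambda)$ to extract the factor $\phi(\lambda)/\lambda^{2}$, and substitute $\lambda_{\varepsilon}=(1-\varepsilon)\sqrt{2\log N}$ so that the exponent identity $1-(1-\varepsilon)^{2}=\varepsilon(2-\varepsilon)>\varepsilon$ lets the positive off-support term dominate the $O(\log N)$ on-support term. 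Your remark that the argument actually yields the stronger rate $N^{\varepsilon(2-\varepsilon)}$ up to logarithms is consistent with the paper's own bound.
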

\begin{theorem}[{\qppd} parameter instability]
  \label{thm:qppd-instability}
  Under the conditions of the previous lemma, for $\varepsilon \in (0,1)$ there
  exists a constant $C > 0$ and integer $N_{0} \geq 1$ such that for all
  $N \geq N_{0}$,
  \begin{align*}
    R^{\sharp}((1-\varepsilon)\bar \lambda; s, N) %
    \geq C \frac{N^{\varepsilon}}{\log N}. 
  \end{align*}

\end{theorem}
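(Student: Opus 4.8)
The statement asserts a lower bound on the risk $R^{\sharp}((1-\varepsilon)\bar\lambda;s,N)$ itself, whereas the lemma supplies a lower bound on the *derivative* $-\partial_u R^{\sharp}(u\bar\lambda;s,N)$ at $u=1-\varepsilon$. The paper already tells us the bridge: the theorem "follows immediately from Lemma \ref{lem:qppd-instability} by the fundamental theorem of calculus," so my plan is to integrate the derivative bound and exploit a known value of the risk at the upper endpoint. The clean strategy is to relate the value of the risk at $u=1-\varepsilon$ to its value at $u=1$ (i.e.\ at $\bar\lambda$), where $R^{\sharp}(\bar\lambda;s,N)$ is controlled.

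\textbf{Step 1: set up the FTC identity.} Write $g(u):=R^{\sharp}(u\bar\lambda;s,N)$, which is smooth in $u$ by Proposition \ref{prop:qppd-smooth} (the closed form $g(u)=s(1+u^2\bar\lambda^2)+2(N-s)G(u\bar\lambda)$ is manifestly $C^\infty$). Then
\begin{align*}
  g(1-\varepsilon) = g(1) + \int_{1-\varepsilon}^{1} \big(-g'(u)\big)\,\mathrm{d}u
  \geq \int_{1-\varepsilon}^{1} \big(-g'(u)\big)\,\mathrm{d}u,
\end{align*}
using $g(1)\geq 0$ since the risk is nonnegative. The task reduces to lower-bounding the integral of $-g'$ over $[1-\varepsilon,1]$.

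\textbf{Step 2: extend the pointwise derivative bound to an interval.} Lemma \ref{lem:qppd-instability} gives $-g'(1-\varepsilon)\geq CN^{\varepsilon}$ at a single point, but the integral needs a bound on a whole subinterval. I expect the lemma's proof (via the standard estimate $\Phi(-\lambda)\lesssim\phi(\lambda)/\lambda$ feeding into $G'(\lambda)=2\lambda\Phi(-\lambda)-2\phi(\lambda)$, whence $-g'(u)$ scales like $(N-s)\cdot\bar\lambda\cdot\phi(u\bar\lambda)\sim(N-s)\bar\lambda\, N^{-u^2}\cdot(\text{poly in }\bar\lambda)$) actually yields the lower bound $-g'(u)\geq C'N^{(1-u^2)}/\text{poly}(\log N)$ monotonically, so that on any subinterval $[1-\delta,1-\varepsilon/2]\subseteq[1-\varepsilon,1]$ the derivative stays at least a fixed multiple of $N^{\varepsilon}/\log N$. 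The cleanest route is to re-run the derivative estimate uniformly: since $u\mapsto -g'(u)$ is (for the dominant $(N-s)$ term) essentially $N^{1-u^2}$ up to $\text{poly}(\log N)$ factors, it is increasing as $u$ decreases, so for $u\in[1-\varepsilon,1-\varepsilon/2]$ one has $N^{1-u^2}\geq N^{1-(1-\varepsilon/2)^2}=N^{\varepsilon-\varepsilon^2/4}\geq c\,N^{\varepsilon/2}$, or better, working at the left endpoint gives the full $N^{\varepsilon}$.

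\textbf{Step 3: assemble the bound, tracking the $\log N$ loss.} Integrating $-g'(u)\gtrsim N^{1-u^2}/\text{poly}(\log N)$ over a subinterval of length $\Theta(\varepsilon)$ produces $g(1-\varepsilon)\gtrsim \varepsilon\,N^{\varepsilon}/\text{poly}(\log N)$. The single factor $\log N$ in the denominator of the theorem's conclusion $CN^{\varepsilon}/\log N$ is precisely the price paid for converting the pointwise $N^{\varepsilon}$ derivative bound into an integrated lower bound: either it arises from the width of the effective integration window shrinking like $1/\bar\lambda^2\sim1/\log N$ over which the derivative exceeds $N^{\varepsilon}$, or from the $\bar\lambda$-dependent prefactors in $G'$. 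I would make this explicit by choosing the integration subinterval of length $\Theta(1/\log N)$ on which $-g'(u)\geq cN^{\varepsilon}$, yielding $g(1-\varepsilon)\geq cN^{\varepsilon}\cdot\Theta(1/\log N)=C N^{\varepsilon}/\log N$, and then absorb $\varepsilon$-dependent constants into $C$.

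\textbf{Main obstacle.} The only genuine subtlety is \emph{where the $\log N$ factor comes from}: the lemma is a one-point derivative bound, and FTC over a fixed-length interval $[1-\varepsilon,1]$ would naively give the stronger $N^{\varepsilon}$ (no $\log N$), so the honest accounting must show that the derivative bound $N^{\varepsilon}$ holds only on a window of width $\sim1/\log N$ rather than on all of $[1-\varepsilon,1]$ — equivalently, that away from $u=1-\varepsilon$ the derivative decays fast enough (because $N^{1-u^2}$ falls off as $u\to 1$) that only a $\Theta(1/\log N)$-window contributes at the $N^{\varepsilon}$ level. Pinning down this window width carefully, using the Gaussian tail estimate for $\phi(u\bar\lambda)=N^{-u^2}/\sqrt{2\pi}$ and the derivative $\tfrac{d}{du}N^{-u^2}=-2u\bar\lambda^2 N^{-u^2}\cdot(\ldots)$, is where the factor $1/\log N$ is earned, and is the step I would write out in full.
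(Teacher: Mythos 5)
Your plan is correct and follows essentially the same route as the paper: the paper's proof also applies the fundamental theorem of calculus to $F(u)=R^{\sharp}(u\bar\lambda;s,N)$, bounds $f(1-t)\leq -CN^{t}$ pointwise via the lemma, and evaluates $\int_{0}^{\varepsilon}N^{t}\,\mathrm{d}t=(N^{\varepsilon}-1)/\log N$, which is exactly your "$\Theta(1/\log N)$-window near $u=1-\varepsilon$" accounting of the $\log N$ factor. Your explicit localization to that window is, if anything, slightly more careful than the paper's version, since it only invokes the derivative bound away from $u=1$ (where the risk derivative changes sign near $\lambda^{*}/\bar\lambda$) rather than over all of $[1-\varepsilon,1]$.
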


Though these results may initially seem surprising, we claim they are sensible
when viewed in comparison to unregularized proximal denoising (\ie
$\lambda = 0$). In this case, sparsity of the signal $x_{0}$ is unused and so
one expects error be proporitional to the ambient dimension, as in
section \ref{sec:param-inst-lspd}. In the low-noise regime, the sensitivity of the
program to $\lambda$ is apparently amplified, and for $\lambda > 0$ one may
still expect {\qppd} to behave similarly to unregularized proximal denoising,
begetting risk that behaves like a power law of $N$.

\begin{proposition}[Asymptotic equivalence]
  \label{prop:asymptotic-equivalence}
  Let $N \in \nats$ with $N \geq 2$, $s \in [N]$ and
  $\bar \lambda = \sqrt{2\log N}$. For given problem data, suppose
  $x^{\sharp}(\lambda)$ solves {\qppd}, and let $\lambda^{*}$ be the optimal
  parameter choice for $R^{\sharp}(\lambda; s, N)$. Then
  \begin{align*}
    \lim_{N\to\infty}\frac{\bar\lambda}{\lambda^{*}} =  1
  \end{align*}
\end{proposition}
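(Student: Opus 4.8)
The plan is to work directly with the closed-form expression $R^{\sharp}(\lambda; s, N) = s(1+\lambda^{2}) + 2(N-s)G(\lambda)$ supplied by Proposition \ref{prop:qppd-smooth}, with $G(\lambda) = (1+\lambda^{2})\Phi(-\lambda) - \lambda\phi(\lambda)$, and to pin down the minimizer $\lambda^{*}$ through first-order optimality. First I would differentiate, using $\phi'(\lambda) = -\lambda\phi(\lambda)$ and $\frac{\mathrm{d}}{\mathrm{d}\lambda}\Phi(-\lambda) = -\phi(\lambda)$, to obtain $G'(\lambda) = 2[\lambda\Phi(-\lambda) - \phi(\lambda)]$ and then the clean identity $G''(\lambda) = 2\Phi(-\lambda) > 0$. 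Since $s(1+\lambda^{2})$ is also strictly convex, $R^{\sharp}(\cdot\,; s, N)$ is strictly convex on $(0,\infty)$; its derivative equals $-4(N-s)\phi(0) < 0$ at $\lambda = 0$ and it grows to $+\infty$, so there is a unique interior minimizer $\lambda^{*}$ characterized by the stationarity equation $s\lambda^{*} = 2(N-s)\bigl[\phi(\lambda^{*}) - \lambda^{*}\Phi(-\lambda^{*})\bigr]$.

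Next I would feed this equation into the standard Gaussian tail (Mills-ratio) estimates, which give $\phi(\lambda) - \lambda\Phi(-\lambda) = \frac{\phi(\lambda)}{\lambda^{2}}\bigl(1 + O(\lambda^{-2})\bigr)$ as $\lambda \to \infty$. Before invoking this asymptotic one must verify $\lambda^{*} \to \infty$: the bracket satisfies $\phi(\lambda^{*}) - \lambda^{*}\Phi(-\lambda^{*}) = \frac{s\lambda^{*}}{2(N-s)}$, whose value is forced toward $0$ for $s$ fixed, while the bracket is bounded below on any bounded $\lambda$-interval, so $\lambda^{*}$ cannot stay bounded. With $\lambda^{*} \to \infty$ in hand, the stationarity equation rearranges to $s(\lambda^{*})^{3} = 2(N-s)\phi(\lambda^{*})\bigl(1 + o(1)\bigr)$.

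Finally I would take logarithms of this last relation. Writing $\phi(\lambda^{*}) = (2\pi)^{-1/2}e^{-(\lambda^{*})^{2}/2}$ and using that $s$ is held fixed (so $\log(2(N-s)) = \log N + O(1)$) turns it into $(\lambda^{*})^{2} = 2\log N + O(1) - 6\log\lambda^{*}$. The one-step bootstrap then runs as follows: the equation forces $(\lambda^{*})^{2} \leq 2\log N + O(1)$, whence $\log\lambda^{*} = O(\log\log N)$; substituting this back yields $(\lambda^{*})^{2} = 2\log N + O(\log\log N)$, so $(\lambda^{*})^{2}/(2\log N) \to 1$, which is exactly $\bar\lambda / \lambda^{*} = \sqrt{2\log N}/\lambda^{*} \to 1$.

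The main obstacle I anticipate is bookkeeping rather than conceptual: ensuring that the $O(\cdot)$ and $o(1)$ errors from the Mills-ratio expansion and from the $\log(1+o(1))$ term are genuinely of lower order than $\log N$, and checking the a priori bound $\lambda^{*} = O(\sqrt{\log N})$ that licenses the bootstrap. The two structural facts that make everything go through are the exact convexity identity $G''(\lambda) = 2\Phi(-\lambda)$, which makes $\lambda^{*}$ unique and reduces the problem to a single scalar equation, and the hypothesis that $s$ stays fixed as $N \to \infty$, which keeps every constant-order term genuinely constant.
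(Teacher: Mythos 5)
Your proof is correct, and it takes a genuinely different route from the paper's. The paper sandwiches $\lambda^{*}$ from both sides: it shows $\lambda^{*} < \bar\lambda$ by checking that $\dee{}{\lambda}R^{\sharp}(\lambda; s, N) > 0$ at $\lambda = \bar\lambda$ (via the Mills-ratio lower bound $\Phi(-\lambda) \geq \tfrac{\lambda}{\lambda^{2}+1}\phi(\lambda)$), and it rules out $\lambda^{*} < (1-\varepsilon)\bar\lambda$ by invoking the left-sided instability result (Lemma \ref{lem:qppd-instability}/Theorem \ref{thm:qppd-instability}), since $R^{\sharp}((1-\varepsilon)\bar\lambda; s, N) \gtrsim N^{\varepsilon}/\log N$ would contradict $R^{\sharp}(\lambda^{*}; s, N) \lesssim s\log(N/s)$ from Proposition \ref{prop:lspd-optimal-risk}. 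You instead solve the first-order condition directly: the identity $G''(\lambda) = 2\Phi(-\lambda)$ gives strict convexity and a unique stationary point, the equation $s\lambda^{*} = 2(N-s)[\phi(\lambda^{*}) - \lambda^{*}\Phi(-\lambda^{*})]$ forces $\lambda^{*}\to\infty$, and a log-plus-bootstrap yields $(\lambda^{*})^{2} = 2\log N + O(\log\log N)$. Your argument is self-contained (no appeal to the instability machinery or to the benchmark $R^{*}(s,N)$) and delivers a strictly sharper quantitative rate than the paper needs; the paper's argument is shorter given that the instability results are already in hand, but only localizes $\lambda^{*}$ to $[(1-\varepsilon)\bar\lambda, \bar\lambda)$. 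The only caveats in yours are cosmetic: you need $N > s$ for the derivative at $0$ to be negative (automatic here since $s$ is fixed and $N\to\infty$), and the final bookkeeping of the $O(\log\log N)$ terms should be written out, but both steps go through exactly as you describe.
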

\begin{remark}
  The value $\bar \lambda$ estimates the optimal parameter choice for {\qppd} in
  the following sense
  \cite{oymak2016sharp}. % specifically, cf. pages 4, 13, 18.
  \begin{align*}
    \lambda^{*} = O(\sqrt{\log (N/s)}) \approx \sqrt{2\log N} =: \bar \lambda
  \end{align*}
\end{remark}

\subsection{Right-sided parameter stability}
\label{ssec:right-sided-param}

In the low-noise regime, $R^{\sharp}$ may still be order-optimal if $\lambda$ is
chosen large enough. Specifically, if $\lambda = L\lambda^{*}$ for some $L > 1$,
then $R^{\sharp}(\lambda; x_{0}, N)$ is still minimax order-optimal. We claim no
novelty for the result of this section, but use it as a contrast to elucidate
the previous theorem. Whereas for $\lambda < \bar \lambda$ we are penalized for
under-regularizing in the low-noise regime in high dimensions, the theorem below
implies that we are not penalized for over-regularizing.

\begin{theorem}
  \label{thm:qppd-rhs-stability}
  {\qppd} is parameter stable in the sense that for any $\lambda > 0$ satisfying
  $L = \lambda / \lambda^{*} > 1$, there is $N_{0} = N_{0}(s, \lambda) \geq 2$ so
  that for all $N \geq N_{0}$,
  \begin{align*}
    \frac{R^{\sharp}(\lambda; s, N)}{R^{*}(s,N)} %
    \leq CL^{2}.
  \end{align*}
  
\end{theorem}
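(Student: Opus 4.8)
The plan is to work directly from the exact limiting-risk formula in Proposition~\ref{prop:qppd-smooth}. Writing $\lambda = L\lambda^{*}$ with $L > 1$ fixed, the goal reduces to establishing the single inequality $R^{\sharp}(\lambda; s, N) \lesssim s\lambda^{2}$ for all $N$ sufficiently large. Indeed, by the remark following Proposition~\ref{prop:asymptotic-equivalence} we have $(\lambda^{*})^{2} = O(\log(N/s))$, while Proposition~\ref{prop:lspd-optimal-risk} gives $R^{*}(s,N) \gtrsim s\log(N/s)$; together these yield $s(\lambda^{*})^{2} \lesssim R^{*}(s,N)$, so that
\begin{align*}
  s\lambda^{2} = L^{2}\, s(\lambda^{*})^{2} \lesssim L^{2} R^{*}(s,N),
\end{align*}
and the desired ratio bound $R^{\sharp}(\lambda; s,N)/R^{*}(s,N) \leq CL^{2}$ follows at once.

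To prove $R^{\sharp}(\lambda; s, N) \lesssim s\lambda^{2}$, I would bound the two summands of
\begin{align*}
  R^{\sharp}(\lambda; s, N) = s(1 + \lambda^{2}) + 2(N-s)G(\lambda), \qquad G(\lambda) = (1+\lambda^{2})\Phi(-\lambda) - \lambda\phi(\lambda),
\end{align*}
separately. Since $\lambda = L\lambda^{*} \to \infty$, for large $N$ we have $\lambda \geq 1$ and hence $s(1+\lambda^{2}) \leq 2 s\lambda^{2}$, which is already of the required order. For the second summand I would invoke the standard Mills-ratio estimate $\Phi(-\lambda) \leq \phi(\lambda)/\lambda$ (the same estimate underpinning Lemma~\ref{lem:qppd-instability}), which produces a clean cancellation:
\begin{align*}
  G(\lambda) \leq (1+\lambda^{2})\frac{\phi(\lambda)}{\lambda} - \lambda\phi(\lambda) = \frac{\phi(\lambda)}{\lambda} = \frac{1}{\lambda\sqrt{2\pi}}\, e^{-\lambda^{2}/2}.
\end{align*}
Thus $2(N-s)G(\lambda) \leq \tfrac{2N}{\lambda\sqrt{2\pi}}\, e^{-\lambda^{2}/2}$, and it remains to check that this Gaussian tail defeats the factor $N$.

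The main obstacle is precisely this last reconciliation between the threshold $\lambda = L\lambda^{*}$ and the Gaussian decay, which is where the hypothesis $L > 1$ is used in an essential way. Here I would compare $\lambda^{*}$ to the Ansatz $\bar\lambda = \sqrt{2\log N}$ via Proposition~\ref{prop:asymptotic-equivalence}: since $L > 1$ is fixed, choose $\delta > 0$ small enough that $L(1-\delta) > 1$; then for all $N$ large enough that $\lambda^{*} \geq (1-\delta)\bar\lambda$, one has $\lambda^{2}/2 \geq [L(1-\delta)]^{2}\log N$, whence $e^{-\lambda^{2}/2} \leq N^{-[L(1-\delta)]^{2}}$ with exponent strictly exceeding $1$. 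Consequently $2(N-s)G(\lambda) \leq \tfrac{2}{\lambda\sqrt{2\pi}}\, N^{1 - [L(1-\delta)]^{2}} \to 0$, so this term is eventually bounded by $s\lambda^{2}$ and may be absorbed. Combining the two summand bounds gives $R^{\sharp}(\lambda; s, N) \lesssim s\lambda^{2}$ for all $N \geq N_{0}(s,\lambda)$, which completes the argument. An alternative to the explicit computation would be to quote the classical soft-thresholding risk bound $R(\lambda) \lesssim \lambda^{2} s$, since $x^{\sharp}(\eta\lambda)$ is exactly soft-thresholding by $\eta\lambda$; but the closed form above renders both the constants and the decisive role of $L > 1$ fully transparent.
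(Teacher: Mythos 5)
Your proposal is correct and follows essentially the same route as the paper: substitute the closed form $R^{\sharp}(\lambda;s,N)=s(1+\lambda^{2})+2(N-s)G(\lambda)$, bound the off-support term via the Mills-ratio estimate $G(\lambda)\leq\phi(\lambda)/\lambda$ (the paper's displayed bound $\tfrac{N-s}{\bar L N^{\bar L^{2}}\sqrt{\pi\log N}}$ is exactly this), use the asymptotic equivalence $\lambda^{*}\sim\bar\lambda=\sqrt{2\log N}$ so that $L>1$ forces the exponent of $N$ below zero, and absorb the remaining $s\lambda^{2}\asymp L^{2}s\log N$ into $L^{2}R^{*}(s,N)$ via Proposition~\ref{prop:lspd-optimal-risk}. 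Your write-up merely makes explicit the details the paper compresses into ``a direct substitution yields the desired bound.''
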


Observe that the theorem still holds in the event that $\lambda^{*}$ is replaced
by $\bar \lambda$. Thus, one may obtain the exact point of the phase transition,
$\bar\lambda$, observed in Theorem \ref{thm:simplified-qppd}. In fact, with this
note, Theorem \ref{thm:simplified-qppd} follows as a direct consequence of the
results of this section by letting $N \to \infty$.

\section{{\bppd} parameter instability}
\label{sec:param-inst-bppd}

The program {\bppd} is maximin suboptimal for very sparse vectors $x_{0}$. We
show that $\tilde R(\sigma; x_{0}, N, \eta)$ scales as a power law of $N$ for
all $\sigma > 0$. This rate is significantly worse than $R^{*}(s,N)$. When
$x_{0}$ is very sparse and {\bppd} is underconstrained, then
$\sigma \geq \eta N$ and \ref{ssec:underconstrained-bppd} proves that
$\tilde R(\sigma; x_{0}, N, \eta) = \Omega(\sqrt N)$. When {\bppd} is
overconstrained, then $\sigma \leq \eta \sqrt N$ and
\ref{ssec:overconstrained-bppd}, proves that
$\tilde R(\sigma; x_{0}, N, \eta) = \Omega(N^{q})$ for some $q > 0$ when $x_{0}$
is very sparse.

Intuitively, {\bppd} kills not only the noise, but also eliminates too much of
the signal content when underconstrained and $s$ is small compared to
$N$. Because the signal is very sparse, destroying the signal content is
disastrous to the risk. When overconstrained, the remaining noise overwhelms the
risk, because the off-support has size approximately equal to the ambient
dimension.

The above two steps are combined in Theorem \ref{thm:bppd-minimax} as a minimax
formulation over all $\sigma > 0$ and $x_{0} \in \Sigma_{s}^{N}$. In
Theorem \ref{thm:bppd-maximin}, this result is strengthened to a maximin statement
over $x_{0} \in \Sigma_{s}^{N}$ and all $\sigma > 0$.

Although these results may seem to run contrary to the apparent efficacy of the
CS analogue of {\bppd} in empirical settings, we assure the reader that they are
consistent. The type of parameter instability described in this section occurs
at very large dimensions, in the setting where $s \geq 1$ is fixed. Thus,
although these results bode poorly for the ability of {\bppd} to recover even
the $0$ vector (arguably a desirable property of a denoising program), many
structured high-dimensional signals observed in practice are not so sparse [in a
basis] as to belong to the present regime. Nevertheless, this result serves as a
caveat for the limits of a popular $\ell_{1}$ convex program.

\subsection{Underconstrained {\bppd}}
\label{ssec:underconstrained-bppd}

The proof of this result uses standard methods from CS and may be found in
\ref{ssec:proofs-bppd-results}.

\begin{lemma}
  \label{lem:bppd-uc}
  Let $s \geq 1$ and let $x_{0} \in \Sigma_{s}^{N}\setminus \Sigma_{s-1}^{N}$
  be an exactly $s$-sparse signal with $|x_{j}| \gtrsim N$ for all
  $j\in \mathrm{supp}(x_{0})$. If $\sigma > \eta \sqrt N$, then there exists a
  constant $C > 0$ and integer $N_{0} = N_{0}(s) \geq 2$ such that if
  $N \geq N_{0}$ then
  \begin{align*}
    \tilde R(\sigma; x_{0}, N, \eta) \geq C \sqrt N. 
  \end{align*}

\end{lemma}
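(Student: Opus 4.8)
The plan is to lower-bound the squared error on a constant-probability event and then pay only the probability of that event. Write $h := \tilde x(\sigma) - x_{0}$, so that $y - \tilde x = \eta z - h$ and the residual expands as
\begin{align*}
  \|y - \tilde x\|_{2}^{2}
  = \eta^{2}\|z\|_{2}^{2} - 2\eta\ip{z, h} + \|h\|_{2}^{2},
  \qquad\text{i.e.}\qquad
  \|h\|_{2}^{2} = \|y - \tilde x\|_{2}^{2} - \eta^{2}\|z\|_{2}^{2} + 2\eta\ip{z, h}.
\end{align*}
I would work on the event $\mathcal{G} := \mathcal{E} \cap \{\|z\|_{\infty} \leq 2\sqrt{\log N}\}$, where $\mathcal{E} = \{\|z\|_{2}^{2} \leq N - 2\sqrt N\}$ is the event of Proposition \ref{prop:descent-cone}. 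Since $\mathbb{P}(\mathcal{E}) \to \Phi(-\sqrt 2) > 0$ by a chi-squared/CLT estimate and $\mathbb{P}(\|z\|_{\infty} > 2\sqrt{\log N}) \to 0$, the event $\mathcal{G}$ has probability bounded below by a positive constant $c_{0}$ for $N \geq N_{0}(s)$.

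Next I split according to whether the norm constraint is active. If $\tilde x = 0$ (which can occur only when $\sigma \geq \|y\|_{2}$), then $\|h\|_{2}^{2} = \|x_{0}\|_{2}^{2} \geq s\min_{j \in \mathrm{supp}(x_{0})} x_{j}^{2} \gtrsim s N^{2}$, using exact $s$-sparsity and $|x_{j}| \gtrsim N$; since $\eta < 1$ this dominates $\eta^{2}\sqrt N$ for $N \geq N_{0}(s)$. Otherwise $\tilde x \neq 0$ forces the constraint to be active, $\|y - \tilde x\|_{2} = \sigma$, so on $\mathcal{E}$, using $\sigma^{2} > \eta^{2} N$ and $\|z\|_{2}^{2} \leq N - 2\sqrt N$,
\begin{align*}
  \|h\|_{2}^{2}
  = \sigma^{2} - \eta^{2}\|z\|_{2}^{2} + 2\eta\ip{z, h}
  \geq 2\eta^{2}\sqrt N + 2\eta\ip{z, h}.
\end{align*}

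The crux is controlling the sign-indefinite cross term $\ip{z, h}$. Here I invoke the descent-cone conclusion of Proposition \ref{prop:descent-cone}, valid on $\mathcal{E}$ for $\sigma \geq \eta\sqrt N$, namely $\|h\|_{1} \leq 2\sqrt s\,\|h\|_{2}$; combined with $\|z\|_{\infty} \leq 2\sqrt{\log N}$ on $\mathcal{G}$ this yields $|\ip{z, h}| \leq \|z\|_{\infty}\|h\|_{1} \leq 4\sqrt{s\log N}\,\|h\|_{2}$. Setting $u := \|h\|_{2}$, $b := 8\eta\sqrt{s\log N}$ and $c := 2\eta^{2}\sqrt N$, the previous display becomes the quadratic inequality $u^{2} + bu \geq c$. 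Because $s$ is fixed, $b^{2}/c \asymp s\log N/\sqrt N \to 0$, so the correction is negligible and the inequality forces $u \geq \tfrac12\sqrt c$, i.e. $\|h\|_{2}^{2} \geq \tfrac12\eta^{2}\sqrt N$ for $N \geq N_{0}(s)$; note $b^{2}/c$, and hence $N_{0}$, is independent of $\eta$.

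In both sub-cases $\|h\|_{2}^{2} \geq \tfrac12\eta^{2}\sqrt N$ on $\mathcal{G}$, uniformly over all $\sigma > \eta\sqrt N$, whence
\begin{align*}
  \tilde R(\sigma; x_{0}, N, \eta)
  = \eta^{-2}\E\|h\|_{2}^{2}
  \geq \eta^{-2}\,\mathbb{P}(\mathcal{G}) \cdot \tfrac12\eta^{2}\sqrt N
  \geq \tfrac{c_{0}}{2}\sqrt N,
\end{align*}
which is the claim with $C = c_{0}/2$. The main obstacle is precisely the cross-term estimate: the argument closes only because for fixed $s$ the cone condition confines $h$ to an effectively $O(s)$-dimensional set, making $\sqrt{s\log N}$ (and thus $b^{2}/c$) small against the $\sqrt N$ gain from $\sigma^{2} - \eta^{2}\|z\|_{2}^{2}$. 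If $s$ were allowed to grow with $N$, this estimate — and the conclusion — would degrade.
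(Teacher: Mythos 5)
Your proof is correct and follows essentially the same route as the paper's: restrict to the constant-probability event $\{\|z\|_{2}^{2}\leq N-2\sqrt N,\ \|z\|_{\infty}\lesssim\sqrt{\log N}\}$, use the active constraint together with $\sigma^{2}>\eta^{2}N$ to get $\|h\|_{2}^{2}\geq 2\eta^{2}\sqrt N+2\eta\ip{z,h}$, control the cross term via Proposition \ref{prop:descent-cone} and H\"older, and pay the event's probability. The only cosmetic differences are that you resolve the resulting quadratic inequality in $\|h\|_{2}$ where the paper absorbs the cross term with $2ab\leq a^{2}+b^{2}$, and that you explicitly dispatch the inactive-constraint case $\tilde x=0$, which the paper leaves implicit in its appeal to the KKT conditions.
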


\subsection{Overconstrained {\bppd}}
\label{ssec:overconstrained-bppd}

The proof that $\tilde R(\sigma; x_{0}, N, \lambda)$ scales as a power law of
$N$ when $\sigma \leq \eta \sqrt N$ proceeds by an involved argument, hinging on
two major steps. The first step is to find an event whose probability is
lower-bounded by a universal constant, on which {\bppd} fails to recover the $0$
vector when $\sigma = \eta \sqrt N$. Then, Lemma \ref{lem:projection-lemma}
extends this result to all $\sigma \leq \eta \sqrt N$. At this point, one may
obtain the minimax result of Theorem \ref{thm:bppd-minimax}, as well as a partial
maximin result for all $x_{0} \in \Sigma_{s}^{N}$ on the restriction to
$\sigma \leq \eta \sqrt N$. Then, to strengthen these claims to a maximin result
over all $\sigma > 0$, we prove a lemma that leverages elementary properties
from convex analysis to show how the error of an estimator may be controlled by
that of a lower dimensional estimator from the same class.

In this section, we state key results for building intuition and defer technical
results and proofs to \ref{ssec:proofs-bppd-results}.

\begin{theorem}[Overconstrained Maximin]%[thm:bppd-oc-maximin]
  \label{thm:bppd-oc-maximin}
  There exist universal constants $C > 0, q \in (0, \frac12)$ and
  $N_{0} \geq 2$ an integer such that for all $N \geq N_{0}, s \geq 0$ and
  $\eta > 0$,
  % Let $s \geq 0, x_{0} \in \Sigma_{s}^{N}$ and suppose $y = x_{0} + \eta z$
  % where $z \in \reals^{N}$ with $z_{i} \iid \mathcal{N}(0,1)$ and $\eta >0$.
  % Define $\tilde x(\sigma)$ as in {\bppd} with $\sigma \leq \eta \sqrt N$. There
  % exist constants $C > 0, q \in (0, \frac12)$ and an integer $N_{0} \geq 2$ such
  % that for all $N \geq N_{0}$
  \begin{align*}
    \sup_{x_{0} \in \Sigma_{s}^{N}} \inf_{\sigma \leq \eta \sqrt N} \tilde R(\sigma; x_{0}, N, \eta) %
    \geq C N^{q}.
  \end{align*}

\end{theorem}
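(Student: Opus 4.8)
The plan is to lower-bound the supremum by evaluating at the witness $x_{0} = 0 \in \Sigma_{s}^{N}$, collapse the inner infimum to a single value of $\sigma$ using the monotonicity supplied by the projection lemma, and then estimate the resulting Gaussian functional on an event of constant probability.

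First I would fix $x_{0} = 0$, so that $y = \eta z$ and, by scale invariance ($x = \eta u$), $\tilde x(\sigma) = \eta\,\hat u$ with $\hat u := \argmin\{\|u\|_{1} : \|z - u\|_{2} \leq \sigma/\eta\}$; thus $\tilde R(\sigma; 0, N, \eta)$ is free of $\eta$. By Corollary \ref{coro:projection-lemma} (equivalently Lemma \ref{lem:projection-lemma}), $\|\tilde x(\sigma)\|_{2}$ is nonincreasing in $\sigma$ pointwise in $z$, hence so is $\tilde R(\sigma; 0, N, \eta)$, and therefore $\inf_{\sigma \leq \eta\sqrt N} \tilde R(\sigma; 0, N, \eta) = \tilde R(\eta\sqrt N; 0, N, \eta)$. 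Since $0 \in \Sigma_{s}^{N}$ for every $s \geq 0$, it suffices to show $\E\|\hat u\|_{2}^{2} \geq C N^{q}$ for $\hat u := \argmin\{\|u\|_{1} : \|z - u\|_{2} \leq \sqrt N\}$.

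Second I would invoke the {\bppd}--soft-thresholding equivalence: on $\{\|z\|_{2}^{2} \geq N\}$ the constraint is active and $\hat u = S_{t}(z)$ is the soft-threshold of $z$ at a data-dependent level $t = t(z) \geq 0$ fixed by $\|z - \hat u\|_{2}^{2} = \sum_{i}\min(z_{i}^{2}, t^{2}) = N$. Because $g(t) := \sum_{i}\min(z_{i}^{2}, t^{2})$ is nondecreasing in $t$ while $\|\hat u\|_{2}^{2} = \sum_{i}(|z_{i}| - t)_{+}^{2}$ is nonincreasing in $t$, for any deterministic $t_{0}$ the event $\mathcal E_{1} := \{g(t_{0}) \geq N\}$ forces $t(z) \leq t_{0}$ and hence $\|\hat u\|_{2}^{2} \geq \sum_{i}(|z_{i}| - t_{0})_{+}^{2}$. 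I would then take $t_{0} \asymp \sqrt{\log N}$ and introduce the concentration event $\mathcal E_{2} := \{\sum_{i}(|z_{i}| - t_{0})_{+}^{2} \geq \tfrac12 N\,\E(|Z| - t_{0})_{+}^{2}\}$. On $\mathcal E_{1} \cap \mathcal E_{2}$ one has $\|\hat u\|_{2}^{2} \geq \tfrac12 N\,\E(|Z| - t_{0})_{+}^{2}$, and standard Gaussian-tail asymptotics give $N\,\E(|Z| - t_{0})_{+}^{2}$ of order $\sqrt N$ up to logarithmic factors, which exceeds $N^{q}$ for any fixed $q < 1/2$ and all large $N$. Taking expectations yields $\E\|\hat u\|_{2}^{2} \geq \mathbb{P}(\mathcal E_{1} \cap \mathcal E_{2})\cdot\tfrac12 N\,\E(|Z| - t_{0})_{+}^{2} \geq C N^{q}$, provided $\mathbb{P}(\mathcal E_{1} \cap \mathcal E_{2})$ is bounded below.

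The crux, and the main obstacle, is the lower bound on $\mathbb{P}(\mathcal E_{1})$. Since $\E g(t_{0}) = N\,\E\min(Z^{2}, t_{0}^{2}) = N\,h(t_{0})$ with $h(t_{0}) < 1$ for every finite $t_{0}$, the mean of $g(t_{0})$ lies strictly below $N$, so $\mathcal E_{1}$ demands an upward fluctuation of size $N(1 - h(t_{0})) = N\,\E(Z^{2} - t_{0}^{2})_{+}$. Choosing $t_{0} \asymp \sqrt{\log N}$ balances two competing requirements: this fluctuation must be a vanishing number of standard deviations, namely $N(1 - h(t_{0})) \asymp \sqrt N/\sqrt{\log N} = o(\sqrt N)$ while $\mathrm{Var}\,g(t_{0}) \asymp N$, so that a central-limit estimate gives $\mathbb{P}(\mathcal E_{1}) \to 1/2$; yet $t_{0}$ must stay small enough that $N\,\E(|Z| - t_{0})_{+}^{2}$ remains polynomially large. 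I would make $\mathbb{P}(\mathcal E_{1})$ rigorous through a Berry--Esseen estimate for the triangular array $\{\min(z_{i}^{2}, t_{0}^{2})\}$, whose third central moments are uniformly bounded, controlling the distribution function near the mean. For $\mathcal E_{2}$, after truncating at the deterministic level $2t_{0}$ (no coordinate exceeds it except on a set of probability $O(N^{-1})$) each summand is bounded by $t_{0}^{2} \asymp \log N$, so Bernstein's inequality (Theorem \ref{thm:bernstein}) gives $\mathbb{P}(\mathcal E_{2}) \to 1$; combining, $\mathbb{P}(\mathcal E_{1} \cap \mathcal E_{2}) \geq \mathbb{P}(\mathcal E_{1}) - \mathbb{P}(\mathcal E_{2}^{c})$ is bounded below. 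The remaining pieces, the tail asymptotics for $h(t_{0})$ and $\E(|Z| - t_{0})_{+}^{2}$ and the verification of the Berry--Esseen and Bernstein hypotheses, are routine, and any $q \in (0, 1/2)$ then suffices in Theorem \ref{thm:bppd-oc-maximin}.
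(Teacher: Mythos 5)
Your proposal is correct, and its first half coincides exactly with the paper's proof: both take the witness $x_{0}=0$ (valid for every $s\geq 0$ and, by rescaling, every $\eta>0$) and both use the projection lemma --- Corollary \ref{coro:projection-lemma}, packaged in the paper as Lemma \ref{lem:bppd-oc-sigma-lt-sqrtN} --- to collapse $\inf_{\sigma\leq\eta\sqrt N}$ to the single value $\sigma=\eta\sqrt N$. Where you genuinely diverge is in the core estimate $\tilde R(\eta\sqrt N;0,N,\eta)\geq CN^{q}$. The paper obtains this from the geometric lemma (Lemma \ref{lem:bppd-oc-3a}): it traps the minimizer in $K_{1}\setminus K_{2}$ with $K_{i}=\lambda B_{1}^{N}\cap\alpha_{i}B_{2}^{N}$, using the localized Gaussian-mean-width bounds of Propositions \ref{prop:bellec1} and \ref{prop:bellec2} together with the Borell--TIS inequality, so that on a constant-probability event $\|\tilde x(\sqrt N)\|_{2}\geq\alpha_{2}\gtrsim N^{q}$ for a small explicit $q$. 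You instead exploit the exact soft-thresholding form of the identity-measurement solution, $\tilde x(\sqrt N)=S_{t(z)}(z)$ with $\sum_{i}\min(z_{i}^{2},t^{2})=N$, force $t(z)\leq t_{0}\asymp\sqrt{\log N}$ on an event of probability tending to $1/2$ (a Berry--Esseen computation, legitimate because the required upward fluctuation of $\sum_{i}\min(z_{i}^{2},t_{0}^{2})$ is $\asymp\sqrt N/\sqrt{\log N}=o(\sqrt N)$ while the standard deviation is $\asymp\sqrt N$), and then lower-bound the exceedance energy $\sum_{i}(|z_{i}|-t_{0})_{+}^{2}$ by Bernstein after truncation. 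Both arguments are sound; yours is more elementary and quantitatively sharper, yielding any $q<1/2$ and in fact identifying the true order $\sqrt N/(\log N)^{3/2}$ of the risk, whereas the paper's constants give a much smaller exponent. The trade-off is that your route is wedded to the $\ell_{1}$/identity setting where the proximal map is explicit, while the mean-width machinery is what would generalize to other atomic norms or constraint sets. The verifications you defer --- the asymptotics $1-\E\min(Z^{2},t_{0}^{2})\asymp\phi(t_{0})/t_{0}$ and $\E(|Z|-t_{0})_{+}^{2}\asymp\phi(t_{0})/t_{0}^{3}$, the uniform third-moment bound for Berry--Esseen, and the truncation step --- are indeed routine.
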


By scaling, it is sufficient to prove this result in the case where $\eta =
1$. The discussion below thus assumes $y = x_{0} + z$, while results are stated
in full generality. The main result relies on proving
\begin{align*}
  \inf_{\sigma \leq \sqrt N} \tilde R(\sigma; x_{0}, N, 1) \geq C N^{q}
\end{align*}
when $x_{0} \equiv 0$, trivially implying the equation before it. Thus, the
problem now becomes that of recovering the $0$ vector from standard normally
distributed noise:
\begin{align*}
  \tilde x (\sigma) = \argmin \{ \|x\|_{1} : \|x - z\|_{2}^{2} \leq \sigma^{2}\}.
\end{align*}
Here and below, we denote the feasible set in {\bppd} by
$F(z; \sigma) = B_{2}^{N}(z; \sigma)$ and use the notation $F := F(z; \sqrt
N)$. For $\lambda > 0$ and $0 < \alpha_{2} \leq \alpha_{1} < \infty$, define
$K_{i} = \lambda B_{1}^{N} \cap \alpha_{i} B_{2}^{N}$ to be the intersection of
the $\ell_{1}$-ball scaled by $\lambda$ with the $\ell_{2}$-ball scaled by
$\alpha_{i}$ for $i = 1, 2$.

With $\sigma = \sqrt N$, we prove a geometric lemma. A pictorial representation
of this lemma appears in Figure \ref{fig:bppd-oc-3a}, in which we have represented
$\lambda B_{1}^{N}$ using Milman's 2D representation of high-dimensional
$\ell_{1}$ balls to facilitate the intuition for how they behave in the present
context. The key to the proof of Theorem \ref{thm:bppd-oc-maximin} is the geometric
lemma below, Lemma \ref{lem:bppd-oc-3a}. It proves there exists an $\ell_{1}$
ball of radius $\lambda$ that intersects the feasible set, hence a solution
$ \tilde x(\sigma)$ must satisfy $\|\tilde x (\sigma) \|_{1} \leq \lambda$.
Further, it shows that any vector in the ball $\lambda B_{1}^{N}$ which has
small Euclidean norm does not intersect the feasible set.  Thus, the solution
must have large Euclidean norm.

Finally, this geometric lemma verifies that the previous three conditions occur
on an event occurring with at least probability $k_{3} > 0$. As an immediate
consequence, this lemma yields a lower risk bound,
Corollary \ref{coro:bppd-oc-3a}. The integers
$N_{0}^{\eqref{prop:bppd-oc-1b}}, N_{0}^{\eqref{prop:bppd-oc-2b}}$ are defined
in the technical results of
\ref{sssec:supporting-propositions-for-bppd-oc-3a}.

\begin{lemma}[Geometric lemma]%[lem:bppd-oc-3a]
  \label{lem:bppd-oc-3a}
  Let $K_{1}, K_{2}, F$ be defined as above. Let
  $N_{0}^{\eqref{lem:bppd-oc-3a}} := \max \{ N_{0}^{\eqref{prop:bppd-oc-1b}},
  N_{0}^{\eqref{prop:bppd-oc-2b}}\}$ be a universal constant and suppose
  $N > N_{0}^{\eqref{lem:bppd-oc-3a}}$. There are universal constants
  $k_{3} = k_{3}(N_{0}^{\eqref{lem:bppd-oc-3a}}) > 0$, $C_{3}, q > 0$, and an
  event $\mathcal{E}$ such that
  
  % The list without the annoying minipages in the way 
  % $K_{1} \cap F \neq \emptyset$
  % $K_{2} \cap F = \emptyset$ 
  % $\alpha_{2} > C_{3} N^{q}$
  % $\mathbb{P}(\mathcal{E}) > k_{3}$. 
  \begin{minipage}{.225\textwidth}
    \begin{enumerate}
    \item $K_{1} \cap F \neq \emptyset$
    \end{enumerate}
  \end{minipage}
  \begin{minipage}{.2255\textwidth}
    \begin{enumerate}
      \setcounter{enumi}{1}
    \item $K_{2} \cap F = \emptyset$
    \end{enumerate}
  \end{minipage}
  \begin{minipage}{.225\textwidth}
    \begin{enumerate}
      \setcounter{enumi}{2}
    \item $\alpha_{2} > C_{3} N^{q}$
    \end{enumerate}
  \end{minipage}
  \begin{minipage}{.225\textwidth}
    \begin{enumerate}
      \setcounter{enumi}{3}
    \item $\mathbb{P}(\mathcal{E}) > k_{3}$. 
    \end{enumerate}
  \end{minipage}
\end{lemma}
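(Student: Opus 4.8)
The plan is to reduce to $x_{0} = 0$, $\eta = 1$, $\sigma = \sqrt N$, so that $y = z \sim \mathcal{N}(0, I_{N})$ and $\tilde x(\sqrt N)$ is the minimum-$\ell_{1}$ point of the feasible ball $F = B_{2}^{N}(z; \sqrt N)$. The four conditions are engineered so that, on $\mathcal{E}$, the solution is trapped between two shells: condition 1 exhibits a feasible competitor of $\ell_{1}$-norm at most $\lambda$, forcing $\|\tilde x\|_{1} \leq \lambda$, while condition 2 says no feasible point of $\ell_{1}$-norm at most $\lambda$ has $\ell_{2}$-norm at most $\alpha_{2}$, forcing $\|\tilde x\|_{2} > \alpha_{2}$. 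Since $x_{0} = 0$, the error is $\|\tilde x - x_{0}\|_{2} = \|\tilde x\|_{2} > \alpha_{2} > C_{3} N^{q}$, which is exactly the bound Corollary \ref{coro:bppd-oc-3a} extracts. So the whole task is to choose $\lambda, \alpha_{1}, \alpha_{2}$ and to verify 1--4.

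First I fix scales: set $\alpha_{1} = \lambda$, let $\alpha_{2} = \Theta(N^{q})$ for a small $q \in (0, \tfrac12)$ to be pinned down, and take $\lambda = \Theta(\sqrt N/\sqrt{\log N}) = N^{1/2 - o(1)}$. For condition 1, use the projection $\hat x := \mathrm{P}_{\lambda B_{1}^{N}}(z)$, which is soft-thresholding of $z$ at the level $t = t(z)$ for which $\|\hat x\|_{1} = \lambda$; then $\|\hat x\|_{1} \leq \lambda$ and $\|\hat x\|_{2} \leq \|\hat x\|_{1} \leq \lambda = \alpha_{1}$, so $\hat x \in K_{1}$ deterministically. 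As $\hat x$ is the closest point of $\lambda B_{1}^{N} \supseteq K_{1}$ to $z$, condition 1 reduces exactly to $\mathrm{dist}(z, \lambda B_{1}^{N}) = \|\hat x - z\|_{2} \leq \sqrt N$. This functional is $1$-Lipschitz in $z$, hence sharply concentrated; calibrating the constant in $\lambda$ so that $\E\,\mathrm{dist}(z, \lambda B_{1}^{N})$ falls below $\sqrt N$ by more than its $O(1)$ fluctuations makes condition 1 hold with probability $1 - o(1)$. This is Proposition \ref{prop:bppd-oc-1b}.

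For condition 2, note that for every $x \in K_{2}$ one has $\|x - z\|_{2}^{2} \geq \|z\|_{2}^{2} - 2\langle x, z\rangle \geq \|z\|_{2}^{2} - 2\sup_{x \in K_{2}}\langle x, z\rangle$, so $K_{2} \cap F = \emptyset$ whenever $\sup_{x \in K_{2}}\langle x, z\rangle < \tfrac12(\|z\|_{2}^{2} - N)$. The left-hand side is the supremum of a centred Gaussian process over $K_{2}$; by the Borell--TIS inequality (Theorem \ref{thm:borell-tis}), with variance proxy $\sup_{x \in K_{2}}\|x\|_{2}^{2} \leq \alpha_{2}^{2}$, it concentrates about its mean $w(K_{2})$. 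Rescaling $K_{2} = \lambda\big(B_{1}^{N} \cap \gamma B_{2}^{N}\big)$ with $\gamma = \alpha_{2}/\lambda \in (0,1)$ and applying Proposition \ref{prop:bellec1} to $T = B_{1}^{N} = \mathrm{conv}\{\pm e_{i}\}$ gives $w(K_{2}) \leq 4\lambda\sqrt{\max\{1, \log(8eN\gamma^{2})\}}$. With the above scales this is $\Theta(\sqrt N)$ — the same order as the headroom $\|z\|_{2}^{2} - N$ — but its constant is proportional to $\sqrt{q}$; hence, on the constant-probability band $\{\|z\|_{2}^{2} - N > \delta\sqrt N\}$, choosing $q$ a sufficiently small positive constant (relative to the fixed constant from condition 1 and to $\delta$) yields $\sup_{x \in K_{2}}\langle x, z\rangle < \tfrac12(\|z\|_{2}^{2} - N)$ and closes condition 2. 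Condition 3 is then immediate, and $\alpha_{2} \leq \alpha_{1} = \lambda$ together with $\gamma < 1$ hold for large $N$ since $q < \tfrac12$. This is Proposition \ref{prop:bppd-oc-2b}.

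Finally, take $\mathcal{E}$ to be the intersection of the condition-1 event, the band $\{\|z\|_{2}^{2} - N > \delta\sqrt N\}$, and the Borell--TIS concentration event; a union bound on complements gives $\mathbb{P}(\mathcal{E}) \geq \mathbb{P}(\|z\|_{2}^{2} - N > \delta\sqrt N) - o(1) > k_{3}$ for a universal $k_{3} > 0$, since the band has constant probability (by $\chi^{2}$ fluctuations) while the other two events have probability $1 - o(1)$. The main obstacle is the balance inside condition 2: feasibility in condition 1 forces $\lambda$ up to the critical scale $\Theta(\sqrt N/\sqrt{\log N})$, which makes the localized width $w(K_{2})$ of the \emph{same} $\Theta(\sqrt N)$ order as the available headroom $\|z\|_{2}^{2} - N$, so the two do not separate by magnitude. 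The argument survives only because the Bellec bound (Proposition \ref{prop:bellec1}) exposes the width's dependence on $\gamma = \alpha_{2}/\lambda$ through $\sqrt{\log(8eN\gamma^{2})}$: taking $\alpha_{2} = \Theta(N^{q})$ with $q$ a small constant shrinks this logarithm, and with it the width's constant below that of the headroom, which is precisely what lets a single deterministic triple $(\lambda, \alpha_{1}, \alpha_{2})$ satisfy conditions 1 and 2 on one common positive-probability event.
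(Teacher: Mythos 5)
Your architecture matches the paper's for items 2--4: condition 2 is obtained exactly as in Proposition \ref{prop:bppd-oc-2b}, via the localized width bound of Proposition \ref{prop:bellec1} applied to $K_{2}=\lambda(B_{1}^{N}\cap\gamma B_{2}^{N})$, Borell--TIS with variance proxy $\alpha_{2}^{2}$, and the lower band on $\|z\|_{2}^{2}-N$; your observation that the width constant scales like $L\sqrt{q}$ while the headroom constant is $\delta$, so that $q$ must be taken small relative to $L$ and $\delta$, is precisely the balance encoded in the paper's constants $D_{3},D_{5}$. For condition 1, however, you take a genuinely different and more elementary route: you set $\alpha_{1}=\lambda$ (so $K_{1}=\lambda B_{1}^{N}$) and reduce $K_{1}\cap F\neq\emptyset$ to the statement $\mathrm{dist}(z,\lambda B_{1}^{N})\leq\sqrt N$, handled by $1$-Lipschitz Gaussian concentration of the distance functional. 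The paper instead keeps the localization $\alpha_{1}=a_{1}N^{1/4}$, proves a \emph{lower} bound $w(K_{1})\gtrsim\sqrt N$ via Proposition \ref{prop:bellec2}, and uses Borell--TIS on $\sup_{q\in K_{1}}\ip{q,z}$ together with the upper band $\mathcal{Z}_{-}$ to exhibit a feasible point (Propositions \ref{prop:bppd-oc-1a}, \ref{prop:gmw-lb-ip}, \ref{prop:bppd-oc-1b}). Your route buys a shorter argument and dispenses with the width lower bound entirely; what it costs is an explicit estimate of $\E\,\mathrm{dist}(z,\lambda B_{1}^{N})$, which you assert ("calibrating the constant in $\lambda$") but do not carry out --- that computation is the real content replacing the paper's two supporting propositions, and it must be done to the precision of showing $\E\,\mathrm{dist}^{2}\leq N-cL\sqrt N$.

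One quantitative claim is wrong as stated and should be repaired: condition 1 does \emph{not} hold with probability $1-o(1)$ for fixed $L$. The gap $\sqrt N-\E\,\mathrm{dist}(z,\lambda B_{1}^{N})$ is $O(L)$, not diverging in $N$, so Lipschitz concentration only gives failure probability $e^{-cL^{2}}$ --- a constant. (Indeed it cannot be $o(1)$: on the event $\|z\|_{2}^{2}-N\geq M\sqrt N$ with $M$ a large constant, which has constant probability, no point of $\lambda B_{1}^{N}$ is within $\sqrt N$ of $z$.) This is why the paper works on the two-sided band $\mathcal{Z}_{-}\cap\mathcal{Z}_{+}$ rather than your one-sided band. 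Your union bound still closes, but only after you order the constants: choose $\delta$ first, then $L$ large enough that $e^{-cL^{2}}$ is smaller than $\mathbb{P}(\|z\|_{2}^{2}-N>\delta\sqrt N)$, and only then $q$ small depending on $L$ and $\delta$. With that ordering made explicit and the expectation estimate supplied, the proof is sound and yields the same conclusion, including item 3 and Corollary \ref{coro:bppd-oc-3a} (for which the upper bound $\|\tilde x\|_{2}\leq\alpha_{1}$ is never used, so replacing $\alpha_{1}=a_{1}N^{1/4}$ by $\alpha_{1}=\lambda$ is harmless).
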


\begin{figure}[b!]
  \begin{minipage}[h]{.45\linewidth}
    \centering
    \includegraphics[width=\textwidth]{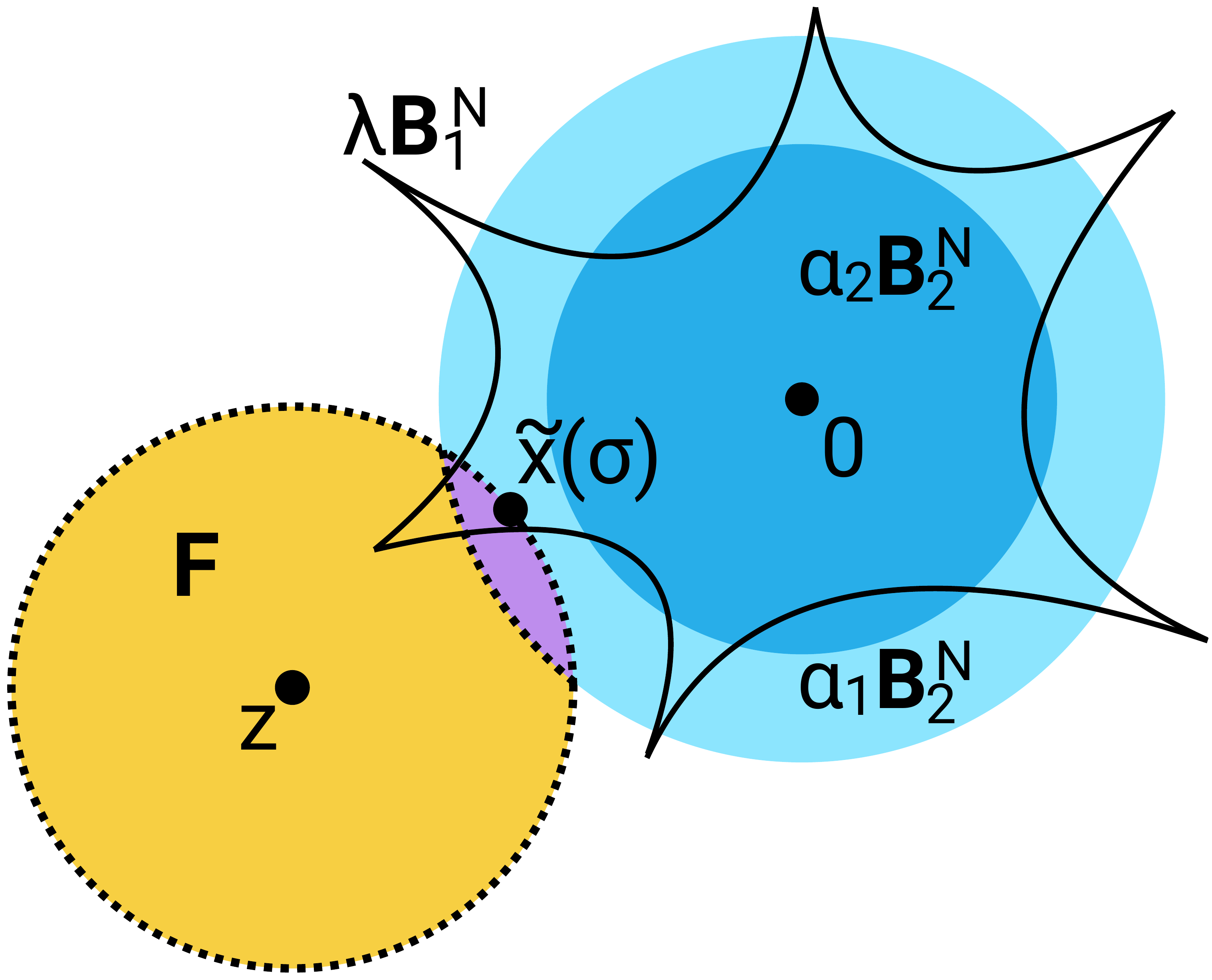}
  \end{minipage}
  \hfill
  \begin{minipage}[h]{.45\linewidth}
    \caption{A visualization of the lemma. We use Milman's 2D representation of
      high-dimensional $\ell_{1}$ balls to facilitate the intuition. In this
      setting, $\tilde x(\sigma)$ must lie inside $\lambda B_{1}^{N}$. On the
      event $\mathcal{E}$ described by the lemma, one simultaneously finds
      $K_{1} \cap F \neq \emptyset$ and $K_{2} \cap F =
      \emptyset$. \label{fig:bppd-oc-3a} }
  \end{minipage}
\end{figure}

\begin{corollary}%[coro:bppd-oc-3a]
  \label{coro:bppd-oc-3a}
  Fix $\eta > 0$. There are universal constants $C, q > 0$ such that for all
  $N \geq N_{0}^{\eqref{lem:bppd-oc-3a}}$,
  % Let $x_{0} \equiv 0$ so that $s = 0$ and suppose $z\in \reals^{N}$ with
  % $z_{i} \iid \mathcal{N}(0,1)$. For $\tilde x(\sigma)$ as in {\bppd} and
  % $\sigma = \eta \sqrt N$, there exist constants $C, q > 0$ such that for all
  % $N \geq N_{0}^{\eqref{lem:bppd-oc-3a}}$,
  \begin{align*}
    \tilde R(\eta \sqrt N; 0, N, \eta) \geq C N^{q}.
  \end{align*}
\end{corollary}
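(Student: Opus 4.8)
The plan is to derive the bound as an immediate first-moment consequence of the geometric lemma, after first removing the dependence on $\eta$ by scaling. Since $x_{0} = 0$, the measurements are $y = \eta z$, and the substitution $x = \eta u$ in {\bppd} with $\sigma = \eta\sqrt N$ turns the constraint $\|\eta z - x\|_{2} \leq \eta\sqrt N$ into $\|z - u\|_{2} \leq \sqrt N$ while scaling the objective $\|x\|_{1} = \eta\|u\|_{1}$ by a constant. Hence the minimizer satisfies $\tilde x(\eta\sqrt N) = \eta\,\tilde x^{(1)}(\sqrt N)$, where $\tilde x^{(1)}$ is the solution at unit noise, and therefore $\tilde R(\eta\sqrt N; 0, N, \eta) = \eta^{-2}\E\|\tilde x(\eta\sqrt N)\|_{2}^{2} = \E\|\tilde x^{(1)}(\sqrt N)\|_{2}^{2} = \tilde R(\sqrt N; 0, N, 1)$. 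So it suffices to bound the unit-noise risk, where the feasible set is precisely $F = F(z;\sqrt N)$ of Lemma \ref{lem:bppd-oc-3a} and the risk is $\E\|\tilde x(\sqrt N)\|_{2}^{2}$.

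Next I would translate the three set-theoretic conditions of the geometric lemma into a pointwise lower bound on $\|\tilde x(\sqrt N)\|_{2}$ on the event $\mathcal{E}$. Condition (1), $K_{1}\cap F \neq \emptyset$, produces a feasible point of $\ell_{1}$-norm at most $\lambda$; because $\tilde x(\sqrt N)$ minimizes the $\ell_{1}$-norm over $F$, this forces $\|\tilde x(\sqrt N)\|_{1} \leq \lambda$, so that $\tilde x(\sqrt N) \in \lambda B_{1}^{N}\cap F$. Were we also to have $\|\tilde x(\sqrt N)\|_{2} \leq \alpha_{2}$, then $\tilde x(\sqrt N)$ would belong to $K_{2}\cap F = \lambda B_{1}^{N}\cap\alpha_{2}B_{2}^{N}\cap F$, contradicting condition (2). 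Thus $\|\tilde x(\sqrt N)\|_{2} > \alpha_{2}$, and condition (3) sharpens this to $\|\tilde x(\sqrt N)\|_{2} > C_{3}N^{q}$ on $\mathcal{E}$.

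Finally I would close with the first-moment inequality. Discarding the complement of $\mathcal{E}$ and invoking condition (4),
\[
  \E\|\tilde x(\sqrt N)\|_{2}^{2}
  \geq \E\big[\|\tilde x(\sqrt N)\|_{2}^{2}\,\1_{\mathcal{E}}\big]
  \geq (C_{3}N^{q})^{2}\,\mathbb{P}(\mathcal{E})
  \geq C_{3}^{2}k_{3}\,N^{2q},
\]
valid for all $N \geq N_{0}^{\eqref{lem:bppd-oc-3a}}$. Relabelling $C := C_{3}^{2}k_{3} > 0$ and renaming the exponent $2q$ (still a positive universal constant) gives the stated bound. The entire content lies in the translation of the previous paragraph; the only subtlety is making sure the minimizer is genuinely trapped inside $\lambda B_{1}^{N}$ so that condition (2) can be applied to it, which is secured by $\tilde x(\sqrt N)$ being simultaneously feasible and $\ell_{1}$-minimal. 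The genuine difficulty of the result — constructing the event $\mathcal{E}$ and verifying properties (1)–(4) with a dimension-free probability $k_{3}$ — has already been absorbed into Lemma \ref{lem:bppd-oc-3a}, so no further obstacle remains here.
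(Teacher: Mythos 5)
Your proposal is correct and follows essentially the same route as the paper: trap $\tilde x$ in $\lambda B_{1}^{N}\cap F$ via condition (1), exclude $\alpha_{2}B_{2}^{N}$ via condition (2), and integrate over $\mathcal{E}$ using conditions (3)–(4). The only differences are cosmetic — you make the reduction to $\eta=1$ explicit (the paper handles it in the surrounding text of the section) and you carry the exponent as $N^{2q}$ before renaming, which is in fact slightly more careful than the paper's displayed bound.
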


Next we extend Corollary \ref{coro:bppd-oc-3a} from the case where $\sigma = \sqrt N $
to any positive $\sigma \leq \sqrt N$. The proof of this result
follows near immediately from the projection lemma in
Lemma \ref{lem:projection-lemma}. Thus, one finds $\tilde x(\sigma) $ has
Euclidean norm at least as large as $\tilde x(\sqrt N)$ when $\tilde x(\sigma)$
is an estimator of the $0$ vector.

\begin{lemma}%[lem:bppd-oc-sigma-lt-sqrtN]
  \label{lem:bppd-oc-sigma-lt-sqrtN}
  Let $0 < \sigma_{1} < \sigma_{0} = \sqrt N$ and $x_{0} \equiv 0$. Define
  $\tilde x(\sigma_{0}), \tilde x(\sigma_{1})$ as in {\bppd} for
  $\sigma = \sigma_{0}, \sigma_{1}$, respectively. Then
  $\|\tilde x(\sigma_{1})\|_{2}^{2} \geq \|\tilde
  x(\sigma_{0})\|_{2}^{2}$. Moreover, for $N \geq 2$,
  \begin{align*}
    \E \|\tilde x(\sigma_{1})\|_{2}^{2} %
    \geq \E \|\tilde x(\sigma_{0})\|_{2}^{2}.
  \end{align*}

\end{lemma}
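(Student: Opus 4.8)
The plan is to recognize Lemma~\ref{lem:bppd-oc-sigma-lt-sqrtN} as a direct instance of the monotonicity already packaged in Corollary~\ref{coro:projection-lemma}, which is the {\bppd}-form restatement of the projection lemma (Lemma~\ref{lem:projection-lemma}), and then to pass to expectations in one line. There is no serious computation to perform; the content is entirely inherited from the projection lemma, so the work is to align the present setup with the corollary's hypotheses.

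First I would set up the correspondence. Since $x_{0} \equiv 0$ and, by the scaling convention of this section, $\eta = 1$, the measurement vector is $y = z$, and {\bppd} reads $\tilde x(\sigma) = \argmin \{ \|x\|_{1} : \|x - z\|_{2} \leq \sigma \}$, using that $\|z - x\|_{2}^{2} \leq \sigma^{2}$ is equivalent to $\|x - z\|_{2} \leq \sigma$ for $\sigma > 0$. Taking $K = B_{1}^{N}$, which is closed, convex, and contains the origin, its gauge is precisely $\|\cdot\|_{K} = \|\cdot\|_{1}$. Hence $\tilde x(\sigma)$ coincides with the estimator $x_{\alpha}$ of Corollary~\ref{coro:projection-lemma} taken with $y = z$ and $\alpha = \sigma$.

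Next I would invoke the corollary: $\|x_{\alpha}\|_{2}$ is decreasing in $\alpha$, so from $\sigma_{1} < \sigma_{0} = \sqrt N$ we obtain $\|\tilde x(\sigma_{1})\|_{2} \geq \|\tilde x(\sigma_{0})\|_{2}$ on every realization of $z$. Squaring these two nonnegative quantities gives the pointwise (almost sure) inequality $\|\tilde x(\sigma_{1})\|_{2}^{2} \geq \|\tilde x(\sigma_{0})\|_{2}^{2}$, which is the first claim. The expectation bound then follows by monotonicity of expectation applied to this pointwise inequality; both expectations are finite, since feasibility gives $\|\tilde x(\sigma)\|_{2} \leq \|z\|_{2} + \sigma$, whence $\|\tilde x(\sigma)\|_{2}^{2} \leq 2\|z\|_{2}^{2} + 2\sigma^{2}$ with $\E\|z\|_{2}^{2} = N < \infty$.

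The only point requiring care---not a genuine obstacle---is well-posedness of $\tilde x(\sigma)$ where the $\ell_{1}$ minimizer need not be unique. I would note that $z \neq 0$ almost surely for Gaussian noise and that Corollary~\ref{coro:projection-lemma} already concerns the Euclidean \emph{norm} $\|x_{\alpha}\|_{2}$ of the minimizer, so $\|\tilde x(\sigma)\|_{2}$ is the unambiguous quantity being ordered; the monotonicity is transported verbatim. Thus the lemma reduces cleanly to Corollary~\ref{coro:projection-lemma}, with the expectation statement an immediate consequence of the almost-sure one.
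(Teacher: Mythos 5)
Your proposal is correct and matches the paper's argument exactly: the paper's proof consists of the single line that the lemma is an immediate consequence of Corollary~\ref{coro:projection-lemma}, which is precisely the reduction you carry out (with the helpful extra care about identifying $y=z$, $K=B_{1}^{N}$, and integrability before passing to expectations). No gaps.
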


% \begin{lem}[thm:bppd-oc-maximin]
%   \label{thm:bppd-oc-maximin}
%   Suppose $y = x_{0} + \eta z$ for $x_{0} \in \Sigma_{s}^{N}$ and
%   $z \in \reals^{N}$ with $z_{i} \iid \mathcal{N}(0,1)$ and $\eta >0$. Define
%   $\tilde x(\sigma)$ as in {\bppd} with $\sigma > 0$. There exist constants
%   $C > 0, q \in (0, \frac12]$ and an integer $N_{0} \geq 1$ such that for all
%   $N \geq N_{0}$
%   \begin{align*}
%     \max_{x_{0} \in \Sigma_{s}^{N}} \min_{\sigma \leq \sqrt N} \tilde R(\sigma; x_{0}, N, \eta) %
%     \geq C N^{q}
%   \end{align*}
% \end{lem}

\subsection{Minimax results}
\label{ssec:minimax-results}

We now have the tools to state a minimax instability result for
{\bppd}. Informally, the best worst-case risk scales as a power law of $N$ in
the very sparse regime. In particular, for $s$ fixed and $N$ sufficiently large,
there is no choice of $\sigma > 0$ yielding order-optimal risk for its
corresponding worst-case signal.
% % we've already said this enough: 
% Again we stress that this result is consistent
% with empirical results in CS, and those for {\bppd} in that signals of interest
% in empirical settings are rarely so sparse as to reside in the current
% regime. Nevertheless, this result serves to outline the limits of a popular
% convex $\ell_{1}$ program.

\begin{theorem}[Minimax Suboptimality]%[thm:bppd-minimax]
  \label{thm:bppd-minimax}
  There are universal constants $C > 0, q \in (0, \frac12], N_{0} \geq 2$ such
  that for all $N \geq N_{0}, \eta \geq 0$ and $s \geq 1$,
  \begin{align*}
    \adjustlimits \inf_{\sigma > 0} \sup_{x \in \Sigma_{s}^{N}} \tilde R(\sigma; x, N, \eta) \geq C N^{q}
  \end{align*}

  % Suppose $y = x_{0} + \eta z$ for $x_{0} \in \Sigma_{s}^{N}$ and
  % $z \in \reals^{N}$ with $z_{i} \iid \mathcal{N}(0, 1)$ and $\eta > 0$. Define
  % $\tilde x (\sigma)$ as in {\bppd} with $\sigma > 0$. There exist constants
  % $C>0, q \in (0, \frac12]$ and an integer $N_{0} \geq 1$ such that for all
  % $N \geq N_{0}$,
  % \begin{align*}
  %   \inf_{\sigma > 0} \sup_{x_{0} \in \Sigma_{s}^{N}} \tilde R(\sigma; x_{0}, N, \eta) %
  %   \geq C N^{q}. 
  % \end{align*}

\end{theorem}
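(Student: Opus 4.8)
The plan is to lower-bound the worst-case risk $\sup_{x \in \Sigma_{s}^{N}} \tilde R(\sigma; x, N, \eta)$ separately for each fixed $\sigma > 0$ and only then take the infimum; since lower-bounding $\inf_{\sigma > 0}$ demands a bound that is uniform over $\sigma$, the crux is to exhibit, for \emph{every} $\sigma$, a single adversarial signal $x \in \Sigma_{s}^{N}$ whose risk is at least $CN^{q}$. First I would reduce to $\eta = 1$: the identity $\tilde x(\sigma; x_{0} + \eta z) = \eta\,\tilde x(\sigma/\eta; (x_{0}/\eta) + z)$ shows that $\eta^{-2}\E\|\tilde x(\sigma) - x_{0}\|_{2}^{2}$ depends only on $\sigma/\eta$ and $x_{0}/\eta$. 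Because $\Sigma_{s}^{N}$ is a cone, the supremum over $x_{0}$ is independent of $\eta > 0$, so the threshold $\eta\sqrt N$ effectively becomes $\sqrt N$; the degenerate case $\eta = 0$ is then recovered as the limit.

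Next I would split on the sign of $\sigma - \eta\sqrt N$. In the underconstrained case $\sigma > \eta\sqrt N$, I choose $x_{0} \in \Sigma_{s}^{N}\setminus\Sigma_{s-1}^{N}$ with $|x_{0,j}| \gtrsim N$ on its support (such a signal exists since $s \geq 1$) and apply Lemma \ref{lem:bppd-uc} directly to obtain $\tilde R(\sigma; x_{0}, N, \eta) \geq C_{1}\sqrt N$, whence $\sup_{x} \tilde R(\sigma; x, N, \eta) \geq C_{1}\sqrt N$. In the overconstrained case $\sigma \leq \eta\sqrt N$, I instead take the admissible signal $x_{0} \equiv 0 \in \Sigma_{s}^{N}$. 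Corollary \ref{coro:bppd-oc-3a} gives $\tilde R(\eta\sqrt N; 0, N, \eta) \geq C_{2}N^{q}$, and Lemma \ref{lem:bppd-oc-sigma-lt-sqrtN} (monotonicity of $\E\|\tilde x(\sigma)\|_{2}^{2}$ as $\sigma$ decreases below $\eta\sqrt N$, itself a consequence of the projection lemma, Lemma \ref{lem:projection-lemma}) upgrades this to $\tilde R(\sigma; 0, N, \eta) \geq \tilde R(\eta\sqrt N; 0, N, \eta) \geq C_{2}N^{q}$ for every $\sigma \leq \eta\sqrt N$. So again $\sup_{x} \tilde R(\sigma; x, N, \eta) \geq C_{2}N^{q}$.

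To finish I would combine the two regimes. Setting $C := \min\{C_{1}, C_{2}\}$, letting $q \in (0, \tfrac12]$ be the exponent from the overconstrained bound, and taking $N_{0}$ to be the larger of the two dimension thresholds, the elementary inequality $\sqrt N \geq N^{q}$ (valid because $q \leq \tfrac12$) shows that both branches deliver $\sup_{x \in \Sigma_{s}^{N}} \tilde R(\sigma; x, N, \eta) \geq C N^{q}$ for every $\sigma > 0$ and all $N \geq N_{0}$. Since this holds for each $\sigma$, passing to $\inf_{\sigma > 0}$ preserves the bound and yields the claim.

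The genuine analytic work has already been absorbed into the supporting results: Lemma \ref{lem:bppd-uc} for the underconstrained regime and the chain Lemma \ref{lem:bppd-oc-3a} $\to$ Corollary \ref{coro:bppd-oc-3a} $\to$ Lemma \ref{lem:bppd-oc-sigma-lt-sqrtN} for the overconstrained one. The hard part here, therefore, is not any new estimate but \emph{uniformity}: the adversary's optimal signal switches across the case boundary --- a maximally separated $s$-sparse signal when $\sigma > \eta\sqrt N$ versus the zero vector when $\sigma \leq \eta\sqrt N$ --- so I must confirm that a single triple $(C, q, N_{0})$ simultaneously controls both branches. This is precisely why the reported exponent is the smaller one, $q \leq \tfrac12$, rather than the $\tfrac12$ attained in the underconstrained bound alone.
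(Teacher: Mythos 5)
Your proposal is correct and follows essentially the same route as the paper's proof: the same split at $\sigma = \eta\sqrt N$, with Lemma \ref{lem:bppd-uc} handling the underconstrained branch via a large-entry $s$-sparse signal, and the chain Corollary \ref{coro:bppd-oc-3a} plus Lemma \ref{lem:bppd-oc-sigma-lt-sqrtN} handling the overconstrained branch via $x_{0}\equiv 0$, combined by taking the minimum of the two constants. The only difference is that you spell out the $\eta$-rescaling and the uniformity bookkeeping more explicitly than the paper does.
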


\subsection{Maximin results}
\label{ssec:maximin-results}

The final result of this section establishes maximin parameter instability for
all $x_{0} \in \Sigma_{s}^{N}$ and $\sigma > 0$. To do this, we must show there
exists a choice of signal $x_{0} \in \reals^{N}$ admitting no choice of
$\sigma > 0$ bestowing order optimal recovery error. To this end, we will
demonstrate that the previous overconstrained instability results extend to
$s$-sparse signals with $s \geq 1$. This will be enough to yield a choice of
$x_{0}$ whose recovery is suboptimal over the whole parameter range.

\begin{lemma}[Overconstrained {\bppd}, $s\geq 1$]
  \label{lem:bppd-oc-sge1}
  Let $x_{0} \in \Sigma_{s}^{N}$ with
  $\mathrm{supp}(x_{0}) \subseteq T \subseteq [N]$, let $y = x_{0} + \xi$ for
  some $\xi \in \reals^{N}$, let $x_{1} := (x_{0})_{T^{C}} \in \Sigma_{0}^{N-s}$
  and fix $\sigma > 0$. Let $\tilde x = \tilde x(\sigma) \in \reals^{N}$ be the
  solution of {\bppd} where $x_{0}$ is the ground truth, and let
  $\tilde x' = \tilde x'(\sigma) \in \reals^{N-s}$ be the solution of {\bppd}
  where $x_{1}$ is the ground truth. Then
  \begin{align*}
    \|\tilde x_{T^{C}}\|_{2} \geq \|\tilde x'\|_{2}. 
  \end{align*}
\end{lemma}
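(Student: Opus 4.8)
The plan is to exploit the separable structure of \bppd across the partition of $[N]$ into the support block $T$ and its complement $T^{C}$, to identify the $T^{C}$-block of the full solution $\tilde x$ as a lower-dimensional \bppd solution at a \emph{shrunken} radius, and then to conclude via the monotonicity supplied by Corollary \ref{coro:projection-lemma}. The geometric content is that the full program spends part of its $\ell_{2}$ budget matching the (large) on-support entries, leaving the off-support block constrained to a strictly smaller ball, which by Corollary \ref{coro:projection-lemma} can only \emph{increase} its Euclidean norm relative to the reduced problem that enjoys the full radius $\sigma$.

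First I would record that both the objective and the feasibility constraint separate across $T$ and $T^{C}$, namely $\|x\|_{1} = \|x_{T}\|_{1} + \|x_{T^{C}}\|_{1}$ and $\|y - x\|_{2}^{2} = \|y_{T} - x_{T}\|_{2}^{2} + \|y_{T^{C}} - x_{T^{C}}\|_{2}^{2}$. Since $\mathrm{supp}(x_{0}) \subseteq T$ we have $(x_{0})_{T^{C}} = 0$, so $y_{T^{C}} = \xi_{T^{C}}$; hence the program defining $\tilde x'$ is precisely \bppd on $\reals^{N-s}$ with measurement $y_{T^{C}}$ and radius $\sigma$, i.e.\@ $\tilde x' = \argmin\{\|w\|_{1} : \|y_{T^{C}} - w\|_{2} \leq \sigma\}$.

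Next I would set $r := \|y_{T^{C}} - \tilde x_{T^{C}}\|_{2}$; the separated feasibility constraint gives $r^{2} \leq \sigma^{2} - \|y_{T} - \tilde x_{T}\|_{2}^{2} \leq \sigma^{2}$, so $r \leq \sigma$. The crux is to show that $\tilde x_{T^{C}}$ is itself a minimizer of the reduced program at radius $r$,
\begin{align*}
  \min_{w \in \reals^{N-s}} \|w\|_{1} \quad \text{subject to} \quad \|y_{T^{C}} - w\|_{2} \leq r,
\end{align*}
which I would establish by an exchange argument: $\tilde x_{T^{C}}$ is feasible (it meets this constraint with equality by the definition of $r$), and if some competitor $w^{*}$ satisfied $\|y_{T^{C}} - w^{*}\|_{2} \leq r$ with $\|w^{*}\|_{1} < \|\tilde x_{T^{C}}\|_{1}$, then substituting $w^{*}$ for the $T^{C}$-block of $\tilde x$ would preserve the full constraint while strictly decreasing $\|\cdot\|_{1}$, contradicting optimality of $\tilde x$. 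Writing $x_{\alpha} := \argmin\{\|w\|_{1} : \|y_{T^{C}} - w\|_{2} \leq \alpha\}$ and invoking Corollary \ref{coro:projection-lemma} with $K = B_{1}^{N-s}$ and base point $y_{T^{C}}$, the map $\alpha \mapsto \|x_{\alpha}\|_{2}$ is nonincreasing; identifying $\tilde x_{T^{C}}$ with $x_{r}$ and $\tilde x'$ with $x_{\sigma}$ and using $r \leq \sigma$ yields $\|\tilde x_{T^{C}}\|_{2} = \|x_{r}\|_{2} \geq \|x_{\sigma}\|_{2} = \|\tilde x'\|_{2}$.

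The main obstacle is the identification step, together with possible non-uniqueness of \bppd minimizers, since the $\ell_{1}$ gauge is not strictly convex. I would address this by leaning on the projection correspondence that underlies Corollary \ref{coro:projection-lemma} through Lemma \ref{lem:projection-lemma}: under that correspondence every radius-$\alpha$ minimizer shares the common value $\|x_{\alpha}\|_{2}$, so the particular representatives $\tilde x_{T^{C}}$ and $\tilde x'$ inherit the correct norms regardless of which minimizers are selected. Finally I would dispatch the degenerate case $r = 0$ directly, where the entire budget is spent on $T$ and the constraint forces $\tilde x_{T^{C}} = y_{T^{C}}$; this is consistent with the monotone bound, since $\alpha = 0$ gives the largest Euclidean norm in Corollary \ref{coro:projection-lemma}.
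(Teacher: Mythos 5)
Your proof is correct, and it takes a more explicit route than the paper's. The paper argues by a two-step relaxation chain: it replaces the objective $\|x\|_{1}$ by $\|x_{T^{C}}\|_{1}$ while keeping the constraint $\|y-x\|_{2}\leq\sigma$, then replaces the constraint by $\|(y-x)_{T^{C}}\|_{2}\leq\sigma$, asserting that the Euclidean norm of the $T^{C}$-block of the minimizer can only decrease at each step, and noting that the last program is exactly the reduced problem defining $\tilde x'$. The inequality carrying all the content there --- that weakening the objective shrinks the Euclidean norm of the $T^{C}$-block --- is asserted rather than derived, and your argument supplies precisely the mechanism behind it: by separability and your exchange argument, $\tilde x_{T^{C}}$ is the reduced basis-pursuit solution at the shrunken radius $r=\|y_{T^{C}}-\tilde x_{T^{C}}\|_{2}\leq\sigma$, and Corollary \ref{coro:projection-lemma} then gives $\|x_{r}\|_{2}\geq\|x_{\sigma}\|_{2}=\|\tilde x'\|_{2}$. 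What your route buys is a complete justification resting explicitly on the projection lemma, which is the tool the paper's one-line chain implicitly leans on; what the paper's phrasing buys is brevity. Two minor points: the non-uniqueness worry can be dispatched outright, since the minimizer of the $\ell_{1}$ norm over a Euclidean ball is unique (two distinct minimizers would both lie at distance exactly equal to the radius from the center and on the sphere $\|x\|_{1}=\lambda^{*}$, so their midpoint would be strictly feasible and could be contracted toward the origin to strictly decrease the $\ell_{1}$ norm, unless the optimal value is $0$, in which case the minimizer is the origin); and the case $r=0$ needs no special handling since Corollary \ref{coro:projection-lemma} already covers $\alpha=0$.
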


An immediate consequence of this result is the following inequality between the
Euclidean norms of the error vectors.

\begin{corollary}
  \label{coro:bppd-oc-sge1}
  Let $h := \tilde x - x_{0}$ and $h' := \tilde x' - x_{1}$, where
  $x_{0}, x_{1}, \tilde x, \tilde x'$ are defined as above. Then,
  \begin{align*}
    \|h\|_{2} \geq \|h'\|_{2}. 
  \end{align*}
\end{corollary}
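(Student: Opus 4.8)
The plan is to reduce everything to Lemma~\ref{lem:bppd-oc-sge1} by exploiting the support structure of $x_0$. The first observation is that since $\mathrm{supp}(x_0) \subseteq T$, the off-support restriction vanishes: $(x_0)_{T^C} = 0$. Consequently $x_1 := (x_0)_{T^C}$ is the zero vector in $\reals^{N-s}$, so that $h' = \tilde x' - x_1 = \tilde x'$ and hence $\|h'\|_2 = \|\tilde x'\|_2$. This identification is what lets the lemma's conclusion, which is phrased in terms of $\tilde x'$, be transferred to the error vector $h'$.

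Next I would split the full-dimensional error $h = \tilde x - x_0$ along the support $T$ and its complement. Writing $\|h\|_2^2 = \|h_T\|_2^2 + \|h_{T^C}\|_2^2$ and using $(x_0)_{T^C} = 0$ once more gives $h_{T^C} = \tilde x_{T^C} - (x_0)_{T^C} = \tilde x_{T^C}$, so that $\|h_{T^C}\|_2 = \|\tilde x_{T^C}\|_2$. Thus the off-support part of the error is exactly the off-support part of the estimator, which is the quantity that Lemma~\ref{lem:bppd-oc-sge1} controls.

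Now the lemma directly supplies $\|\tilde x_{T^C}\|_2 \geq \|\tilde x'\|_2 = \|h'\|_2$. Discarding the nonnegative on-support contribution $\|h_T\|_2^2$ then yields
\begin{align*}
  \|h\|_2^2 = \|h_T\|_2^2 + \|h_{T^C}\|_2^2 \geq \|\tilde x_{T^C}\|_2^2 \geq \|h'\|_2^2,
\end{align*}
and taking square roots completes the argument.

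Since the corollary is an essentially immediate consequence of the lemma, there is no genuine obstacle here beyond careful bookkeeping. The only points to keep straight are that $x_1$ is truly the zero vector (so that $h'$ and $\tilde x'$ coincide) and that $h_{T^C} = \tilde x_{T^C}$ because $x_0$ carries no mass off $T$; the dropped term $\|h_T\|_2^2 \geq 0$ is precisely the on-support error that we are free to throw away. All of the analytic substance lives in Lemma~\ref{lem:bppd-oc-sge1}, which bounds the off-support energy of the full-dimensional estimator below by the energy of the lower-dimensional estimator of the zero vector.
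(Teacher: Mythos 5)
Your proof is correct and follows exactly the route the paper intends: the paper omits an explicit proof, calling the corollary an immediate consequence of Lemma~\ref{lem:bppd-oc-sge1}, and your bookkeeping (noting $x_1 = 0$ so $h' = \tilde x'$, and $h_{T^C} = \tilde x_{T^C}$, then discarding $\|h_T\|_2^2 \geq 0$) is precisely that immediate derivation.
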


\begin{remark}
  The above corollary is not yet sufficient to imply the desired maximin result
  below. As per the lemma, if $N-s \geq N_{0}^{\eqref{lem:bppd-oc-3a}}$ then
  $\tilde x'$ is parameter unstable for $\sigma \leq \sqrt{N-s}$ and so
  $\tilde x$ is, too. The fix for this slight mismatch is trivial, but
  technical. The result can be extended to the range $\sigma \leq \sqrt N$ by
  adjusting the constants in the proof of Lemma \ref{lem:bppd-oc-3a} and its
  constituents, leveraging the fact that $(N-s) / N \to 1$ as $N \to \infty$ and
  re-selecting $N_{0}^{\eqref{lem:bppd-oc-3a}}$ if necessary. We omit the
  details of this technical exercise.
\end{remark}

We proceed under the assumption that the constants have been tuned to allow for
$\tilde x'$ parameter instability to imply $\tilde x$ parameter instability for
all $\sigma \leq \sqrt N$. Thus equipped, we state the following maximin
parameter instability result for {\bppd}. The proof of this result proceeds by
finding a signal $x_{0} \in \Sigma_{s}^{N}$ such that
$\tilde R(\sigma; x_{0}, N, \eta)$ is suboptimal for all $\sigma > 0$. Since
Lemma \ref{lem:bppd-uc} applies only to signals $x_{0}$ with at least one non-zero
entry, one shows there exists such a signal which simultaneously admits poor
risk for $\sigma \leq \eta \sqrt N$ and $\sigma \geq \eta \sqrt N$. For example,
it is enough to take $x_{0} := N e_{1}$ where $e_{1} \in \reals^{N}$ is the
first standard basis vector.

\begin{theorem}[{\bppd} maximin suboptimality]
  \label{thm:bppd-maximin}
  There are universal constants $C > 0, q \in (0, \frac12]$ and $N_{0} \geq 1$
  such that for all $N \geq N_{0}$
  \begin{align*}
    \sup_{x_{0} \in \Sigma_{s}^{N}} \inf_{\sigma > 0} %
    \tilde R(\sigma; x_{0}, N, \eta) %
    \geq C N^{q}. 
  \end{align*}
\end{theorem}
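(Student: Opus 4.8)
The plan is to produce a single worst-case signal and show that \emph{every} choice of $\sigma > 0$ incurs risk growing as a power of $N$; the supremum over $x_0 \in \Sigma_s^N$ then inherits this bound. Following the discussion preceding the statement, I fix $x_0 := N e_1$, which is exactly $1$-sparse (so $x_0 \in \Sigma_s^N$ for every $s \geq 1$) and has its single nonzero entry of magnitude $N$. The crucial structural fact is that the parameter range $(0, \infty)$ splits at the threshold $\eta\sqrt N$ into an underconstrained regime $\sigma > \eta\sqrt N$ and an overconstrained regime $\sigma \leq \eta\sqrt N$, and I will lower-bound $\tilde R(\sigma; x_0, N, \eta)$ by a power of $N$ separately on each piece using results already in hand.

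For the underconstrained regime I apply Lemma \ref{lem:bppd-uc} verbatim: $x_0 = N e_1$ is exactly $s$-sparse with $s = 1$ and support entry of size $N \gtrsim N$, so the hypotheses hold and I obtain $\tilde R(\sigma; x_0, N, \eta) \geq C_1 \sqrt N$ for all $\sigma > \eta\sqrt N$ once $N \geq N_0(1)$. For the overconstrained regime I invoke the dimension-reduction machinery. Since $\mathrm{supp}(x_0) = \{1\}$, the off-support component is $x_1 = (x_0)_{T^C} = 0 \in \reals^{N-1}$, and restricting i.i.d.\ Gaussian noise to $T^C$ leaves i.i.d.\ Gaussian noise in dimension $N-1$. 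Squaring the pointwise inequality of Corollary \ref{coro:bppd-oc-sge1}, taking expectations, and normalizing by $\eta^{-2}$ yields $\tilde R(\sigma; x_0, N, \eta) \geq \tilde R(\sigma; 0, N-1, \eta)$.

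The right side is controlled by the zero-vector overconstrained bound: Corollary \ref{coro:bppd-oc-3a} gives the estimate at $\sigma = \eta\sqrt{N-1}$, and Lemma \ref{lem:bppd-oc-sigma-lt-sqrtN} propagates it to all smaller $\sigma$, so $\tilde R(\sigma; 0, N-1, \eta) \geq C_2 (N-1)^q$ for $\sigma \leq \eta\sqrt{N-1}$ and some $q \in (0, \tfrac12)$. Absorbing the factor $((N-1)/N)^q \to 1$ into the constant and invoking the constant-tuning described in the remark to enlarge the unstable range from $\sigma \leq \eta\sqrt{N-1}$ to $\sigma \leq \eta\sqrt N$, I conclude $\tilde R(\sigma; x_0, N, \eta) \geq C_2' N^q$ for all $\sigma \leq \eta\sqrt N$.

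Finally I stitch the two pieces together. Because $q \leq \tfrac12$, the underconstrained bound satisfies $C_1\sqrt N \geq C_1 N^q$, so with $C := \min\{C_1, C_2'\}$ and $N_0$ the maximum of the thresholds above, I get $\tilde R(\sigma; x_0, N, \eta) \geq C N^q$ for \emph{all} $\sigma > 0$; the boundary value $\sigma = \eta\sqrt N$ lies in the overconstrained piece and so is covered. Taking the infimum over $\sigma$ and then the supremum over $x_0 \in \Sigma_s^N$ (which only helps, since $N e_1 \in \Sigma_s^N$) delivers the claim. I expect the only real obstacle to be bookkeeping at the crossover: ensuring the overconstrained instability, established for the reduced dimension $N-s$ and for $\sigma \leq \eta\sqrt{N-s}$, is lifted cleanly to dimension $N$ and the full range $\sigma \leq \eta\sqrt N$, so that no sliver of the $\sigma$-axis escapes the lower bound. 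This is precisely the ``trivial but technical'' constant adjustment flagged in the remark, and it is what makes the two regimes abut exactly at $\sigma = \eta\sqrt N$.
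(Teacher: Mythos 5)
Your proposal is correct and follows essentially the same route as the paper: the same worst-case signal $x_{0} = N e_{1}$, the same split of the $\sigma$-axis at $\eta\sqrt N$, Lemma \ref{lem:bppd-uc} on the underconstrained side, and the dimension-reduction via Lemma \ref{lem:bppd-oc-sge1} / Corollary \ref{coro:bppd-oc-sge1} together with Corollary \ref{coro:bppd-oc-3a}, Lemma \ref{lem:bppd-oc-sigma-lt-sqrtN}, and the constant-tuning remark on the overconstrained side. You spell out the expectation-taking and the $(N-1)/N$ bookkeeping a little more explicitly than the paper does, but the argument is the same.
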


\begin{remark}
  The current result is given in a maximin framework. This framework is stronger
  than the minimax one in which these types of results are typically framed. In
  essence, the maximin framework assumes that the minimizer has knowledge about
  the ground truth signal $x_{0}$; even still it is not possible to choose
  $\sigma$ to achieve order-optimal risk.
\end{remark}

\section{Numerical Results}
\label{sec:numerical-results}

Let $\mathfrak{P} \in \{ \text{{\lspd}, {\qppd}, {\bppd}} \}$ be a PD program
with solution $x^{*}(\varrho)$ where $\varrho \in \{ \tau, \lambda, \sigma\}$ is
the associated parameter. Given a signal $x_{0}$ and noise $\eta z$, denote by
$\mathcal L(\varrho; x_{0}, N, \eta z)$ the loss associated to $\mathfrak P$ and
define $\varrho^{*} = \varrho(x_{0}, \eta) > 0$ to be the value of $\varrho$
yielding best risk (\ie where $\E_{z} \mathcal L(\varrho; x_{0}, N, \eta z)$ is
minimal). We say the normalized parameter $\rho$ for the problem $\mathfrak P$
is given by $\rho := \varrho / \varrho^{*}$ and note that $\rho = 1$ is a
population estimate of the argmin of
$\mathcal L(\varrho; x_{0}, N, \eta \hat z)$; by the law of large numbers, this
risk estimates well an average of such losses over many realizations $\hat
z$. Finally, define the auxiliary function
$L (\rho; x_{0}, N, \eta \hat z) := \mathcal L(\rho \varrho^{*}; x_{0}, N, \eta
\hat z)$.

The plots in Figures Figure \ref{fig:lspd-numerics-a}, Figure \ref{fig:lspd-numerics-b},
Figure \ref{fig:qppd-instability-b}, Figure \ref{fig:bppd-numerics-a} and
Figure \ref{fig:parameter-stability} visualize the average loss,
\begin{align}
  \label{eq:avg-loss}
  \bar L(\rho_{i}; x_{0}, N, \eta, k) %
  := \frac{1}{k} \sum_{j = 1}^{k} L(\rho_{i}; x_{0}, N, \eta \hat z_{ij})
\end{align}
for each program, evaluated on a grid $\{\rho_{i}\}_{i=1}^{n}$ of size $n$ and
plotted on a $\log$-$\log$ scale, where
$L(\rho; x_{0}, N, \eta\hat z) = \eta^{-2} \|x^{*}(\varrho) -
x_{0}\|_{2}^{2}$. Here, each of the $nk$ realizations of the noise is
distributed according to $\hat z_{ij} \sim \mathcal{N}(0,1)$, the noise level
given by $\eta$ and the signal by $x_{0}$ where $x_{0} = N \sum_{i=1}^{s} e_{i}$
with $e_{i}$ being the $i$th standard basis vector. The grid
$\{\rho_{i}\}_{i=1}^{n}$ was logarithmically spaced and centered about
$\rho_{(n+1)/2} = 1$ with $n$ always odd. The solutions to each PD problem were
obtained using standard available methods in Python: \texttt{sklearn}'s
\texttt{minimize\_scalar} function from the \texttt{optimize} module was used
for solving {\lspd} and {\bppd} \cite{scikit-learn}, while the solution to
{\qppd} was obtained \emph{via} soft-thresholding. Finally, the optimal values
$\tau^{*}, \lambda^{*}$ and $\sigma^{*}$ were either determined analytically
(\eg $\tau^{*} = \|x_{0} \|_{1}$), or estimated on a dense grid about an
approximately optimal value for that parameter. Initial guesses for $\sigma^{*}$
and $\lambda^{*}$ were $\eta \sqrt{N}$ and $\sqrt{2 \log (N/s)}$ respectively.

\subsection{{\lspd} numerical simulations}
\label{ssec:lspd-numerics}

\begin{figure}[t]
  \begin{subfigure}{.5\linewidth}
\includegraphics[width=\linewidth]{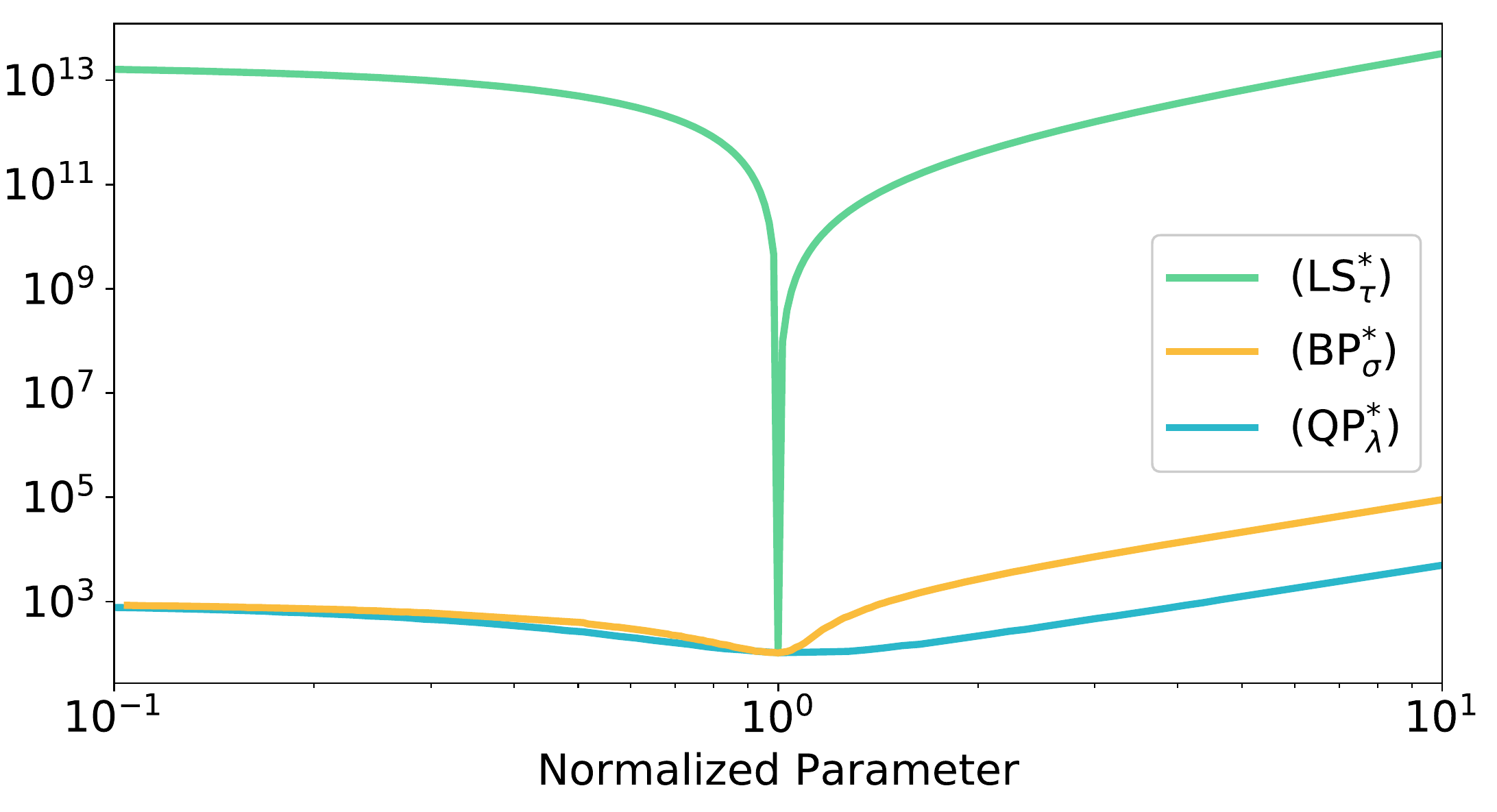}
    \subcaption{\label{fig:lspd-numerics-a}}
  \end{subfigure}
  \begin{subfigure}{.5\linewidth}
    \centering
    \includegraphics[width=\linewidth]{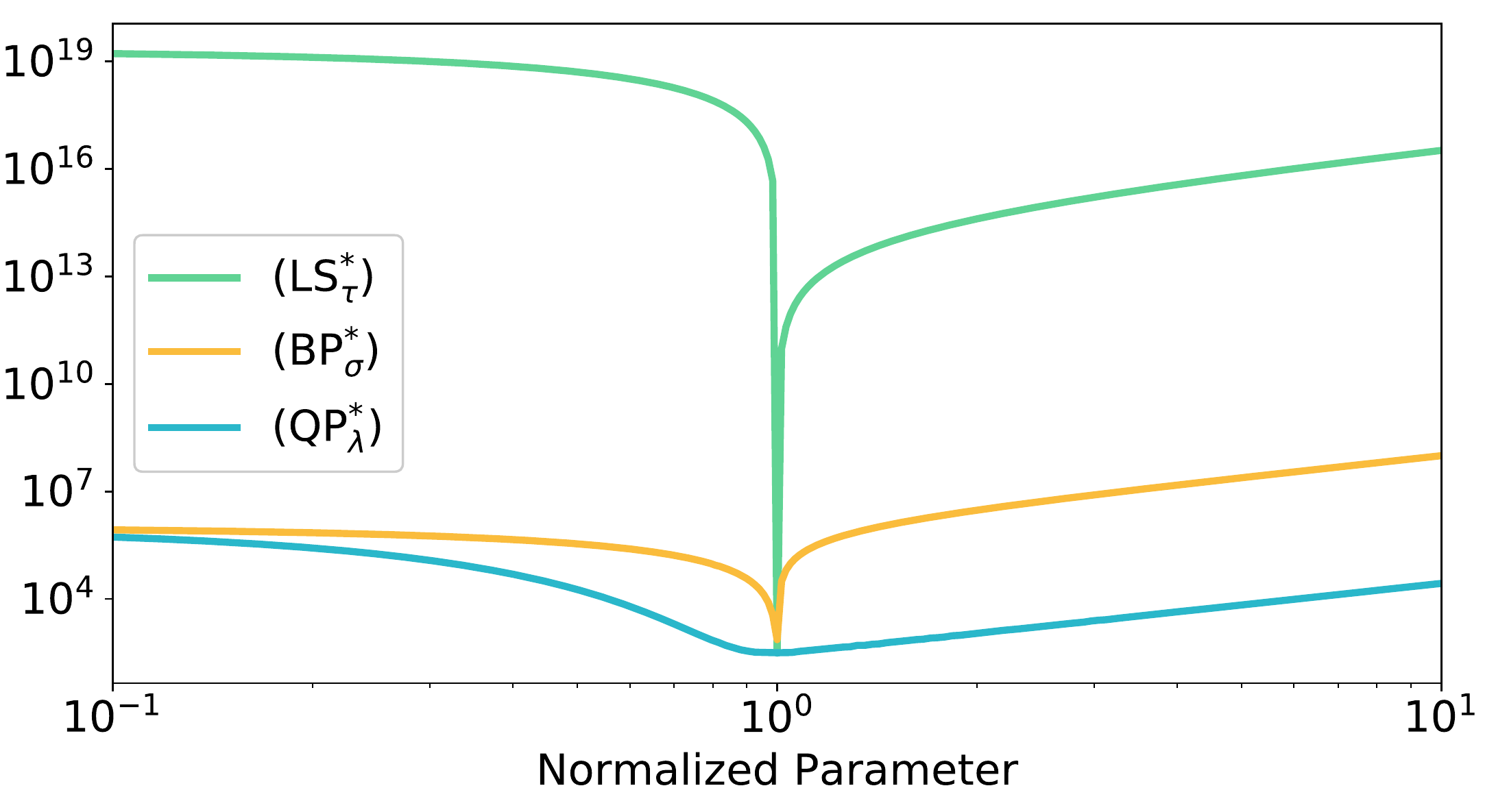}
    \subcaption{\label{fig:lspd-numerics-b}}
  \end{subfigure}
  \caption{%
    {\lspd} parameter instability in the low-noise regime. Average loss as per
    {\eqref{eq:avg-loss}} for each program plotted on a $\log$-$\log$ scale
    with respect to the normalized parameter. The data parameters for (a) are
    $(s, N, \eta, k, n) = (20, 10^{3}, 10^{-3}, 150, 301)$ and those for (b)
    are $(s, N, \eta, k, n) = (20, 10^{6}, 10^{-3}, 50, 301)$.}
  \label{fig:lspd-numerics}
  %%%% Specifics for lspd-numerics-a
  %%%% -------------------------
  %%%% Data folder: lspd_instability-2018-10-18
  %%%%   lspd file: dataframe_lspd_small_lownoise_pd_instability_2018-10-18_s20_N1.0e+03_eta1.0e-03.feather
  %%%%   qppd file: dataframe_qppd_small_lownoise_pd_instability_2018-10-18_s20_N1.0e+03_eta1.0e-03.feather
  %%%%   bppd file: dataframe_bppd_small_lownoise_pd_instability_2018-10-18_s20_N1.0e+03_eta1.0e-03.feather
  %%%% Code folder: lif/scripts/
  %%%%   code file: datagen_small_lownoise_pd.py
  %%%%   plot file: generate_plots_small_lownoise_instability_normalized.py
  %%%%
  %%%% Specifics for lspd-numerics-b
  %%%% -------------------------
  %%%% Data folder: lspd_instability-2018-10-18
  %%%%   lspd file: dataframe_lspd_lownoise_instability_2018-10-18.feather
  %%%%   qppd file: dataframe_qppd_lownoise_instability_2018-10-18.feather (extended)
  %%%%   bppd file: dataframe_bppd_lownoise_instability_2018-10-18.feather
  %%%% Code folder: lif/scripts/
  %%%%   code file: datagen_lownoise_pd.py
  %%%%   plot file: generate_plots_lownoise_instability_normalized.py
\end{figure}

This section presents numerical simulations demonstrating parameter instability
of {\lspd} in the low-noise regime for two different ambient dimensions
$N = 10^{3}, 10^{6}$. This repetition has the benefit of showcasing the
behaviour of {\lspd} at two different sparsity levels, as well as contrasting
the behaviour of {\lspd} with {\qppd} and {\bppd} at relatively low and high
dimensions. Using the notation above, $n = 301$ points and $s = 20$;
$(k, N) = (150, 10^{3})$ for Figure \ref{fig:lspd-numerics-a}, while
$(k, N) = (50, 10^{6})$ for Figure \ref{fig:lspd-numerics-b}. In both regimes,
$x_{0}$ is quite sparse $s/N \sim 10^{-2}, 10^{-5}$ with entries that are well
separated from the noise $N/\eta \sim 10^{6}, 10^{9}$.

We may glean several pieces of information from these two plots. Most notably,
the {\lspd} parameter instability manifests in very low dimensions, relative to
practical problem sizes. Moreover, the curve for {\lspd} average loss seems to
approach something resembling the sharp asymptotic phase transition described by
Theorem \ref{thm:constr-pd}. One may also notice the behaviour of the other two
programs in the low-noise regime. It is apparent that the magnitude of the
derivative for the {\qppd} risk increases markedly on the left-hand side of the
optimal normalized parameter value (\ie below $1$) between the $N=10^{3}$ and
$N=10^{6}$ plots. This behaviour is consistent with the result in
Theorem \ref{thm:qppd-instability} that the left-sided risk scales as a power law of
$N$.

Finally, we observe that {\bppd} develops a shape resembling the instability of
{\lspd} when $N = 10^{6}$. We offer the plausible explanation that the relative
sparsity of the signal is small ($s/N = 2 / 10^{5}$) and thus this regime
coincides with the regime in which {\bppd} develops parameter
instability. Figure \ref{fig:bppd-numerics} demonstrates that such an instability
seems to occur in very large dimensions, a suspicion corroborated by the remark
at the end of \ref{ssec:bppd-numerics}.

We observe that the parameter instability developed by {\bppd} seems to manifest
in a way similar to that of {\lspd}. This is interesting, because
Theorem \ref{thm:bppd-maximin} shows that there is no good choice of parameter
$\sigma$, though Figure \ref{fig:bppd-numerics} supports that there is a single
best choice, albeit minimax suboptimal, when $N$ is moderately large.

\subsection{{\qppd} analytic plots}
\label{sec:qppd-numerics}

We plot $R^{\sharp}(\lambda; s, N)$ using the expressions derived in
\eqref{eq:pd-lambda}. Observe that the plot of the analytic expression for
$R^{\sharp}(\lambda; s, N)$ agrees well with the simulations of
$R^{\sharp}(\lambda; x_{0}, N, \eta)$ in Figure \ref{fig:lspd-numerics} and
Figure \ref{fig:bppd-numerics}.

In Figure \ref{fig:qppd-instability-a} we plot $R^{\sharp}(\lambda; s, N)$ for
$\lambda \in \{ 1 - 10^{-2}, 1- 10^{-3}, 2\}$. It is evident from the reference
lines $y \sim N^{2/5}$ and $y \sim \sqrt N$ that
$R^{\sharp}(u \bar \lambda; s, N)$ scales like a power law of $N$ for $u < 1$,
while $R^{\sharp}(2 \bar \lambda; s, N)$ appears to have approximately
order-optimal growth. The derivatives of these three functions are visualized in
Figure \ref{fig:qppd-instability-c}, with plotted reference lines
$y = N^{2/5}, \sqrt N$. Again, it is evident that the derivative scales as a
power law of $N$ for those risks with $\lambda < \bar \lambda$. In
Figure \ref{fig:qppd-instability-b} we plot $R^{\sharp}(\lambda; s, N)$ as a
function of $\lambda$ for $N = 10^{15}$. One may observe parameter instability
for $\lambda < \lambda^{*}$, for example by comparison to the plotted reference
line $y \sim \lambda^{-32}$. Similarly, one may observe right-sided parameter
stability of $R^{\sharp}(\lambda; s, N)$ by comparing with the second plotted
reference line, $y \sim \lambda^{2}$. From these simulations one may observe
that choosing $\lambda = .5 \lambda^{*}$ accrues at least a $10^{9}$ fold
magnification of the error.

Finally, we would like to clarify a potentially confusing issue. Though our
theory for {\qppd} refers to $\lambda^{*}$ only through its connection with
$\bar \lambda$, we were able to approximate $\lambda^{*}$ empirically in our
numerical simulations. Accordingly, we have made reference to it when discussing
parameter stability regimes.

\begin{figure}[t]
  \begin{subfigure}{.47\linewidth}
    \includegraphics[width=\linewidth]{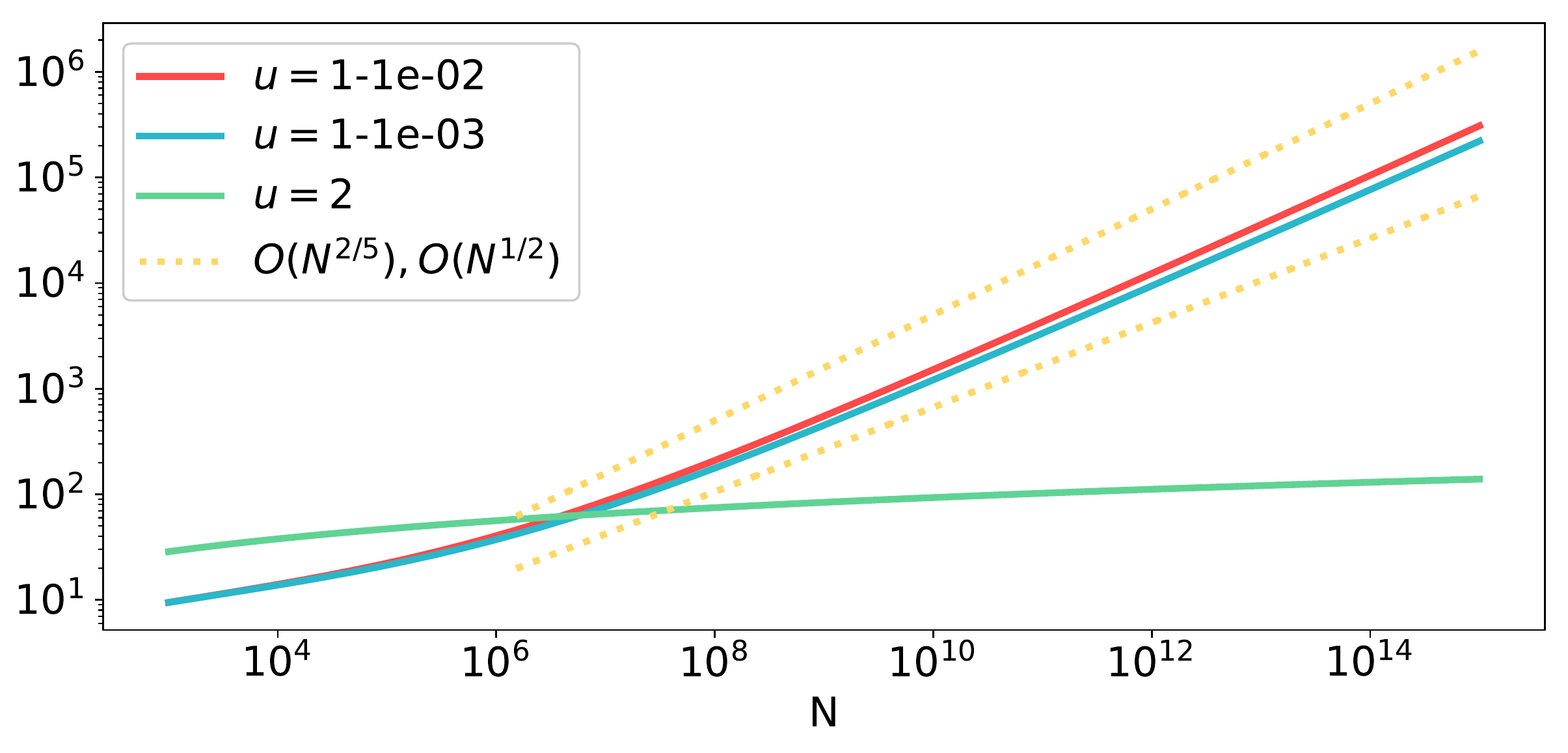}
    \subcaption{\label{fig:qppd-instability-a}}
  \end{subfigure}
  \qquad
  \begin{subfigure}{.47\linewidth}
    \includegraphics[width=\linewidth]{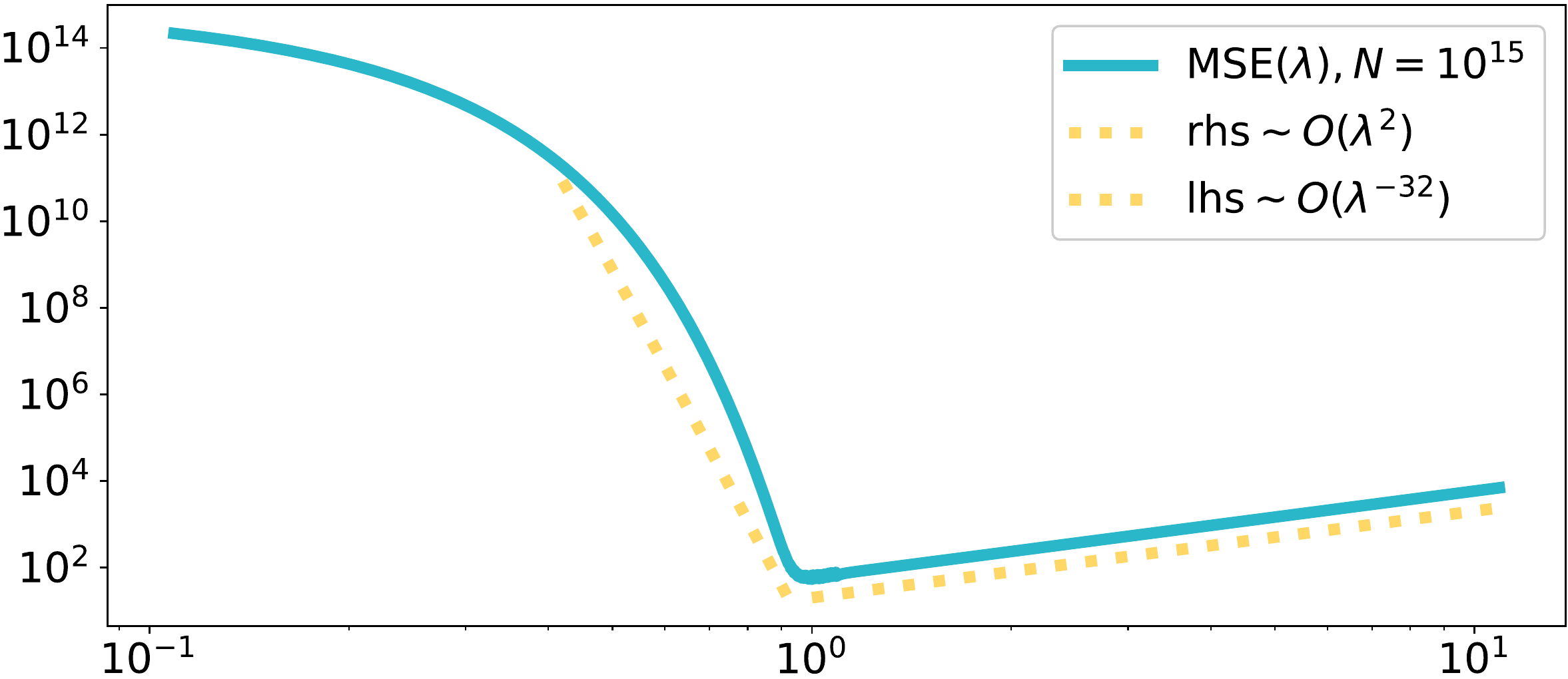}
    \subcaption{\label{fig:qppd-instability-b}}
  \end{subfigure}

  \begin{subfigure}{.47\linewidth}
    \includegraphics[width=\linewidth]{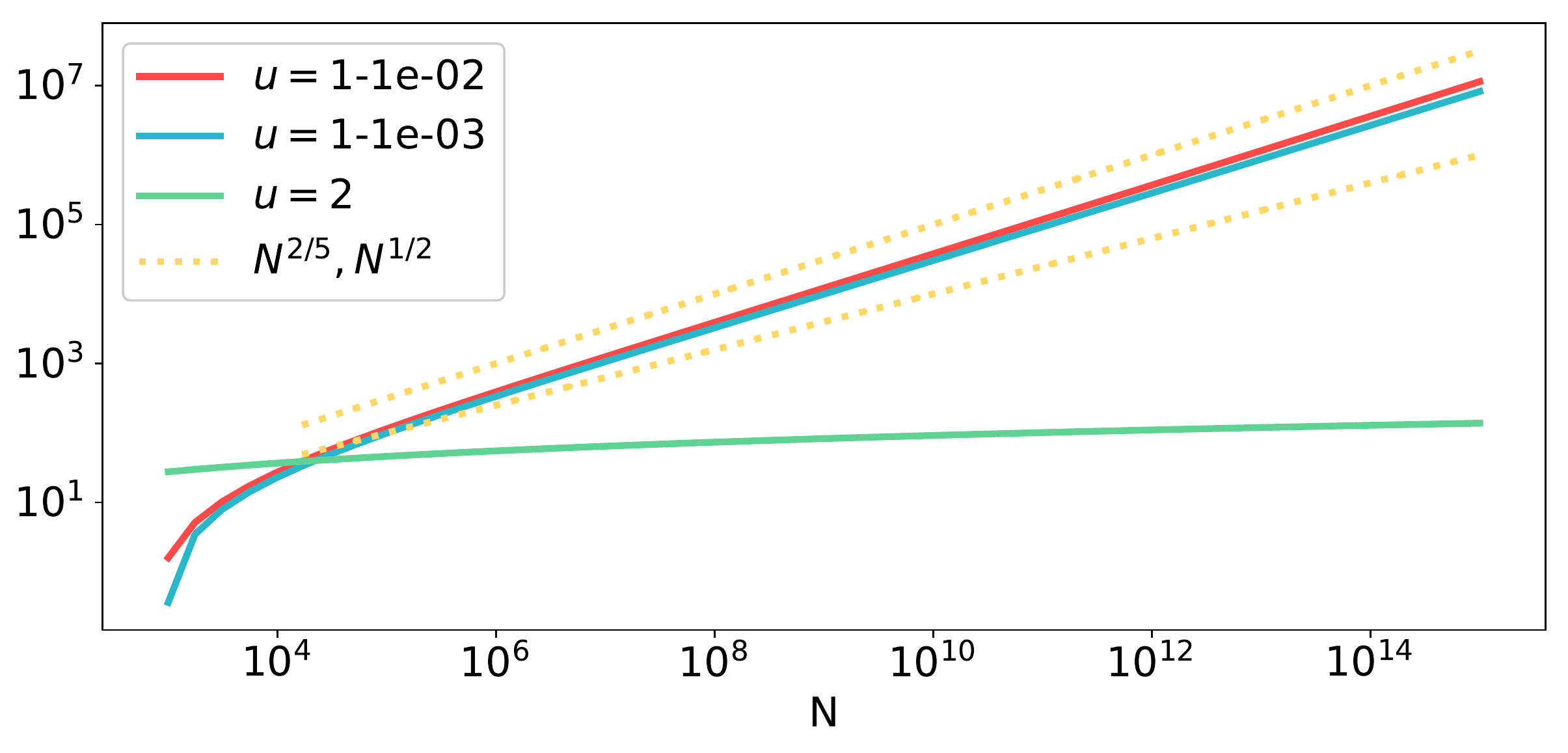}
    \subcaption{\label{fig:qppd-instability-c}}
  \end{subfigure}

  \caption{{\qppd} parameter instability in the low-noise regime. All curves are
    generated analytically using the expressions obtained in
    section \ref{sec:param-inst-qppd} and plotted on a $\log$-$\log$ scale. (a) A
    plot of $R^{\sharp}(u \lambda^{*}; s, 10^{15})$ as a function of $N$ for
    $u \in \{ 1- 10^{-2}, 1-10^{-3}, 2\}$. The lines $y = N^{2/5}/15$,
    $y = \sqrt N / 20$ are plotted for reference. (b) A plot of
    $R^{\sharp}(\lambda; s, 10^{15})$ as a function of $\lambda$. Two lines are
    plotted as reference for risk growth rate with respect to $\lambda$. (c) A
    plot of the magnitude of $\dee{}{u} R^{\sharp}(u\lambda^{*}; s, 10^{15})$ as
    a function of $N$ for $u \in \{ 1 - 10^{-2}, 1- 10^{-3}, 2\}$. The lines
    $y = N^{2/5}$, $y = \sqrt N / 20$ are plotted as reference.}
  \label{fig:qppd-instability}
\end{figure}

\subsection{{\bppd} numerical simulations}
\label{ssec:bppd-numerics}

This section presents numerical simulations demonstrating parameter instability
of {\bppd} in the regime where $x_{0}$ is very
sparse. Figure \ref{fig:bppd-numerics-a} is generated as described by the above
procedure in section \ref{sec:numerical-results}, with parameters
$(N, s, \eta, k, n) = (10^{7}, 1, 1, 10, 237)$, while
Figure \ref{fig:bppd-numerics-b} was generated in a way that mirrors the proof of
Theorem \ref{thm:bppd-maximin}, with parameters $(s, \eta, k, n) = (1, 1, 25, 31)$.

The thrust of Figure \ref{fig:bppd-numerics-a} is to resolve parameter instability
of {\bppd} about the optimal parameter choice. Because the theory suggests that
$\tilde R(\sigma; x_{0}, N, \eta)$ is surely resolved when the ambient
dimension is sufficiently large, we set $N = 10^{7}$; this value was expected
to resolve the instability, as per the discussion in
\ref{sssec:theorem-parameter-simulation} below. We limited the number of
realizations and grid points because the problem size was computationally
prohibitive. The minimal average loss observed on the plot was significantly
larger than the respective minimal average losses of {\lspd} and {\qppd} by a
factor of $82.2$, supporting the theory. We also noticed a cusp-like behaviour,
which would be an interesting object of further study.

Figure \ref{fig:bppd-numerics-b} was generated so as to mirror the theory backing
Theorem \ref{thm:bppd-maximin}. Specifically, noise realizations were constrained
to the constant probability event
$\{ \|z\|_{2}^{2} - N \in (.5\sqrt N, 5\sqrt N)\}$. Plotted in the figure is
the average best loss as a function $N$,
\begin{align*}
  \bar L_{\text{best}}(N; x_{0}, \eta, k, n) %
  := \frac{1}{k} \sum_{j=1}^{k} \min_{i \in [n]} %
  L(\sigma_{i}(N); x_{0}, N, \eta \hat z_{ij}). 
\end{align*}
The domain for $N$ ranges from $10^{2}$ to $10^{7}$, computed on a
logarithmically spaced grid composed of $51$ points. For each value $N$ in the
grid, the average loss was computed for $n = 31$ values of $\sigma$, each using
$k = 25$ realizations $\hat z$ of the noise. The standard deviations of the best
loss realizations were computed, and plotted as a grey ribbon about the average
best loss. Included for reference is a smoothed version of the average best
loss, computed as a rolling window average. In addition, we have plotted two
power laws of $N$ that lower- and upper-bound the averaged best loss, and nearly
bound that quantity up to a full standard deviation.

\begin{figure}[t]
  \begin{subfigure}{.47\linewidth}
    \centering
   \includegraphics[width=\linewidth]{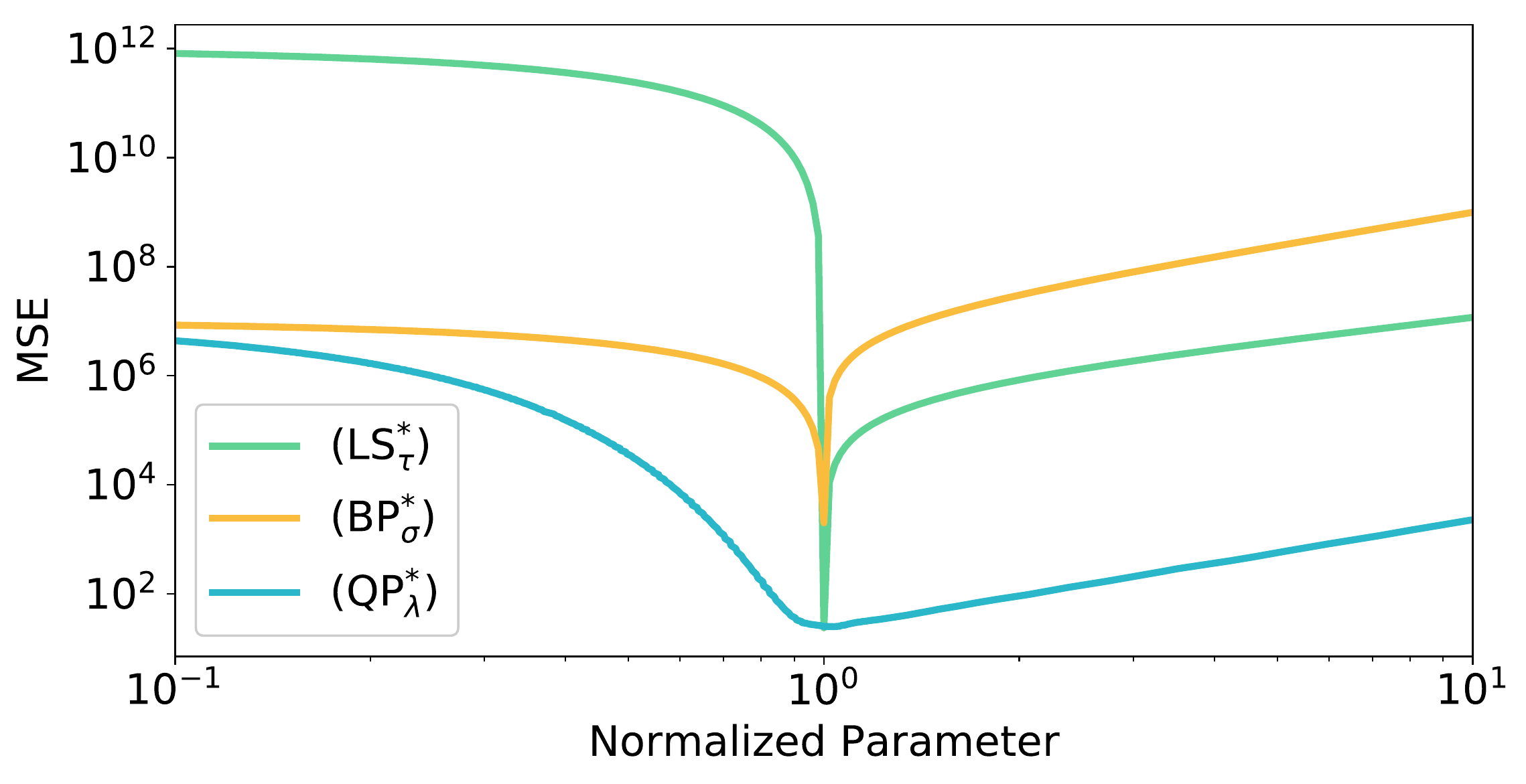}
    \subcaption{\label{fig:bppd-numerics-a}}
  \end{subfigure}
  \begin{subfigure}{.47\linewidth}
    \centering
    \includegraphics[width=\linewidth]{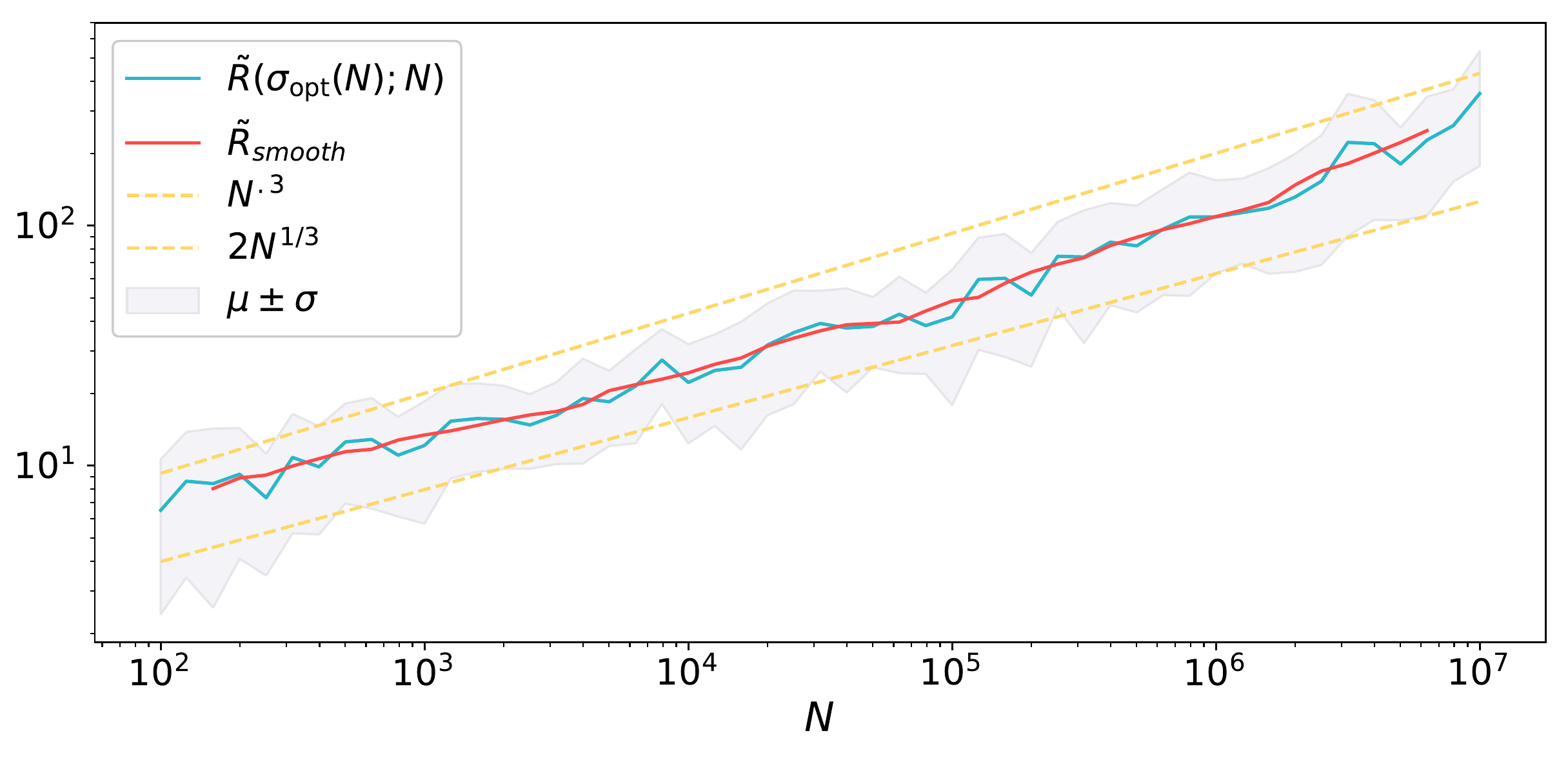}
    \subcaption{\label{fig:bppd-numerics-b}}
\end{subfigure}
\caption{%
  {\bppd} parameter instability in the very sparse regime.
  (a) Data parameters: $(s, N, \eta, k, n) = (1, 10^{7}, 1, 10, 237)$. Average
  losses plotted on a $\log$-$\log$ scale with respect to the normalized
  parameter. 
  (b) Average best loss for {\bppd} as a function of $N$. Data parameters:
  $(s, \eta, k, n) = (1, 1, 25, 31)$. The function $\sigma_{\text{opt}}(N)$ was
  obtained as the value of $\sigma$ bestowing minimal loss of the program for
  each $N$ and realization. These best losses were averaged, yielding average
  best loss. The standard deviation was computed for each $N$ from the same loss
  realizations and included as a ribbon about the mean; functions $y = N^{3/10}$
  and $y = 2 N^{1/3}$ are included for reference.}
  \label{fig:bppd-numerics}
\end{figure}

\subsubsection{Simulating theorem parameters}
\label{sssec:theorem-parameter-simulation}

Here we clarify the relationship between some of the constants appearing in the
proofs of Theorem \ref{thm:bppd-minimax} and Theorem \ref{thm:bppd-maximin}. We provide
two examples of minimal $N_{0}$ values guaranteeing parameter instability
behaviour of {\bppd} for given parameter choices. The theory does not claim
these values to be optimal, nor do we claim that the constants are tuned. In
particular, these demonstrations seem rather pessimistic, especially by
comparison with the numerical simulations in Figure \ref{fig:bppd-numerics}.

The following values were determined by computing
$N_{0} := \max\{ N_{0}^{\eqref{prop:bppd-oc-1a}}(a_{1}, C_{1}, L),
N_{0}^{\eqref{prop:bppd-oc-2a}}(C_{2}, L)\}$ for particular choices of
$a_{1}, C_{1}, C_{2}$ and $L$, using their definitions in the technical results
of \ref{sssec:supporting-propositions-for-bppd-oc-3a}. Thus, the theory of
section \ref{sec:param-inst-bppd} guarantees parameter instability for all
$N \geq N_{0}$ when
\begin{align*}
  N_{0} \approx 1.5\rm{e}6 &\quad\text{and}\quad (a_{1}, C_{1}, C_{2}, L) \approx (1.45, 5, 4, 3.78) \qquad\text{or}\\
  N_{0} \approx 4.9\rm{e}5 &\quad\text{and}\quad (a_{1}, C_{1}, C_{2}, L) \approx (1.58, 4.04, 4, 3.62). 
\end{align*}
% Aaron: see \texttt{tuning\_analytic\_constants.ipynb} for more details on the computations.
These numbers appear pessimistic, given that $N_{0}$ is large, while
$(C_{2}, C_{1}) \approx (4, 5)$ implies the instability arises on the event
$\{ \|z\|_{2}^{2} - N \in (4 \sqrt N, 5 \sqrt N)\}$, which occurs with
relatively minute (but constant) probability. Thus, it may not be all that
surprising that {\bppd} suboptimality is difficult to ascertain empirically from
a small number of realizations in only moderately large dimension when
$\sigma \approx \sigma^{*}$.

\subsection{Parameter stability in sparse proximal denoising}
\label{sec:param-stab-sparse}

In this section we show numerical simulations in which the three programs appear
to exhibit better parameter stability. For these simulations,
$\eta \approx 233.0$, $s = 2500$ and $N = 10^{4}$. Average loss was computed
from $k = 25$ realizations for $n = 401$ grid points. As the noise is large,
this setting lies (mostly) outside the regime in which {\lspd} and {\qppd}
exhibit parameter instability. Moreover, the signal is not very sparse, since
$s/N = .25$. Thus, this setting also lies outside the regime in which {\bppd}
exhibits parameter instability. Accordingly, smooth risk curves are seen for
{\bppd} and {\qppd}. While {\qppd} and {\bppd} appear relatively gradual,
{\lspd} appears at least to avoid a cusp-like point about $\tau / \tau^{*} = 1$. These data visualized in Figure \ref{fig:parameter-stability}. 

\begin{figure}[t]
  \begin{minipage}{.47\linewidth}
    \includegraphics[width=\linewidth]{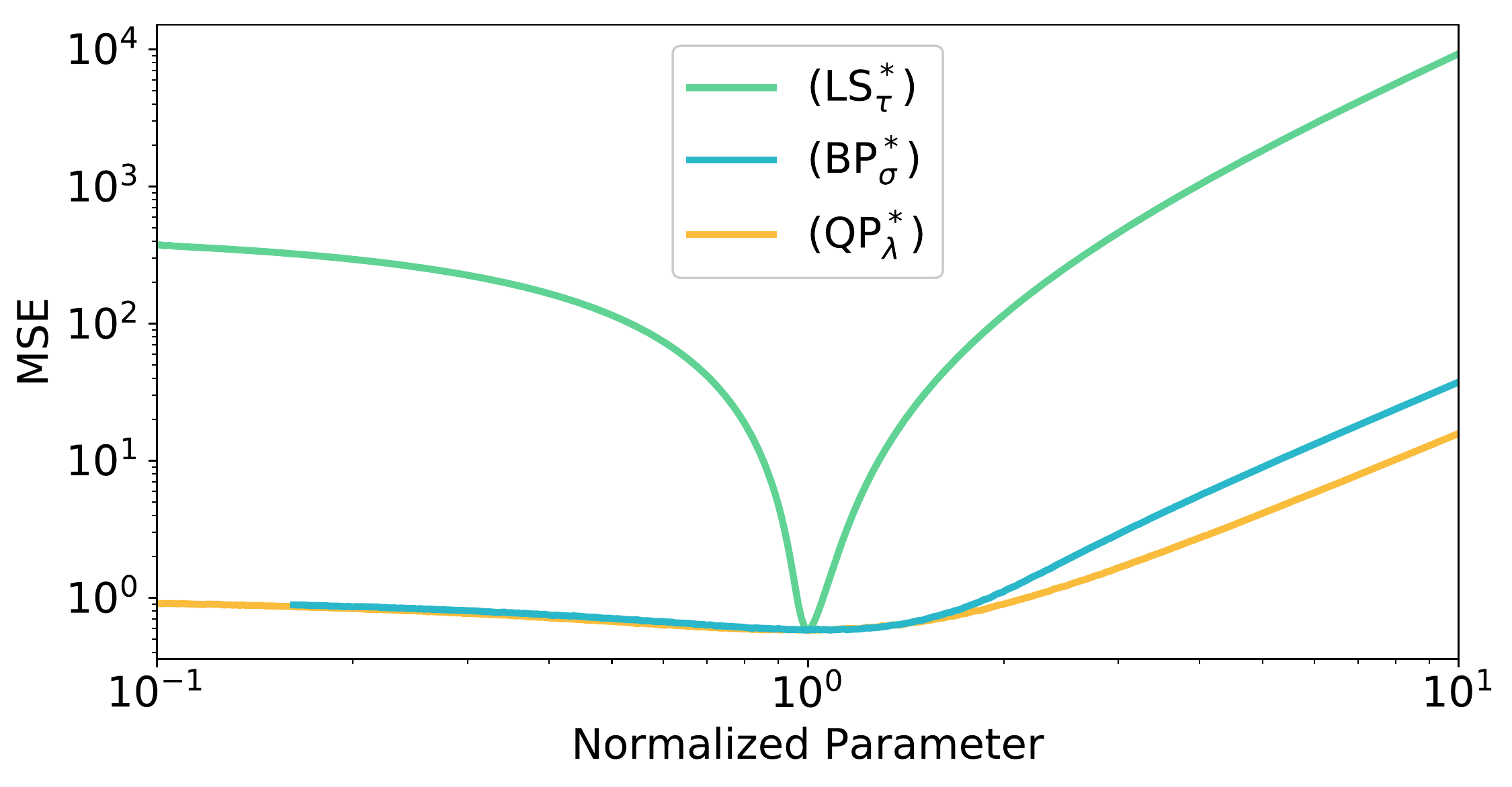}
  \end{minipage}
  \begin{minipage}{.47\linewidth}
    \caption{Parameter stability of sparse PD programs when
      $(s, N, \eta, k, n) = (2500, 10^{4}, 233, 25, 401)$. Plotted curves
      represent average loss, plotted on a $\log$-$\log$
      scale. \label{fig:parameter-stability}}
  \end{minipage}

\end{figure}

\subsection{Realistic denoising examples}
\label{sec:realistic-denoising}

\subsubsection{Image-space denoising}
\label{sec:image-space-deno}

We visualize how proximal denoising behaves for a realistic denoising
problem. The ground truth signal is the standard $512\times 512 \times 3$
colour image of a mandrill face, ravelled to a vector
$x_{0} \in [0,1]^{N} \subseteq \reals^{N}$, $N = 786\,432$. The denoising is
performed in image space. Specifically, the signal $x_{0}$ is not sparse:
$99.98 \%$ of its coefficients are nonzero.  We set
$y_{j} = x_{0, j} + \eta z_{j} \in \reals^{N}$ where $\eta = 10^{-5}, 1$ and
$z_{j} \iid \mathcal{N}(0, 1)$. The results of this example are displayed in
Figure \ref{fig:mandrill}: the ground truth and noisy images in the top row, and
quantitative results captured by plots of the average loss \eqref{eq:avg-loss}
in the bottom row.

The plots of average loss were generated from $k = 25$ realizations of noise
$z$, with a logarithmically spaced grid of $n = 501$ points centered about the
optimal parameter value for each of the three proximal denoising programs. The
optimal parameter value for each program was determined analytically where
possible, or numerically using standard solvers \cite{scipy}. A smooth
approximating curve of the non-uniformly spaced point cloud of loss
realizations was computed using radial basis function approximation. The RBF
approximation used multiquadric kernels with parameters
$(\varepsilon_{\text{rbf}}, \mu_{\text{rbf}}, n_{\text{rbf}}) = (10^{-3},
10^{-2}, 301)$. Here, $\varepsilon_{\text{rbf}}$ is the associated RBF scale
parameter, $\mu_{\text{rbf}}$ is a smoothing parameter and $n_{\text{rbf}}$ is
the number of grid points at which to approximate \cite{scipy}. The RBF
parameters for the approximation were selected so as to generate a smooth line
that best represents the path about which the individual (noisy) data points
concentrate.

About the optimal average loss (where the normalized parameter is $1$), an
average difference of $1.382 \%$ in the value of $\tau$ results in a
$4.694\times 10^{5}$ fold difference in nnse on average when $\eta =
10^{-5}$. In contrast, that error varies by no more than a factor of three in
the large noise regime ($\eta = 1$). Moreover, we observe that the average
losses computed for $\eta = 10^{-5}$ upper bound those computed for $\eta =
1$. These results suggest not to use {\lspd} for proximal denoising when $\eta$
is small, even when the underlying data are not sparse.

\begin{figure}[t]
  \centering
  \includegraphics[width=.7\textwidth]{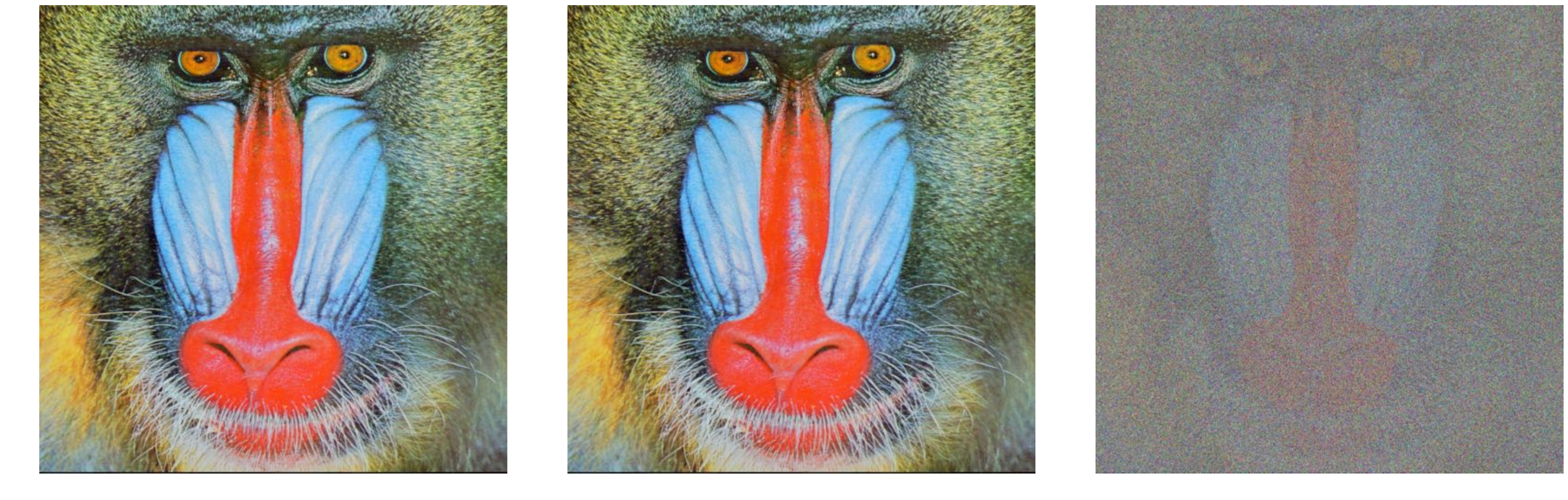}

\raisebox{-.5\height}{\includegraphics[width=.45\textwidth]{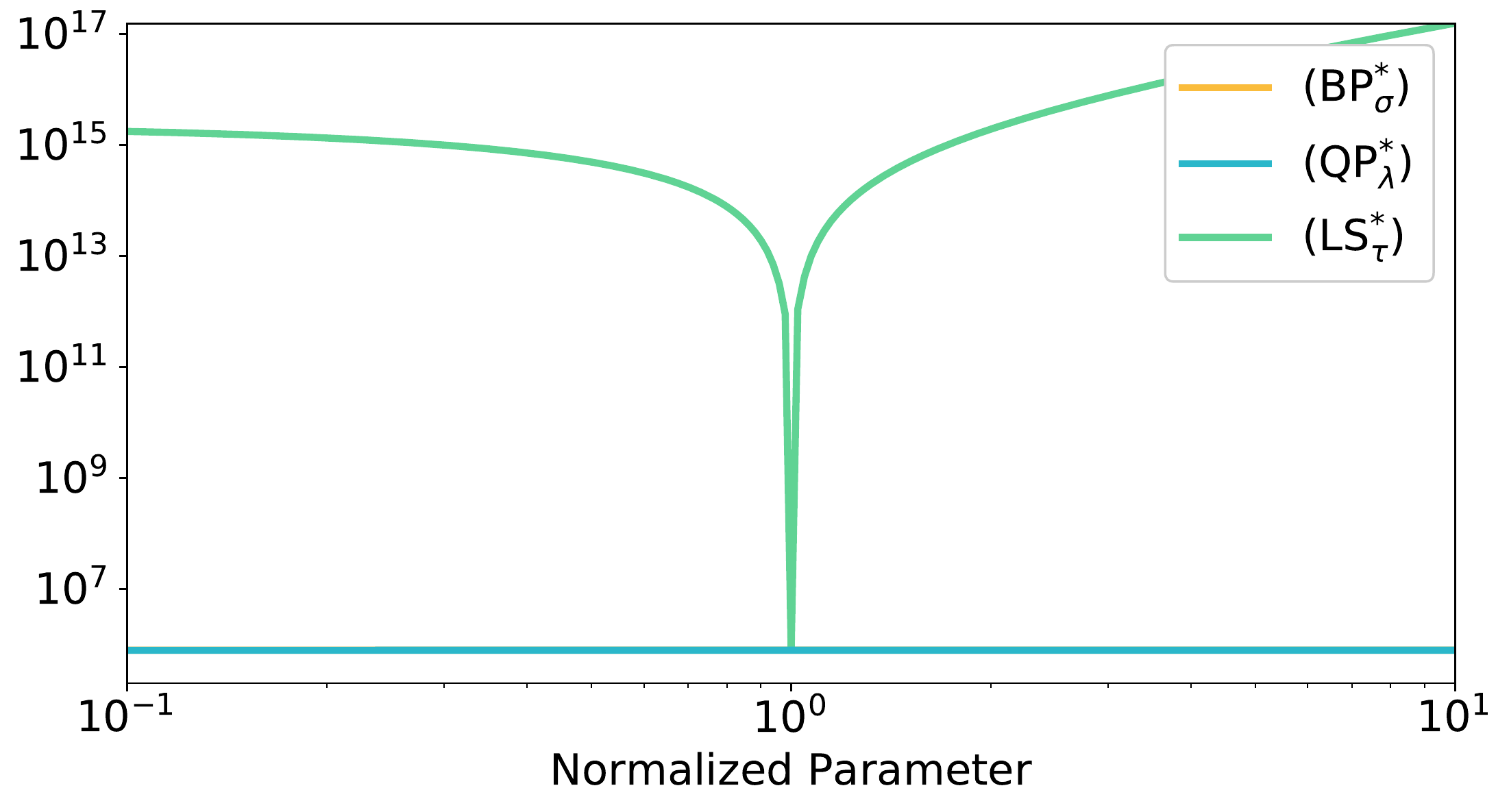}}
\raisebox{-.5\height}{\includegraphics[width=.45\textwidth]{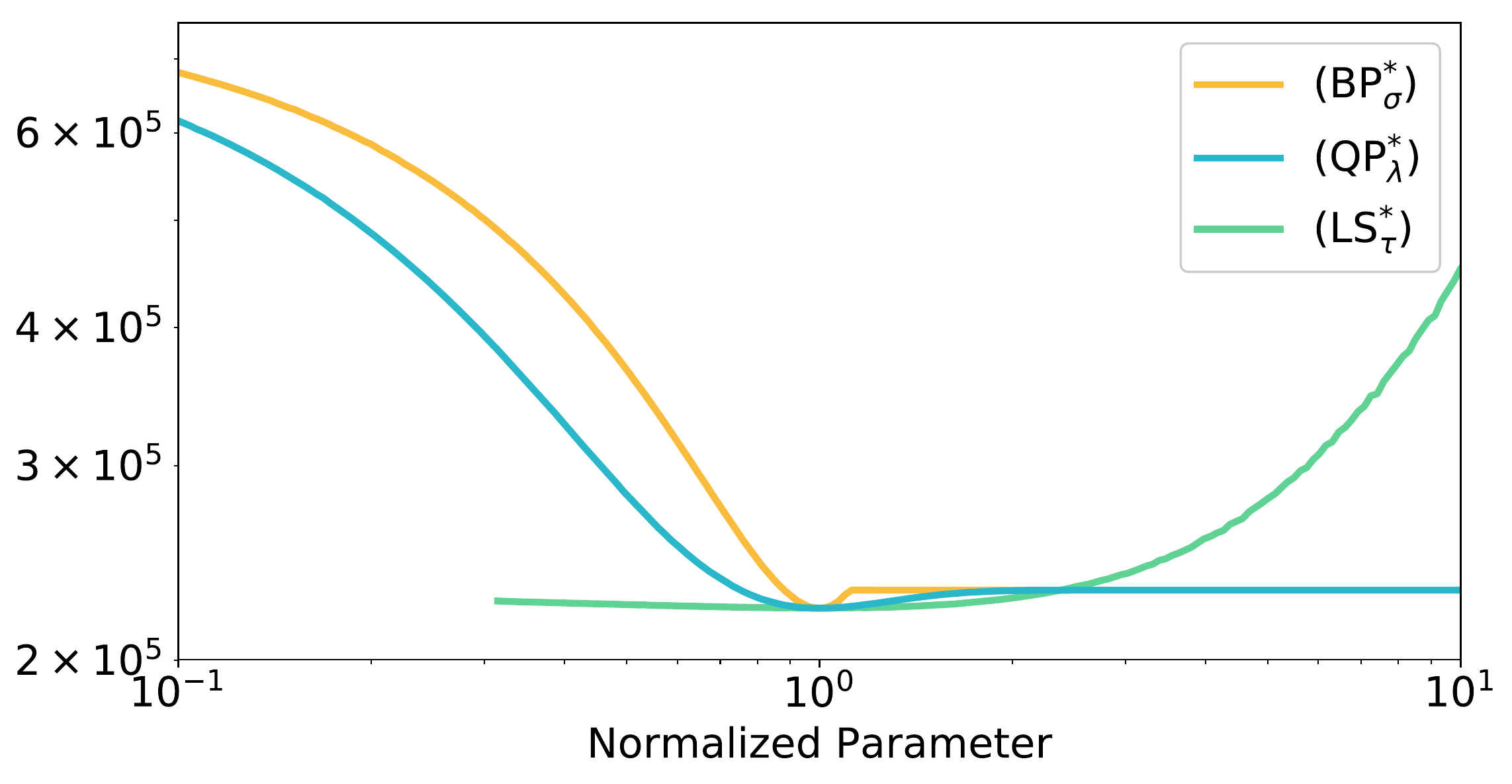}}\qquad
  
\caption{\textbf{Top (left-to-right):} The underlying signal is the
  $512 \times 512 \times 3$ mandrill image; the middle image is corrupted by
  iid normally distributed noise ($\eta = 10^{-5}$); the right-most image is
  corrupted by iid normally distributed noise ($\eta = 1$). The pixel values of
  the original image lie in $[0,1]^{3}$; those of the noisy images are scaled
  to this range for plotting. \textbf{Bottom:} Average loss is plotted with
  respect to the normalized parameter for {\lspd}, {\qppd} and {\bppd}
  respectively when $\eta = 10^{-5}$ (left) and $\eta = 1$ (right). The
  associated parameters are $(N, k, n) = (786\,432, 25, 501)$. Plotted lines
  are smoothed approximations of loss realization data using multiquadric RBFs.
}
\label{fig:mandrill}
\end{figure}

\subsubsection{1D denoising example}
\label{sec:1d-denoising-example}

In this section, we demonstrate parameter instability regimes for a realistic
example of a 1D signal using wavelet domain denoising. Specifically, an
$s$-sparse 1D signal $x_{0} \in \reals^{N}$ was generated in the Haar wavelet
domain, where $(s, N) = (10, 4096)$. In the signal domain, iid normal random
noise was added to the signal to generate
$\mathcal{W}^{-1}y := \mathcal{W}^{1}x_{0} + \eta z$ where
$\eta = \frac{N}{100}, \frac{N}{10}$. The denoising problem was solved in the
wavelet domain on a grid of size $501$ centered about the optimal normalized
parameter and logarithmically spaced. Namely, the input to each program was
$y$. The loss was computed in the signal domain after applying the inverse
transform to the estimated solution:
\begin{align*}
  L(\rho; x_{0}, N, \eta \hat z) %
  := \eta^{-2} \|\mathcal{W}^{-1} (x^{*}(\rho) - x_{0})\|_{2}^{2}
\end{align*}
A smooth approximation to the average loss
$\bar L(\rho_{i}; x_{0}, N, \eta, k)$ was computed from $k = 25$ realizations
of the noise using linear radial basis function approximation with
parameters $(\mu_{\text{rbf}}, n_{\text{rbf}}) = (0.01, 501)$.

In Figure \ref{fig:realistic-example-1d-plots}, we visualize how the three
programs behave for denoising a 1D signal, sparse in the Haar wavelet domain,
which has been corrupted by one of two different noise levels in the signal
domain. The top row visualizes the ground truth signal with a realization of
the corrupted signal for $\eta = N / 100$ (top-left) and $\eta = N / 10$
(top-right). The bottom row visualizes the average loss with respect to the
normalized parameter of each program. In the high-noise regime (bottom-right),
it is clearly seen that {\bppd} is the most parameter unstable about the
optimal parameter choice. Moreover, the best average loss for {\bppd} is
greater than that for {\qppd} or {\lspd}, as suggested by the supporting
theory. We note that {\qppd} also has an average loss greater than the minimal
one, and suggest --- noting the local variability in the curve --- that this is
an artifact of the RBF approximation through the optimality region. In the
moderate-noise regime, we see a situation in which {\qppd} appears to be the
most parameter stable --- again consistent with our reasoning that
unconstrained programs should exhibit better stability. In contrast, {\lspd} is
most parameter unstable below the optimal parameter, while it is {\bppd} that
is most parameter unstable above the optimal parameter. This behaviour may be
indicative of a regime intermediate to those we have previously discussed (\ie
lying between strictly low-noise and strictly very sparse).

With the grid in Figure \ref{fig:realistic-example-1d}, we intend to elucidate how
parameter instability manifests for each program as a function of the
normalized parameter, by visualizing the recovered signal for different values
of the normalized parameter. The top plot shows the same average losses that
are plotted in the bottom-left of Figure \ref{fig:realistic-example-1d-plots}. The
dotted lines at $\rho = .5, .75, 1, 4/3, 2$ and the markers located
approximately at the intersection of these lines with the loss curves visualize
sections of the loss for which the solution to the program will be
visualized. Indeed, for each value of $\rho$ and each program, there is a
corresponding plot in the grid that depicts the solution to the program for
that normalized parameter value $\rho$, along with the original signal $x_{0}$,
which is depicted as a black dotted line in each of the 15 plots. When $\rho$
is too small for {\bppd} and {\qppd}, it is clear that the noise fails to be
thresholded away. In contrast, this occurs for {\lspd} when $\rho$ is too
large. On the other hand, the signal content is thresholded away by {\bppd}
when $\rho$ too large, and by {\lspd} when $\rho$ is too small. Notice that
this behaviour does not seem to occur with {\qppd}, further supporting that
{\qppd} admits right-sided parameter stability.

\begin{figure}[h]
  \centering
  \includegraphics[width=.45\textwidth]{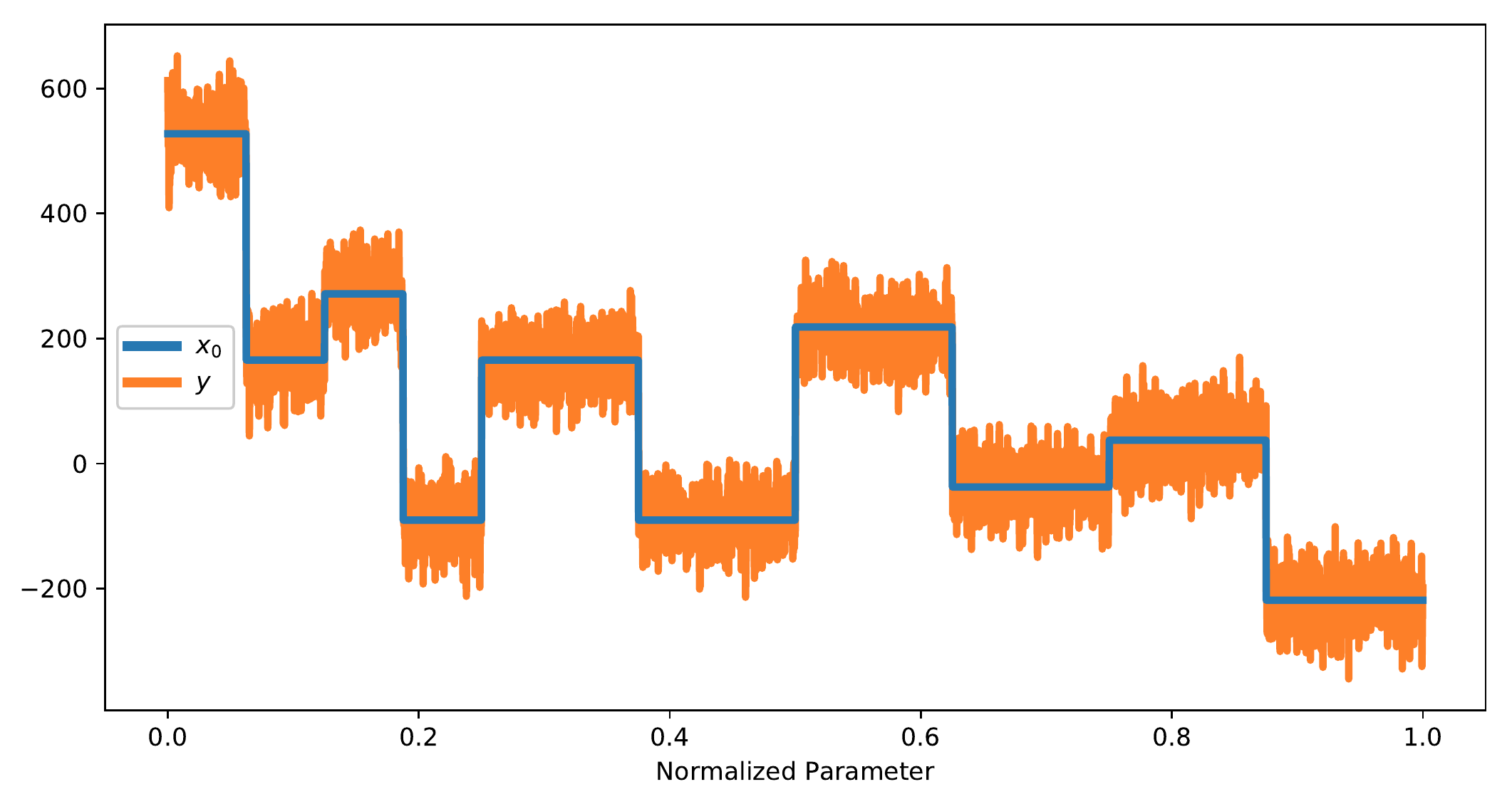}\qquad
  \includegraphics[width=.45\textwidth]{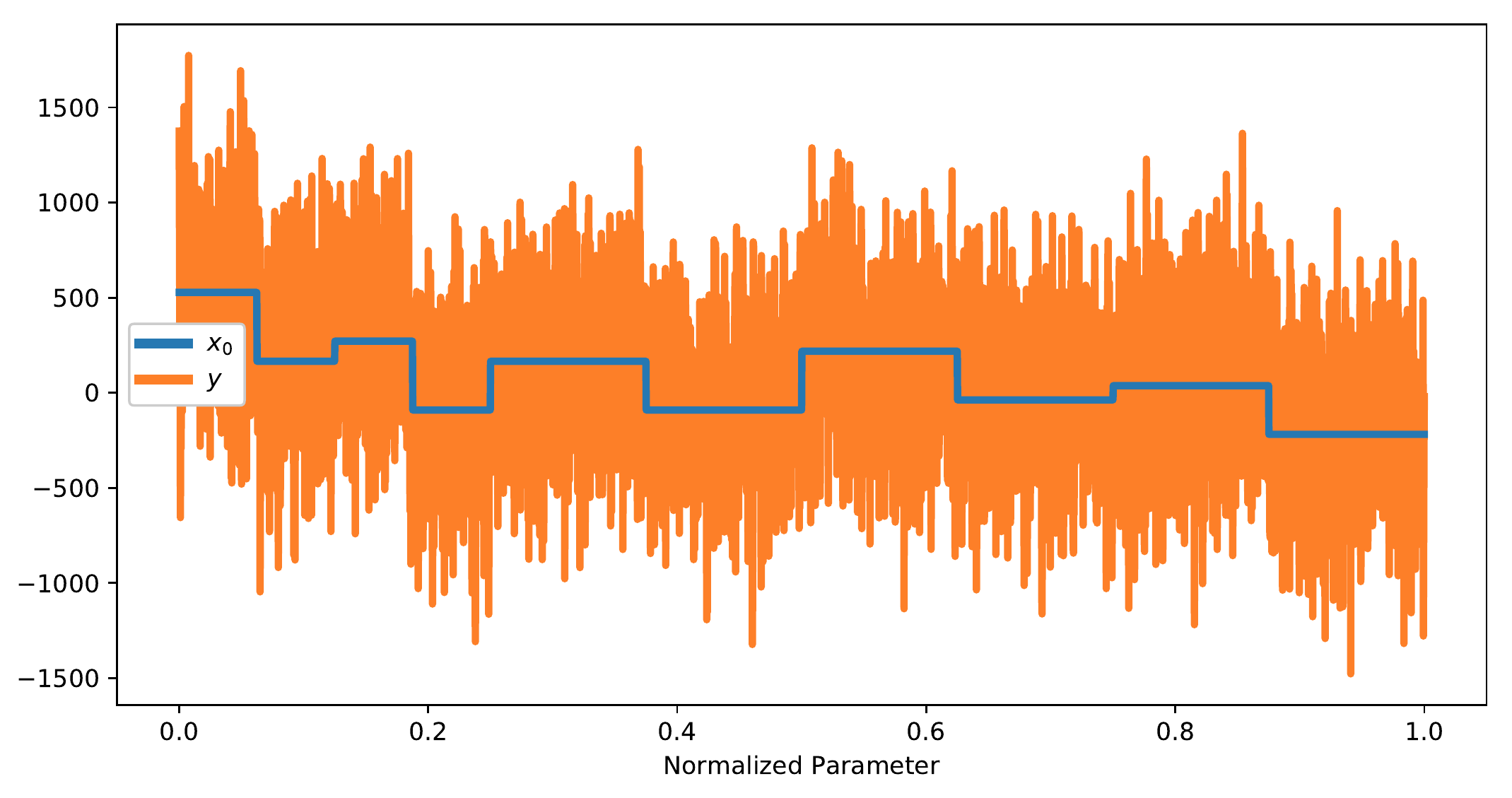}

  \includegraphics[width=.45\textwidth]{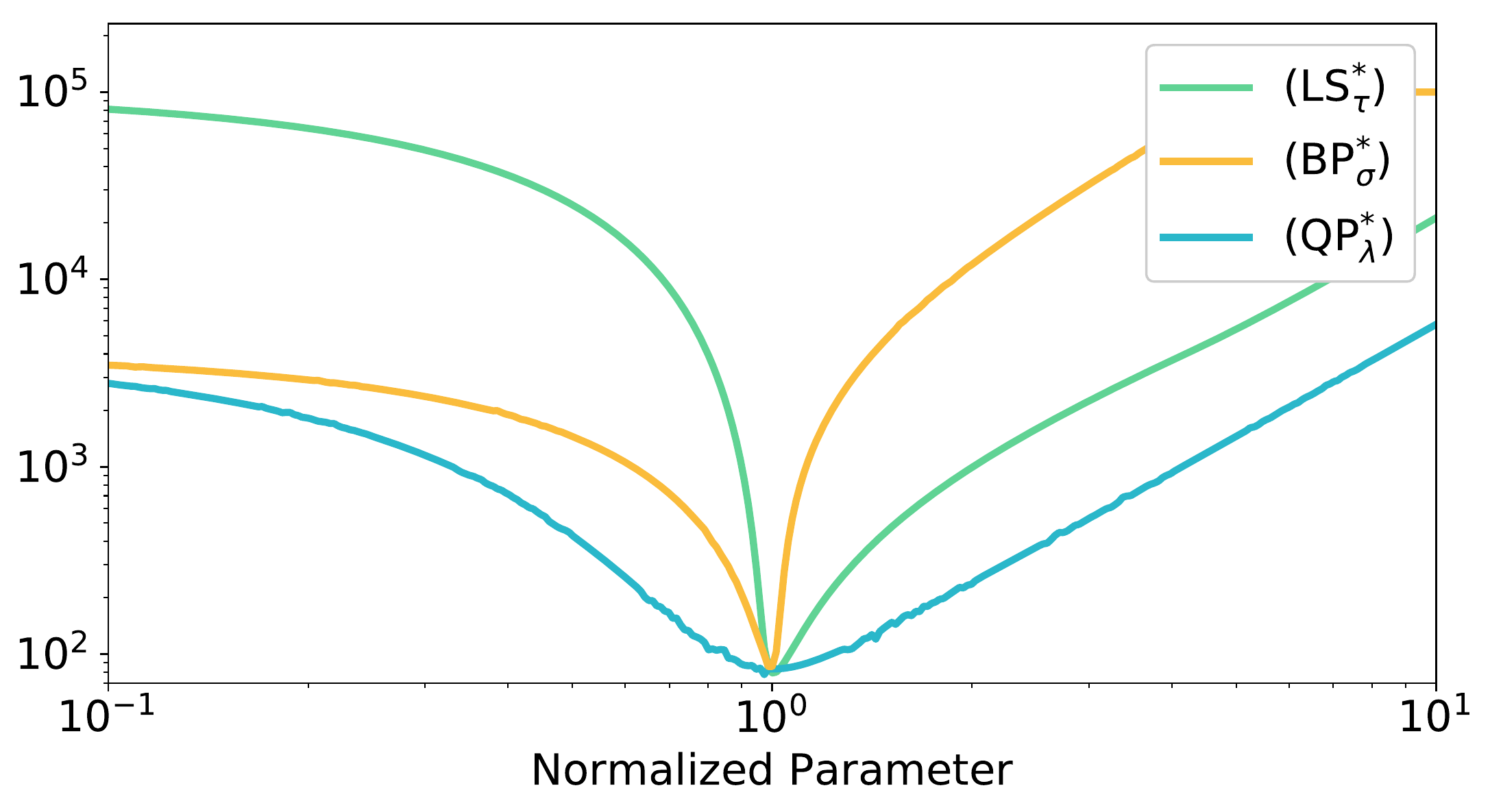}\qquad
  \includegraphics[width=.45\textwidth]{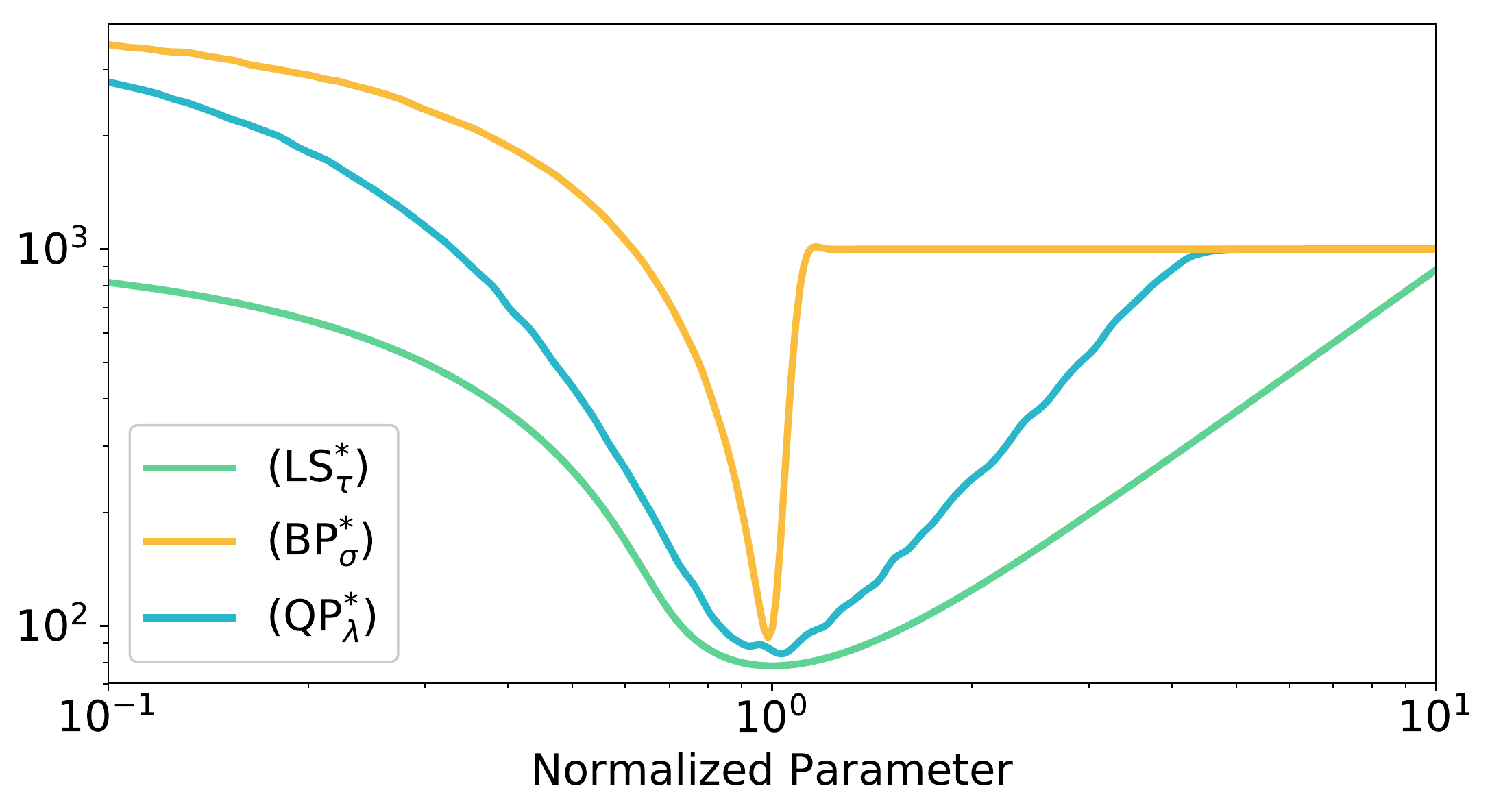}
  
  \caption{Haar wavelet space denoising of a 1D signal that is sparse in the
    Haar wavelet domain for two different noise levels, $\eta = N/100$ (left
    column) and $\eta = N/10$ (right column). \textbf{Top:} each plot contains
    a realization of the noisy 1D signal plotted in orange with the ground
    truth signal in blue. \textbf{Bottom:} visualizations of the average loss
    curve, computed using RBF approximation. The parameters for this example
    are: $(s, N, k, n) = (10, 4096, 25, 501)$.}
  \label{fig:realistic-example-1d-plots}
\end{figure}

\begin{figure}[h]
  \centering
  \includegraphics[width=.5\textwidth]{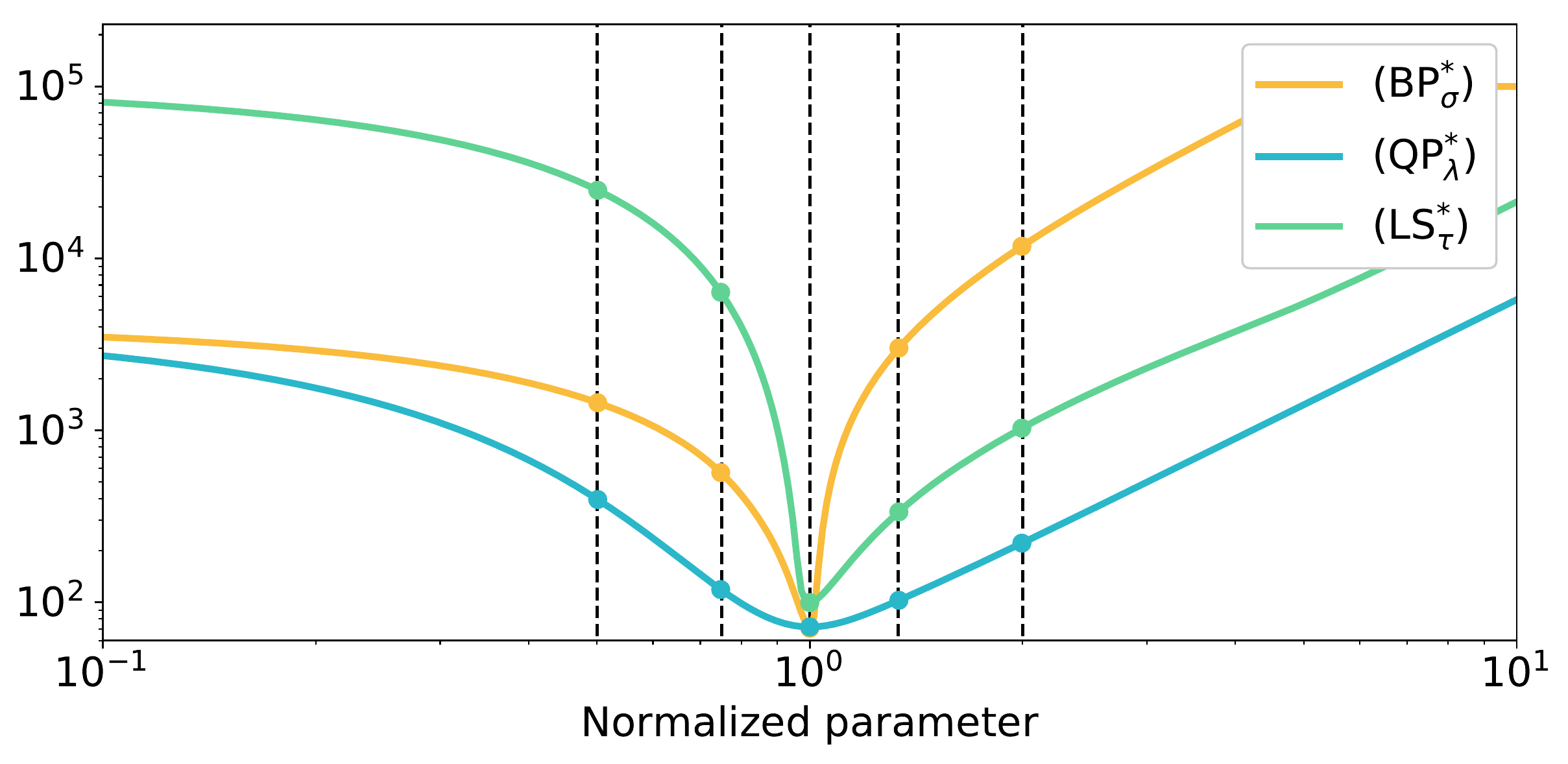}

  \includegraphics[width=.8\textwidth]{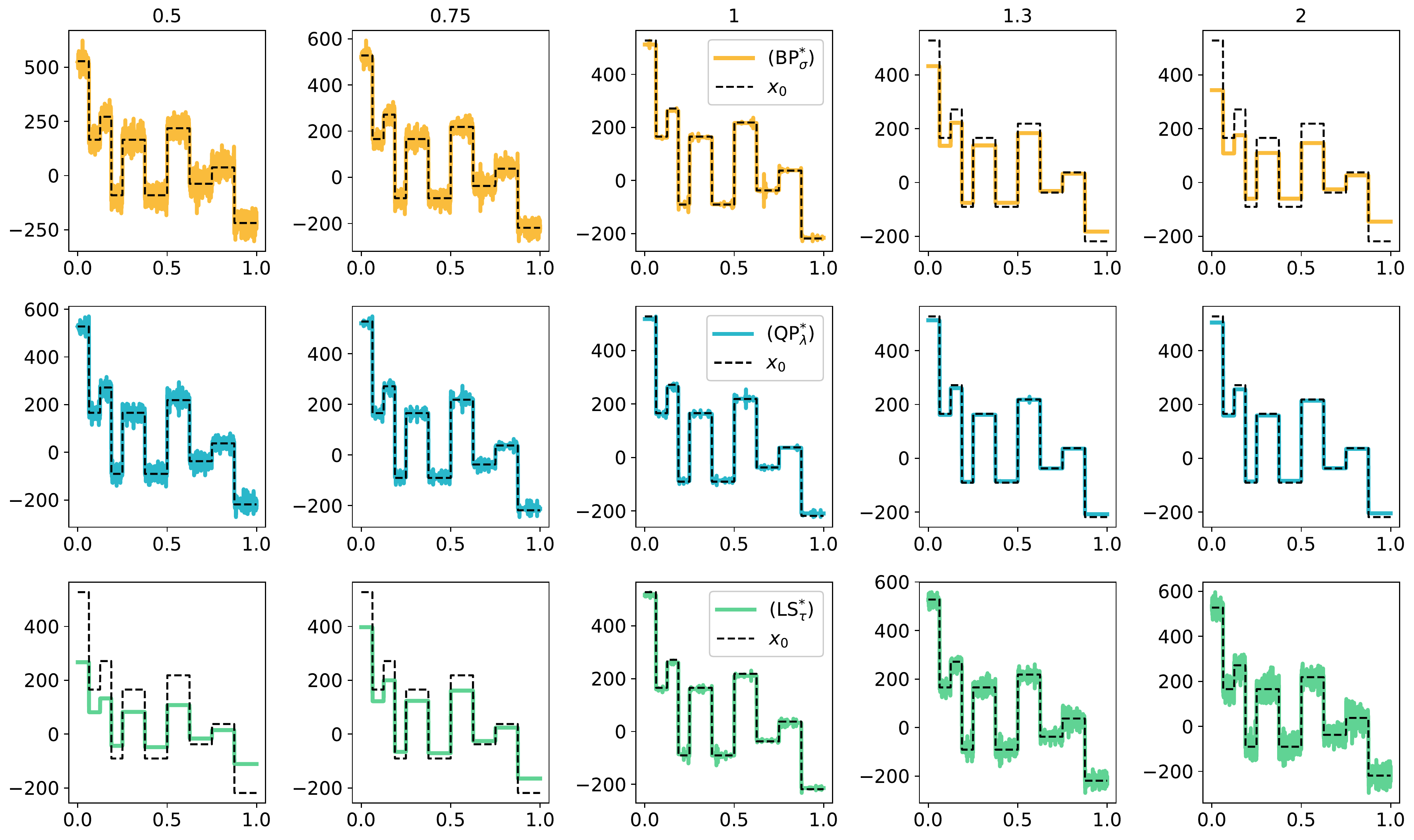}

  \caption{Wavelet space denoising of a 1D signal for different values of the
    normalized parameter when $\eta \approx 41$. \textbf{Top:} The sections of
    the average-loss surface for which estimator recovery will be visualized
    are depicted by the dots which lie nearly on the blacked dotted lines,
    themselves located at $\rho = 0.5, 0.75, 1, 4/3, 2$. \textbf{Bottom:} This
    group of fifteen plots represents a program's solution for a particular
    value of the normalized parameter, arranged in a grid. Each row of the 15
    plot grouping represents a program, as denoted by the legend label; each
    column a value of the normalized parameter, as determined by the heading
    above the top row. }
  \label{fig:realistic-example-1d}
\end{figure}

\subsubsection{Wavelet-space denoising}
\label{sec:realistic-sslp-wdn}

In this section, we demonstrate parameter instability regimes for a realistic
example using proximal denoising of an image signal in a wavelet
domain. Namely, noise is added in the image domain, the data denoised in Haar
wavelet space, and performance of the back-transformed estimator is evaluated
in the image domain. The image was designed to resemble a Shepp-Logan phantom,
but to admit a very sparse expansion in Haar wavelets. This modified phantom,
which we coin the ``Square Shepp-Logan phantom'', was created so as to be
sparse enough to allow for better visualization of {\bppd} parameter
instability. Specifically, if one were to generate the same figures for the
Shepp-Logan phantom, one would see that {\bppd} is less parameter stable than
{\qppd}, but that the behaviour is markedly less pronounced than the behaviour
we visualize in Figure \ref{fig:square-shepp-logan-wavelet-denoising-example} or
Figure \ref{fig:sslp-grid-plots-2}. Indeed this discrepancy results from the
standard Shepp-Logan phantom being less sparse (having more non-zero entries)
in its Haar wavelet transform than our modification. An alternative
demonstration using the standard Shepp-Logan phantom might proceed using a
different transform domain in which its representation is sparser.

A corrupted Square Shepp-Logan phantom was obtained by adding iid noise
$z_{i,j} \iid \mathcal{N}(0,1)$ to the image pixels $I = (I_{i,j})_{i,j}$,
yielding $y$ where $y_{i,j} = I_{i,j} + \eta z_{i,j}$ with
$\eta = 10^{-5}, 0.5$ and where $I_{i,j} \in [0,1]$ is the $(i,j)$th pixel of
the uncorrupted Square Shepp-Logan phantom. The input signal to each recovery
program was the vectorized 2D Haar wavelet transform of $y_{i,j}$:
$w = \mathcal{W}(y_{i,j})_{i,j}$ where $\mathcal{W}$ is the operator connoting
a Haar wavelet transform to (vectorized) Haar wavelet coefficients. Loss was
computed in the image domain, using the nnse of the inverse-transformed
proximal denoising estimator. For example, the loss for {\bppd} is given by
$\eta^{-2}\|\mathcal{W}^{-1}(\tilde x(\sigma)) - I\|_{2}^{2}$. Average loss
\eqref{eq:avg-loss} was thus computed by averaging the loss over $k = 25$
realizations of the noise $z$.

The associated parameters of the problem are
$(s, N, k, n) = (5188, 409618, 25, 501)$, implying a relative sparsity of
$1.27\%$. To create effective visualizations of the parameter instability
behaviour, the noisy images seen in the top row of
Figure \ref{fig:square-shepp-logan-wavelet-denoising-example} are scaled to the
interval $[0,1]$. Subsequent visualizations do not perform this rescaling so that a perceptual evaluation of the recovery is better facilitated.

The plots in the bottom row of
Figure \ref{fig:square-shepp-logan-wavelet-denoising-example} depict the average
loss as a function of the normalized parameter $\rho$ of each program. For each
of the $k$ realizations, the loss was computed on a logarithmically spaced grid
of $n = 501$ points about the optimal parameter. As in
section \ref{sec:image-space-deno}, a smooth approximating curve to the
non-uniformly spaced point cloud of loss realizations was computed using RBF
approximation. The RBF approximation used multiquadric kernels with parameters
$(\varepsilon_{\text{rbf}}, \mu_{\text{rbf}}, n_{\text{rbf}}) = (10^{-3},
10^{-2}, 301)$ \cite{scipy}. The RBF parameters for the approximation were
selected so as to generate a smooth line that best represents the path about
which the individual (noisy) data points concentrate, especially so as to
resolve the behaviour of the loss about $\rho = 1$.

About the optimal average loss, an approximate $10^{6}$ fold difference in nnse
results from a less than $2\%$ perturbation of $\tau$ in the low-noise and very
sparse regime ($\eta = 10^{-5}, s/N \approx 1.27\%$). In this regime, we
observe that {\bppd} is less stable than {\qppd}, especially for values of the
normalized parameter greater than $1$, as suggested by our theory. In the very
sparse regime with large noise ($\eta = 0.5$), {\bppd} is markedly more
parameter unstable than {\lspd} or {\qppd}, especially for values of the
normalized parameter exceeding $1$. Moreover, we observe that the minimal
average loss for {\bppd} is greater than that for {\lspd} or {\qppd}. This
numerical behaviour is consistent with our theoretical results.

In Figure \ref{fig:sslp-grid-plots-1} and Figure \ref{fig:sslp-grid-plots-2} we
depict estimator performance by visualizing the solution to each program at
specific values of the normalized parameter. The description of each figure is
identical, but the noise levels $\eta$ differ between them. Specifically, for
each program we show the recovered image and its pixel-wise nnse for values of
the normalized parameter $\rho = 0.5, 0.75, 1, 4/3, 2$. The plot in the top row
of the figure depicts a loss curve for each program (\ie a curve generated from
one realization of the noise $z$), along with reference lines for the
corresponding values of the normalized parameter whose recovered image are
visualized. The middle row contains a grid of 15 images; each column
corresponds to a value of the normalized parameter as denoted by the title
heading, while each row corresponds to a proximal denoising program as denoted
by the labels along the left-most $y$-axis. The bottom grouping of 15 images
depicts the pixel-wise nnse, arranged identically to the middle row. Because
the average loss curves were computed on a grid of $n$ logarithmically spaced
points centered about the optimal parameter value, we do not visualize the
recovered image for the exact values of $\rho$ given above, but for those
values represented by the coloured points seen in the plot of the top
row. These points are sufficiently close to the quoted values of $\rho$ so as
to visualize the program behaviour all the same.

The numerics of Figure \ref{fig:sslp-grid-plots-1} occur in the low-noise regime
$(\eta = 10^{-5})$, and so, as expected, demonstrate parameter instability of
{\lspd}. We note that pixel-wise nnse for {\bppd} is approximately $20$ times
worse than {\qppd} when $\rho \approx 2$. Moreover, the pathologies (in the
sense of pixel-wise nnse) of these latter two programs appear similar. We also
observe that the pixel-wise nnse varies more greatly for {\bppd} than for
{\qppd} as $\rho$ varies from $0.75$ to $4/3$. This is consistent with our
theory for the behaviour of {\bppd} in the very sparse regime. The numerics of
Figure \ref{fig:sslp-grid-plots-2} occur in the high-noise regime $(\eta =
0.5)$. Failure of {\bppd} in the very sparse regime is seen from examining the
solution itself. For example, when $\rho < 1$, pixel values of the solution to
{\bppd} may reach more than $2$ or even be negative. This pathology manifests
as large-magnitude pixelation in the corresponding plots of pixel-wise
nnse. Catastrophic failure of {\bppd} is observed for $\rho > 1$, in which the
program fails to recover any semblance of the original image. Specifically,
large $\sigma$ shrinks the wavelet coefficients to near the origin, enforcing
few non-zero components that are small in magnitude. This yields the
rectangular pattern observed in the solutions for {\bppd} (top-right of the
middle row). In contrast, moderate deformation of the image is observed for
$\rho \neq 1$ for both {\qppd} and {\lspd}.

\begin{figure}[t]
  \centering
  \includegraphics[width=.8\textwidth]{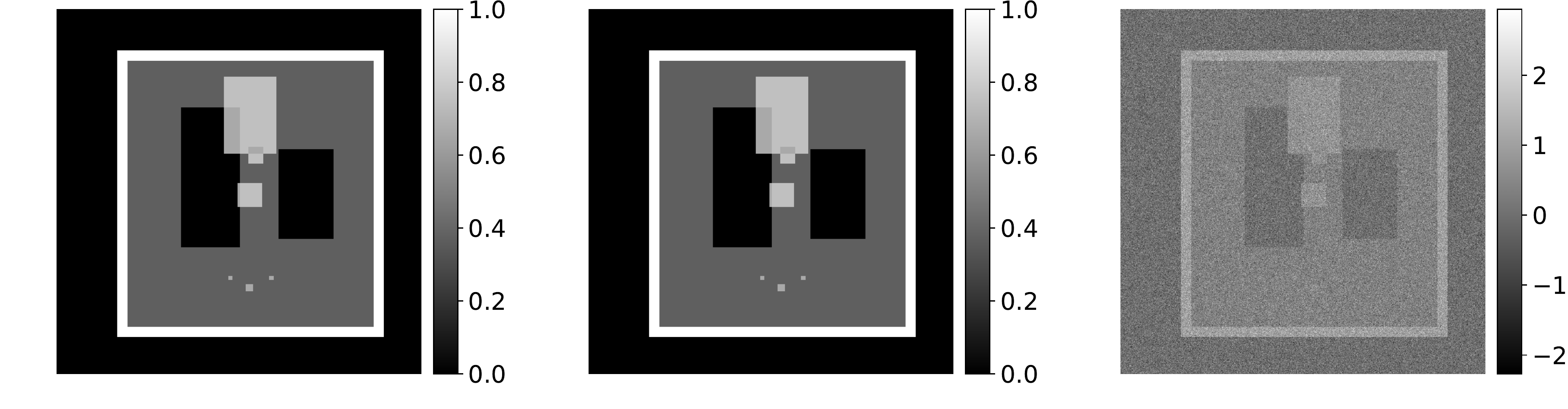}

  \includegraphics[width=.45\textwidth]{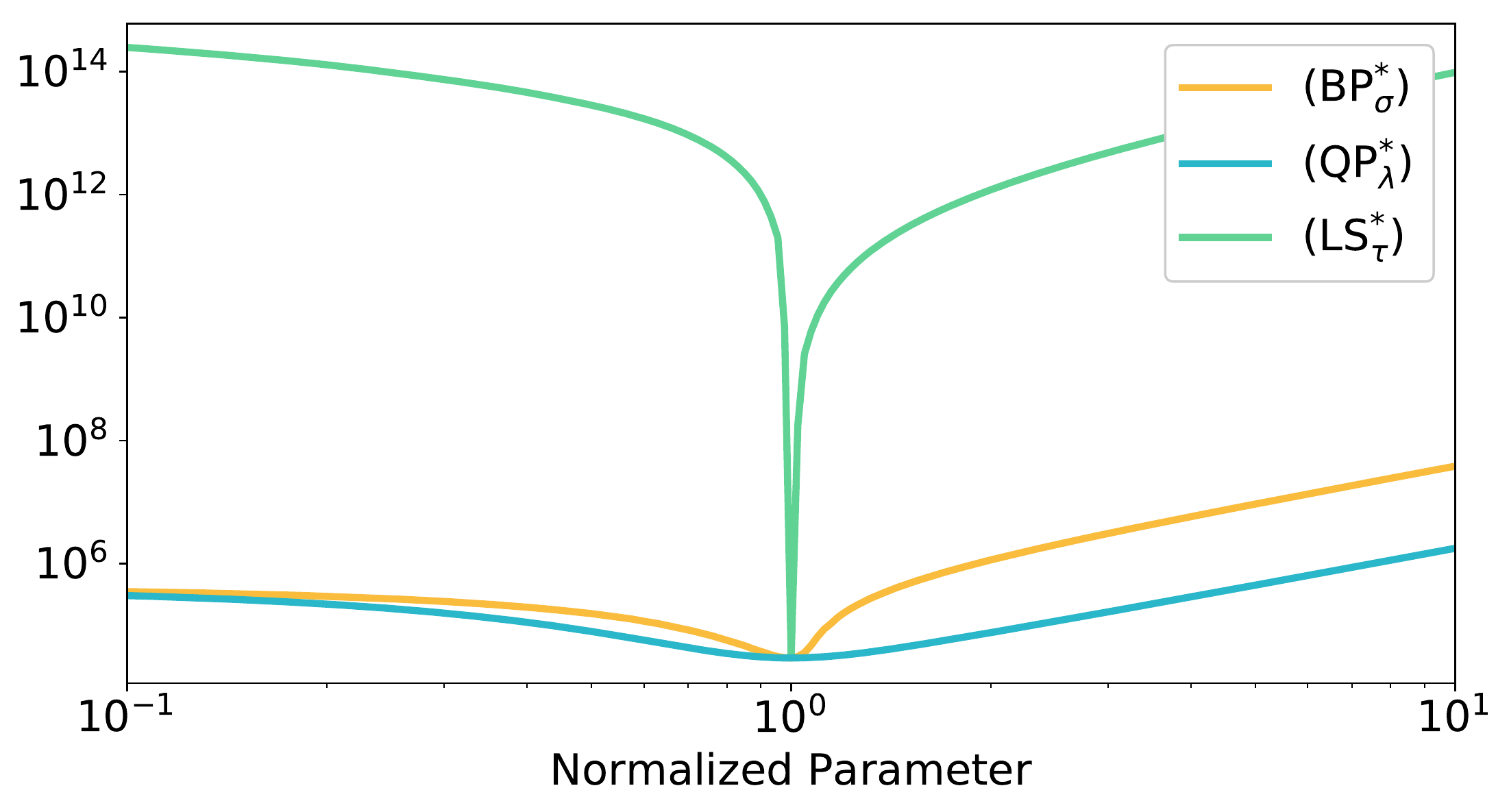}
\includegraphics[width=.45\textwidth]{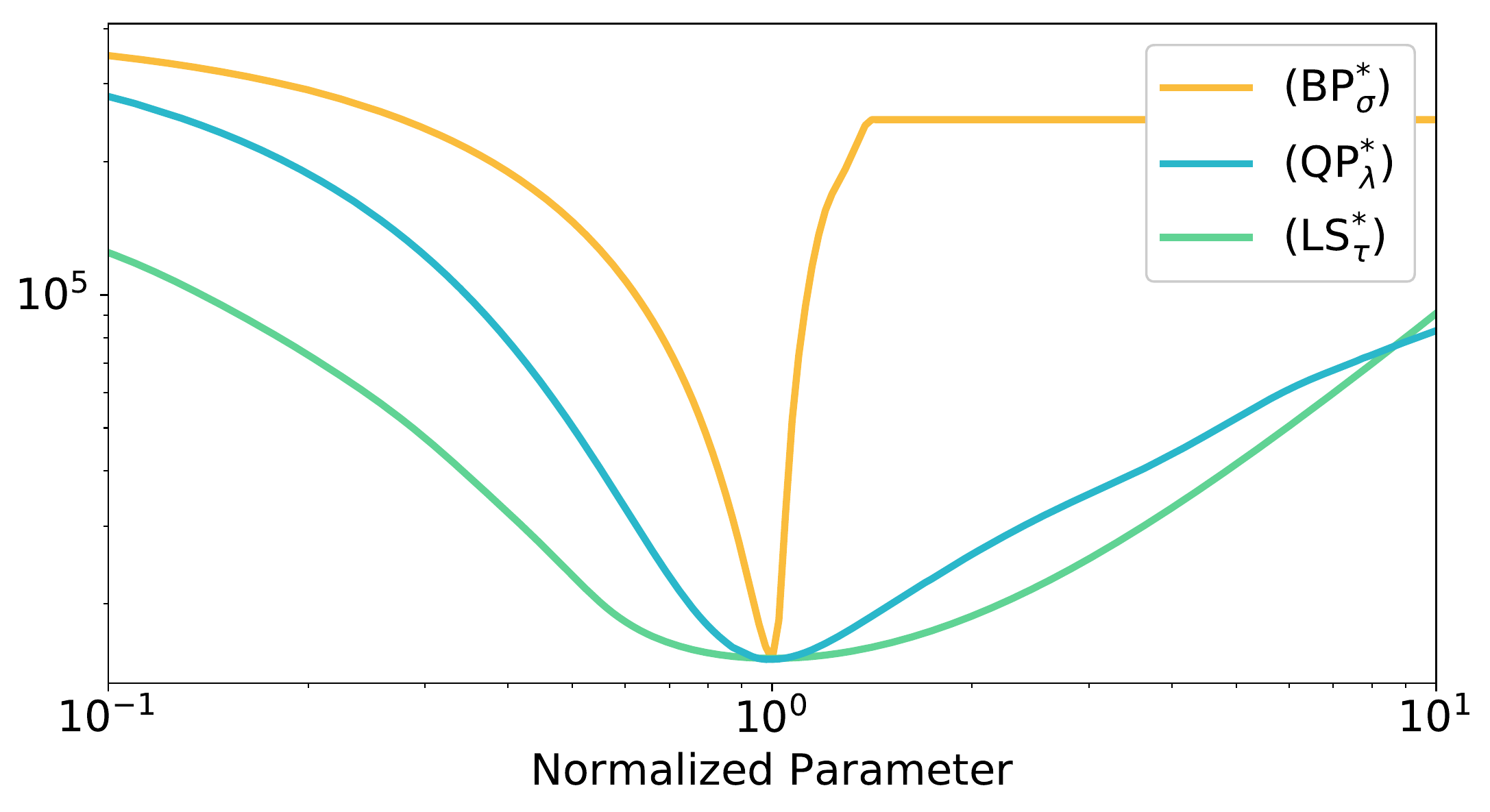}
  
\caption{\textbf{Top (left-to-right):} The underlying signal is the
  $640 \times 640$ Square Shepp-Logan phantom image; the middle image is
  corrupted by iid normally distributed noise ($\eta = 10^{-5}$); the
  right-most image is corrupted by iid normally distributed noise
  ($\eta = 0.5$). The pixel values of the original image lie in $[0,1]$; those
  of the noisy images are scaled to $[0,1]$. \textbf{Bottom:} Average loss is
  plotted with respect to the normalized parameter for {\lspd}, {\qppd} and
  {\bppd} respectively when $\eta = 10^{-5}$ (left) and $\eta = 0.5$
  (right). The associated parameters are
  $(s, N, k, n) = (5188, 409618, 25, 501)$, implying relative sparsity of
  $1.27\%$. Plotted lines are smoothed approximations of loss realization data
  using multiquadric RBFs. }
  \label{fig:square-shepp-logan-wavelet-denoising-example}
\end{figure}

\begin{figure}[h]
  \centering
  
\includegraphics[width=.8\textwidth]{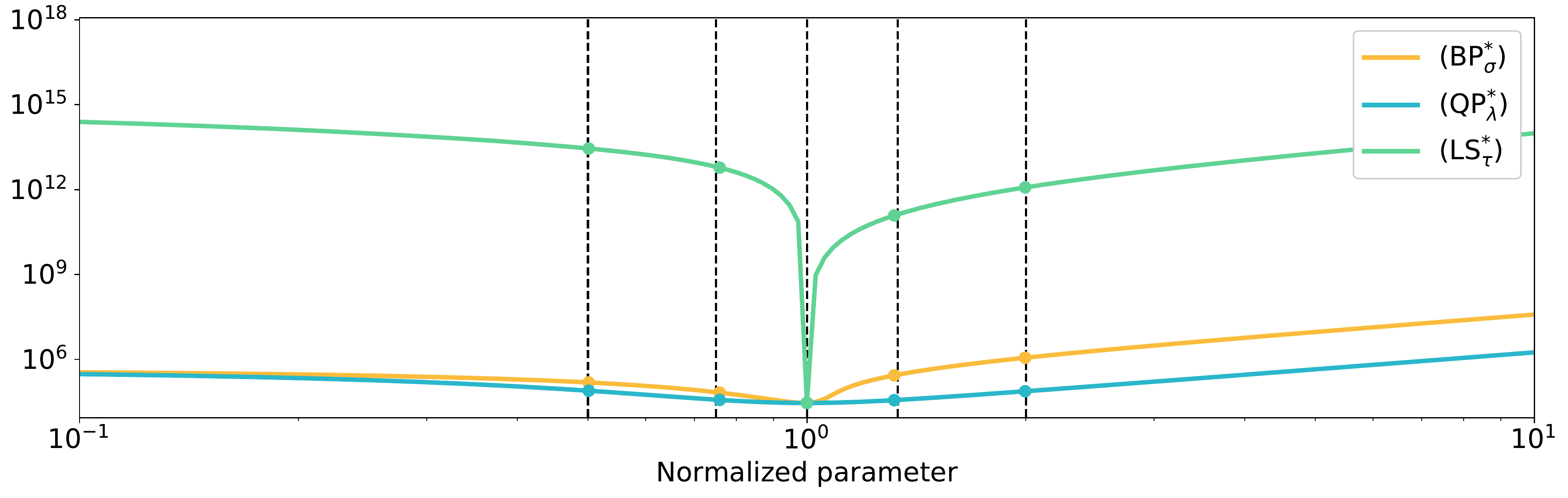}\qquad
  
   \includegraphics[width=.8\textwidth]{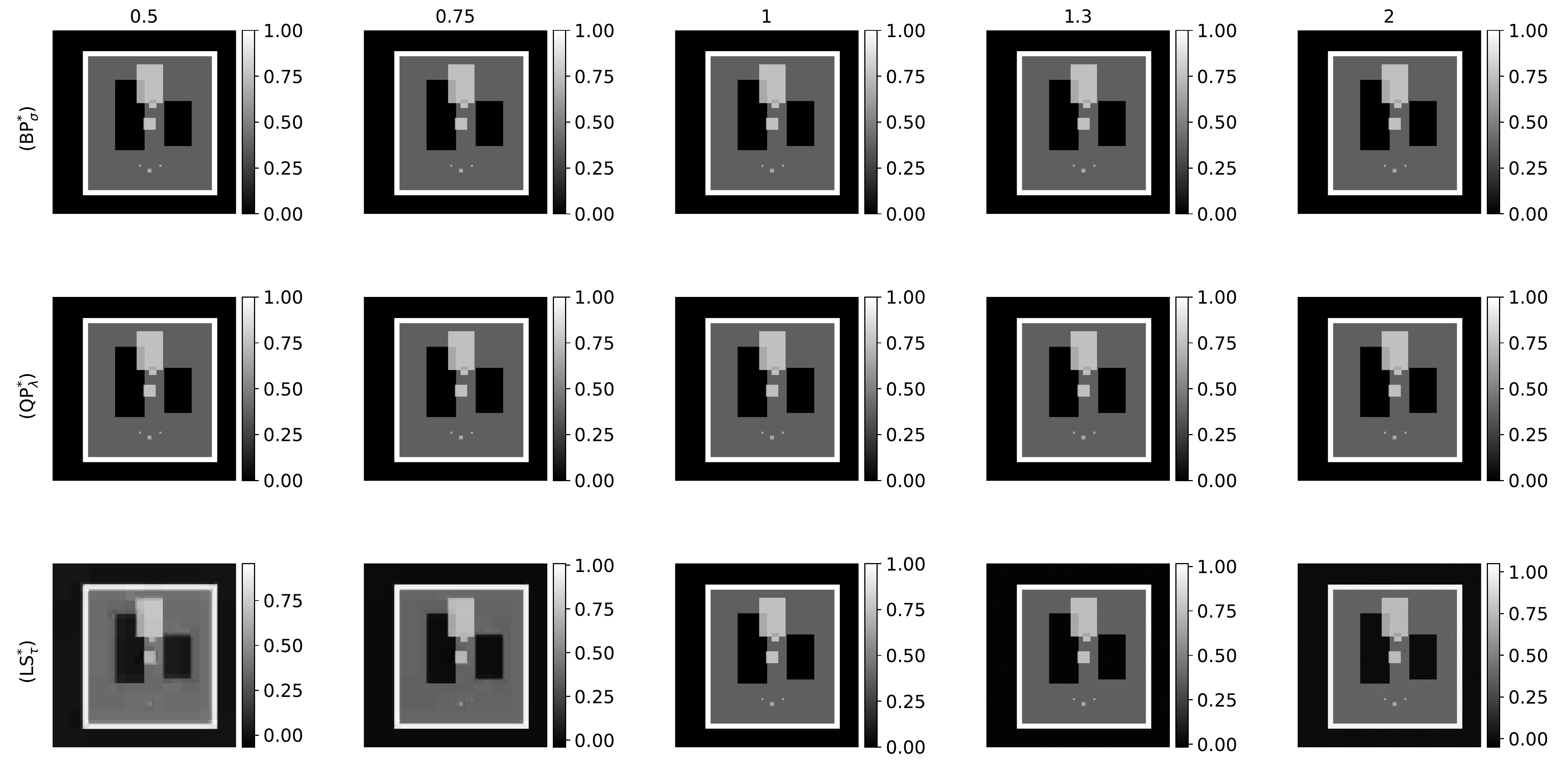}\qquad
  
   \includegraphics[width=.8\textwidth]{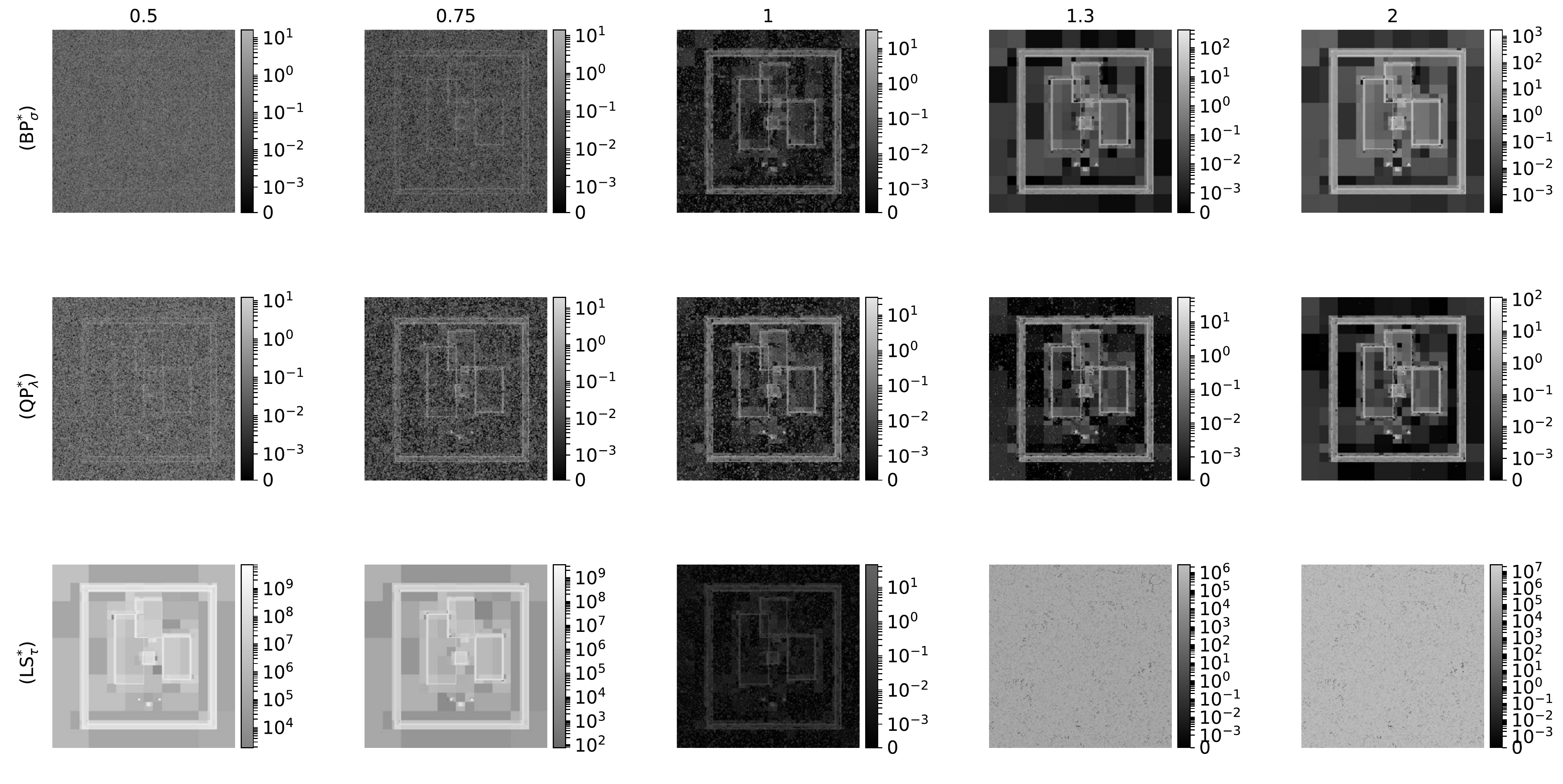}
  
   \caption{Wavelet space denoising of the square Shepp-Logan phantom for
     different values of the normalized parameter when $\eta =
     10^{-5}$. \textbf{Top:} The sections of the average-loss surface for which
     estimator recovery will be visualized are depicted by the dots which lie
     nearly on the blacked dotted lines, themselves located at
     $\rho = 0.5, 0.75, 1, 4/3, 2$. \textbf{Middle:} This group of fifteen
     plots represents a program's solution for a particular value of the
     normalized parameter, arranged in a grid. Image pixel values are not
     scaled to $[0,1]$; their range is given by the associated colour
     bar. \textbf{Bottom:} This group of fifteen plots depicts pixel-wise nnse
     for each (program, normalized parameter) pairing. In both the middle and
     bottom groups, the program is denoted along the left-hand side, while the
     normalized parameter value is denoted along the top row of each group.}
  \label{fig:sslp-grid-plots-1}
\end{figure}

\begin{figure}[h]
  \centering
 \includegraphics[width=.8\textwidth]{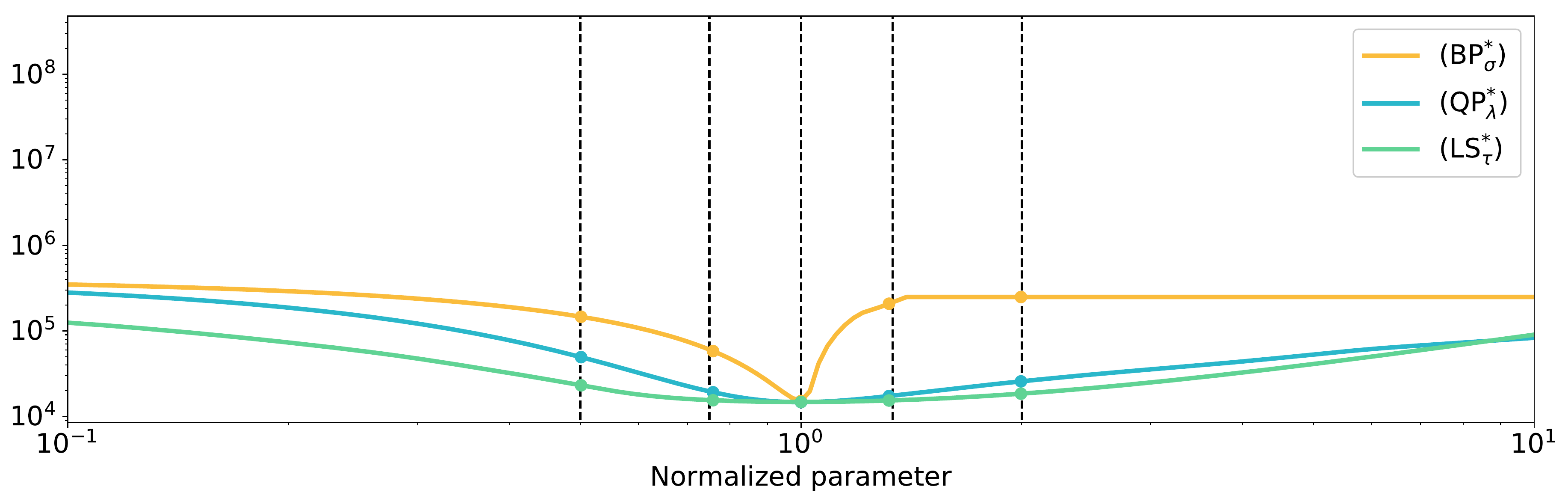}

 \includegraphics[width=.8\textwidth]{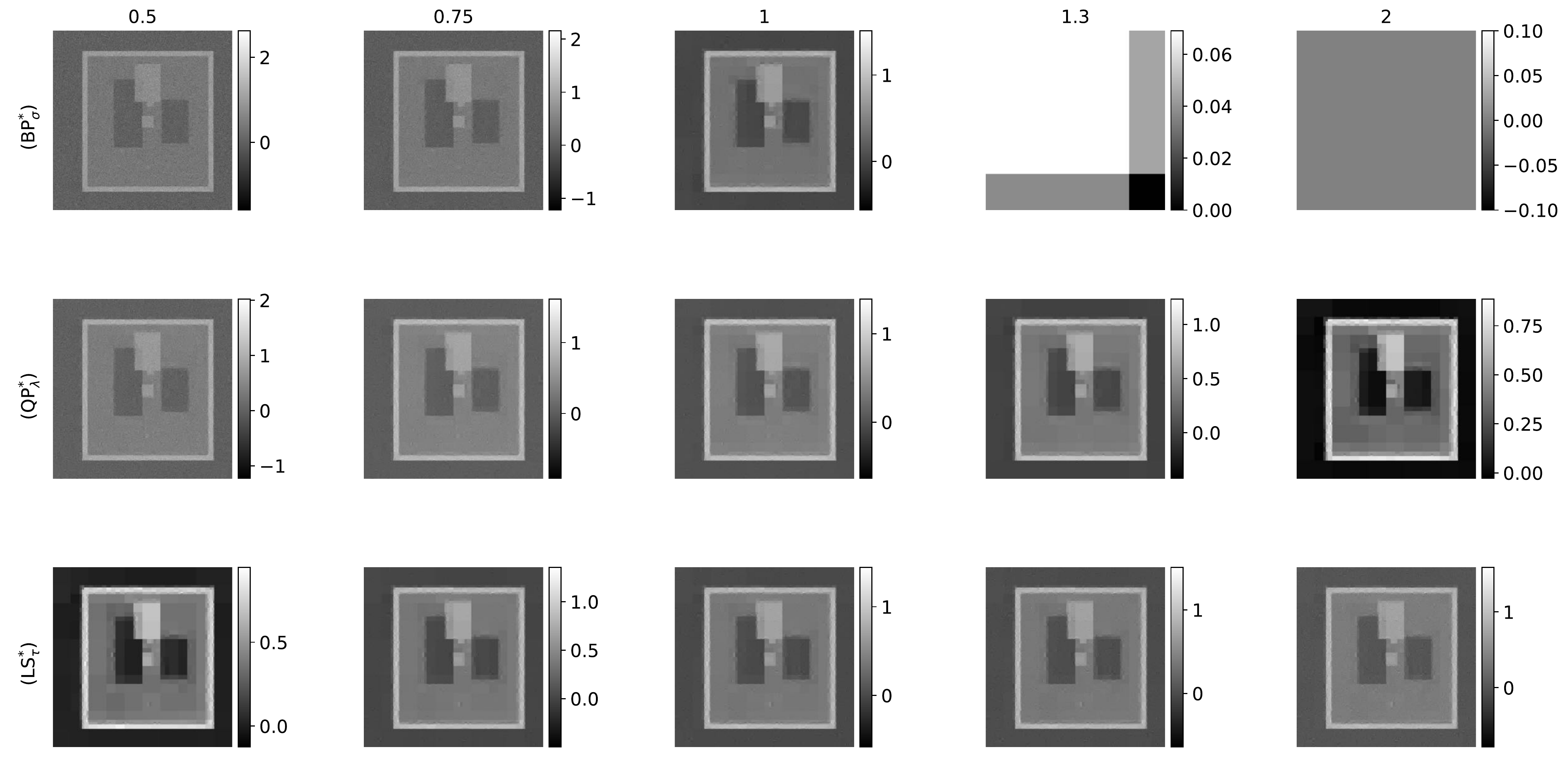}
   \includegraphics[width=.8\textwidth]{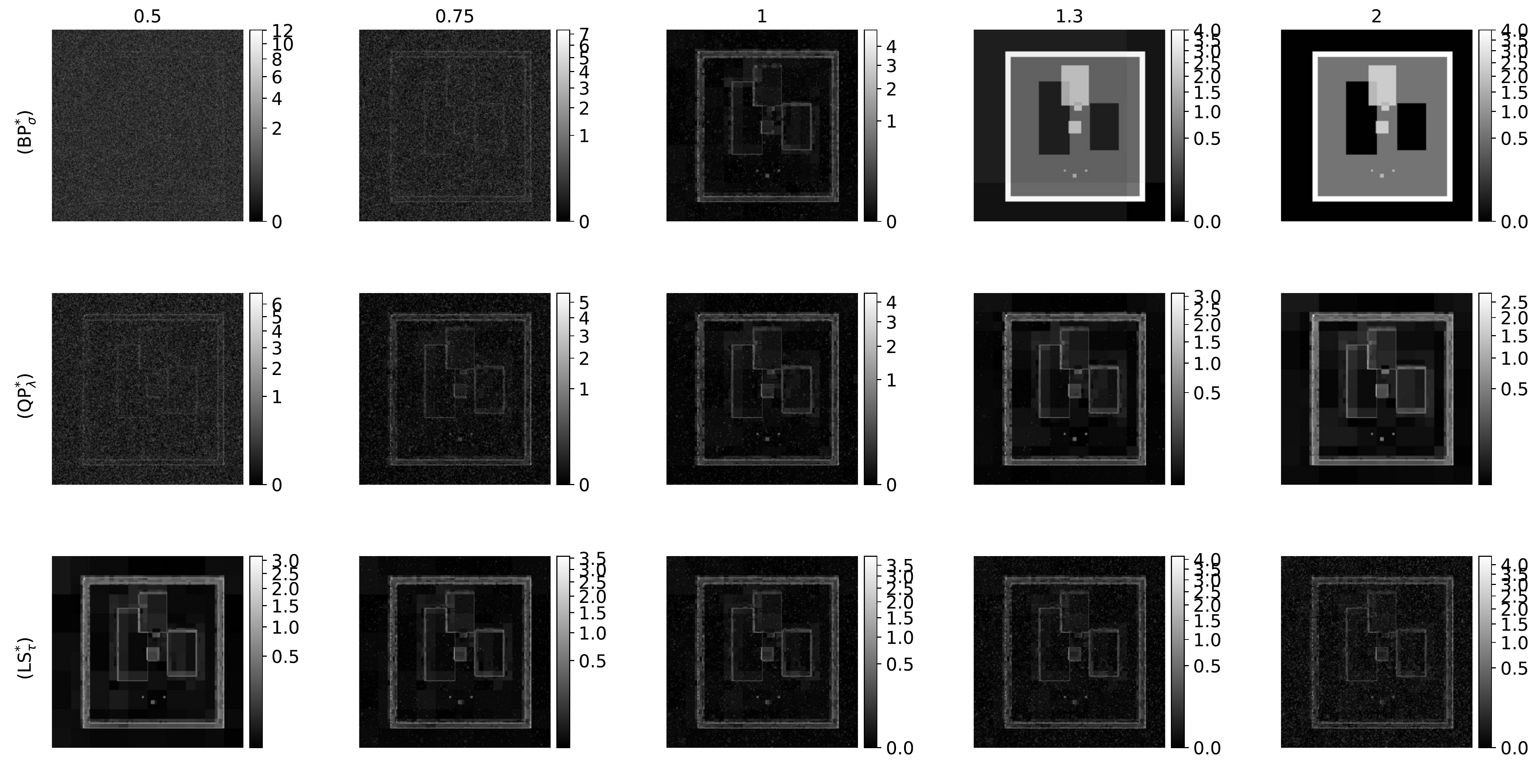}

   \caption{Wavelet space denoising of the square Shepp-Logan phantom for
     different values of the normalized parameter when $\eta =
     0.5$. \textbf{Top:} The sections of the average-loss surface for which
     estimator recovery will be visualized are depicted by the dots which lie
     nearly on the blacked dotted lines, themselves located at
     $\rho = 0.5, 0.75, 1, 4/3, 2$. \textbf{Middle:} This group of fifteen
     plots represents a program's solution for a particular value of the
     normalized parameter, arranged in a grid. Image pixel values are not
     scaled to $[0,1]$; their range is given by the associated colour
     bar. \textbf{Bottom:} This group of fifteen plots depicts pixel-wise nnse
     for each (program, normalized parameter) pairing. In both the middle and
     bottom groups, the program is denoted along the left-hand side, while the
     normalized parameter value is denoted along the top row of each group.}
  \label{fig:sslp-grid-plots-2}
\end{figure}

\subsubsection{\textsc{Lasso} Example}
\label{sec:lasso-example}

This section includes a realistic example comparing parameter instability of
\textsc{Lasso} programs in the very sparse regime, both in the low noise regime
and when the noise is relatively large. Specifically, we assume the model
\begin{align*}
  y = Ax_{0} + z, \qquad x_{0} = \mathcal{W}(I)
\end{align*}
where $\mathcal{W}(I)$ connotes the 2D Haar wavelet transform of $I$, the
$80\times 80$ square Shepp-Logan phantom. This image size was reduced from that
of section \ref{sec:realistic-sslp-wdn}, because using the full image for the
examples in this section would have been computationally prohibitive. The
measurement matrix $A \in \reals^{m\times N}$ has entries
$A_{ij} \iid Z / \sqrt m$ where $Z \sim \mathcal{N}(0,1)$. The parameters for
the problem are $(N, s, m) = (6418, 416, 3110)$, implying a sparsity ratio of
$6.48\%$ in the Haar wavelet domain, and a measurement matrix aspect ratio of
$48.46\%$ with $m \gtrsim s \log (N/s)$. The wavelet coefficients $x_{0}$ are
recovered according to {\ls}, {\bp} and {\qp} where $K = B_{1}^{N}$.

\begin{figure}[t]
  \centering
  \includegraphics[width=.47\textwidth]{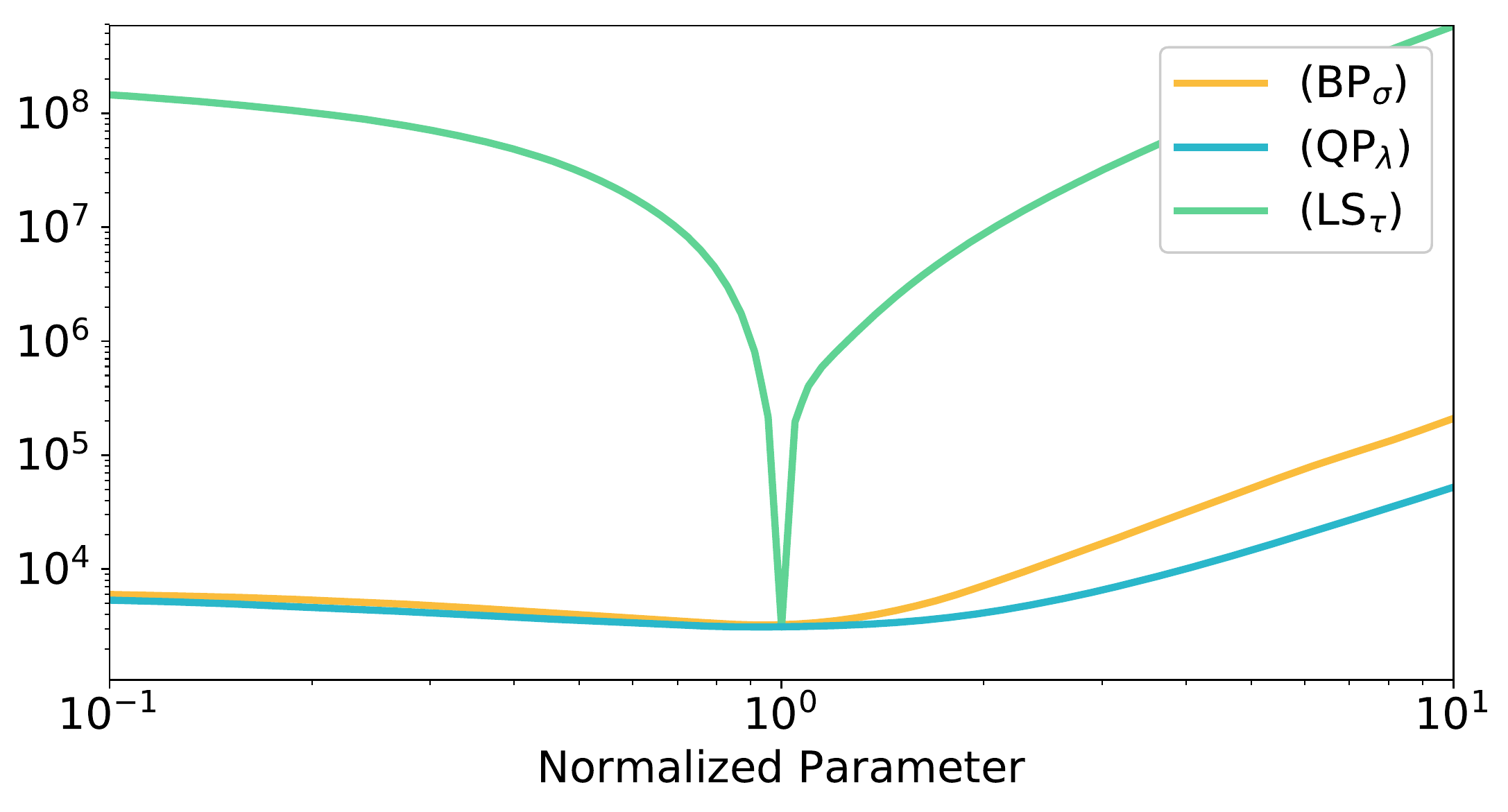}\qquad
\includegraphics[width=.47\textwidth]{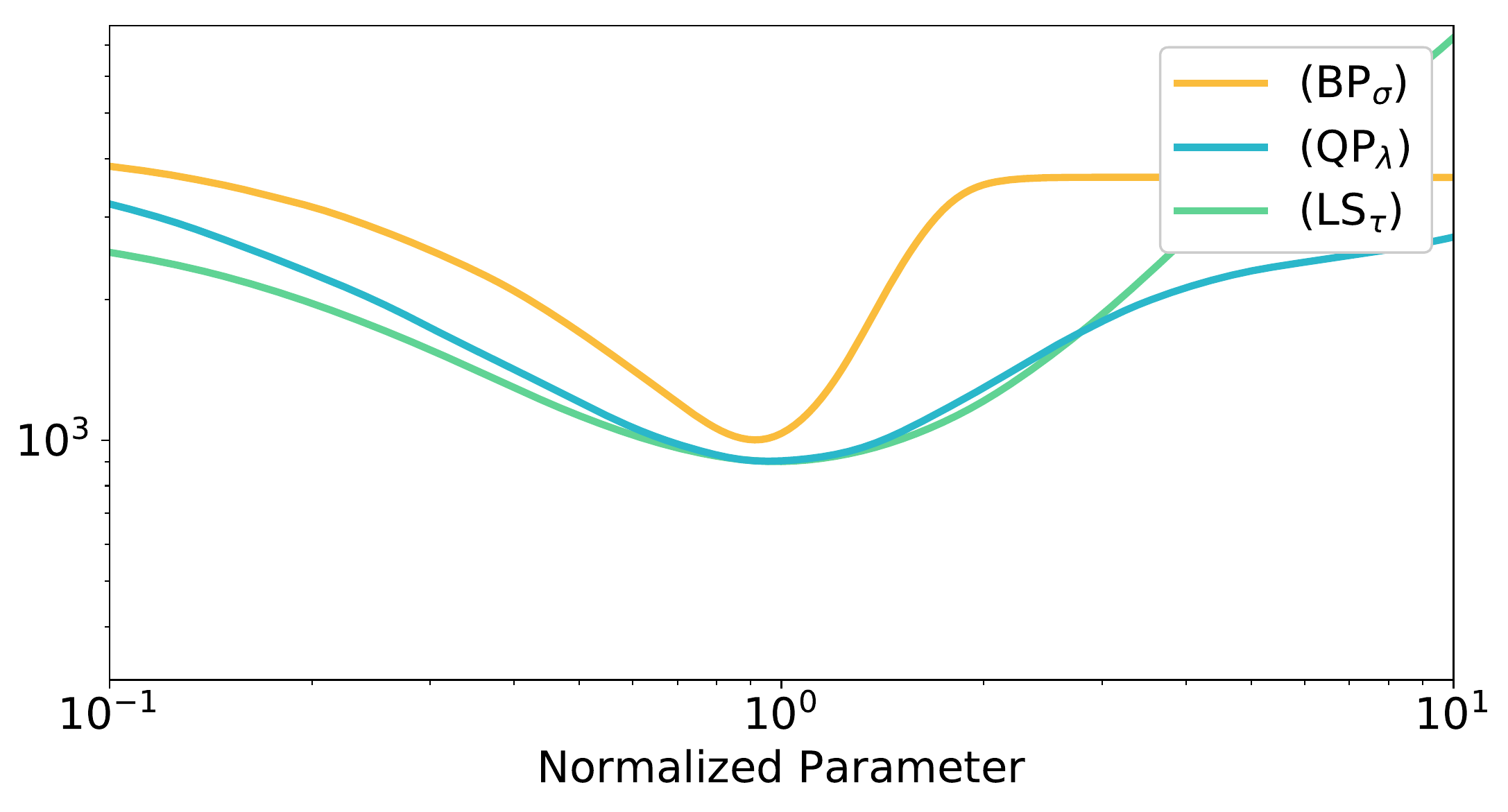}

\includegraphics[width=.47\textwidth]{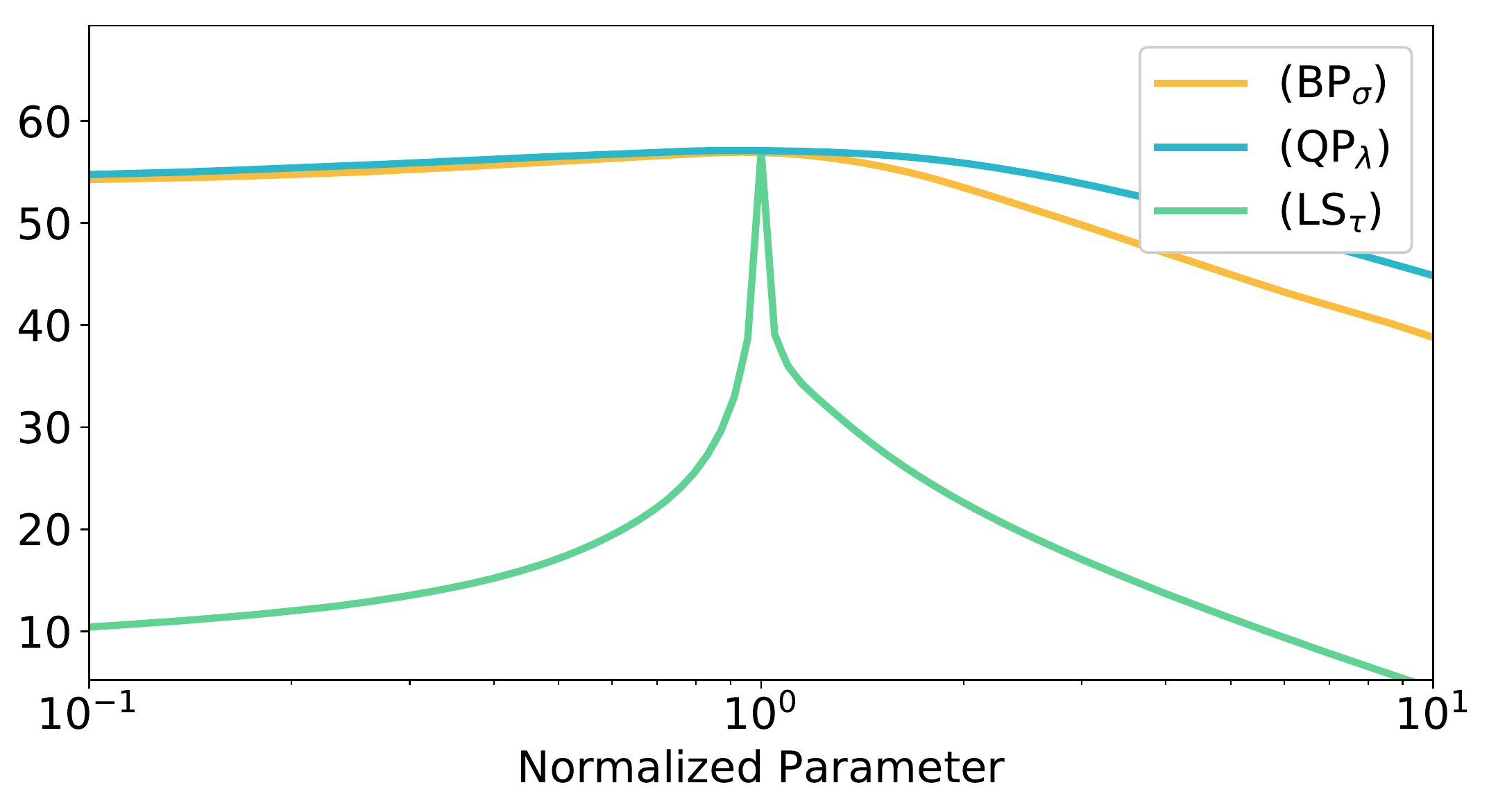}\qquad
\includegraphics[width=.47\textwidth]{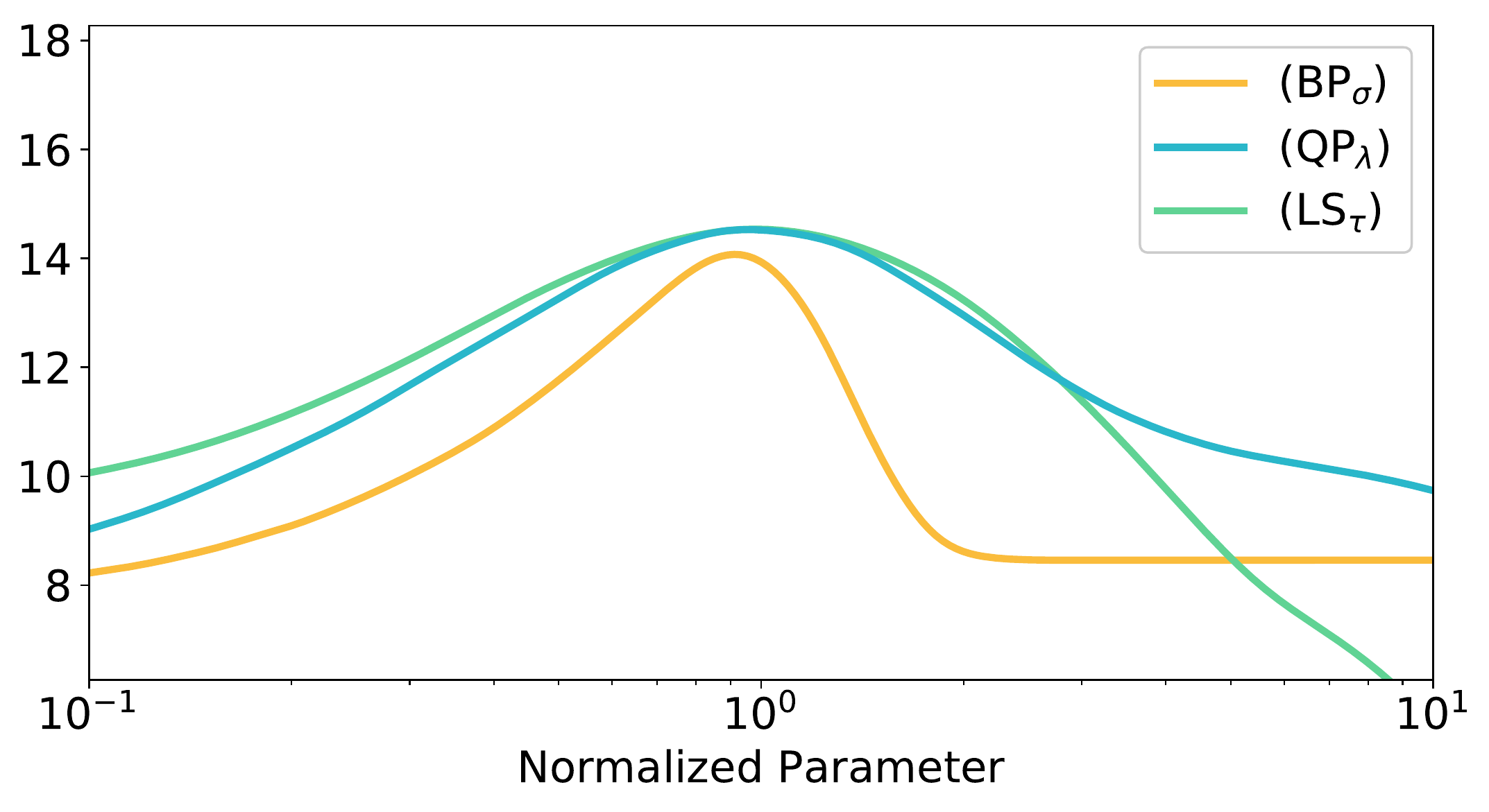}

\caption{Average loss (top) and average psnr (bottom) \emph{vs.} normalized
  parameter for {\ls}, {\qp} and {\bp} respectively when
  $\eta = 2\cdot 10^{-3}$ (left) and $\eta = 0.5$ (right). Associated
  parameters: $(s, N, m, k, n) = (416, 6418, 3110, 25, 301)$; relative sparsity
  $6.48\%$, aspect ratio $48.46\%$. Plotted lines approximate average of
  realization data using multiquadric RBFs.}
  \label{fig:average-loss-lasso-sslp}
\end{figure}

Given two signals $x^{*}, x_{0}$ define the peak signal-to-noise ratio (psnr) by
\begin{align}
  \label{eq:psnr}
  \mathrm{psnr}(x^{*}, x) %
  := 10 \log_{10}\Big( \frac{\max_{i \in [N]} x_{i}^{2}}{%
  \mathrm{mse}(x^{*}, x_{0})}\Big), %
  \qquad %
  \mathrm{mse}(x^{*}, x_{0}) %
  := \frac{1}{N}\sum_{i = 1}^{N} (x^{*}_{i} - x_{i})^{2}.
\end{align}
As with defining loss $L(\rho; x_{0}, N, \eta \hat z_{ij})$, by abuse of
notation we define psnr as a function of the normalized parameter $\rho$,
$\mathrm{psnr}(\rho) := \mathrm{psnr}(\rho; x_{0}, N, \eta \hat z_{ij}) :=
\mathrm{psnr}(x^{*}(\varrho), x_{0})$.

In Figure \ref{fig:average-loss-lasso-sslp}, we compute average loss and average
psnr as a function of the normalized parameter, where average loss is measured
using nnse in the image domain, as in section \ref{sec:realistic-sslp-wdn} and
average psnr is defined as in \eqref{eq:psnr}. The data was simulated for
$k = 25$ realizations of noise for both $\eta = 2\cdot 10^{-3}$ (left) and
$\eta = 0.5$ (right). Each curve visualized is a radial basis function
approximation to the true average curve, obtained from the non-uniformly spaced
point cloud of realization data. Specifically, the loss and psnr for each
realization was computed on a logarithmically spaced grid of $n = 301$ points
about the optimal normalized parameter. For both average psnr and average loss,
the resultant point cloud of $7\,525$ points for each program was used as input
for a multiquadric RBF approximation with parameters
$(\varepsilon_{\mathrm{rbf}}, \mu_{\mathrm{rbf}}, n_{\mathrm{rbf}}) = (10^{-1},
10, 301)$ (except for {\ls} when $\eta = 2\cdot 10^{-3}$, for which the
parameters were
$(\varepsilon_{\mathrm{rbf}}, \mu_{\mathrm{rbf}}, n_{\mathrm{rbf}}) = (10^{-3},
20^{-1}, 301)$, selected so as to properly resolve the cusp about $\rho = 1$)
\cite{scipy}.

About the optimal choice of normalized parameter, $\rho = 1$, an approximate
$1.51\cdot 10^{5}$ fold difference in average loss results from a $2.73\%$
average perturbation of the normalized parameter for {\ls} in the low-noise
regime ($\eta = 2\cdot 10^{-3}$). In contrast, the error difference is no more
than $10^{3}$ for the other two programs. In particular, {\ls} undergoes an
approximate $30\,\mathrm{dB}$ drop in psnr for this small variation. In the
very sparse regime with $\eta = 0.5$ (right), one observes that {\bp} is the
least stable when compared with the other two programs. If $\sigma$ is $10\%$ larger than the optimal choice, the psnr is approximately halved. Moreover,
its best average loss is observed to be strictly greater than that for either
of the other two programs. This observation mirrors the numerics for {\bppd} in
section \ref{sec:realistic-sslp-wdn} and is consistent with the theoretical results
of section \ref{sec:param-inst-bppd}.

Finally, we observe that the numerics for {\qp} exhibit parameter stability,
though the data regime is low-noise and very sparse
(Figure \ref{fig:average-loss-lasso-sslp}, left). We claim this behaviour is not
contrary to \ref{ssec:left-sided-parm-inst}, and use the following
intuition from {\qppd} to elucidate. When $\lambda < \bar \lambda$,
Theorem \ref{thm:qppd-instability} demonstrates parameter instability for {\qppd}
behaving as $R^{\sharp}(\lambda; s, N) \gtrsim N^{\varepsilon}$ where
$\lambda = (1- \varepsilon) \bar \lambda$. This term dominates
$R^{\sharp}(\lambda; s, N)$ only for relatively high dimensional problems
(\emph{cf.}  section \ref{sec:qppd-numerics}), roughly requiring that
$N^{\varepsilon} \gtrsim s\log(N/s)$. By this heuristic, in the present
example, the instability does not dominate for $\varepsilon \leq C\cdot 0.80$
where $C > 1$ is a constant. In particular, the theoretical results of
\ref{ssec:left-sided-parm-inst} show that in high dimensions, if the signal
is very sparse and the noise level is small, then {\qppd} is an appropriate
choice only if $\lambda > \bar \lambda$, while our numerics support that
$\lambda < \bar \lambda$ remains a safe choice for {\qppd} (\emph{cf.}
section \ref{sec:realistic-sslp-wdn}) and {\qp} in relatively lower dimensional
problems. This observation is particularly advantageous given that parameter
instability may still be expected of both {\ls} and {\bp}.

In Figure \ref{fig:lasso-sslp-grid-plots-1} and
Figure \ref{fig:lasso-sslp-grid-plots-2}, a grid of plots similar to those of
section \ref{sec:realistic-sslp-wdn} were generated to visualize the solution to
each program as a function of the normalized parameter
$\rho \in \{0.5, 0.75, 1, 4/3, 2\}$. As before, the topmost image in each
figure is a reference plot to depict the locations on the curve to which the
displayed images correspond; however, these plots now depict psnr as a function
of the normalized parameter. The images displayed below the reference plot do
not correspond exactly to the quoted normalized parameters, but to a closest
approximation obtained from a logarithmically spaced grid of $n = 301$ points
centered about the optimal parameter. These true normalized parameter values
are visualized as large coloured dots on the reference plot; that they
approximate well the quoted normalized parameter values for $\rho$ is verified
by their proximity to the black dotted lines in the reference plot. Showing the
estimator corresponding to a normalized parameter has the twofold purpose of
visualizing the pathology of each program as its parameter varies, and
demonstrating when a program is relatively unstable in a given regime.

The images in Figure \ref{fig:lasso-sslp-grid-plots-1} portray the setting of
low-noise regime, with $\eta = 2\cdot 10^{-3}$. Indeed, the reference plot
displays a cusp for the {\ls} loss that was characteristic of {\lspd} in the
low-noise regime. Moreover, one observes similar pathologies in both the
recovered image for {\ls} as well as the point-wise nnse. The image recovered
using {\ls} is blurry for $\rho < 1$, which is indicative of incompletely
recovered wavelet coefficients. For $\rho > 1$, the noise was not suppressed on
the off-support of the wavelet coefficients yielding a noisy pixelated image in
both the recovered image and the error image. This behaviour is observed to a
significantly lesser degree for the corresponding {\bp} and {\qp}
images. 

The images in Figure \ref{fig:lasso-sslp-grid-plots-2} portray the very sparse
regime where $\eta = 0.5$. As is consistent with the asymptotics in
section \ref{sec:param-inst-bppd} for {\bppd}, it is difficult to visualize {\bp}
parameter instability for relatively low dimensional problems. We suspect that
this instability would have been markedly more apparent were it possible to run
these simulations for the full $640 \times 640$ square Shepp-Logan phantom
image. Nevertheless, one observes that {\bp} is the least stable of the three
programs for $\rho > 1$. In particular, the visualized estimator and point-wise
nnse both depict catastrophic failure of {\bp} for $\rho = 2$.

Lastly, it is readily observed that the curves computed from single
realizations resemble very closely those computed to approximate the average of
several realizations. Notably, the nnse curves in
Figure \ref{fig:lasso-sslp-grid-plots-1} and Figure \ref{fig:lasso-sslp-grid-plots-2}
strongly resemble the corresponding average nnse curves in
Figure \ref{fig:average-loss-lasso-sslp}.

% error grid plots...
% 	bp, eta: 1e-05:, vmax: 1.87e+04
% 	    max_pw_nnse: 1.87e+04, 3.68e+03, 122, 223, 267
% 	    min_pw_nnse: 0.000133, 8.33e-06, 4.14e-07, 2.86e-08, 7.68e-07
% 	qp, eta: 1e-05:, vmax: 216
% 	    max_pw_nnse: 200, 200, 200, 205, 216
% 	    min_pw_nnse: 3.64e-08, 3.64e-08, 3.64e-08, 5.53e-07, 2.39e-06
% 	ls, eta: 1e-05:, vmax: 4.43e+09
% 	    max_pw_nnse: 4.43e+09, 1.66e+09, 2.17e+05, 1.98e+08, 5.85e+08
% 	    min_pw_nnse: 4.69e+03, 0.452, 0.000232, 0.00534, 0.177
% 	bp, eta: 0.5:, vmax: 4.82
% 	    max_pw_nnse: 4.82, 3.48, 3.34, 3.59, 4
% 	    min_pw_nnse: 6.39e-10, 1.73e-07, 1.42e-09, 3.1e-06, 0
% 	qp, eta: 0.5:, vmax: 3.59
% 	    max_pw_nnse: 3.59, 3.14, 2.96, 2.57, 2.77
% 	    min_pw_nnse: 2.05e-08, 1.04e-07, 3.4e-07, 1.07e-05, 1.63e-08
% 	ls, eta: 0.5:, vmax: 10.4
% 	    max_pw_nnse: 3.57, 4.28, 6.19, 8.69, 10.4
% 	    min_pw_nnse: 3.08e-06, 1.4e-07, 3.84e-11, 1.65e-08, 8e-07

\begin{figure}[h]
  \centering
  \includegraphics[width=.8\textwidth]{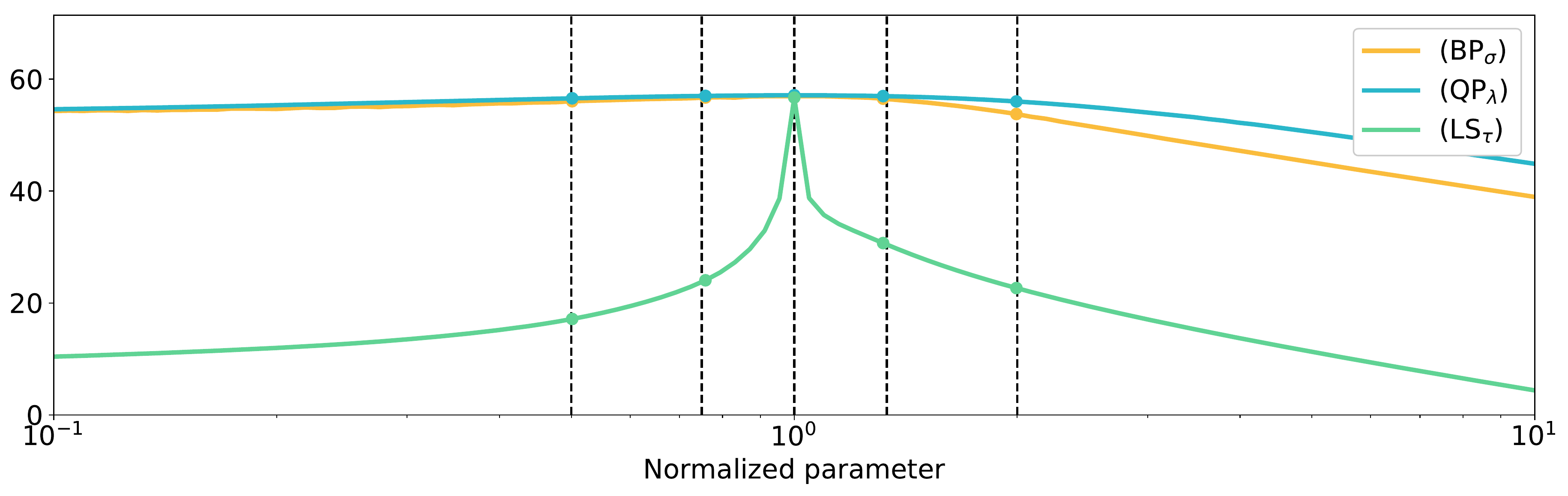}
  
  \includegraphics[width=.8\textwidth]{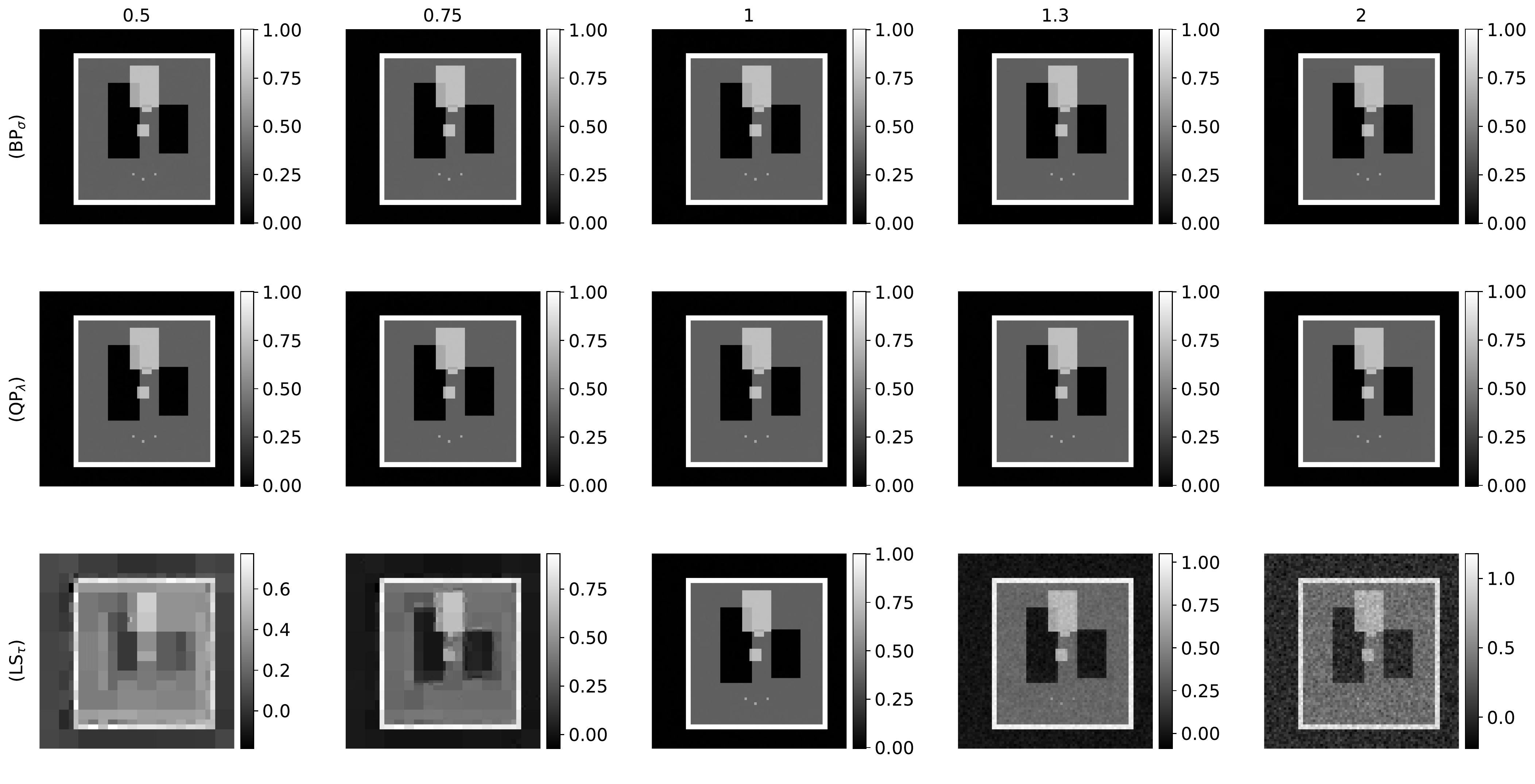}
  \includegraphics[width=.8\textwidth]{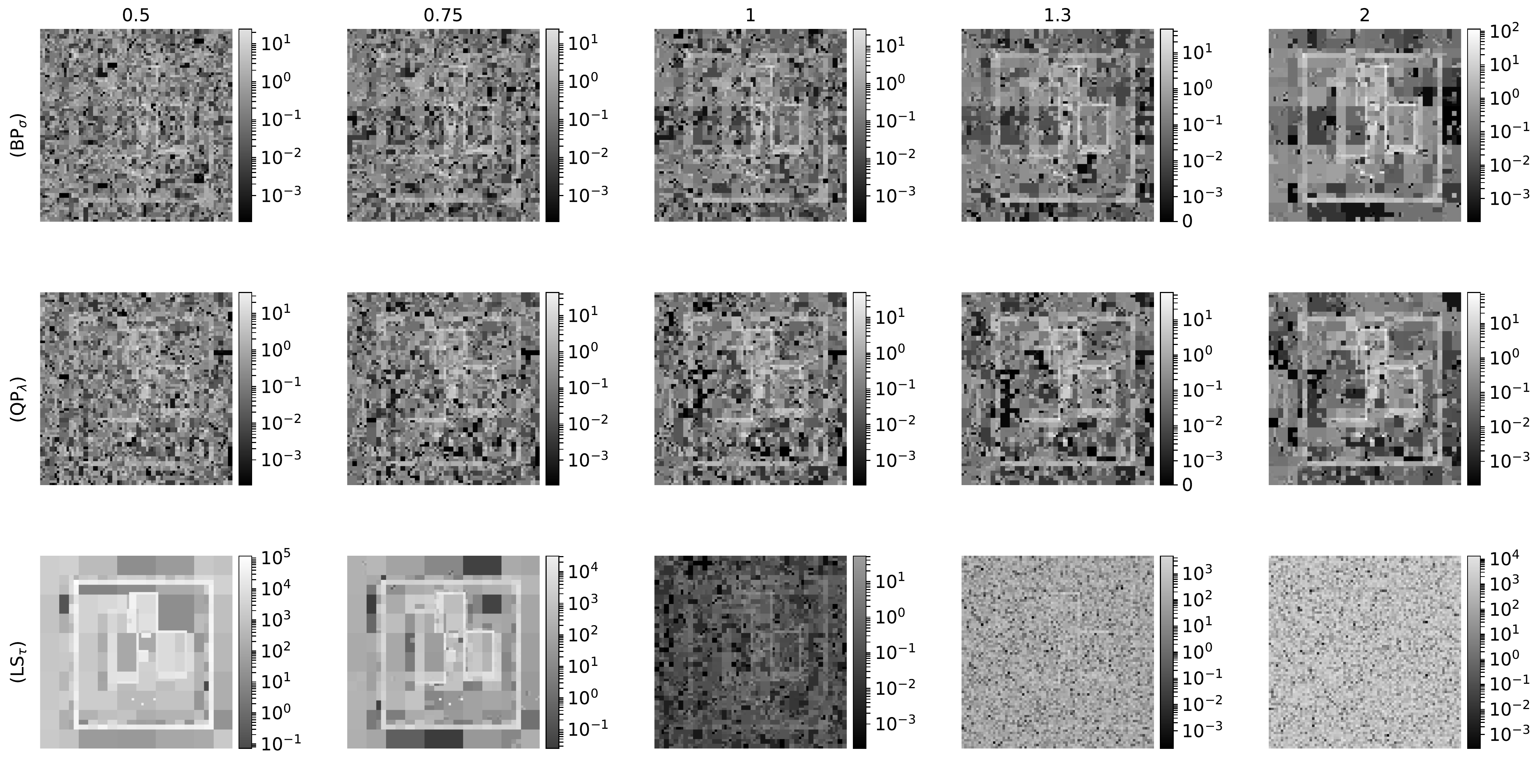}
  
  \caption{Wavelet space compressed sensing problem with the square Shepp-Logan
    phantom for different values of the normalized parameter when
    $(s, N, m, \eta) = (416, 6418, 3110, 2\cdot 10^{-3})$. \textbf{Top:} The
    sections of the psnr surface for which estimator recovery will be
    visualized are depicted by the dots which lie nearly on the black dotted
    lines, themselves located at $\rho = 0.5, 0.75, 1, 4/3,
    2$. \textbf{Middle:} This first $3 \times 5$ group of plots shows each
    program's solution for a particular value of the normalized
    parameter. Pixel values in an image are associated to its colour
    bar. \textbf{Bottom:} This $3\times 5$ group of plots depicts pixel-wise
    nnse for each (program, parameter) pairing. The program is denoted along
    the left-hand side, while the normalized parameter value is denoted along
    top row.}
  \label{fig:lasso-sslp-grid-plots-1}
\end{figure}

\begin{figure}[h]
  \centering
  \includegraphics[width=.8\textwidth]{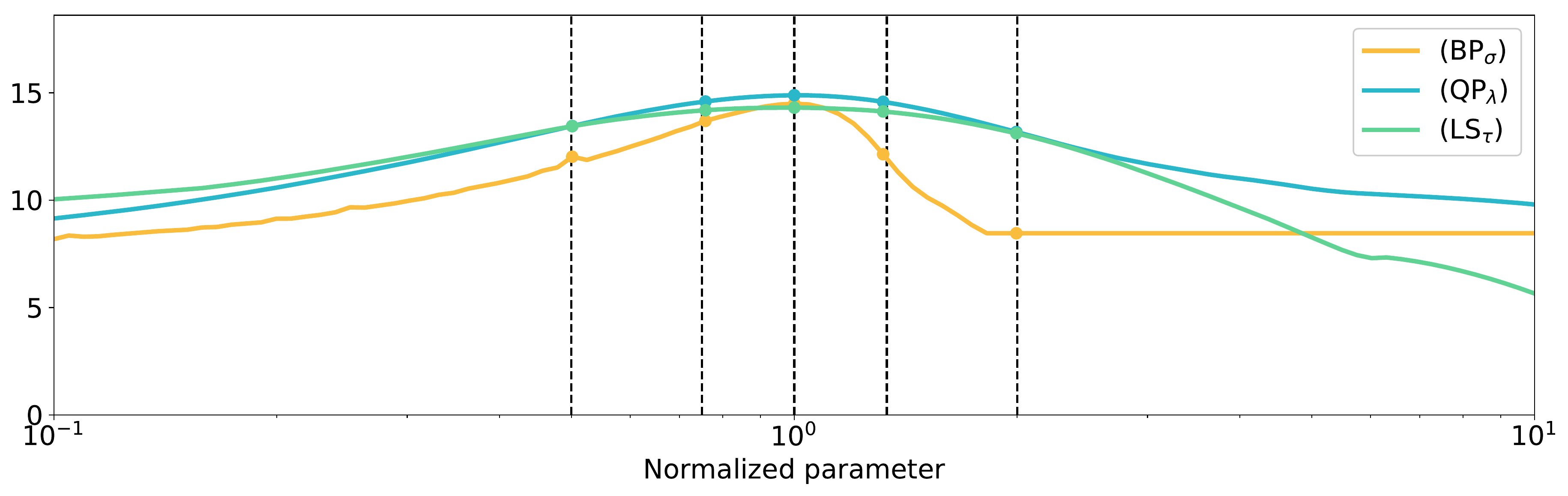}

  \includegraphics[width=.8\textwidth]{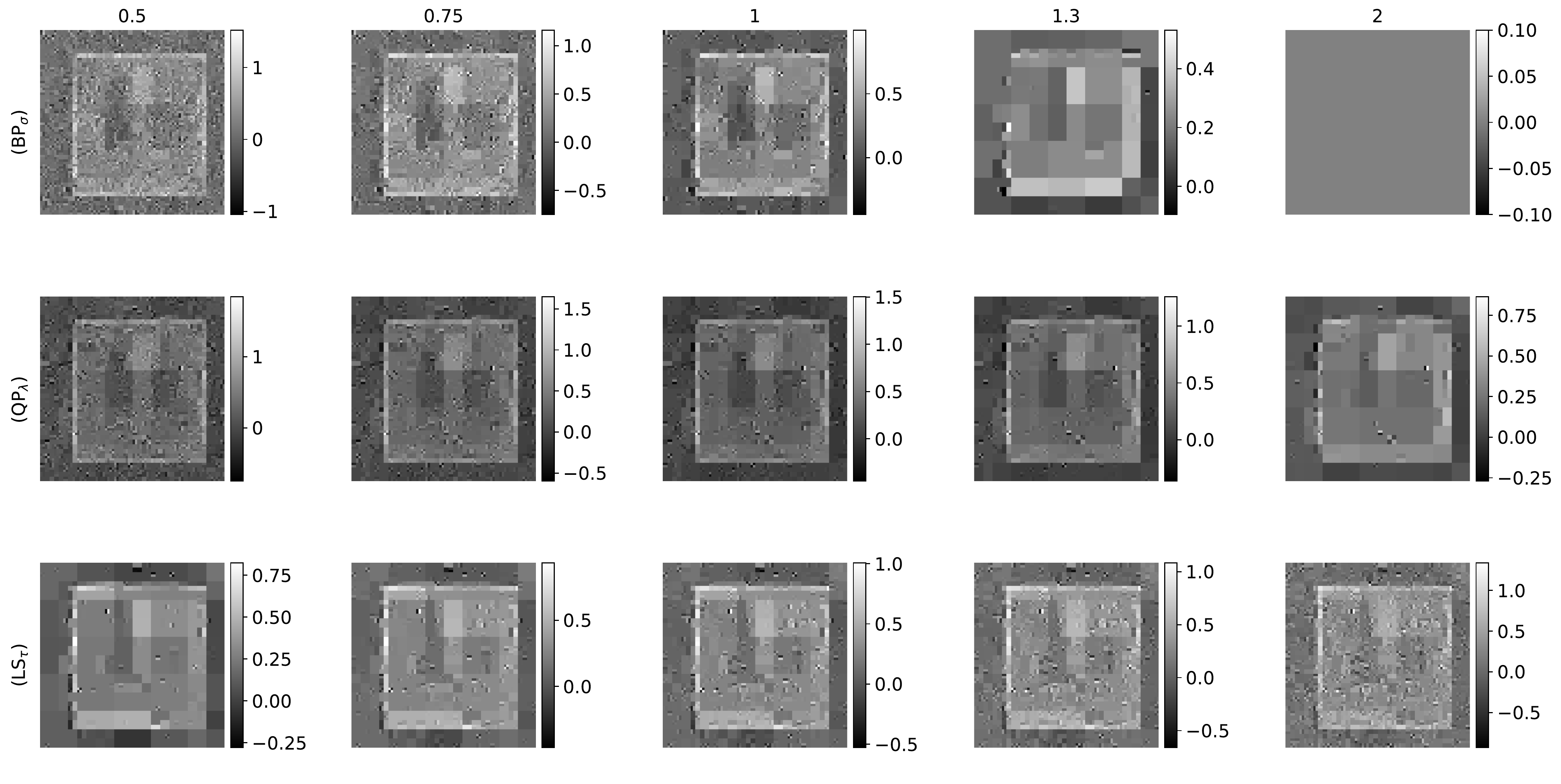}
  \includegraphics[width=.8\textwidth]{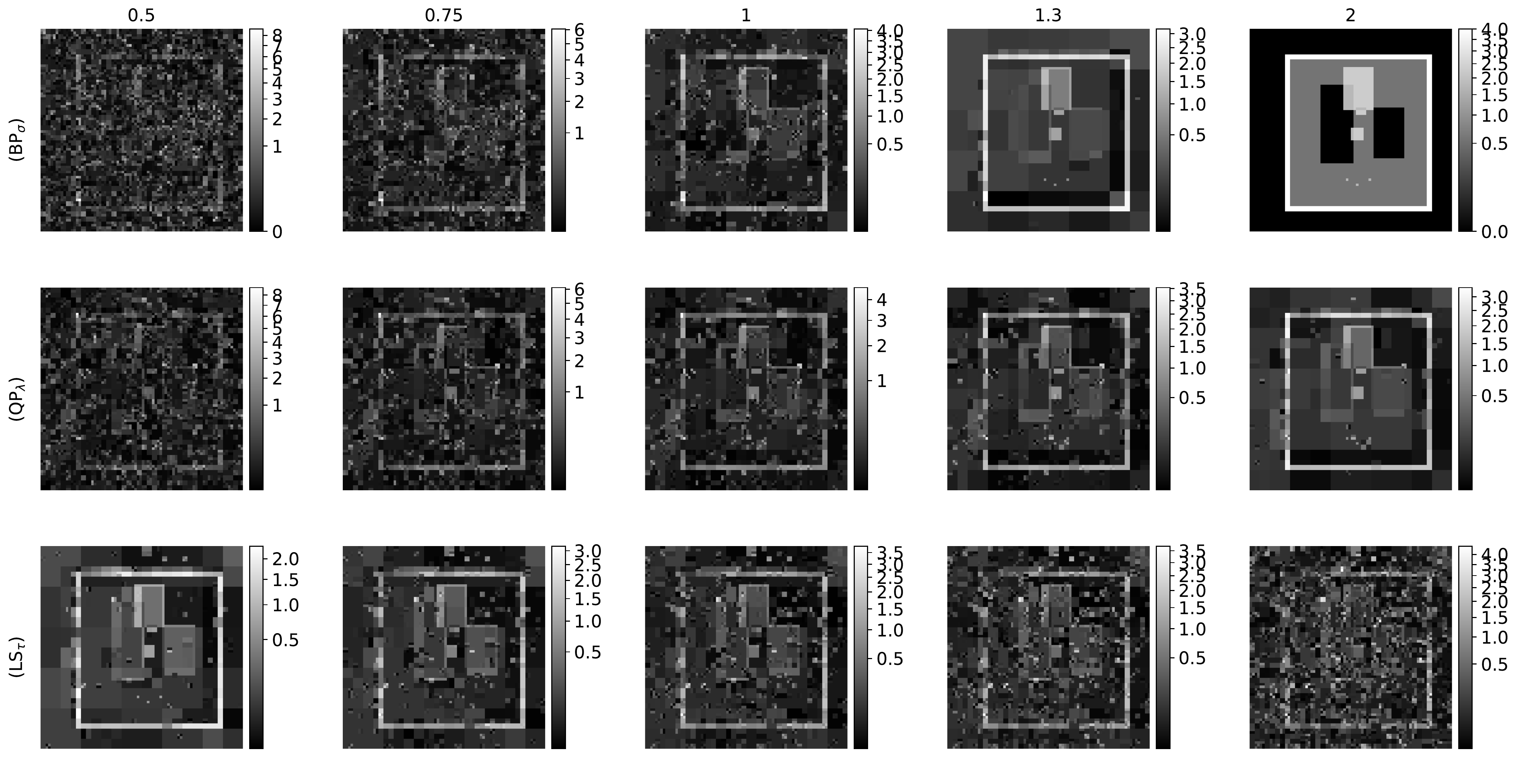}
  
  \caption{Wavelet space compressed sensing problem with the square Shepp-Logan
    phantom for different values of the normalized parameter when
    $(s, N, m, \eta) = (416, 6418, 3110, 0.5)$. \textbf{Top:} The sections of
    the psnr surface for which estimator recovery will be visualized are
    depicted by the dots which lie nearly on the black dotted lines, themselves
    located at $\rho = 0.5, 0.75, 1, 4/3, 2$. \textbf{Middle:} This first
    $3 \times 5$ group of plots shows each program's solution for a particular
    value of the normalized parameter. Pixel values in an image are associated
    to its colour bar. \textbf{Bottom:} This $3\times 5$ group of plots depicts
    pixel-wise nnse for each (program, parameter) pairing. The program is
    denoted along the left-hand side, while the normalized parameter value is
    denoted along top row.}

  \label{fig:lasso-sslp-grid-plots-2}
\end{figure}

\clearpage
\section{Proofs}
\label{sec:proofs}

\subsection{Proof of worst-case risk equivalence}
\label{sec:proof-worst-case-risk-equivalence}

\begin{lemma}[Increasing risk]
  \label{lem:increasing-risk}
  Fix $x_{0} \in \Sigma_{s}^{N}$, $\|x_{0}\|_{1} = 1$.  Then
  $\hat R(\tau; \tau x_{0}, N, \eta)$ is an increasing function of
  $\tau \geq 0$.
\end{lemma}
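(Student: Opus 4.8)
The plan is to exploit the positive homogeneity of Euclidean projection onto scaled convex sets in order to collapse the two-parameter quantity $\hat R(\tau; \tau x_{0}, N, \eta)$ into a one-parameter statement about dependence on the \emph{noise level}, and then to deduce monotonicity in the noise level directly from the projection lemma (Lemma \ref{lem:projection-lemma}). Throughout I use that, with ground truth $\tau x_{0}$ and data $y = \tau x_{0} + \eta z$, the {\lspd} solution is the projection $\hat x(\tau) = \mathrm{P}_{\tau B_{1}^{N}}(\tau x_{0} + \eta z)$.

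First I would record a scale-invariance identity. Since $\mathrm{P}_{\tau K}(v) = \tau\,\mathrm{P}_{K}(v/\tau)$ for $\tau > 0$, the error rescales as
\[
  \hat x(\tau) - \tau x_{0} = \tau\big[\mathrm{P}_{B_{1}^{N}}(x_{0} + (\eta/\tau) z) - x_{0}\big].
\]
Multiplying by $\eta^{-1}$ converts the prefactor $\tau/\eta$ into $(\eta/\tau)^{-1}$, so after squaring and taking expectations,
\[
  \hat R(\tau; \tau x_{0}, N, \eta) = \hat R(1; x_{0}, N, \eta/\tau).
\]
Because $\|x_{0}\|_{1} = 1$, the signal $x_{0}$ lies on the boundary of the constraint set $B_{1}^{N}$ for the right-hand program, so this is precisely the optimally tuned projection. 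Since $\tau \mapsto \eta/\tau$ is decreasing, it now suffices to prove that $\tilde\eta \mapsto \hat R(1; x_{0}, N, \tilde\eta)$ is non-increasing in $\tilde\eta > 0$.

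For the monotonicity in noise level I would translate to the origin. Writing $D := B_{1}^{N} - x_{0}$, which is closed, convex, and contains $0$ (as $\|x_{0}\|_{1} \leq 1$), the translation identity for projections gives $\mathrm{P}_{B_{1}^{N}}(x_{0} + \tilde\eta z) - x_{0} = \mathrm{P}_{D}(\tilde\eta z)$, whence $\hat R(1; x_{0}, N, \tilde\eta) = \tilde\eta^{-2}\,\E\|\mathrm{P}_{D}(\tilde\eta z)\|_{2}^{2}$. I would then argue pathwise: fix $z$, take $0 < \tilde\eta_{1} < \tilde\eta_{2}$, and set $\lambda := \tilde\eta_{2}/\tilde\eta_{1} > 1$ and $w := \tilde\eta_{2} z$. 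Combining $\mathrm{P}_{\lambda D}(w) = \lambda\,\mathrm{P}_{D}(w/\lambda)$ with Lemma \ref{lem:projection-lemma} (applied to $K = D$) yields
\[
  \|\mathrm{P}_{D}(w)\|_{2} \leq \|\mathrm{P}_{\lambda D}(w)\|_{2} = \lambda\,\|\mathrm{P}_{D}(\tilde\eta_{1} z)\|_{2},
\]
that is, $\tilde\eta_{2}^{-1}\|\mathrm{P}_{D}(\tilde\eta_{2} z)\|_{2} \leq \tilde\eta_{1}^{-1}\|\mathrm{P}_{D}(\tilde\eta_{1} z)\|_{2}$. Hence $\tilde\eta^{-1}\|\mathrm{P}_{D}(\tilde\eta z)\|_{2}$ is non-increasing pathwise; squaring and taking expectations shows $\hat R(1; x_{0}, N, \tilde\eta)$ is non-increasing in $\tilde\eta$, which combined with the scale-invariance identity gives the claim.

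The routine ingredients here (the homogeneity and translation identities for the metric projection) are elementary, and I would state them without belabouring their verification. The only real content is recognizing that the projection lemma encodes exactly the desired pathwise noise-monotonicity, once the scaled-set comparison $\|\mathrm{P}_{K}(z)\|_{2} \leq \|\mathrm{P}_{\lambda K}(z)\|_{2}$ is recast as a comparison of the scaled-\emph{input} projections $\tilde\eta^{-1}\|\mathrm{P}_{D}(\tilde\eta z)\|_{2}$; this is the step I would set up most carefully, in particular checking $0 \in D$ so that Lemma \ref{lem:projection-lemma} applies. The edge case $\tau = 0$ is trivial, since both the estimator and the error vanish. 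Finally, note that the conclusion is monotonicity in the non-strict sense, consistent with the equality cases of the projection lemma (e.g.\@ when the relevant set is a cone).
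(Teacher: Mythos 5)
Your proof is correct and follows essentially the same route as the paper's: both reduce the claim to the projection lemma (Lemma \ref{lem:projection-lemma}) applied to the translated set $K = B_{1}^{N} - x_{0} \ni 0$, the paper writing $\hat x(\tau) - \tau x_{0} = \mathrm{P}_{\tau K}(\eta z)$ directly and invoking monotonicity in $\tau$. Your intermediate reformulation in terms of the noise level $\eta/\tau$ is equivalent to this via the homogeneity identity $\mathrm{P}_{\tau K}(v) = \tau\,\mathrm{P}_{K}(v/\tau)$, so it is a cosmetic rather than substantive difference.
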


\begin{proof}[Proof of {Lemma \ref{lem:increasing-risk}}]
  Given $y(\tau) := \tau x_{0} + \eta z$ for $\eta > 0$ and
  $z \in \reals^{N}, z_{i} \iid \mathcal{N}(0,1)$, let
  $\hat x(\tau) := \hat x(\tau; y(\tau))$ solve
  \begin{align*}
    \hat x(\tau; y(\tau)) %
    := \argmin_{x} \{ \| y(\tau) - x\|_{2} : %
      \|x\|_{1} \leq \tau \}
  \end{align*}
  Let $K := B_{1}^{N} - x_{0}$, a convex set containing the origin. Using a
  standard scaling property of orthogonal projections,
  \begin{align*}
    \hat x(\tau) - x_{0}
    & = \argmin_{w} \{ \| \eta z - w \|_{2} : \|w + \tau x_{0}\|_{1} \leq \tau \} \\
    & = \mathrm{P}_{\tau K} (\eta z). %
      %= \tau \mathrm{P}_{K} (\tau^{-1} \eta z).
  \end{align*}
  Hence, it follows by Lemma \ref{lem:projection-lemma} that
  $\|\hat x(\tau) - \tau x_{0}\|_{2}$ is an increasing function of $\tau$.
\end{proof}

\begin{proposition}[Risk equivalence]
  \label{prop:risk-equivalence}
  Let $\eta, \tau > 0$ and fix $N \geq 2$. Then
  \begin{align*}
    \sup_{x \in \Sigma_{s}^{N}} %
    \hat R(\|x\|_{1}; x, N, \eta) %
    = \max_{\substack{x \in \Sigma_{s}^{N}\\\|x\|_{1}=1}} %
    \lim_{\tau \to \infty} \hat R(\tau; \tau x, N, \eta) %
    = \max_{\substack{x \in \Sigma_{s}^{N}\\\|x\|_{1}=1}} %
    \lim_{\eta \to 0} \hat R (1 ; x, N, \eta). 
  \end{align*}
\end{proposition}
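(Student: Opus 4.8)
The plan is to collapse all three expressions onto a single scale-invariant quantity attached to a normalized signal $x$ with $\|x\|_{1}=1$, and then to identify that quantity with the statistical dimension $\mathbf{D}(T_{B_{1}^{N}}(x)^{\circ})$ so as to resolve the passage from supremum to maximum.

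First I would reparametrize the left-most supremum. Every nonzero $x\in\Sigma_{s}^{N}$ factors uniquely as $x=\tau x'$ with $\tau=\|x\|_{1}>0$ and $x'\in\Sigma_{s}^{N}$, $\|x'\|_{1}=1$; the zero signal contributes risk $0$ and is irrelevant to the supremum. Since the optimally tuned constraint level for $x$ is exactly $\|x\|_{1}=\tau$, this gives
\[
  \sup_{x\in\Sigma_{s}^{N}}\hat R(\|x\|_{1};x,N,\eta)
  =\sup_{\substack{x'\in\Sigma_{s}^{N}\\\|x'\|_{1}=1}}\ \sup_{\tau>0}\hat R(\tau;\tau x',N,\eta).
\]
By Lemma \ref{lem:increasing-risk}, for each fixed $x'$ the map $\tau\mapsto\hat R(\tau;\tau x',N,\eta)$ is increasing, so the inner supremum equals its limit as $\tau\to\infty$. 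This already matches the middle expression, up to the question of whether the outer supremum is attained.

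Next I would establish the scale invariance that links the limit $\tau\to\infty$ with the limit $\eta\to0$. Using homogeneity of the Euclidean projection, $\mathrm{P}_{\tau B_{1}^{N}}(\tau v)=\tau\,\mathrm{P}_{B_{1}^{N}}(v)$, applied to $v=x'+(\eta/\tau)z$, a direct computation yields
\[
  \hat R(\tau;\tau x',N,\eta)=\hat R(1;x',N,\eta/\tau).
\]
Hence $\lim_{\tau\to\infty}\hat R(\tau;\tau x',N,\eta)=\lim_{\eta\to0}\hat R(1;x',N,\eta)$ for every normalized $x'$, so the middle and right-hand expressions coincide term by term, and therefore coincide as suprema over the common index set $\{x'\in\Sigma_{s}^{N}:\|x'\|_{1}=1\}$.

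The crux will be promoting these suprema to genuine maxima. Combining the scaling identity with Lemma \ref{lem:increasing-risk} shows that $\eta\mapsto\hat R(1;x',N,\eta)$ is \emph{decreasing}, so its limit as $\eta\to0$ equals $\sup_{\eta>0}\hat R(1;x',N,\eta)$; by Theorem \ref{thm:hassibi-2-1} (applied with $\mathcal C=B_{1}^{N}$ and the boundary point $x'$) this supremum equals $\mathbf{D}(T_{B_{1}^{N}}(x')^{\circ})$. I would then invoke Lemma \ref{lem:descent-cone-equivalence} to observe that this tangent cone, and hence its statistical dimension, depends on $x'$ only through $\mathrm{supp}(x')$ and $\sgn(x')$. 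Over $\{x'\in\Sigma_{s}^{N}:\|x'\|_{1}=1\}$ there are only finitely many admissible sign patterns, so $x'\mapsto\mathbf{D}(T_{B_{1}^{N}}(x')^{\circ})$ takes finitely many values and its supremum is automatically attained. This single observation simultaneously shows the left-most supremum equals the middle maximum and that the middle and right maxima agree, completing the chain of equalities. The only real subtlety is recognizing the monotonicity direction needed to match $\lim_{\eta\to0}$ with $\sup_{\eta}$ (so that Theorem \ref{thm:hassibi-2-1} applies) and that the limiting risk is constant on sign-pattern classes; the remaining steps are routine.
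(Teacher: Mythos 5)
Your proposal is correct and follows essentially the same route as the paper: the first equality via the monotonicity of $\tau\mapsto\hat R(\tau;\tau x,N,\eta)$ from Lemma \ref{lem:increasing-risk}, and the second via the projection scaling identity $\hat R(\tau;\tau x,N,\eta)=\hat R(1;x,N,\eta/\tau)$. Your additional step justifying that the suprema over normalized signals are attained (via Theorem \ref{thm:hassibi-2-1} and the fact that $T_{B_{1}^{N}}(x')$ depends only on the finitely many sign patterns) is a correct refinement of a point the paper's proof leaves implicit.
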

\begin{proof}[Proof of {Proposition \ref{prop:risk-equivalence}}]
  The first equality is an immediate consequence of
  Lemma \ref{lem:increasing-risk}:
  \begin{align*}
    \sup_{x \in \Sigma_{s}^{N}} \hat R(\|x\|_{1}; x, N, \eta) %
    = \max_{\substack{x \in \Sigma_{s}^{N}\\\|x\|_{1}=1}} %
    \sup_{\tau > 0} \hat R(\tau; \tau x, N, \eta) %
    = \max_{\substack{x \in \Sigma_{s}^{N}\\\|x\|_{1}=1}} %
    \lim_{\tau\to\infty} \hat R(\tau; \tau x, N, \eta). 
  \end{align*}
  The second equality follows from a standard property of orthogonal
  projections, and the risk expression derived in
  Lemma \ref{lem:increasing-risk}. For $K := B_{1}^{N} - x$,
  \begin{align*}
    \max_{\substack{x \in \Sigma_{s}^{N}\\\|x\|_{1}=1}} %
    \lim_{\tau\to\infty} \hat R(\tau; \tau x, N, \eta) %
    & = \max_{\substack{x \in \Sigma_{s}^{N}\\\|x\|_{1}=1}} %
    \lim_{\tau\to\infty} \eta^{-2}\| \mathrm{P}_{\tau K}(\eta z) \|_{2}^{2}
    = \max_{\substack{x \in \Sigma_{s}^{N}\\\|x\|_{1}=1}} %
    \lim_{\tau\to\infty} \frac{\tau^{2}}{\eta^{2}}\| \mathrm{P}_{K}(\tau^{-1}\eta z) \|_{2}^{2}
    \\
    & = \max_{\substack{x \in \Sigma_{s}^{N}\\\|x\|_{1}=1}} %
    \lim_{\tilde \eta \to 0} \tilde \eta^{-2}\| \mathrm{P}_{K}(\tilde \eta z) \|_{2}^{2}
    = \max_{\substack{x \in \Sigma_{s}^{N}\\\|x\|_{1}=1}} %
    \lim_{\eta \to 0} \hat R(1; x, N, \eta).
  \end{align*}
\end{proof}

\subsection{Proof of {\lspd} optimal risk}
\label{sec:proof-lspd-optimal-risk}

\begin{proof}[Proof of Proposition \ref{prop:lspd-optimal-risk}]
  Directly from Theorem \ref{thm:hassibi-2-1},
  \begin{align*}
    R^{*}(s, N) %
    = \max_{\substack{x_{0} \in \Sigma_{s}^{N}\\\|x_{0}\|_{1}=1}}
    \mathbf{D}(T_{B_{1}^{N}}(x_{0})^{\circ}) 
  \end{align*}
  where $\mathbf{D}(T_{B_{1}^{N}}(x_{0})^{\circ})$ is the mean-squared distance
  to the polar of the $\ell_{1}$ descent cone. The operator $\mathbf{D}$ has the
  following desirable relation to the Gaussian mean width, where $\mathcal{C}$
  is a non-empty convex cone \cite[Prop 10.2]{amelunxen2014living}:
  \begin{align*}
    w^{2}(\mathcal{C} \cap \mathbb{S}^{N-1}) %
    \leq \mathbf{D}(C^{\circ}) %
    \leq w^{2}(\mathcal{C} \cap \mathbb{S}^{N-1}) + 1. 
  \end{align*}
  Thus, it suffices to lower- and upper-bound
  $w^{2}\big( T_{B_{1}^{N}} \cap \sph^{N-1}\big)$. The desired upper bound is an
  elementary but technical exercise using H\"older's inequality, Stirling's
  approximation and a bit of calculus. The lower bound may be computed using
  Sudakov's inequality and \cite[Lemma 10.12]{foucart2013mathematical}. It
  thereby follows that
  \begin{align*}
    c s\log (N/s) %
    \leq \mathbf{D}(T_{B_{1}^{N}}(x_{0})^{\circ})
    \leq C s \log (N/s).
  \end{align*}
  where $c, C > 0$ are universal constants. Accordingly,
  $c s\log (N/s) \leq R^{*}(s,N) \leq C s \log
  (N/s)$.

  From Theorem \ref{thm:qppd-rhs-stability},
  $R^{\sharp}(\lambda^{*}; s, N) \leq C s \log N$ for any $N \geq N_{0}(s)$ with
  $N_{0}(s)$ sufficiently large. Using the above equation gives, for
  $c, C_{1} > 0$, $C s \log N \leq C_{1} c s\log(N/s) \leq C_{1} R^{*}(s,
  N)$. Finally, observe that $R^{\sharp}(\lambda^{*}; s, N)$ is trivially lower
  bounded by $M^{*}(s, N) = \Theta(s\log(N/s))$ \cite{candes2013well}.
\end{proof}

\subsection{Proof of $\ell_{1}$ tangent cone equivalence}
\label{sec:proof-tangent-cone-equivalence}

\begin{proof}[Proof of {Lemma \ref{lem:descent-cone-equivalence}}]
  First observe that the definition of $F_{\mathcal{C}}(x)$ is equivalent to
  \begin{align*}
    F_{\mathcal{C}}(x) = \{ h \in \reals^{N} %
    : h = z-x,\, \|z\|_{1} \leq \|x\|_{1}\}. 
  \end{align*}
  Next, observe that $K(x)$ is a cone. So, for left containment, it suffices to
  show $F_{\mathcal{C}}(x) \subseteq K(x)$ since the cone generated by a set is
  no larger than any cone containing that set. These two expressions:
  \begin{align*}
    \ip{\mathrm{sgn}(x)_{T}, x} %
    &= \|x\|_{1} %
    \geq \|z\|_{1} %
    = \|z_{T}\|_{1} + \|h_{T^{C}}\|_{1} %
    \\
    \|z_{T}\|_{1} %
    &= \ip{\mathrm{sgn}(z), z_{T}} %
      \geq \ip{\mathrm{sgn}(x), z_{T}},
  \end{align*}
  are by definition of $h = z - x \in F_{\mathcal{C}}(x)$. They combine to
  yield left containment:
  \begin{align*}
    \|h_{T^{C}}\|_{1} %
    \leq - \ip{\mathrm{sgn}(x), z_{T} - x} %
    = - \ip{\mathrm{sgn}(x), h_{T}} %
    = - \ip{\mathrm{sgn}(x), h}.
  \end{align*}
  To show right containment, first fix $w \in K(x)$ and select $\alpha \geq 0$
  sufficiently small so that $z := x + \alpha w$ admits $z_{j}x_{j} \geq 0$ for
  all $j \in T$.
  % \footnote{Define $z = x - \alpha^{-1}(x-\tilde z)$ for some
  % $\tilde z$ so that $w = z - x$ and $\alpha$ sufficiently large gives $z$
  % near $x$ with $x_{j}z_{j} \geq 0$ for all $j$.}
  Using $\alpha \| w_{T^{C}}\|_{1} \leq - \alpha \ip{\mathrm{sgn}(x), w_{T}}$,
  we show $\|z\|_{1}\leq \|x\|_{1}$ implying that
  $\alpha w \in F_{\mathcal{C}}(x)$, whence $w \in T_{\mathcal{C}}(x)$. Where
  $h := \alpha w = z-x$,
  \begin{align*}
    \|z\|_{1} %
    & = \|z_{T}\|_{1} + \|z_{T^{C}}\|_{1} %
      = \ip{\mathrm{sgn}(z_{T}), z_{T}} + \|h_{T^{C}}\|_{1} %
    \\
    &\leq \ip{\mathrm{sgn}(z_{T}), z_{T}} - \ip{\mathrm{sgn}(x), h_{T}}
      = \ip{\mathrm{sgn}(z_{T}), z_{T}} - \ip{\mathrm{sgn}(x), h_{T}}
      + \ip{\mathrm{sgn}(x), x} - \ip{\mathrm{sgn}(x), x}
    \\
    & = \ip{\mathrm{sgn}(x), x} + \ip{\mathrm{sgn}(z_{T}), z_{T}}
      - \ip{\mathrm{sgn}(x), z_{T}} %
      = \|x\|_{1} + \ip{\mathrm{sgn}(z_{T}) - \mathrm{sgn}(x), z_{T}} %
      = \|x\|_{1}
  \end{align*}
  where the latter equality follows from the fact that
  $\ip{\mathrm{sgn}(z_{T}) - \mathrm{sgn}(x), z_{T}} \neq 0$ only if
  $x_{j}z_{j} < 0$ for some $j \in T$, which goes against the initial
  assumption defining $\alpha$ and $z$. Thus, $w \in T_{\mathcal{C}}(x)$ and
  $T_{\mathcal{C}}(x) = K(x)$ as desired.
\end{proof}

\subsection{Proof of the projection lemma}
\label{ssec:proof-proj-lemma}

%%%%%%%%%%%%%%%%%%%%%%%%%%%%%%
% \begin{lem}[Projection lemma]
%   \label{lem:projection-lemma}
%   Let $K \subseteq \reals^{n}$ be a non-empty closed and convex set, and fix
%   $\lambda \geq 1$. For $z \in \reals^{n}$,
%   \begin{align*}
%     \|\mathrm{P}_{K}(z)\|_{2} \leq \|\mathrm{P}_{\lambda K}(z)\|_{2}. 
%   \end{align*}
% \end{lem}

\begin{proof}[Proof of {Lemma \ref{lem:projection-lemma}}]
  Define $z_{\alpha} := \mathrm{P}_{\alpha K}(z)$ for $\alpha = 1, \lambda$ and
  define $f(t) := \|u_{t}\|_{2}^{2}$, where
  $u_{t} := t z_{\lambda} + (1-t) z_{1}$ for $t \in [0, 1]$. Our goal is to show
  $\left.\dee{}{t}\right|_{t = 0} f(t) \geq 0$; this implies
  $\|z_{\lambda}\|_{2} \geq \|z_{1}\|_{2}$, because $f$ is convex. Expanding
  $f(t)$,
\begin{align*}
  f(t) %
  &= t^{2} \big(( \|z_{\lambda}\|_{2}^{2} - 2 \ip{z_{1}, z_{\lambda}} %
    + \|z_{1}\|_{2}^{2} \big) %
    + 2t \big( \ip{z_{1}, z_{\lambda}} - \|z_{1}\|_{2}^{2} \big) %
    + \|z_{1}\|_{2}^{2}.
\end{align*}
So it is required to check the condition $(\star)$:
\begin{align*}
  \left.\dee{}{t} f(t)\right|_{t=0} %
  &= \left[ 2t \|z_{\lambda} - z_{1}\|_{2}^{2}
    + 2 \ip{z_{1}, z_{\lambda} - z_{1}} \right|_{t=0} %
    = 2 \ip{z_{1}, z_{\lambda} - z_{1}} \overset{(\star)}{\geq} 0 
\end{align*}
The projection condition says that if $\mathrm{P}_{C}(x)$ is the projection of
$x$ onto a convex set $C$ then for any $y \in C$,
$\ip{y - \mathrm{P}_{C}(x), x - \mathrm{P}_{C}(x)} \leq 0$.  From the projection
condition \cite{bertsekas2003convex}, we have
\begin{itemize}[topsep=8pt, itemsep=1pt]
\item $\ip{\lambda^{-1} z_{\lambda} - z_{1}, z- z_{1}} \leq 0$
\item $\ip{\lambda z_{1} - z_{\lambda}, z - z_{\lambda}} \leq 0$.
\end{itemize}
Accordingly,
\begin{align*}
  0 &\geq \ip{z_{\lambda} - \lambda z_{1}, z-z_{1}}
      + \ip{\lambda z_{1} - z_{\lambda}, z - z_{\lambda}}
  \\
    & = \ip{\lambda z_{1} - z_{\lambda}, z_{1}-z_{\lambda}}
      = \ip{(\lambda-1)z_{1}, z_{1} - z_{\lambda}} + \|z_{1} - z_{\lambda}\|_{2}^{2} %
  \\
    & \geq (\lambda - 1)\ip{z_{1}, z_{1} - z_{\lambda}}
\end{align*}
which is equivalent to $\ip{z_{1}, z_{\lambda} - z_{1}} \geq 0$. Therefore, $f$
is a convex function increasing on the interval $t \in [0, 1]$, whence
$\|z_{1}\|_{2} \leq \|z_{\lambda}\|_{2}$ as desired.
\end{proof}

\begin{remark}
  There is a simpler way to begin the proof of the projection lemma. To show
  \begin{align*}
    \|z_{1}\|_{2} \leq \|z_{\lambda}\|_{2} \quad \iff \quad
    \|z_{1}\|_{2}^{2} \leq \|z_{1}\|_{2}\|z_{\lambda}\|_{2},
  \end{align*}
  one may instead prove the following chain,
  \begin{align*}
    \|z_{1}\|_{2}^{2} \leq \ip{z_{1}, z_{\lambda}} \leq \|z_{1}\|_{2}\|z_{\lambda}\|_{2}. 
  \end{align*}
  The latter inequality is true by Cauchy-Schwarz, so it remains only to prove the former:
  \begin{align*}
    \ip{z_{1}, z_{\lambda}} - \|z_{1}\|_{2}^{2} \geq 0 \quad \iff \quad %
    \ip{z_{1}, z_{\lambda} - z_{1}} \geq 0. 
  \end{align*}
  Rearranging shows this inequality is equivalent to $(\star)$, and the
  remainder of the proof proceeds as is. This remark is included for intuition,
  but this approach is less generalizable. For example, it does not yield the
  rate of growth observed in the remark at the end of
  \ref{sssec:projection-lemma}.

\end{remark}

\subsection{Elementary results from probability}
\label{ssec:elem-results-from}

%%%%%%%%%%%%%%%%%%%%%%%%%%%%%%
% \begin{prop}
%   \label{prop:bppd-oc-const-prob}
%   Let $z \in \reals^{N}$ with $z_{i} \iid \mathcal{N}(0,1)$, fix constants
%   $0 < C_{2} < C_{1} < \infty$ and define the event
%   $\mathcal{Z}_{\pm} := \{C_{2} \sqrt N \leq \|z\|_{2}^{2} - N \leq C_{1} \sqrt
%   N\}$. There exists a constant $p = p(C_{1}, C_{2}) > 0$ and integer
%   $N_{0} \geq 1$ such that for all $N \geq N_{0}$,
%   \begin{align*}
%     \mathbb{P}\big( \mathcal{Z}_{\pm}\big) \geq p
%   \end{align*}

% \end{prop}

We briefly recall two aspects of how normal random vectors concentrate in high
dimensions.

\begin{proposition}
  \label{prop:bppd-oc-const-prob}
  Let $z \in \reals^{N}$ with $z_{i} \iid \mathcal{N}(0,1)$, fix constants
  $0 < C_{2} < C_{1} < \infty$ and define the event $\mathcal{Z}_{\pm}$ by
  $\mathcal{Z}_{\pm} := \{C_{2} \sqrt{2N} \leq \|z\|_{2}^{2} - N \leq C_{1}
  \sqrt{2N}\}$. There exists a constant $p = p(C_{1}, C_{2}) > 0$ and integer
  $N_{0} \geq 1$ such that for all $N \geq N_{0}$,
  \begin{align*}
    \mathbb{P}\big( \mathcal{Z}_{\pm}\big) \geq p
  \end{align*}

\end{proposition}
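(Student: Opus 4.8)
The plan is to recognize $\mathbb{P}(\mathcal{Z}_{\pm})$ as the probability that a standardized sum of i.i.d.\ random variables lands in a fixed interval, and then to invoke the central limit theorem to show this probability converges to a strictly positive limit. First I would write $\|z\|_{2}^{2} = \sum_{i=1}^{N} W_{i}$, where $W_{i} := z_{i}^{2}$ are i.i.d.\ with $\E W_{i} = 1$ and $\mathrm{Var}(W_{i}) = \E z_{i}^{4} - (\E z_{i}^{2})^{2} = 3 - 1 = 2$. Setting
\begin{align*}
  S_{N} := \frac{\|z\|_{2}^{2} - N}{\sqrt{2N}} = \frac{1}{\sqrt{2N}} \sum_{i=1}^{N} (W_{i} - 1),
\end{align*}
the defining event rewrites as $\mathcal{Z}_{\pm} = \{ C_{2} \leq S_{N} \leq C_{1} \}$, since dividing the chain $C_{2}\sqrt{2N} \leq \|z\|_{2}^{2} - N \leq C_{1}\sqrt{2N}$ through by $\sqrt{2N}$ is an equivalence.

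Next I would apply the Lindeberg--L\'evy central limit theorem to the i.i.d.\ mean-zero, unit-variance summands $(W_{i} - 1)/\sqrt{2}$, which yields $S_{N} \xrightarrow{d} Z \sim \mathcal{N}(0,1)$. Because the closed interval $[C_{2}, C_{1}]$ has boundary $\{C_{2}, C_{1}\}$ of zero standard-normal measure, it is a continuity set for the limiting law, and the portmanteau theorem gives
\begin{align*}
  \lim_{N\to\infty} \mathbb{P}(\mathcal{Z}_{\pm}) = \mathbb{P}(C_{2} \leq Z \leq C_{1}) = \Phi(C_{1}) - \Phi(C_{2}).
\end{align*}
Since $C_{2} < C_{1}$ and $\Phi$ is strictly increasing, this limit is a strictly positive constant depending only on $C_{1}$ and $C_{2}$.

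Finally, from the existence of this positive limit I would extract the claimed uniform lower bound: setting $p := \tfrac12\big(\Phi(C_{1}) - \Phi(C_{2})\big) > 0$, the definition of the limit furnishes an integer $N_{0} = N_{0}(C_{1}, C_{2}) \geq 1$ such that $\mathbb{P}(\mathcal{Z}_{\pm}) \geq p$ for every $N \geq N_{0}$, which is exactly the assertion.

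There is no substantial obstacle here; the only point requiring minor care is the passage from convergence in distribution to convergence of the probability of the \emph{closed} interval, which is handled by observing that its endpoints are continuity points of $\Phi$. If a quantitative (rather than merely existential) $N_{0}$ were desired, one could instead invoke the Berry--Esseen theorem --- the variables $W_{i} - 1$ have finite third absolute moment --- to bound $\big|\mathbb{P}(\mathcal{Z}_{\pm}) - (\Phi(C_{1}) - \Phi(C_{2}))\big| \leq C/\sqrt{N}$ and solve for $N_{0}$ explicitly.
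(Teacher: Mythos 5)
Your proposal is correct and follows essentially the same route as the paper: standardize $\|z\|_{2}^{2}-N$ by $\sqrt{2N}$, invoke the central limit theorem to identify the limiting probability $\Phi(C_{1})-\Phi(C_{2})>0$, and choose $N_{0}$ so the probability stays above a fixed fraction of that limit. Your treatment is in fact slightly more careful than the paper's (you justify the passage to the closed interval via continuity of $\Phi$, and you write the limit with the correct sign, whereas the paper's displayed bound has the order of $\Phi(C_{1})$ and $\Phi(C_{2})$ reversed).
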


\begin{proof}[Proof of {Proposition \ref{prop:bppd-oc-const-prob}}]
  Define the $\chi^{2}_{N}$-distributed random variable
  \begin{align*}
    X_{N} := \frac{\|z\|_{2}^{2} - N}{\sqrt{2 N}}. 
  \end{align*}
  Since $X_{N} \xrightarrow{N\to\infty} \mathcal{N}(0,1)$ by the central limit
  theorem, for any $\varepsilon > 0$ there is $N_{0} \in \nats$ such that
  \begin{align*}
    \big| \mathbb{P}\big( X_{N} \leq t\big) - \Phi(t)\big| \leq \varepsilon
  \end{align*}
  for all $N \geq N_{0}$, where $\Phi$ is the standard normal cdf. One need
  merely choose $\varepsilon > 0$ so that
  \begin{align*}
    \mathbb{P}\big( \mathcal{Z}_{\pm}\big) %
    \geq \Phi(C_{2}) - \Phi(C_{1}) - 2\varepsilon
    =: p(C_{1},C_{2}) > 0
  \end{align*}
  and choose the first $N_{0}$ for which the chain of inequalities is valid for
  all $N \geq N_{0}$.
\end{proof}

%%%%%%%%%%%%%%%%%%%%%%%%%%%%%%
% \begin{coro}
%   \label{coro:large-deviation}
%   Fix an integer $N_{0} \geq 2$ and let $N \in \nats$ with $N \geq N_{0}$. Let
%   $z \in \reals^{N}$ with $z_{i} \iid \mathcal{N}(0,1)$ and define the event
%   \begin{align*}
%     A_N := \big\{ \|z\|_2^2 \leq N - 2 \sqrt N %
%     \quad\And\quad %
%     \|z\|_\infty \leq \sqrt{3 \log N}\big\}
%   \end{align*}
%   There exists a real constant $C = C(N_{0}) > 0$ such that $\mathbb{P}(A_{N}) \geq C$. 
% \end{coro}

\begin{corollary}
  \label{coro:large-deviation}
  Fix $N, N_{0} \in \nats$ with $N \geq N_{0} \geq 2$. Let $z \in \reals^{N}$
  with $z_{i} \iid \mathcal{N}(0,1)$ and define the event
  \begin{align*}
    A_N := \big\{ \|z\|_2^2 \leq N - 2 \sqrt N %
    \quad\And\quad %
    \|z\|_\infty \leq \sqrt{3 \log N}\big\}
  \end{align*}
  There exists a real constant $C = C(N_{0}) > 0$ such that $\mathbb{P}(A_{N}) \geq C$. 
\end{corollary}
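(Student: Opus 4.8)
The plan is to treat $A_{N}$ as the intersection of two events, $E_{1} := \{\|z\|_{2}^{2} \leq N - 2\sqrt{N}\}$ and $E_{2} := \{\|z\|_{\infty} \leq \sqrt{3\log N}\}$, and to exploit the elementary inequality $\mathbb{P}(E_{1}\cap E_{2}) \geq \mathbb{P}(E_{1}) - \mathbb{P}(E_{2}^{c})$, which needs no independence. The task then splits into two separate estimates: a \emph{lower} bound on $\mathbb{P}(E_{1})$ that stays above a positive constant uniformly in $N$, and an \emph{upper} bound on $\mathbb{P}(E_{2}^{c})$ that vanishes as $N \to \infty$. If $\mathbb{P}(E_{1})$ stays above a positive constant while $\mathbb{P}(E_{2}^{c}) \to 0$, then $\mathbb{P}(A_{N})$ is bounded below by a positive constant for all large $N$, and the finitely many remaining indices are handled at the end.

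For $E_{1}$ I would reuse the central limit theorem argument from the proof of Proposition \ref{prop:bppd-oc-const-prob}. Writing $X_{N} := (\|z\|_{2}^{2} - N)/\sqrt{2N}$, a normalized $\chi^{2}_{N}$ variable, the event $E_{1}$ is exactly $\{X_{N} \leq -\sqrt{2}\}$, since $-2\sqrt{N} = -\sqrt{2}\cdot\sqrt{2N}$. Because $X_{N} \Rightarrow \mathcal{N}(0,1)$, one has $\mathbb{P}(E_{1}) \to \Phi(-\sqrt{2}) > 0$, so there is $N_{1}$ with $\mathbb{P}(E_{1}) \geq \tfrac12 \Phi(-\sqrt 2)$ for all $N \geq N_{1}$. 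The \emph{direction} here is the crux and the main obstacle: $E_{1}$ is a lower-tail event, so a concentration inequality such as Theorem \ref{thm:bernstein} controls $\mathbb{P}(E_{1}^{c})$ from above rather than $\mathbb{P}(E_{1})$ from below, which is the wrong direction; the CLT is what supplies the needed lower bound.

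For $E_{2}^{c}$, a union bound over the $N$ coordinates together with the standard Gaussian tail bound $\mathbb{P}(|Z| > t) \leq e^{-t^{2}/2}$ gives $\mathbb{P}(E_{2}^{c}) \leq N \cdot \mathbb{P}(|Z| > \sqrt{3\log N}) \leq N\, e^{-\frac32 \log N} = N^{-1/2}$, which tends to $0$. Thus there is $N_{2}$ with $\mathbb{P}(E_{2}^{c}) \leq \tfrac14 \Phi(-\sqrt2)$ for all $N \geq N_{2}$. Combining the two estimates, for every $N \geq \max\{N_{1}, N_{2}\}$ one obtains $\mathbb{P}(A_{N}) \geq \tfrac14\Phi(-\sqrt2) =: c_{\infty} > 0$.

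It remains to cover the finitely many indices $N_{0} \leq N < \max\{N_{1},N_{2}\}$. For each such $N$ with $N - 2\sqrt N > 0$, the event $A_{N}$ contains a neighbourhood of the origin, where both $\|z\|_{2}$ and $\|z\|_{\infty}$ are small, and hence has strictly positive Gaussian measure; so $m := \min_{N_{0} \leq N < \max\{N_{1},N_{2}\}} \mathbb{P}(A_{N}) > 0$ as a minimum of finitely many strictly positive numbers. Setting $C := \min\{c_{\infty}, m\}$ yields the claim. The one genuine caveat to flag is that $N - 2\sqrt N \leq 0$ forces $\mathbb{P}(A_{N}) = 0$ once $N \leq 4$, so a strictly positive lower bound is only available when the admissible range starts at $N_{0} \geq 5$; for such $N_{0}$ the construction above applies verbatim.
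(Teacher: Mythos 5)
Your proof is correct, and it sidesteps the one genuinely delicate step in the paper's argument. The paper bounds $\mathbb{P}(A_N)=\mathbb{P}(E_N\cap F_N)$ by writing $\mathbb{P}(E_N\mid F_N)\,\mathbb{P}(F_N)\geq \mathbb{P}(E_N)\,\mathbb{P}(F_N)$, i.e.\ it asserts positive association of the $\ell_2$-ball event and the $\ell_\infty$-ball event; this is true (both are symmetric convex sets, so it follows from \v{S}id\'ak-type correlation inequalities), but the paper states it without justification. You instead use the elementary bound $\mathbb{P}(E_1\cap E_2)\geq \mathbb{P}(E_1)-\mathbb{P}(E_2^c)$, which requires no correlation input at all and works precisely because your union bound shows $\mathbb{P}(E_2^c)\leq N^{-1/2}\to 0$; the two proofs then coincide on the remaining ingredients (the CLT lower bound on the $\chi^2$ lower tail, exactly as in Proposition \ref{prop:bppd-oc-const-prob}, and the Gaussian tail estimate for $\|z\|_\infty$). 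Your observation about the direction of the inequality is also on point: Bernstein-type concentration bounds the probability of the lower-tail event from \emph{above}, so the CLT (or an anti-concentration argument) really is needed to bound $\mathbb{P}(E_1)$ from \emph{below}. Finally, your caveat is a genuine catch that the paper glosses over: for $N\in\{2,3,4\}$ one has $N-2\sqrt N\leq 0$, so $A_N$ is a null set and the claimed positive lower bound fails as literally stated with $N_0\geq 2$; the statement is only correct for $N_0\geq 5$ (which is harmless for its use in Lemma \ref{lem:bppd-uc}, where $N_0(s)$ is large). Your handling of the finitely many small indices by positivity of the Gaussian measure of a neighbourhood of the origin is the right way to make the constant depend only on $N_0$.
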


\begin{proof}[Proof of {Corollary \ref{coro:large-deviation}}]
  Given $N$, define the events $E_{N} := \{ \|z\|_{2}^{2} \leq N - 2 \sqrt N \}$
  and $F_{N} := \{ \|z\|_{\infty} \leq \sqrt{3 \log N} \}$. Using the standard
  identity $\Phi(-x) \leq \phi(x) / x$, we note that
  \begin{align*}
    \mathbb{P}(F_{N}) %
    \geq 1 - 2 N \mathbb{P}( |Z| > \sqrt{3\log N}) %
    % \geq 1 - \frac{2}{\sqrt{\frac32 \pi N \log N }} %
    \geq 1 - \frac{2}{\sqrt{\frac32 \pi N_{0} \log N_{0} }} %
    > 0
  \end{align*}
  With this, and a standard argument similar to that of
  Proposition \ref{prop:bppd-oc-const-prob}, one may show
  \begin{align*}
    \mathbb{P}(A_{N}) %
    = \mathbb{P}(E_{N}F_{N}) %
    = \mathbb{P}(E_{N}\mid F_{N}) \mathbb{P}(F_{N}) %
    \geq \mathbb{P}(E_{N}) \mathbb{P}(F_{N})
    \geq C %
    \geq 0. 
  \end{align*}
\end{proof}

We also recall that an event holding with high probability, intersected with an
event occurring with constant probability, still occurs with constant
probability.

%%%%%%%%%%%%%%%%%%%%%%%%%%%%%%
% \begin{prop}
%   \label{prop:bppd-oc-const-whp-prob}
%   Let $N \geq 1$ be an integer and suppose that $\mathcal{E} = \mathcal{E}(N)$
%   is an event that holds with high probability in the sense that
%   \begin{align*}
%     \mathbb{P}(\mathcal{E}_{N}) \geq 1 - p(N)
%   \end{align*}
%   for some function $p(N)$ with $\lim_{N\to\infty} p(N) = 0$.%
%   \footnote{For example, $p(N) \sim O(e^{-N})$ when
%     $\mathcal{E}_{N} := \{ |X-\mu| \leq t\}$ for $X$ a subgaussian random
%     variable, $\E X = \mu$ and $t > 0$.} %
%   Suppose also that for an event $\mathcal{F} = \mathcal{F}(N)$ there exists
%   $q > 0$ such that $\inf_{N\geq 1} \mathbb{P}(\mathcal{F}(N)) \geq q$. Then
%   there exists a constant $q' > 0$ and integer $N_{0} \geq 1$ such that
%   $\mathbb{P}\big(\mathcal{E}(N) \cap \mathcal{F}\big) \geq q'$ for all
%   $N \geq N_{0}$.
% \end{prop}

\begin{proposition}
  \label{prop:bppd-oc-const-whp-prob}
  Let $N \geq 1$ be an integer and suppose that $\mathcal{E} = \mathcal{E}(N)$
  is an event that holds with high probability in the sense that
  \begin{align*}
    \mathbb{P}(\mathcal{E}_{N}) \geq 1 - p(N)
  \end{align*}
  for some function $p(N) > 0$ with $\lim_{N\to\infty} p(N) = 0$. Suppose also
  that for an event $\mathcal{F} = \mathcal{F}(N)$ there exists $q > 0$ such
  that $\inf_{N\geq 1} \mathbb{P}(\mathcal{F}(N)) \geq q$. Then there exists a
  constant $q' > 0$ and integer $N_{0} \geq 1$ such that
  $\mathbb{P}\big(\mathcal{E}(N) \cap \mathcal{F}\big) \geq q'$ for all
  $N \geq N_{0}$.
\end{proposition}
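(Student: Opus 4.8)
The plan is to reduce the statement to the elementary Fr\'echet (Bonferroni) inequality $\mathbb{P}(A \cap B) \geq \mathbb{P}(A) + \mathbb{P}(B) - 1$, which follows at once from inclusion--exclusion together with the trivial bound $\mathbb{P}(A \cup B) \leq 1$. Applying this with $A = \mathcal{E}(N)$ and $B = \mathcal{F}(N)$ gives
\begin{align*}
  \mathbb{P}\big( \mathcal{E}(N) \cap \mathcal{F}(N)\big)
  \geq \mathbb{P}(\mathcal{E}(N)) + \mathbb{P}(\mathcal{F}(N)) - 1.
\end{align*}

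Next I would substitute the two hypotheses into this bound. Since $\mathbb{P}(\mathcal{E}(N)) \geq 1 - p(N)$ and $\inf_{N\geq 1}\mathbb{P}(\mathcal{F}(N)) \geq q$, the right-hand side is no smaller than $(1 - p(N)) + q - 1 = q - p(N)$, so that
\begin{align*}
  \mathbb{P}\big( \mathcal{E}(N) \cap \mathcal{F}(N)\big)
  \geq q - p(N).
\end{align*}

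Finally, I would use the assumption $\lim_{N\to\infty} p(N) = 0$ to absorb the high-probability error term into a constant: there is an integer $N_{0} \geq 1$ such that $p(N) \leq q/2$ for all $N \geq N_{0}$, whence $\mathbb{P}(\mathcal{E}(N) \cap \mathcal{F}(N)) \geq q/2$ on that range. The claim then holds with $q' := q/2 > 0$.

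There is essentially no obstacle here; the only points deserving any care are the direction of the inequality --- one must \emph{lower}-bound the intersection rather than the union, which is precisely what the Fr\'echet inequality supplies --- and the explicit choice of $N_{0}$ making $p(N)$ negligible relative to the constant floor $q$. This is exactly the ``standard argument'' already invoked in the proof of Corollary \ref{coro:large-deviation}, so the write-up can be kept to a few lines.
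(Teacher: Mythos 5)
Your proposal is correct and follows essentially the same route as the paper: the paper likewise asserts $\mathbb{P}(\mathcal{E}(N)\cap\mathcal{F}) \geq q - p(N)$ (you simply make the underlying Fr\'echet inequality explicit) and then chooses $N_{0}$ so that $p(N)$ is dominated by a fixed fraction of $q$. The only cosmetic difference is that the paper takes $q' := q - \varepsilon$ for a generic threshold $\varepsilon$ rather than your concrete $q' := q/2$.
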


\begin{proof}[Proof of {Proposition \ref{prop:bppd-oc-const-whp-prob}}]
  The proof is very similar to that of Proposition \ref{prop:bppd-oc-const-prob}. Simply
  choose a threshold $\varepsilon > 0$ and select the first $N_{0} \geq 1$ for
  which
  \begin{align*}
    \mathbb{P}\big(\mathcal{E}(N) \cap \mathcal{F}\big) %
    \geq q - p(N) %
    \geq q - \varepsilon %
    =: q' %
    > 0
  \end{align*}
  for all $N \geq N_{0}$. 
\end{proof}

\begin{remark}
  An example of such a $p(N)$ as in Proposition \ref{prop:bppd-oc-const-whp-prob} is
  $p(N) \sim O(e^{-N})$ when $\mathcal{E}_{N} := \{ |X-\mu| \leq t\}$ for $X$ a
  subgaussian random variable, $\E X = \mu$ and $t > 0$.
\end{remark}

\subsection{Proof of {\lspd} parameter instability}
\label{ssec:proof-lspd-instability}

%%%%%%%%%%%%%%%%%%%%%%%%%%%%%%
% \begin{thm}[{\lspd} parameter instability]
%   \label{thm:constr-pd}
%   Let $s \geq 1$ and let $x_{0} \in \Sigma_{s}^{N}$ have at least one non-zero
%   entry. Suppose $y = x_{0} + \eta z$ where $\eta > 0$ and
%   $z = (z_{j})_{j \in [N]}$ with
%   $z_{j} \overset{\text{iid}}{\sim} \mathcal{N}(0, 1)$. For $\hat x(\tau)$
%   defined as in {\lspd},
%   \begin{align*}
%     \lim_{\eta \to 0} \frac{1}{\eta^{2}} \E \|\hat x (\tau) - x_{0} \|_{2}^{2} =
%     \begin{cases}
%       \infty & \tau < \|x_{0}\|_{1}\\
%       \Theta(s\log (N/s))
%       % \mathbf{D}(T_{B_{1}^{N}}(x_{0})^{\circ})
%       & \tau = \|x_{0} \|_{1}\\
%       N & \tau > \|x_{0}\|_{1}
%     \end{cases}
%   \end{align*}

% \end{thm}

\begin{proof}[Proof of {Theorem \ref{thm:constr-pd}}]

  Let $x_{0} \in \Sigma_{s}^{N}$ with non-empty support and let $\tau > 0$ be
  the governing parameter of {\lspd}. First suppose the parameter is chosen
  smaller than the optimal value, \ie $\tau < \|x_{0}\|_{1}$. The discrepancy
  of the guess, $\rho := |\|x_{0}\|_{1} - \tau| = \|x_{0}\|_{1} - \tau > 0$,
  induces the instability.

  The solution $\hat x(\tau)$ to {\lspd} satisfies
  $0 \leq \|\hat x(\tau) \|_{1} \leq \tau$ by
  construction. Therefore, by the Cauchy-Schwarz inequality and an application
  of the triangle inequality,
  \begin{align*}
    \|\hat x(\tau) - x_{0} \|_{2}^{2} %
    \geq N^{-1} \|\hat x(\tau) - x_{0}\|_{1}^{2} %
    \geq  \frac{\rho^{2}}{N} > 0. 
  \end{align*}
  Accordingly, 
  \begin{align*}
    \lim_{\eta \to 0} \frac{1}{\eta^{2}} \|\hat x(\tau) - x_{0}\|_{2}^{2} %
    \geq \lim_{\eta \to 0} \frac{\rho^{2}}{N \eta^{2}} %
    = \infty. 
  \end{align*}
  Next assume $\tau$ is chosen too large, with discrepancy between the correct
  and actual guesses for the parameter again being denoted
  $ \rho = \tau - \|x_{0}\|_{1} > 0$. Two key pieces of intuition guide this
  result. The first is that the error of approximation should be controlled by
  the effective dimension of the constraint set. The second suggests that $y$
  continues to lie within the constraint set for sufficiently small noise level,
  meaning recovery behaves as though it were unconstrained. Hence, the effective
  dimension of the problem is that of the ambient dimension, and so one should
  expect the error to be proportional to $N$.

  First, we show that for $\eta$ sufficiently small, $y \in \tau B_{1}^{N}$ with
  high probability. Fix a sequence $\eta_{j} \xrightarrow{j\to\infty} 0$ and
  define $y_{j} := x_{0} + \eta_{j} z$. Since $\|z\|_{1}$ is subgaussian,
  Theorem \ref{thm:hoeffding-subgaussian} implies there is a constant $C > 0$ such
  that
  \begin{align*}
    \mathbb{P}\big( \|z\|_{1} \geq t + N\sqrt{\frac{2}{\pi}}\big) %
    \leq \mathbb{P}\big( \big| \|z\|_{1} - N\sqrt{\frac{2}{\pi}} \big |
    \geq t \big) %
    \leq e \cdot \exp\big( \frac{-t^{2}}{C N} \big). 
  \end{align*}
  In order to satisfy $x_{0} + \eta z \in \tau B_{1}^{N}$, we need
  $\|x_{0} + \eta z \|_{1} < \tau$, for which $\eta \|z\|_{1} < \rho$ is
  sufficient. The probability that this event does not occur is upper
  bounded by
  \begin{align*}
    \mathbb{P}\big( \|z\|_{1} \geq \frac{\rho}{\eta}\big) %
    \leq \mathbb{P}\big( \|z\|_{1} \geq t + N \sqrt{\frac{2}{\pi}}\big)  %
    \leq e \cdot \exp\big( -\frac{t^{2}}{CN}\big)
  \end{align*}
  For $t = \rho/\eta - N\sqrt{\frac{2}{\pi}}$, and $\tilde C > 0$ a new
  constant,
  \begin{align*}
    \mathbb{P}\left(\|z\|_{1} %
    \geq \frac{\rho}{\eta}\right) %
    \leq e\cdot \exp\left( - \frac{ %
    \big(\rho/\eta - N \sqrt{2/\pi}\big)^{2}}{CN}\right) %
    \lesssim \tilde C \exp\left( - \frac{ \rho^{2}}{N\eta^{2}}\right) %
    \xrightarrow{\eta \to 0} 0. 
  \end{align*}
  % In particular, for $\varepsilon \in (0, 1)$, $y_{j} \in \tau B_{1}^{N}$ with
  % probability at least $1 - \varepsilon$ for all $j \geq j_{0}$, where
  % $j_{0} \in \nats$ is the first integer such that for all $j \geq j_{0}$,
  % \begin{align*}
  %   \eta_{j} %
  %   \leq \frac{\rho}{N\sqrt{\frac{2}{\pi}} %
  %   + \sqrt{C N (1 + \log(1/\varepsilon))}} %
  %   % \sim \frac{\rho}{N + \sqrt{N \log(1/\varepsilon)}} %
  %   % \xrightarrow{N \to \infty} 0. 
      %     \end{align*}
  Let $E_{j} := \{ \|z\|_{1} < \displaystyle\frac{\rho}{\eta_{j}}\}$ for
  $j \geq 1$; their respective probabilities lower-bounded by
  $p_{j} := 1 - \tilde C \exp(-\rho^{2} / N\eta_{j}^{2})$. Given
  $0 < \varepsilon \ll 1$, denote by $j_{0}$ the first integer such that
  $p_{j} \geq 1 - \varepsilon$ for all $j \geq j_{0}$. On $E_{j}$ with
  $j \geq j_{0}$, $y_{j} \in \tau B_{1}^{N}$ so $y_{j}$ is the unique minimizer
  of {\lspd}, meaning:
  \begin{align*}
    \frac{1}{\eta^{2}} \|\hat x(\tau) - x_{0}\|_{2}^{2} = \|z\|_{2}^{2}. 
  \end{align*}
  The result follows by bounding the following expectations:
  \begin{align}
    \lim_{\eta \to 0} \frac{1}{\eta^{2}} \E \|\hat x(\tau) - x_{0}\|_{2}^{2} %
    = \lim_{j \to \infty } \E \big[ \eta_{j}^{-2} %
      \|\hat x(\tau) - x_{0}\|_{2}^{2} \1(E_{j}) \big] %
      + \E \big[ \eta_{j}^{-2} \|\hat x(\tau) - x_{0}\|_{2}^{2} %
      \1(E_{j}^{C}) \big]. \tag{$\star$} %
  \end{align}
  The first term converges by dominated convergence theorem:
  \begin{align*}
    \lim_{j \to \infty } \E \big[ \eta_{j}^{-2} %
      \|\hat x(\tau) - x_{0}\|_{2}^{2} \1(E_{j}) \big] %
    = \lim_{j\to \infty} \E \big[ \1(E_{j}) \|z\|_{2}^{2} \big]
      = \E \big[ \|z\|_{2}^{2} \big]
      = N.
  \end{align*}
  On $E_{j}^{C}$,
  $\|\hat x(\tau) - x_{0}\|_{2}^{2} \leq \|\hat x(\tau) - x_{0}\|_{1}^{2} \leq
  \rho^{2}\eta^{2}$, so by dominated convergence theorem,
  \begin{align*}
    \lim_{j \to \infty} \E\big[ \eta_{j}^{-2} \|\hat x(\tau) - x_{0}\|_{2}^{2}
    \1(E_{j}^{C})\big] %
    \leq \lim_{j \to \infty} \E\big[ \|z\|_{1}^{2} \1(E_{j}^{C})\big] = 0.
  \end{align*}
  This immediately yields the desired result,
  \begin{align*}
    \lim_{\eta \to 0} \frac{1}{\eta^{2}} \E \|\hat x(\tau) - x_{0}\|_{2}^{2} = N. 
  \end{align*}
  To prove the final case where $\tau = \|x_{0}\|_{1}$, set
  $\mathcal{C} = B_{1}^{N}$ in \eqref{eq:oymakhassibi} of Theorem
  Theorem \ref{thm:hassibi-2-1}. Then,
  \begin{align*}
    \lim_{\eta\to 0 } \eta^{-2} \E \|\hat x (\tau) - x_{0}\|_{2}^{2} %
    = \mathbf{D}(T_{\mathbb{B}_{1}^{N}}(x_{0})^{\circ}) = \Theta(s \log (N/s)) %
    \ll N.
  \end{align*}
\end{proof}

\subsection{Proofs of {\qppd} results}
\label{ssec:proofs-qppd-results}

%%%%%%%%%%%%%%%%%%%%%%%%%%%%%%
% \begin{prop}[{$R^{\sharp}(\lambda; x_{0}, N, \eta)$ smoothness}]
%   \label{prop:qppd-smooth}
%   Let $y = x_{0} + \eta z$ where $x_{0} \in \reals^{N}$ is an $s$-sparse signal,
%   $\eta > 0$ and $z = (z_{j})_{j \in [N]}$ with
%   $z_{j} \iid \mathcal{N}(0, 1)$. Let $x^{\sharp}(\lambda)$
%   be as given in {\qppd}. Then
%   \begin{align}
%     \label{eq:pd-lambda}
%     \lim_{\eta\to 0 } R^{\sharp}(\lambda; x_{0}, N, \eta) %
%       %       
%     = s (1 + \lambda^{2}) + 2(N-s)\big[(1+\lambda^{2})\Phi(-\lambda) %
%     - \lambda\phi(\lambda)\big]
%   \end{align}
% \end{prop}

\begin{proof}[Proof of {Proposition \ref{prop:qppd-smooth}}]
  Because $z$ is isotropic and iid, one can split the signal
  $x_{0} = x_{0}^+ - x_{0}^-$ into ``positive'' and ``negative'' components, and
  so it suffices to consider the case where $x_{0,j} \geq 0$ for all
  $j \in [N]$. The heart of this proposition again relies on the fact that the
  noise limits to $0$. In general, $\lambda > 0$ is finite and typically small
  ($\sim \mathcal{O}(\eta\sqrt{\log N})$, so we require only that
  $|x_{0,j}| = \mathcal{O}(1)$ for $j \in T = \mathrm{supp}(x_{0})$. This
  requirement can be written $x_{0,j} \geq a > 0$ for all $j \in T$ and some
  real number $a > 0$. Recall that the minimizer of {\qppd} is given by the
  soft-thresholding operator which we denote by
  \begin{align*}
    x^\sharp(\eta\lambda) = S_{\eta\lambda}(x_{0} + \eta z). 
  \end{align*}
  Where $k \in T, \ell \in T^{C}$ so that $x_{0,k} \geq a$, $x_{0,\ell} = 0$,
  one has
  \begin{align*}
    S_{\eta\lambda}(x_{0,k} + \eta z_k) - x_{0,k} &=
    \begin{cases}
      \eta(z_{k}-\lambda) & x_{0,k} > \eta(\lambda-z_k)\\
      -x_{0,k} & |x_{0,k} + \eta z_k | \leq \eta \lambda\\
      \eta(z_{k} + \lambda) & x_{0,k} < - \eta (\lambda+z_k)
    \end{cases}
                        &
    S_{\eta\lambda}(\eta z_\ell) &= \begin{cases}
      \eta(z_\ell-\lambda) & z_\ell > \lambda\\
      0 & |z_\ell| \leq \lambda\\
      \eta(z_\ell + \lambda) & z_\ell < -\lambda
    \end{cases}
  \end{align*}
  and so independence of $z_{j}$ yields
  \begin{align*}
    \lim_{\eta\to 0 } \frac{1}{\eta^2} \E \|x^\sharp(\eta\lambda) - x_{0}\|_2^2 %
    = \lim_{\eta\to 0 } \frac{s}{\eta^2}
    \E \big[ (S_{\eta\lambda}(x_{0,k} + \eta z_{k}) - x_{0,k})^{2}\big]
    + \lim_{\eta\to 0 } \frac{N-s}{\eta^2}
    \E \big[ S_{\eta\lambda}(z_{\ell})^{2} \big].
  \end{align*}
  Passing to a sequence $\eta_{j} \to 0$, there exists
  $J \in \nats$ such that for all $j \geq J$,
  \begin{align}
    S_{\eta_{j}\lambda}(x_{0,k}+\eta_{j}z_{k}) -x_{0,k} %
    = \eta_{j}(z_{k} - \lambda) %
    \qquad\text{with high probability.}
    \tag{$\star$}
  \end{align}
  If this equality were true almost surely, it would follow that for $k \in T$
  \begin{align*}
    \eta^{-2}\E \| (x^{\sharp}(\eta\lambda)- x_{0})_{T}\|_{2}^{2} = \E\big[ (z_{T} - \lambda)^{2} \big] = \E[z_{T}^{2} \big] + s\lambda^{2} = s(1 + \lambda^{2}). 
  \end{align*}
  Indeed this is still true in the case of $(\star)$ with $\eta\to 0$. In
  particular, using independence of $z_{k}$ for $k \in T$ and denoting
  by $E_{j}$ the high probability event $(\star)$, we obtain by
  similar means as in the proof of Theorem \ref{thm:constr-pd},
  \begin{align*}
    \lim_{\eta\to 0} \eta^{-2} \E \|(x^{\sharp}(\eta\lambda) - x_{0})_{T}\|_{2}^{2} %
    &= \lim_{j\to \infty} \frac{s}{\eta_{j}^{2}}
    \E\big[ \eta_{j}^{2}(z_{k}-\lambda)^{2} \1(E_{j}) \big] 
    + \frac{s}{\eta_{j}^{2}} \E\big[ (x^{\sharp}_{k}(\eta_{j}\lambda) - x_{0,k})^{2}
      \1(E_{j}^{C}) \big] 
    = s(1+\lambda^{2})
  \end{align*}
  Next, define $G(\lambda) := (1 + \lambda^{2}) \Phi(-\lambda) - \lambda \phi(\lambda)$. By independence of the entries of $z_{T^{C}}$, with any $\ell \in T^{C}$, the second quantity is exactly computable as
  \begin{align*}
    \lim_{\eta\to 0 } \eta^{-2}\E \|(x^{\sharp}(\eta\lambda) - x)_{T^{C}}\|_{2}^{2} %
    = (N-s) \E\big[ S_{\lambda}(z_{\ell})^{2}\big] = 2 (N-s) G(\lambda),
  \end{align*}
  where the final equality is by definition of $S_{\lambda}$ and elementary
  calculations (\emph{cf.} remark remark \ref{rmk:slickness}). Therefore, as
  desired,
  \begin{align*}
    \lim_{\eta\to 0} \eta^{-2} \E \|x^{\sharp}(\eta\lambda) - x\|_{2}^{2} %
    = s(1+\lambda^{2}) + 2(N-s)G(\lambda).
  \end{align*}
\end{proof}

%%%%%%%%%%%%%%%%%%%%%%%%%%%%%%
% \begin{coro}[$\max$-formulation]
%   \label{coro:qppd-max-formulation}
%   Let $y = x_{0} + \eta z$ where $x_{0} \in \Sigma_{s}^{N}, \eta > 0$ and
%   $z\in \reals^{N}$ with $z_{i} \iid \mathcal{N}(0,1)$. Then,
%   \begin{align*}
%     \max_{x_{0} \in \Sigma_{s}^{N}} R^{\sharp}(\lambda; x_{0}, N, \eta)%
%     = R^{\sharp}(\lambda; s, N)
%     %= s (1+\lambda^{2}) + 2(N - s) G(\lambda)
%   \end{align*}  
% \end{coro}

\begin{proof}[Proof of {Corollary \ref{coro:qppd-max-formulation}}]
  For $0 \leq t \leq s$, where we define for simplicity of notation
  $\Sigma_{-1}^{N} := \emptyset$, observe that
  \begin{align*}
    \sup_{x_{0} \in \Sigma_{t}^{N}\setminus \Sigma_{t-1}^{N}}%
    R^{\sharp}(\lambda; x_{0}, N, \eta) %
    = R^{\sharp}(\lambda; t, N)
    % = t (1+\lambda^{2}) + 2(N - t) G(\lambda)
  \end{align*}
  because the regime $\eta \to 0$ is equivalent, by a rescaling argument, to the
  regime in which $\eta > 0$ and $|x_{0, j}| \to \infty$ for
  $j \in \mathrm{supp}(x_{0})$ (as shown explicitly in the proof of
  Proposition \ref{prop:risk-equivalence}). Therefore,
  \begin{align*}
    \sup_{x_{0} \in \Sigma_{s}^{N}} R^{\sharp}(\lambda; x_{0}, N, \eta) %
    &= \max_{0 \leq t \leq s} \sup_{x_{0} \in \Sigma_{t}^{N}\setminus \Sigma_{t-1}^{N}}%
      R^{\sharp}(\lambda; x_{0}, N, \eta) %
    \\
    & = \max \{ R^{\sharp}(\lambda; 0, N), R^{\sharp}(\lambda; s, N)\} %
    \\
    & = R^{\sharp}(\lambda; s, N)
  \end{align*}
  by linearity of the $\max$ argument and the fact that
  $1 + \lambda^{2} \geq G(\lambda)$ for $\lambda > 0$. 
\end{proof}

\subsubsection{Proof of {\qppd} parameter instability}
\label{sssec:proof-qppd-instability}

%%%%%%%%%%%%%%%%%%%%%%%%%%%%%%
% \begin{thm}[{\qppd} parameter instability]
%   \label{lem:qppd-instability}

%   Using the notation of Proposition \ref{prop:qppd-smooth} above, denote
%   \begin{align*}
%     H(\lambda; s, N)
%     := \lim_{\eta \to 0} R^{\sharp}(\lambda; x_{0}, N, \eta)
%   \end{align*}
%   For any $\varepsilon \in (0, 1)$, there exists $ C > 0$ and an integer
%   $N_{0} \geq 1$ so that for all $N \geq N_{0}$
%   \begin{align*}
%     \left| \dee{}{u} H(u\lambda^{*}; N)\right|_{u = 1- \varepsilon}
%     \geq C N^{\varepsilon}
%   \end{align*}
%   where $\lambda^{*} = \sqrt{2 \log (N)}$ is the order-optimal parameter choice
%   for {\qppd}.
  
% \end{thm}

We now prove Lemma Lemma \ref{lem:qppd-instability}. 

\begin{proof}[Proof of Lemma Lemma \ref{lem:qppd-instability}]
  By Proposition Proposition \ref{prop:qppd-smooth},
  $R^{\sharp}(\lambda; s, N) = s ( 1+ \lambda^{2}) + 2(N-s) G(\lambda)$. We
  prove the result by controlling $G'(\lambda)$ using integration by
  parts. Thus,
  \begin{align*}
    \frac{\mathrm{d}}{\mathrm{d}\lambda}G(\lambda) %
    % &= \frac{\mathrm{d}}{\mathrm{d}\lambda}\big(%
    %   (1+\lambda^2)\Phi(-\lambda) - \lambda \phi(\lambda)\big) %
    %   = 2\lambda \Phi(-\lambda) -(1+\lambda^2)\phi(\lambda) %
    %   - \phi(\lambda) +\lambda^2\phi(\lambda)
    % \\
    % &
      = 2\lambda \Phi(-\lambda) - 2\phi(\lambda) %
      \leq 2\lambda(\frac{1}{\lambda} - \frac{1}{\lambda^3} %
      + \frac{3}{\lambda^5})\phi(\lambda) - 2\phi(\lambda) %
      = -2\frac{\lambda^2 - 3}{\lambda^4}\phi(\lambda)
  \end{align*}
  A simple substitution yields, for all
  $N > \exp\Big(\frac{3}{2}(1-\varepsilon)^{-2}\Big)$,
  \begin{align*}
    \left.\frac{\mathrm{d}}{\mathrm{d}u}\right|_{u = 1-\varepsilon}\hskip-22pt G(u\bar \lambda)
    % = G'((1-\varepsilon)\bar \lambda)\bar \lambda
    &\leq \left[-2\frac{(u\bar \lambda)^2 - 3}{u^{4}\bar \lambda^{3}}%
    \phi(u\bar \lambda) \right|_{u = 1-\varepsilon} %
    = -\frac{2(1-\varepsilon)^2 \log (N) - 3}{%
      (1-\varepsilon)^4 \sqrt{\pi \log^{3}(N)}} % It was a pi^{3}, but Im not sure why so I changed it.
    N^{-(1-\varepsilon)^2} =: -\frac12 \gamma(N, \varepsilon) N^{-(1-\varepsilon)^{2}}.
  \end{align*}
  Multiplying $G((1-\varepsilon)\bar \lambda)$ by $N-s$ yields
  \begin{align*}
    \left|\dee{}{u} R^{\sharp}(u\bar \lambda ; s, N)\right|_{u = 1-\varepsilon} %
    &\geq (N-s)\gamma(N,\varepsilon) %
      N^{-(1-\varepsilon)^2} -2s(1-\varepsilon)\sqrt{2 \log N}
    \\
    & = \gamma(N, \varepsilon) N^{2\varepsilon - \varepsilon^{2}} - s\gamma(N, \varepsilon)N^{-(1-\varepsilon)^{2}} - 2s(1-\varepsilon)\sqrt{2 \log N}
    \\
    &\geq C N^{\varepsilon}
  \end{align*}
  for some constant $C > 0$ under the condition that $N\geq N_{0}$, where
  $N_{0} > \exp\Big(\frac{3}{2}(1-\varepsilon)^{-2}\Big)$ is chosen so that for
  all $N \geq N_{0}$ the following two conditions are satisfied:
  \begin{align*}
    \begin{cases}
      (N-s)\gamma(N,\varepsilon) N^{-(1-\varepsilon)^2} %
      \geq 2s(1-\varepsilon)\sqrt{2 \log N} %
      \\[.25cm]
      \gamma(N,\varepsilon) \big(1 - \frac{s}{N}\big) %
      \geq 2s(1-\varepsilon)N^{-2\varepsilon+\varepsilon^{2}}\sqrt{2 \log N} + C
      N^{-\varepsilon + \varepsilon^{2}}
    \end{cases}
  \end{align*}
  In this regime, one achieves unbounded growth of the risk as a
  power law of the ambient dimension.
\end{proof}

\begin{remark}
  Using integration by parts, one has for $x > 0$,
  \begin{align*}
    \Phi(-x) %
    &= \int_{x}^{\infty} \phi(t) \d t %
    = \Big(\frac{1}{x} - \frac{1}{x^{3}}\Big)\phi(x) %
    + 3 \int_{x}^{\infty} \frac{t \phi(t)}{t^{5}} \d t
    \\
    &\leq \Big(\frac{1}{x} - \frac{1}{x^{3}}\Big)\phi(x) %
    + 3 x^{-5} \int_{x}^{\infty} t \phi(t) \d t %
    = \Big(\frac{1}{x} - \frac{1}{x^{3}} + \frac{3}{x^{5}}\Big)\phi(x) %
  \end{align*}

\end{remark}

%%%%%%%%%%%%%%%%%%%%%%%%%%%%%%
% \begin{coro}[{\qppd} parameter instability]
%   \label{thm:qppd-instability}
%   Under the conditions of the previous theorem, for $\varepsilon \in (0,1)$ there exists a constant $C > 0$ and integer $N_{0} \geq 1$ such that for all $N \geq N_{0}$,
%   \begin{align*}
%     \lim_{\eta \to 0} R^{\sharp}((1-\varepsilon)\bar \lambda; x_{0}, N, \eta) %
%     \geq C \varepsilon N^{\varepsilon}. 
%   \end{align*}

% \end{coro}

\begin{proof}[Proof of Theorem \ref{thm:qppd-instability}]
  Define $f(u) := \dee{}{u} R^{\sharp}(u\bar\lambda; s, N)$ and
  $F(u) := R^{\sharp}(u\bar \lambda; s, N)$ its anti-derivative. The proof is an
  application of the fundamental theorem of calculus:
  \begin{align*}
    F(1) - F(1-\varepsilon) %
    = \int_{0}^{\varepsilon} f(1-t)\d t %
    \leq - C \int_{0}^{\varepsilon} N^{t} \d t %
    = C\frac{1 - N^{\varepsilon}}{\log N}.
  \end{align*}
  The result follows by substituting:
  \begin{align*}
    R^{\sharp}((1-\varepsilon)\bar \lambda; s, N) %
    \geq C \frac{N^{\varepsilon} - 1}{\log N} %
    + R^{\sharp}(\bar \lambda; s, N) %
    \geq C \frac{N^{\varepsilon}}{\log N}
  \end{align*}
  where the latter inequality holds after taking $N$ sufficiently large, and
  $C > 0$ is a universal constant that has changed values in the final
  expression. 
\end{proof}

% OLD PROOF THAT I THINK IS WRONG
% \begin{proof}[Proof of Theorem \ref{thm:qppd-instability}]
%   The proof is an application of the Fundamental Theorem of Calculus:
%   \begin{align*}
%     R^{\sharp}((1-\varepsilon) \bar \lambda; s, N) %
%     \geq C \varepsilon N^{\varepsilon} - R^{\sharp}(\bar \lambda; s, N)
%     \geq C \varepsilon N^{\varepsilon} - \tilde C s\log N %
%     \geq C \varepsilon N^{\varepsilon}
%   \end{align*}
%   where the second last inequality follows by taking $N$ sufficiently large, and
%   the latter inequality follows by changing $C$ if necessary.
% \end{proof}

\begin{proof}[Proof of {Proposition \ref{prop:asymptotic-equivalence}}]
  By Proposition \ref{prop:qppd-smooth},
  $R^{\sharp}(\lambda; s, N) = s ( 1+ \lambda^{2}) + 2(N-s) G(\lambda)$. We
  prove the result by controlling $G'(\lambda)$. One may lower bound
  $G'(\lambda)$ as
  \begin{align*}
    \frac{\mathrm{d}}{\mathrm{d}\lambda}G(\lambda) %
    = 2\lambda \Phi(-\lambda) - 2\phi(\lambda) %
    \geq 2\lambda\Big(\frac{\lambda}{\lambda^{2} + 1}\Big)\phi(\lambda)
    - 2\phi(\lambda) %
    = - 2 \frac{\phi(\lambda)}{\lambda^{2} + 1}.
  \end{align*}
  This gives the following lower bound for
  $\dee{}{\lambda}R^{\sharp}(\lambda; s, N)$:
  \begin{align*}
    \dee{R^{\sharp}}{\lambda}(\lambda; s, N) %
    \geq 2s\lambda - 4(N-s) \frac{\phi(\lambda)}{\lambda^{2} + 1}
    \geq 2\lambda - 4N\frac{\phi(\lambda)}{\lambda^{2} + 1} %
    = \frac{2}{\lambda^{2} + 1} \big( \lambda(\lambda^{2} + 1) %
    - 2N\phi(\lambda)\big).
  \end{align*}
  Substituting $\bar \lambda$ gives a positive quantity, since $N \geq 2$:
  \begin{align*}
    \frac{2}{2\log N + 1} \big(\sqrt{2\log N} (2\log N + 1) %
    - \frac{2}{\sqrt{2\pi}} \big) %
    > 0.
  \end{align*}
  Consequently, $\lambda^{*} < \bar \lambda$ because $\lambda^{*}$ is the value
  giving optimal risk and $\dee{}{\lambda} R^{\sharp}(\lambda; s, N)$ is
  increasing for all $\lambda \geq \bar \lambda$. Then it must be that
  $|\lambda^{*} - \bar \lambda| < \varepsilon$ for any $\varepsilon > 0$ when
  $N$ is sufficiently large. Indeed, fix $\varepsilon > 0$. By
  Lemma \ref{lem:qppd-instability} there exists $N_{0} \geq 1$ so that for all
  $N \geq N_{0}$ {\qppd} is parameter unstable for $\lambda < \bar \lambda$,
  yielding $R^{\sharp}(\lambda; s, N) \gtrsim N^{\varepsilon}$. But
  $R^{\sharp}(\lambda^{*}; s, N) \leq C R^{*}(s, N)$ for $N \geq N_{0}$ by
  Proposition \ref{prop:lspd-optimal-risk}, where we re-choose $N_{0} = N_{0}(s)$ if
  necessary. Thus, it must be that $|\lambda^{*} - \bar \lambda| < \varepsilon$
  for all $N \geq N_{0}$. In particular,
  $\lim_{N\to\infty} \bar \lambda / \lambda^{*} = 1$.
\end{proof}

\begin{remark}
  \label{rmk:slickness}
  One may derive the following lower bound using integration by parts.
  \begin{align*}
    \Phi(-\lambda) \geq \frac{\lambda}{\lambda^{2} + 1}\phi(\lambda)
  \end{align*}
  Let $Z \sim \mathcal{N}(0,1)$ be a standard normal random variable and let
  $S_{\lambda}(\cdot)$ denote soft-thresholding by $\lambda > 0$. Then,
  \begin{align*}
    0 \leq \E\big[ S_{\lambda}(Z)^{2}\big] %
    = 2 \int_{\lambda}^{\infty} (z - \lambda)^{2} \phi(\lambda) \d z %
    = 2(1+ \lambda^{2}) \Phi(-\lambda) - 2\lambda\phi(\lambda).
  \end{align*}
  Thus, $(1+ \lambda^{2}) \Phi(-\lambda) \geq \lambda\phi(\lambda)$, giving the
  desired lower bound.
\end{remark}

\subsubsection{Proof of {\qppd} right-sided stability}
\label{ssec:proof-qppd-stability}

%%%%%%%%%%%%%%%%%%%%%%%%%%%%%%
% \begin{thm}
%   \label{thm:qppd-rhs-stability}
%   {\qppd} is stable in the sense that for any $\lambda > 0$ satisfying
%   $L = \lambda / \bar \lambda > 1$,
%   \begin{align*}
%     \frac{R^{\sharp}(\lambda; s, N)}{R^{\sharp}(\lambda^{*}; s,N)} %
%     \leq CL^{2}.
%   \end{align*}
% \end{thm}

We next prove right-sided stability of {\qppd}. 

\begin{proof}[Proof of {Theorem \ref{thm:qppd-rhs-stability}}]

  Given $L = \lambda / \lambda^{*} > 1$, define $\bar L = \bar L(s, N) > 0$ by
  $\lambda = \bar L \bar \lambda = \bar L \sqrt{2 \log N}$. Note
  $\lim_{N\to \infty} \bar{L}(s, N) = L$, because $\bar \lambda$ is
  asymptotically equivalent to $\lambda^{*}$ up to constants. A direct
  substitution of $\lambda = \bar L \bar \lambda = \bar L\sqrt{2\log N}$ in the
  analytic formula for $R^{\sharp}(\lambda; s, N)$ yields the desired bound,
  noting that $R^{\sharp}(\lambda^{*}; s, N)$ equals $R^{*}(s, N)$ up to
  constants. Thus, there is $C > 0$ and $N_{0} = N_{0}(s) \geq 2$ so
  that for all $N \geq N_{0}$
  \begin{align*}
    R^{\sharp}(\lambda; s, N) %
    \leq s ( 1 + 2\bar L^{2} \log N) + %
    \frac{N-s}{\bar L N^{\bar L^{2}}\sqrt{\pi\log N}} %
    \leq C L^{2} R^{*}(s, N). 
  \end{align*}
\end{proof}

\subsection{Proofs of {\bppd} results}
\label{ssec:proofs-bppd-results}

\subsubsection{Proof of underconstrained {\bppd} parameter instability}
\label{sec:proof-bppd-uc}

%%%%%%%%%%%%%%%%%%%%%%%%%%%%%%
% \begin{lem}
%   \label{lem:bppd-uc}
%   Let $s \geq 1$ and let $x_{0} \in \Sigma_{s}^{N}\setminus \Sigma_{s-1}^{N}$ be
%   an exactly $s$-sparse signal with $|x_{j}| \gtrsim N$ for all
%   $j\in \mathrm{supp}(x_{0})$. Suppose $y = x_{0} + \eta z$ for
%   $z \in \reals^{N}$ with $z_{i} \iid \mathcal{N}(0,1)$, $\eta > 0$, and define
%   $\tilde x(\sigma)$ as in {\bppd}. If $\sigma > \eta \sqrt N$, then there exists
%   a constant $C > 0$ and integer $N_{0} \geq 2$ such that if $N \geq N_{0}$ then
%   \begin{align*}
%     \tilde R(\sigma; x_{0}, N, \eta) \geq C \sqrt N. 
%   \end{align*}

% \end{lem}

We prove parameter instability of {\bppd} in the underconstrained regime.

\begin{proof}[Proof of {Lemma \ref{lem:bppd-uc}}]

  By scaling, it suffices to consider the case where $\eta = 1$. Define the event
  \begin{align*}
    A_{N} %
    := \big\{ \|z\|_{2}^{2} \leq N - 2 \sqrt N \quad\And\quad %
    \|z\|_{\infty} \leq \sqrt{3 \log N} \big\}. 
  \end{align*}
  On $A_{N}$, it follows from the KKT conditions, where
  $h = \tilde x(\sigma) - x_{0}$, that
  \begin{align*}
    N %
    \leq \sigma^{2} = \|h\|_{2}^{2} - 2\ip{h, z} + \|z\|_{2}^{2} %
    \leq \|h\|_{2}^{2} - 2\ip{h,z} + N - 2\sqrt N 
  \end{align*}
  By Cauchy-Schwartz and definition of $A_{N}$,
  \begin{align*}
    \frac12 \|h\|_{2}^{2} %
    \geq \sqrt N +\ip{h, z} %
    \geq \sqrt N - \|h\|_{1} \|z\|_{\infty} %
    \geq \sqrt N - \|h\|_{1} \sqrt{3 \log N}. 
  \end{align*}
  Applying Proposition \ref{prop:descent-cone} and the binomial inequality
  $2ab \leq a^{2} + b^{2}$ gives
  \begin{align*}
    \sqrt N - \|h\|_{1} \sqrt{3 \log N} %
    \geq \sqrt N - 2\sqrt{s} \|h\|_{2} \sqrt{3 \log N} %
    \geq \sqrt N - \frac12 \|h\|_{2}^{2} %
    - 6s \log N
  \end{align*}
  Combining these two groups of inequalities gives
  $\|h\|_{2}^{2} \geq \sqrt N - 6s \log N$. Hence, by Bayes' rule and
  Corollary \ref{coro:large-deviation} there exist dimension independent constants
  $C, C' > 0$ such that
  \begin{align*}
    \E \|\tilde x (\sigma) - x_{0} \|_{2}^{2} %
    \geq \mathbb{P}(A_{N}) \cdot \E \big[ \|\tilde x(\sigma) - x_{0} \|_{2}^{2} %
    \mid A_{N} \big] %
    \geq C'  \big( \sqrt N - 6 s \log N \big) %
    \geq C\sqrt N. 
  \end{align*}
  The final inequality follows by the assumption that $N \geq N_{0}(s)$.
\end{proof}

\subsubsection{Supporting propositions for the geometric lemma}
\label{sssec:supporting-propositions-for-bppd-oc-3a}

This section is dedicated to several results necessary for the proof of
Lemma \ref{lem:bppd-oc-3a}, a main lemma in the proof of
Theorem \ref{thm:bppd-minimax} and Theorem \ref{thm:bppd-maximin}. We state and prove
these propositions in line.

\begin{proposition}%[prop:bppd-oc-1a]
  \label{prop:bppd-oc-1a}
  Fix $C_{1} > 0$. Let $\alpha_{1} = a_{1}N^{1/4}$ and
  $\lambda = L\sqrt{\frac{N}{\log N}}$. Where
  $K_{1} := \lambda B_{1}^{N} \cap \alpha_{1}B_{2}^{N}$, there exists a choice
  of universal constants $a_{1} > 0$, $L \gg 1$ and
  $N_{0} = N_{0}^{\eqref{prop:bppd-oc-1a}}(a_{1}, C_{1}, L) \geq 1$ satisfying
  \begin{align*}
    N_{0}^{\eqref{prop:bppd-oc-1a}}(a_{1}, C_{1}, L) := D_{1}^{2 / (2 D_{2} - 1)}, \quad %
    D_{1} := \frac{a_{1}^{2}}{5L^{2}} < 1, %
    D_{2} := 2 \Big( \frac{C_{1} + a_{1}^{2}}{L^{2}}\Big)^{2} < \frac{1}{2}
  \end{align*}
  so that for all $N > N_{0}$
  \begin{align*}
    w (K_{1}) \geq (\frac{a_{1}^{2} + C_{1}}{2})\sqrt N.
  \end{align*}
\end{proposition}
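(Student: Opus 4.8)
The plan is to recognize $K_1$ as a scaled instance of the sets studied in Proposition \ref{prop:bellec2} and to invoke that result as a lower bound on the Gaussian mean width. First I would use positive homogeneity of the GMW to factor out the scale: writing $\gamma := \alpha_1/\lambda$, one has $K_1 = \lambda B_1^N \cap \alpha_1 B_2^N = \lambda\big(B_1^N \cap \gamma B_2^N\big)$, so $w(K_1) = \lambda\, w(B_1^N \cap \gamma B_2^N)$. Since $B_1^N = \mathrm{conv}\{\pm e_1,\ldots,\pm e_N\}$ is the convex hull of $2N$ points lying on $\sph^{N-1}$, and the associated matrix $M = I_N$ satisfies $\|M\theta\|_2 = \|\theta\|_2$ (so the restricted isometry lower bound holds with $\kappa = 1$), this is exactly the geometry of Proposition \ref{prop:bellec2} with $m = N$.

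Next I would set $s := 1/\gamma^2 = \lambda^2/\alpha_1^2 = \tfrac{L^2}{a_1^2}\tfrac{\sqrt N}{\log N}$ and verify the hypotheses of Proposition \ref{prop:bellec2}, namely $\gamma \in (0,1]$ and $s \le N/5$. Both hold once $N$ is large, since $\gamma = a_1\sqrt{\log N}/(L N^{1/4}) \to 0$ and $s/N = \tfrac{L^2}{a_1^2}(\sqrt N\,\log N)^{-1} \to 0$; the latter is the origin of the constant $5$ entering $D_1$. Applying the proposition then gives $w(B_1^N\cap\gamma B_2^N) \ge \tfrac{\sqrt2}{4}\sqrt{\log(N\gamma^2/5)}$, hence $w(K_1) \ge \tfrac{\sqrt2}{4}\lambda\sqrt{\log(N\gamma^2/5)}$.

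The remaining core is a scalar inequality. A direct computation gives $N\gamma^2/5 = \tfrac{a_1^2}{5L^2}\sqrt N\,\log N = D_1\sqrt N\,\log N \ge D_1\sqrt N$, so $\log(N\gamma^2/5) \ge \log D_1 + \tfrac12\log N$. I would then observe that the defining relation $N > N_0 = D_1^{2/(2D_2-1)}$ is equivalent, after taking logarithms and using $D_2 < \tfrac12$ (so $2D_2-1 < 0$, and the inequality flips), to $\log D_1 + \tfrac12\log N \ge D_2\log N$. Consequently $\log(N\gamma^2/5) \ge D_2\log N$, so $\sqrt{\log(N\gamma^2/5)} \ge \sqrt{D_2}\,\sqrt{\log N}$. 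Substituting $\lambda = L\sqrt{N/\log N}$ then collapses the $\log N$ factors and yields $w(K_1) \ge \tfrac{\sqrt2}{4}L\sqrt{D_2}\,\sqrt N$; the constant $D_2$ is calibrated precisely so that $\tfrac{\sqrt2}{4}L\sqrt{D_2} = \tfrac{a_1^2+C_1}{2}$, which delivers the claimed bound.

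The main difficulty is bookkeeping rather than conceptual. Two points need care. First, Proposition \ref{prop:bellec2} requires $s = 1/\gamma^2$ to be a positive integer; I would arrange this by decreasing $\gamma$ slightly to $\gamma'$ with $1/\gamma'^2 = \lceil 1/\gamma^2\rceil$, using the monotonicity $w(B_1^N\cap\gamma' B_2^N) \le w(B_1^N\cap\gamma B_2^N)$ and the fact that the resulting $s' \le N/5$ still holds for large $N$, with the $O(1)$ rounding of $s$ affecting the logarithm negligibly. Second, the constants must be tracked exactly so that the prefactor lands on $\tfrac{a_1^2+C_1}{2}$ and the threshold rearranges cleanly to $D_1^{2/(2D_2-1)}$; the constraint $D_2 < \tfrac12$ is exactly what makes $N_0$ finite and amounts to requiring $L$ large relative to $a_1^2 + C_1$.
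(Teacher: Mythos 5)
Your proof follows the paper's argument essentially verbatim: invoke Proposition \ref{prop:bellec2} with $\kappa=1$ (after rescaling $K_1=\lambda\big(B_1^N\cap\gamma B_2^N\big)$ and computing $N\gamma^2/5 = D_1\sqrt N\log N$), then reduce the target inequality to $\log(D_1\sqrt N\log N)\geq D_2\log N$ and rearrange to obtain the threshold $N_0=D_1^{2/(2D_2-1)}$. Your explicit treatment of the integrality of $s=1/\gamma^2$ is a small improvement over the paper's proof, and your calibration $D_2=2\big((C_1+a_1^2)/L\big)^2$ is in fact the value the chain of inequalities requires — the paper's printed $D_2$, with $L^2$ inside the parentheses, appears to be a typo.
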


%%%%%%%%%%%%%%%%%%%%%%%%%%%%%%
% \begin{prop}[prop:bppd-oc-1a]
%   \label{prop:bppd-oc-1a}
%   Let $z \in \reals^{N}$ with $z_{i} \iid \mathcal{N}(0, 1)$ and fix
%   $C_{1} > 0$. Let $\alpha_{1} = a_{1}N^{1/4}$ and
%   $\lambda = L\sqrt{\frac{N}{\log N}}$. Where
%   $K_{1} := \lambda B_{1}^{N} \cap \alpha_{1}B_{2}^{N}$, there exists a choice
%   of $a_{1} > 0$, $L \gg 1$ and an integer
%   $N_{0} = N_{0}^{\eqref{prop:bppd-oc-1a}}(a_{1}, C_{1}, L) \geq 1$ with
%   \begin{align*}
%     N_{0}^{\eqref{prop:bppd-oc-1a}}(a_{1}, C_{1}, L) := D_{1}^{2 / (D_{2} - 1)}, \quad %
%     D_{1} := \frac{a_{1}^{2}}{5L^{2}}, %
%     D_{2} := 2 \Big( \frac{C_{1} + a_{1}^{2}}{L^{2}}\Big)^{2}
%   \end{align*}
%   so that for all $N \geq N_{0}$
%   \begin{align*}
%     w (K_{1}) \geq (\frac{a_{1}^{2} + C_{1}}{2})\sqrt N.
%   \end{align*}
% \end{prop}

\begin{proof}[Proof of {Proposition \ref{prop:bppd-oc-1a}}]
  Since $w(K_{1}) = \E_{z} \sup_{q \in K_{1}}\ip{q, z}$ is the Gaussian mean
  width of $K_{1}$, we may invoke Proposition \ref{prop:bellec2} to obtain a sufficient
  chain of inequalities:
  \begin{align*}
    \E \sup_{K_{1}}\ip{q, z} %
    = w(K_{1}) %
    \overset{\eqref{prop:bellec2}}{\geq} \frac{\sqrt 2}{4} \kappa \lambda \sqrt{
    \log \big( \frac{N \alpha_{1}^{2}}{5\lambda^{2}}\big)} %
    \overset{(*)}{\geq} \Big(\frac{\alpha_{1}^{2} + C_{1}}{2}\Big) \sqrt N.
  \end{align*}
  In particular, Proposition \ref{prop:bellec2} holds with $\kappa = 1$, since $\kappa$
  is the lower-RIP constant of the sensing matrix for {\bppd}, which is the
  identity. We thus turn our attention to $(*)$, which is equivalent to
  \begin{align*}
    \log\big( D_{1} \sqrt N \log N\big) \geq D_{2} \log N, \qquad %
    D_{1} := \frac{a_{1}^{2}}{5L^{2}}, %
    D_{2} := 2 \Big( \frac{C_{1} + a_{1}^{2}}{L^{2}}\Big)^{2}
  \end{align*}
  Rearranging gives
  \begin{align*}
    \frac12 + \frac{\log D_{1} + \log\log N}{\log N} \geq D_{2}
  \end{align*}
  and for $D_{1}, 2D_{2} \leq 1$, this is certainly satisfied for
    $N \geq D_{1}^{2 / (2 D_{2} - 1)}$ (\eg $L=11$ imposes $N\gtrsim 10^{5}$ when
    $a_{1} = 1, C_{1} = 2$). Accordingly, it suffices to choose
    $N_{0} = N_{0}(a_{1}, C_{1}, L)$ as in the proposition statement so that for
    all $N \geq N_{0}$, as desired,
  \begin{align*}
    w(K_{1}) \geq \Big(\frac{a_{1}^{2} + C_{1}}{2}\Big) \sqrt N.
  \end{align*}
\end{proof}

\begin{proposition}%[prop:gmw-lb-ip]
  \label{prop:gmw-lb-ip}
  Fix $\delta > 0$, $c \in (0, 1)$. Let
  $K_{1} = \lambda B_{1}^{N} \cap \alpha_{1} B_{2}^{N}$ be as defined
  above. There are universal constants $\tilde{D}_{1} > 0, N_{0} \geq 2$ such
  that for
  \begin{align*}
    N > N_{0} := N_{0}^{\eqref{prop:gmw-lb-ip}}(c, \tilde D_{1}, \delta, L) %
    := \Big( \frac{1}{\tilde{D}_{1} L^{2} (1 - c)^{2}} \log\big(
    \frac{1}{\delta}\big)\Big)^{2}
  \end{align*}
  there exists $q \in K_{1}$ such that $\ip{q, z} \geq c w (K_{1})$ with
  probability at least $1 - \delta$, where $z \in \reals^{N}$ with
  $z_{i} \iid \mathcal{N}(0,1)$.
\end{proposition}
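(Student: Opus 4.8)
The plan is to recognize this as a lower-tail concentration statement for the supremum of a Gaussian process and to combine Gaussian concentration of measure with the Gaussian-mean-width lower bound already established for $K_1$. First I would define $F(z) := \sup_{q\in K_1}\ip{q,z}$, so that by construction $\E F(z) = w(K_1)$, and observe that the event $\{\exists\, q\in K_1 : \ip{q,z}\geq c\,w(K_1)\}$ is exactly $\{F(z)\geq c\,w(K_1)\}$. Since $c\in(0,1)$ we have $c\,w(K_1) < w(K_1) = \E F$, so writing $u := (1-c)\,w(K_1)$ the target probability is $\mathbb{P}(F \geq \E F - u)$, i.e. a \emph{lower}-deviation bound rather than an upper one.

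Next I would record two structural facts. First, $F$ is $\alpha_1$-Lipschitz in $z$, because every $q\in K_1\subseteq\alpha_1 B_2^N$ satisfies $\|q\|_2\leq\alpha_1$, whence $F(z)-F(z')\leq\sup_{q\in K_1}\ip{q,z-z'}\leq\alpha_1\|z-z'\|_2$. Second, since $K_1=\lambda B_1^N\cap\alpha_1 B_2^N$ is symmetric, $F(z)=\sup_{q\in K_1}|\ip{q,z}|=\|f\|_{K_1}$ in the notation of Theorem \ref{thm:borell-tis}, with variance proxy $\sigma_{K_1}^2=\sup_{q\in K_1}\|q\|_2^2\leq\alpha_1^2$. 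I would then apply Gaussian concentration to the lower tail, $\mathbb{P}(F\leq\E F - u)\leq\exp(-u^2/(2\alpha_1^2))$, substitute $u=(1-c)w(K_1)$ and $\alpha_1^2=a_1^2\sqrt N$, and invoke the lower bound on the mean width from the proof of Proposition \ref{prop:bppd-oc-1a} — namely $w(K_1)\geq\tfrac{\sqrt2}{4}\lambda\sqrt{\log(N\alpha_1^2/(5\lambda^2))}\geq c_0 L\sqrt N$ for all large $N$ and a universal $c_0>0$. This last step is where the $L$-dependence enters, since $\lambda=L\sqrt{N/\log N}$ contributes the factor $L$ while the Lipschitz constant $\alpha_1$ does not involve $L$. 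The exponent then satisfies $-\tfrac{(1-c)^2 w(K_1)^2}{2\alpha_1^2}\leq -\tfrac{(1-c)^2 c_0^2 L^2\sqrt N}{2a_1^2}$, and requiring this to be at most $\log\delta$ and solving for $N$ gives precisely $N\geq\bigl(\tfrac{1}{\tilde D_1 L^2(1-c)^2}\log(1/\delta)\bigr)^2$ with $\tilde D_1:=c_0^2/(2a_1^2)$ a universal constant, which is the stated $N_0$.

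The main obstacle is that Theorem \ref{thm:borell-tis} as stated controls only the \emph{upper} tail $\mathbb{P}(\|f\|_{K_1}>\E\|f\|_{K_1}+u)$, whereas the claim needs the \emph{lower} deviation $\mathbb{P}(\|f\|_{K_1}<\E\|f\|_{K_1}-u)$. I would resolve this by appealing to the two-sided nature of Gaussian concentration: $F$ is a convex, $\alpha_1$-Lipschitz function of the standard Gaussian vector $z$, and the Gaussian isoperimetric inequality underlying Theorem \ref{thm:borell-tis} yields the matching lower-tail estimate $\mathbb{P}(F\leq\E F - u)\leq\exp(-u^2/(2\alpha_1^2))$ with the same variance proxy $\alpha_1^2$.

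The only remaining bookkeeping is to confirm that the threshold on $N$ produced by the concentration bound dominates the (milder) threshold needed to validate the asymptotic lower bound $w(K_1)\geq c_0 L\sqrt N$, so that a single $N_0$ of the stated form is admissible; re-selecting the universal constants $c_0$ and $\tilde D_1$ if necessary, this is a routine verification and does not affect the form of the conclusion.
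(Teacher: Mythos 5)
Your proposal is correct and follows essentially the same route as the paper: apply the lower-tail Gaussian concentration (Borell--TIS) bound to $\sup_{q\in K_1}\ip{q,z}$ with variance proxy $\sigma_{K_1}^2=\alpha_1^2=\sqrt N$, set $u=(1-c)w(K_1)$, substitute the mean-width lower bound $w(K_1)\gtrsim L\sqrt N$ from Proposition \ref{prop:bppd-oc-1a}, and solve the resulting exponent for $N_0$. Your explicit justification of the lower-deviation inequality (via Lipschitz/isoperimetric concentration) is in fact slightly more careful than the paper, whose statement of Theorem \ref{thm:borell-tis} only records the upper tail but is applied to the lower tail without comment.
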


%%%%%%%%%%%%%%%%%%%%%%%%%%%%%%
% \begin{prop}[prop:gmw-lb-ip]
%   \label{prop:gmw-lb-ip}
%   Fix $\delta > 0$, $c \in (0, 1)$ and let $z \in \reals^{N}$ be a standard
%   normal random vector, $z_{i} \iid \mathcal{N}(0,1)$. Let
%   $K_{1} = \lambda B_{1}^{N} \cap \alpha_{1} B_{2}^{N}$ be as defined
%   above. There is a universal constant $\tilde{D}_{1} > 0$ such that for
%   \begin{align*}
%     N \geq N_{0}^{\eqref{prop:gmw-lb-ip}}(c, \tilde D_{1}, \delta, L) %
%     := \Big( \frac{1}{\tilde{D}_{1} L^{2} (1 - c)^{2}} \log\big( \frac{1}{\delta}\big)\Big)^{2}
%   \end{align*}
%   there exists $q \in K_{1}$ such that $\ip{q, z} \geq c w (K_{1})$ with
%   probability at least $1 - \delta$.
% \end{prop}

\begin{proof}[Proof of {Proposition \ref{prop:gmw-lb-ip}}]
  Note that $K_{1} \subseteq \reals^{N}$ is a topological space and define the
  centered Gaussian process $f_{x} := \ip{x, g}$ for
  $g_{i} \iid \mathcal{N}(0, 1)$. Observe that
  $\|f\|_{K_{1}} := \sup_{x \in K_{1}} |f_{x}|$ is almost surely finite. For any
  $u > 0$,
  \begin{align*}
    \mathbb{P} \big( %
        \sup_{x \in K_{1}}|\ip{x, g}| < w(K_{1}) - u \big) %
    \leq \exp\big(-\frac{u^{2}}{2 \sigma_{K_{1}}^{2}} \big). 
  \end{align*}
  by Theorem \ref{thm:borell-tis}. Therefore, for $c \in (0, 1)$,
  \begin{align*}
    \mathbb{P}\big( \sup_{x \in K_{1}}|\ip{x, g}| < c w (K_{1}) \big) %
    \leq \exp\big( %
    - \frac{(1-c)^{2} w^{2} (K_{1})}{2 \sigma_{K_{1}}^{2}} \big) %
    \leq \exp\big( %
    - \frac{(1-c)^{2} L^{2} \sqrt N \log( D_{1} \sqrt N \log N )}{16 \log N} \big) %
    \leq \delta
  \end{align*}
  because
  \begin{align*}
    \sigma_{K_{1}}^{2} %
    &= \sup_{x \in K_{1}} \E |\ip{x,g}|^{2} %
      = \sup_{x \in K_{1}} \sum_{i=1}^{N} x_{i}^{2}\E |g_{i}|^{2} %
      = \sup_{x \in K_{1}} \|x\|_{2}^{2} = \alpha_{1}^{2} = \sqrt N. 
  \end{align*}
  A specific choice of $q \in K_{1}$ follows by choosing the $q \in K_{1}$ that
  realizes the supremum, since $K_{1}$ is closed.
  
\end{proof}

\begin{proposition}%[prop:bppd-oc-1b]
  \label{prop:bppd-oc-1b}
  Fix $C_{1}, \delta > 0$ and define the event
  $\mathcal{Z}_{-} := \{ \|z\|_{2}^{2} \leq N + C_{1}\sqrt N\}$ for
  $z \in \reals^{N}$ with $z_{i} \iid \mathcal{N}(0,1)$. There is a universal
  constant $N_{0} = N_{0}^{\eqref{prop:bppd-oc-1b}} \geq 1$ satisfying
  \begin{align*}
    N_{0}^{\eqref{prop:bppd-oc-1b}} %
    \geq \max \{N_{0}^{\eqref{prop:bppd-oc-1a}}(a_{1}, C_{1}, L), %
    N_{0}^{\eqref{prop:gmw-lb-ip}}(c, \tilde D_{1}, \delta, L)\big\},
  \end{align*}
  and a universal constant
  $k_{1} = k_{1}(N_{0}^{\eqref{prop:bppd-oc-1b}}, \delta) > 0$ so that for all
  $N \geq N_{0}$ there is an event $\mathcal{E} \subseteq \mathcal{Z}_{-}$
  satisfying
  \begin{align*}
    K_{1} \cap F \neq \emptyset \text{ on } \mathcal{E} %
    \qquad \text{and} \qquad 
    \mathbb{P}(\mathcal{E}) \geq \mathbb{P}(\mathcal{Z}_{-}) - \delta. 
  \end{align*}
\end{proposition}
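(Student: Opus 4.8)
The plan is to exhibit a single point $q \in K_1$ that simultaneously lies in the feasible ball $F = B_2^N(z;\sqrt N)$, and to show this occurs on a high-probability subset of $\mathcal{Z}_-$. First I would reduce the geometric condition $K_1 \cap F \neq \emptyset$ to an inner-product condition. Since $q \in F$ means $\|q-z\|_2^2 = \|q\|_2^2 - 2\langle q,z\rangle + \|z\|_2^2 \leq N$, and on $\mathcal{Z}_-$ we have $\|z\|_2^2 \leq N + C_1\sqrt N$ while every $q \in K_1 \subseteq \alpha_1 B_2^N$ obeys $\|q\|_2^2 \leq \alpha_1^2 = a_1^2 \sqrt N$, it suffices to produce a $q \in K_1$ with
\[
  \langle q, z\rangle \geq \tfrac{a_1^2 + C_1}{2}\sqrt N .
\]

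Next I would obtain such a $q$ probabilistically. By Proposition \ref{prop:gmw-lb-ip} (a Borell--TIS concentration statement), for $N \geq N_0^{\eqref{prop:gmw-lb-ip}}(c,\tilde D_1,\delta,L)$ there is an event $\mathcal{G}$ with $\mathbb{P}(\mathcal{G}) \geq 1-\delta$ on which some $q \in K_1$ satisfies $\langle q, z\rangle \geq c\, w(K_1)$; here I would note that $K_1$ is origin-symmetric, so the one-sided supremum $\sup_{q\in K_1}\langle q,z\rangle$ agrees with the two-sided $\sup_{q\in K_1}|\langle q,z\rangle|$ that the theorem controls. Combining this with the mean-width lower bound of Proposition \ref{prop:bppd-oc-1a}, it remains only to guarantee $c\,w(K_1) \geq \tfrac{a_1^2+C_1}{2}\sqrt N$. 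I would then set $\mathcal{E} := \mathcal{Z}_- \cap \mathcal{G} \subseteq \mathcal{Z}_-$; on $\mathcal{E}$ the displayed bound holds, whence $q \in F$ and $K_1 \cap F \neq \emptyset$. The probability estimate is a union bound, $\mathbb{P}(\mathcal{E}) \geq \mathbb{P}(\mathcal{Z}_-) - \mathbb{P}(\mathcal{G}^c) \geq \mathbb{P}(\mathcal{Z}_-) - \delta$, and since $\mathbb{P}(\mathcal{Z}_-) \to \Phi(C_1/\sqrt 2) > \tfrac12$ by the central limit theorem (exactly as in Proposition \ref{prop:bppd-oc-const-prob}), choosing $\delta$ small yields a uniform lower bound $\mathbb{P}(\mathcal{E}) \geq k_1 > 0$.

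The main obstacle is the factor $c \in (0,1)$ lost in the concentration step: Proposition \ref{prop:bppd-oc-1a} as stated delivers only $w(K_1) \geq \tfrac{a_1^2+C_1}{2}\sqrt N$, whereas I need the strictly stronger $w(K_1) \geq \tfrac{a_1^2+C_1}{2c}\sqrt N$ to absorb $c$. I would close this gap by exploiting the freedom in the constant $L$: substituting $\lambda = L\sqrt{N/\log N}$ and $\alpha_1 = a_1 N^{1/4}$ into the Bellec bound (Proposition \ref{prop:bellec2}) underlying Proposition \ref{prop:bppd-oc-1a}, and evaluating the logarithmic factor $\log(N\alpha_1^2/(5\lambda^2))$ at its leading order $\tfrac12\log N$, gives $w(K_1) \gtrsim \tfrac14 L \sqrt N$. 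Thus taking $L \gg 1$ large enough that $\tfrac{cL}{4} \geq \tfrac{a_1^2 + C_1}{2}$ (equivalently, re-invoking Proposition \ref{prop:bppd-oc-1a} with an inflated constant in place of $C_1$) supplies the required strengthened mean-width bound. Finally I would take $N_0^{\eqref{prop:bppd-oc-1b}} := \max\{N_0^{\eqref{prop:bppd-oc-1a}}(a_1,C_1,L),\, N_0^{\eqref{prop:gmw-lb-ip}}(c,\tilde D_1,\delta,L)\}$, so that the deterministic mean-width estimate and the concentration estimate are simultaneously in force for every $N \geq N_0^{\eqref{prop:bppd-oc-1b}}$.
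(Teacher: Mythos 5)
Your proposal is correct and follows essentially the same route as the paper's proof: reduce $K_1\cap F\neq\emptyset$ on $\mathcal{Z}_-$ to the inner-product condition $\ip{q,z}\geq \tfrac{a_1^2+C_1}{2}\sqrt N$, realize such a $q$ on the Borell--TIS event of Proposition \ref{prop:gmw-lb-ip}, and absorb the lost factor $c$ by invoking the mean-width lower bound with an inflated constant $C_1'$ in place of $C_1$ --- which is precisely how the paper handles it. Your union-bound probability estimate and choice of $N_0$ likewise match the paper's argument.
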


%%%%%%%%%%%%%%%%%%%%%%%%%%%%%%
% \begin{prop}[prop:bppd-oc-1b]
%   \label{prop:bppd-oc-1b}
%   Let $z \in \reals^{N}$ with $z_{i} \iid \mathcal{N}(0,1)$. Fix
%   $C_{1}, \delta > 0$ and define the event
%   $\mathcal{Z}_{-} := \{ \|z\|_{2}^{2} \leq N + C_{1}\sqrt N\}$. There is an
%   integer $N_{0} = N_{0}^{\eqref{prop:bppd-oc-1b}} \geq 1$ satisfying
%   \begin{align*}
%     N_{0}^{\eqref{prop:bppd-oc-1b}} %
%     \geq \max \{N_{0}^{\eqref{prop:bppd-oc-1a}}(a_{1}, C_{1}, L), %
%     N_{0}^{\eqref{prop:gmw-lb-ip}}(c, \tilde D_{1}, \delta, L)\big\},
%   \end{align*}
%   and a universal constant
%   $k_{1} = k_{1}(N_{0}^{\eqref{prop:bppd-oc-1b}}, \delta) > 0$ so that for all
%   $N \geq N_{0}$ there is an event $\mathcal{E} \subseteq \mathcal{Z}_{-}$
%   satisfying
%   \begin{align*}
%     K_{1} \cap F \neq \emptyset \text{ on } \mathcal{E} %
%     \qquad \text{and} \qquad 
%     \mathbb{P}(\mathcal{E}) \geq \mathbb{P}(\mathcal{Z}_{-}) - \delta. 
%   \end{align*}
% \end{prop}

\begin{proof}[Proof of {Proposition \ref{prop:bppd-oc-1b}}]

  By Proposition \ref{prop:gmw-lb-ip}, for any $c_{1} \in (0, 1) $ there is an event
  $\mathcal{E}_{1}$ that holds with high probability such that
  ${\sup_{q \in K_{1}} \ip{q, z} \geq c_{1} w(K_{1})}$ on
  $\mathcal{E}_{1}$. Subsequent statements are made on the restriction to
  $\mathcal{E}_{1}$.

  As $K_{1}$ is closed, there is $q \in K_{1}$ realizing the supremum, whence
  $\ip{q, z} \geq c_{1} w(K_{1})$. Now, choose $C_{1}' > 0$ such that
  $C_{1} \geq c_{1}^{-1}\big(a_{1}^{2} + C_{1}\big) - a_{1}^{2}$. Then
  $q \in K_{1}$ satisfies
  \begin{align*}
    \ip{q, z} \geq c_{1} w(K_{1}) \geq c_{1} \Big( \frac{a_{1}^{2} + C_{1}'}{2}\Big) \sqrt N \geq \Big( \frac{a_{1}^{2} + C_{1}}{2}\Big) \sqrt N. 
  \end{align*}
  Now, because $\|q\|_{2} \leq \alpha_{1}$ and $q \in K_{1}$, it holds on the
  event $\mathcal{E}_{1} \cap \mathcal{Z}_{-}$ that
  \begin{align*}
    \Big( \frac{a_{1}^{2} + C_{1}}{2}\Big) \sqrt N %
    \geq \frac12 \|q\|_{2}^{2} + \frac12 \big( \|z\|_{2}^{2} - N\big)
  \end{align*}
  Combining the two previous chains of inequalities implies that
  \begin{align*}
    \|q - z\|_{2}^{2} \leq N
  \end{align*}
  Namely, there exists an event $\mathcal{Z}_{-} \cap \mathcal{E}_{1}$, such
  that $q \in K_{1} \cap F$, so long as $N \geq
  N_{0}^{\eqref{prop:bppd-oc-1b}}$. Because $\mathcal{E}_{1}$ holds with high
  probability and the probability of $\mathcal{Z}_{-}$ is lower-bounded by a
  universal constant, Proposition \ref{prop:bppd-oc-const-whp-prob} implies
  $\mathbb{P}(\mathcal{Z}_{-} \cap \mathcal{E}_{1}) \geq
  k_{1}(N_{0}^{\eqref{prop:bppd-oc-1b}}, \delta)$ for
  $N \geq N_{0}^{\eqref{prop:bppd-oc-1b}}$, where
  \begin{align*}
    N_{0}^{\eqref{prop:bppd-oc-1b}} %
    \geq \max \{N_{0}^{\eqref{prop:bppd-oc-1a}}(a_{1}, C_{1}, L), %
    N_{0}^{\eqref{prop:gmw-lb-ip}}(c, \tilde D_{1}, \delta, L)\big\}. 
  \end{align*}
  % % % % % % %
  % 
  % Intuition
  % 
  % % % % % % %
  % 
  % The upshot here is that there exists $q \in K_{1}$ and $c_{1} > 0$ such that
  % with high probability
  % \begin{align*}
  %   \ip{q, z} \geq c_{1} w(K_{1}) \geq c_{1} \Big( \frac{a_{1}^{2} + C_{1}'}{2}\Big) \sqrt N %
  %   \geq \Big( \frac{a_{1}^{2} + C_{1}}{2}\Big) \sqrt N 
  % \end{align*}
  % which in turn implies that there is some high probability event
  % $\tilde{\mathcal{E}}_{1}$ such that on
  % $\mathcal{E} := \tilde{\mathcal{E}}_{1} \cap \mathcal{Z}_{-}$ and with $N$
  % sufficiently large, there is $q \in K_{1}$ such that $q \in K_{1} \cap F$ as
  % desired.

\end{proof}

\begin{proposition}%[prop:bppd-oc-2a]
  \label{prop:bppd-oc-2a}
  Fix $C_{2} > 0$ and let $L \geq 1$. Set
  $K_{2} := \lambda B_{1}^{N} \cap \alpha_{2}B_{2}^{N}$, where
  $\lambda = L\sqrt{\frac{N}{\log N}}$. There is a maximal choice of
  $\alpha_{2} = \alpha_{2}(N) > 0$ so that for all $N \geq 1$,
  \begin{align*}
    w(K_{2}) \leq \frac{C_{2}}{2} \sqrt N
  \end{align*}

\end{proposition}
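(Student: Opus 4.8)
The plan is to recast $w(K_2)$ as a Gaussian mean width to which Proposition \ref{prop:bellec1} applies, then invert the resulting estimate to solve for the largest admissible $\alpha_2$. First I would use positive homogeneity of the mean width: writing $\gamma := \alpha_2/\lambda$, we have $K_2 = \lambda B_1^N \cap \alpha_2 B_2^N = \lambda\big(B_1^N \cap \gamma B_2^N\big)$ and hence $w(K_2) = \lambda\, w(B_1^N \cap \gamma B_2^N)$. Since $B_1^N$ is the convex hull of the $2N$ points $\{\pm e_i\}_{i=1}^N \subseteq \sph^{N-1} \subseteq B_2^N$, Proposition \ref{prop:bellec1} applies with $m = N$ and $T = B_1^N$ as long as $\gamma \in (0,1)$, giving
\begin{align*}
  w(K_2) \leq \lambda \min\Big\{ 4\sqrt{\max\{1, \log(8eN\gamma^2)\}},\ \gamma\sqrt N \Big\}.
\end{align*}

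Next I would note that both terms on the right are non-decreasing in $\alpha_2$, as is $w(K_2)$ itself; consequently the set of $\alpha_2 > 0$ for which this bound certifies $w(K_2) \leq \tfrac{C_2}{2}\sqrt N$ is an interval of the form $(0, \alpha_2(N)]$, and its right endpoint is the maximal choice we seek. To pin down the growth rate that later results require, I would work with the logarithmic term: substituting $\lambda^2 = L^2 N/\log N$, the requirement $4\lambda\sqrt{\log(8eN\gamma^2)} \leq \tfrac{C_2}{2}\sqrt N$ reduces, after dividing by $\sqrt N$ and squaring, to $\log(8eN\gamma^2) \leq D\log N$ with $D := C_2^2/(64L^2)$. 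Solving for $\gamma$ and recalling $\alpha_2 = \lambda\gamma$ yields the explicit endpoint
\begin{align*}
  \alpha_2(N) = \frac{L}{\sqrt{8e\,\log N}}\,N^{D/2},
\end{align*}
valid once $D\log N \geq 1$; here one verifies $\gamma = \alpha_2/\lambda = N^{(D-1)/2}/\sqrt{8e} < 1$ because $D < 1$ for the constants in play, so the appeal to Proposition \ref{prop:bellec1} was legitimate.

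Finally I would clean up the small-$N$ range $1 \leq N < e^{1/D}$, where the logarithmic term is inactive, by bypassing Proposition \ref{prop:bellec1} altogether: the containment $K_2 \subseteq \alpha_2 B_2^N$ gives $w(K_2) \leq w(\alpha_2 B_2^N) = \alpha_2\, \E\|g\|_2 \leq \alpha_2\sqrt N$, so any $\alpha_2 \leq C_2/2$ already enforces $w(K_2) \leq \tfrac{C_2}{2}\sqrt N$. Taking $\alpha_2(N)$ to be the larger of $C_2/2$ and the logarithmic value above produces one maximal choice valid for every $N \geq 1$. The main obstacle is the bookkeeping across these two regimes: the chosen $\alpha_2(N)$ must be simultaneously maximal in the sense of the statement, large enough --- of order $N^{D/2}/\sqrt{\log N}$ --- to later meet the lower bound $\alpha_2 > C_3 N^q$ demanded by the geometric lemma, and small enough that $\alpha_2 < \lambda$ keeps $\gamma \in (0,1)$; all three are secured by taking $L$ large relative to $C_2$, so that $D = C_2^2/(64L^2) \in (0, \tfrac12)$.
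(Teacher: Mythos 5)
Your proof is correct and follows essentially the same route as the paper's: invoke Proposition \ref{prop:bellec1} on the rescaled set, rearrange the logarithmic inequality, and solve for the admissible $\alpha_{2}$, arriving at $\alpha_{2}(N) \sim L N^{c C_{2}^{2}/L^{2}}/\sqrt{\log N}$ up to immaterial constants. Your treatment is in fact slightly more careful than the paper's --- you explicitly check $\gamma = \alpha_{2}/\lambda < 1$ before applying Proposition \ref{prop:bellec1} and handle the small-$N$ range where the $\max\{1,\cdot\}$ branch is inactive via the trivial bound $w(K_{2}) \leq \alpha_{2}\sqrt{N}$ --- but these are refinements of the same argument, not a different one.
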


%%%%%%%%%%%%%%%%%%%%%%%%%%%%%%
% \begin{prop}[prop:bppd-oc-2a]
%   \label{prop:bppd-oc-2a}
%   Let $z \in \reals^{N}$ with $z_{i} \iid \mathcal{N}(0,1)$ and fix $C_{2} >
%   0$. Set $K_{2} := \lambda B_{1}^{N} \cap \alpha_{2}B_{2}^{N}$, where
%   $\lambda = L\sqrt{\frac{N}{\log N}}$ and $L\geq 1$. There is a maximal choice
%   of $\alpha_{2} > 0$ so that for all $N \geq 1$,
%   \begin{align*}
%     w(K_{2}) \leq \frac{C_{2}}{2} \sqrt N
%   \end{align*}

% \end{prop}

\begin{proof}[Proof of {Proposition \ref{prop:bppd-oc-2a}}]

  Since $w(K_{2}) = \E_{z} \sup_{q \in K_{2}} \ip{q, z}$ is the Gaussian mean
  width of $K_{2}$, we may invoke Proposition \ref{prop:bellec1} to obtain a sufficient
  chain of inequalities:
  \begin{align*}
    w(K_{2}) %
    \overset{\eqref{prop:bellec1}}{\leq} 4\lambda \sqrt{ \log
    (\frac{4eN\alpha_{2}^{2}}{\lambda^{2}})} %
    \overset{(**)}{\leq} \frac{C_{2}}{2} \sqrt N. 
  \end{align*}
  The first inequality follows by \eqref{prop:bellec1} immediately. Rearranging
  and substituting for $\lambda$, $(**)$ is equivalent to
  \begin{align*}
    D_{3}\log N \geq \log\big( D_{4}\alpha_{2}^{2} \log N\big), %
    \qquad D_{3} := \Big(\frac{C_{2}}{8L}\Big)^{2}, D_{4} := \frac{4e}{L^{2}}. 
  \end{align*}
  This inequality is satisfied for any $\alpha_{2}$ with
  \begin{align*}
    \alpha_{2}^{2} \leq \frac{N^{D_{3}}}{D_{4} \log N} =: A^{2}(C_{2}, N)
  \end{align*}
  For example, one may choose
  \begin{align*}
    \alpha_{2} = \frac{L N^{D_{5}}}{2 \sqrt{e \log N}}, \qquad %
    D_{5} := \frac{C_{2}^{2}}{32 L^{2}}. 
  \end{align*}
  For such $0 < \alpha_{2} \leq A(N; C_{2}, L)$, it holds as desired that
  $w(K_{2}) \leq \frac{C_{2}}{2} \sqrt N $.
  
\end{proof}

\begin{remark}
  Notice that we want to choose $N_{0}$ so that $A(C_{2}, N)$ is increasing for
  all $N \geq N_{0}$. A quick calculation reveals that
  $N_{0} = N_{0}^{\eqref{prop:bppd-oc-2a}}(C_{2}, L) := \exp\big( (2 D_{5})^{-1}
  \big)$ is sufficient.
\end{remark}

\begin{proposition}%[prop:gmw-ub-ip]
  \label{prop:gmw-ub-ip}
  Fix $\delta > 0$, and $C > 1$. Let
  $K_{2} = \lambda B_{1}^{N} \cap \alpha_{2} B_{2}^{N}$ as above. There are
  universal constants $\tilde{D}_{2} > 0, N_{0} \geq 1$ such that for
  \begin{align*}
    N %
    > N_{0} := N_{0}^{\eqref{prop:gmw-ub-ip}}(C, \tilde D_{2}, \delta, L) %
    := \Big( \frac{1}{\tilde{D}_{2} L^{2} (C - 1)^{2}} %
    \log \big( \frac{1}{\delta}\big)\Big)^{2}
  \end{align*}
  one has $\sup_{q \in K_{2}} \ip{q,z} \leq C w(K_{2})$ with probability at
  least $ 1 - \delta$, where $z \in \reals^{N}$ with
  $z_{i} \iid \mathcal{N}(0,1)$.
\end{proposition}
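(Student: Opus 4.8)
The plan is to mirror the proof of Proposition \ref{prop:gmw-lb-ip}, applying the Borell-TIS inequality (Theorem \ref{thm:borell-tis}) to the centred Gaussian process $f_{x} := \ip{x, g}$ with $g_{i} \iid \mathcal{N}(0,1)$, indexed by $x \in K_{2}$ --- but now invoking the \emph{upper} tail to bound the supremum from above, rather than the lower tail that bounds it from below. Since $K_{2} = \lambda B_{1}^{N} \cap \alpha_{2} B_{2}^{N}$ is symmetric, closed and bounded, the sup-norm $\|f\|_{K_{2}} = \sup_{x\in K_{2}}|\ip{x,g}|$ coincides with $\sup_{x\in K_{2}}\ip{x,g}$, is almost surely finite, and has $\E\|f\|_{K_{2}} = w(K_{2})$.

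First I would record the two quantities entering the inequality. The weak variance is $\sigma_{K_{2}}^{2} = \sup_{x\in K_{2}}\E|\ip{x,g}|^{2} = \sup_{x\in K_{2}}\|x\|_{2}^{2} = \alpha_{2}^{2}$, since under the scaling of Proposition \ref{prop:bppd-oc-2a} the $\ell_{2}$ constraint is the binding one (one checks $\alpha_{2} < \lambda$ for $N$ large, whence $\alpha_{2}e_{1} \in K_{2}$ realizes the radius). Taking $u = (C-1)w(K_{2}) > 0$, Theorem \ref{thm:borell-tis} then yields directly
\begin{align*}
  \mathbb{P}\Big( \sup_{q\in K_{2}}\ip{q,z} > C\, w(K_{2})\Big)
  \leq \exp\Big( -\frac{(C-1)^{2} w^{2}(K_{2})}{2\alpha_{2}^{2}}\Big),
\end{align*}
so it suffices to show the exponent is at least $\log(1/\delta)$ whenever $N > N_{0}$.

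To control the exponent from below I would lower-bound $w(K_{2})$ exactly as in Proposition \ref{prop:bppd-oc-1a}: writing $K_{2} = \lambda\,(B_{1}^{N} \cap \gamma B_{2}^{N})$ with $\gamma = \alpha_{2}/\lambda$ and invoking Proposition \ref{prop:bellec2} with $\kappa = 1$ (the sensing map is the identity), one obtains $w(K_{2}) \geq \tfrac{\sqrt 2}{4}\lambda\sqrt{\log(N\alpha_{2}^{2}/(5\lambda^{2}))}$. Substituting the explicit values $\lambda = L\sqrt{N/\log N}$ and $\alpha_{2} = \tfrac{L N^{D_{5}}}{2\sqrt{e\log N}}$ collapses the ratio $w^{2}(K_{2})/\alpha_{2}^{2}$ to an explicit expression, growing polynomially in $N$; a rearrangement identical in spirit to the chain in Proposition \ref{prop:gmw-lb-ip} then shows the exponent dominates $\log(1/\delta)$ once $N > N_{0}$, for a suitable universal $\tilde D_{2} > 0$, after which the complementary event gives the claimed probability $\geq 1-\delta$.

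The main obstacle is the constant bookkeeping of this last step rather than any new idea. Because here $\sigma_{K_{2}}^{2} = \alpha_{2}^{2}$ scales with both $L^{2}$ and $N^{2D_{5}}$ --- unlike the $\sigma_{K_{1}}^{2} = \sqrt N$ appearing in the sibling proposition --- one must carefully track the $N$- and $L$-dependence of $w^{2}(K_{2})/\alpha_{2}^{2}$ and absorb the Bellec constants and the additive $\log(20e)$ term into $\tilde D_{2}$ so as to verify that the condition ``exponent $\geq \log(1/\delta)$'' reduces to the stated threshold form $N_{0} = \big(\tilde D_{2}^{-1} L^{-2}(C-1)^{-2}\log(1/\delta)\big)^{2}$; in doing so one also checks the hypotheses of Proposition \ref{prop:bellec2}, namely that $\gamma = \alpha_{2}/\lambda \in (0,1]$ and that $1/\gamma^{2}$ may be taken a positive integer $\leq N/5$ for $N \geq N_{0}$. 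Everything else parallels Proposition \ref{prop:gmw-lb-ip}.
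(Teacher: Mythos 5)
Your proposal is correct and follows essentially the same route as the paper: apply the upper tail of Borell--TIS to $f_x=\ip{x,g}$ on $K_2$ with $u=(C-1)w(K_2)$, compute $\sigma_{K_2}^2=\alpha_2^2$, lower-bound $w(K_2)$ via Proposition \ref{prop:bellec2} (exactly as in Proposition \ref{prop:bppd-oc-1a}), and check that the resulting exponent exceeds $\log(1/\delta)$ once $N>N_0$. Your extra care in verifying the hypotheses of Proposition \ref{prop:bellec2} and in tracking the $N$- and $L$-dependence of $w^2(K_2)/\alpha_2^2$ is a welcome refinement of what the paper leaves implicit, but it does not change the argument.
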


%%%%%%%%%%%%%%%%%%%%%%%%%%%%%%
% \begin{prop}[prop:gmw-ub-ip]
%   \label{prop:gmw-ub-ip}
%   Fix $\delta > 0$, $C > 1$ and let $z \in \reals^{N}$ with
%   $z_{i} \iid \mathcal{N}(0,1)$. Let
%   $K_{2} = \lambda B_{1}^{N} \cap \alpha_{2} B_{2}^{N}$ as above. There is a
%   universal constant $\tilde{D}_{2} > 0$ such that for
%   \begin{align*}
%     N %
%     \geq N_{0}^{\eqref{prop:gmw-ub-ip}}(C, \tilde D_{2}, \delta, L) %
%     := \Big( \frac{1}{\tilde{D}_{2} L^{2} (C - 1)^{2}} %
%     \log \big( \frac{1}{\delta}\big)\Big)^{2}
%   \end{align*}
%   one has $\sup_{q \in K_{2}} \ip{q,z} \leq C w(K_{2})$ with probability at
%   least $ 1 - \delta$.
% \end{prop}

\begin{proof}[Proof of {Proposition \ref{prop:gmw-ub-ip}}]
  Define the centered Gaussian process $f_{x} := \ip{x,g}$ for
  $x \in K_{2}\subseteq \reals^{N}$, a topological space, and where
  $g_{i} \iid \mathcal{N}(0,1)$. Observe
  $\|f\|_{K_{2}} = \sup_{x \in K_{2}}|f_{x}| < \infty$ almost surely. For any $u > 0$,
  \begin{align*}
    \mathbb{P}\big( %
    \sup_{x \in K_{2}} |\ip{x, g}| > w(K_{2}) + u \big) %
    \leq \exp\Big( - \frac{u^{2}}{2 \sigma_{K_{2}}^{2}} \Big)
  \end{align*}
  by Theorem \ref{thm:borell-tis}. Hence, for $C > 1$,
  \begin{align*}
    \mathbb{P}\big( \sup_{x \in K_{2}} | \ip{x, g}| > C w(K_{2}) \big) %
    \leq \exp\Big( - \frac{(C-1)^{2} w^{2}(K_{2})}{2 \sigma_{K_{2}}^{2}} \Big) %
    \leq \exp\Big( - \frac{(C-1)^{2} L^{2} N \log(D_{1} \sqrt N \log N)}{%
    16 \alpha_{2}^{2} \log N }\Big) %
    \leq \delta
  \end{align*}
  because
  \begin{align*}
    \sigma_{K_{2}}^{2} %
    = \sup_{x\in K_{2}} \E |\ip{x, g}|^{2} %
    = \sup_{x \in K_{2}}\sum_{i=1}^{N} x_{i}\E|g_{i}|^{2} %
    = \sup_{x\in K_{2}}\|x\|_{2}^{2} %
    = \alpha_{2}^{2} %
    \leq \alpha_{1}^{2} %
    = \sqrt N.
  \end{align*}
  Finally, for $\delta > 0$ and $C > 1$,
  $\sup_{x \in K_{2}} | \ip{x, g}| \leq C w(K_{2})$ with probability at least
  $1 - \delta$ for any $N \geq N_{0}^{\eqref{prop:gmw-ub-ip}}$.

\end{proof}

\begin{proposition}%[prop:bppd-oc-2b]
  \label{prop:bppd-oc-2b}
  Fix $C_{2}, \delta > 0$ and define the event
  $\mathcal{Z}_{+} := \{ \|z\|_{2}^{2} \geq N + C_{2} \sqrt N \} $ where
  $z \in \reals^{N}$ with $z_{i} \iid \mathcal{N}(0,1)$. There is a universal
  constant $N_{0} := N_{0}^{\eqref{prop:bppd-oc-2b}} \geq 1$ satisfying
  \begin{align*}
    N_{0}^{\eqref{prop:bppd-oc-2b}} \geq \max \big\{ %
    N_{0}^{\eqref{prop:bppd-oc-2a}}, N_{0}^{\eqref{prop:gmw-ub-ip}} \big\}. 
  \end{align*}
  and a universal constant $k_{2} = k_{2}(N_{0}, \delta) > 0$ so that for all
  $N \geq N_{0}$ there is an event $\mathcal{E}\subseteq \mathcal{Z}_{+}$
  satisfying
  \begin{align*}
    K_{2} \cap F = \emptyset \text{ on } \mathcal{E} %
    \qquad \text{and} \qquad %
    \mathbb{P}(\mathcal{E}) \geq k_{2} := \mathbb{P} (\mathcal{Z}_{+}) - \delta. 
  \end{align*}
\end{proposition}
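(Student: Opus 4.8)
The plan is to mirror the proof of Proposition \ref{prop:bppd-oc-1b}, reversing every inequality: there one lower-bounds $\ip{q,z}$ to force a point of $K_1$ \emph{into} the ball $F$, whereas here one upper-bounds $\sup_{q\in K_2}\ip{q,z}$ to push \emph{all} of $K_2$ outside of $F$. Recall $F = B_2^N(z;\sqrt N)$, so the claim $K_2 \cap F = \emptyset$ is exactly the statement that $\|q-z\|_2 > \sqrt N$ for every $q \in K_2$. First I would reduce this to a single scalar inequality. For any $q \in K_2$, expanding the square and discarding the nonnegative term $\|q\|_2^2$ gives
\[
  \|q-z\|_2^2 = \|q\|_2^2 - 2\ip{q,z} + \|z\|_2^2 \geq \|z\|_2^2 - 2\sup_{q'\in K_2}\ip{q',z},
\]
so that $K_2 \cap F = \emptyset$ is implied by the scalar event $\|z\|_2^2 - N > 2\sup_{q\in K_2}\ip{q,z}$.

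Next I would control the two sides separately on a constant-probability event. On $\mathcal{Z}_+$ the left-hand side is bounded below by $\|z\|_2^2 - N \geq C_2\sqrt N$. For the right-hand side, fix a constant $C > 1$ and invoke Proposition \ref{prop:gmw-ub-ip}: on an event $\mathcal{E}_+$ with $\mathbb{P}(\mathcal{E}_+) \geq 1-\delta$ one has $\sup_{q\in K_2}\ip{q,z} \leq C\,w(K_2)$ for all $N \geq N_0^{\eqref{prop:gmw-ub-ip}}$. Applying Proposition \ref{prop:bppd-oc-2a} with the target constant $C_2/C$ in place of $C_2$ — which merely shrinks the admissible $\alpha_2$, keeping it a growing power of $N$ — furnishes a maximal $\alpha_2$ with $w(K_2) \leq \tfrac{C_2}{2C}\sqrt N$. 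Taking this constant strictly below $C_2/C$ (or imposing strict inequality in the definition of $\mathcal{Z}_+$) yields, on $\mathcal{E}_+\cap\mathcal{Z}_+$,
\[
  2\sup_{q\in K_2}\ip{q,z} \leq 2C\cdot\frac{C_2}{2C}\sqrt N = C_2\sqrt N < \|z\|_2^2 - N,
\]
which is precisely the scalar event forcing $K_2 \cap F = \emptyset$.

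Finally I would assemble the probability bound. Set $\mathcal{E} := \mathcal{E}_+ \cap \mathcal{Z}_+ \subseteq \mathcal{Z}_+$ and $N_0^{\eqref{prop:bppd-oc-2b}} := \max\{N_0^{\eqref{prop:bppd-oc-2a}}, N_0^{\eqref{prop:gmw-ub-ip}}\}$, so that both bounds above are valid for $N \geq N_0^{\eqref{prop:bppd-oc-2b}}$; on $\mathcal{E}$ the scalar inequality holds, hence so does $K_2 \cap F = \emptyset$. Since $\mathcal{E}_+$ holds with probability at least $1-\delta$, a union bound (equivalently Proposition \ref{prop:bppd-oc-const-whp-prob}) gives $\mathbb{P}(\mathcal{E}) \geq \mathbb{P}(\mathcal{Z}_+) - \delta =: k_2$. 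As in Proposition \ref{prop:bppd-oc-const-prob}, the central limit theorem applied to the $\chi^2_N$ variable $\|z\|_2^2$ shows $\mathbb{P}(\mathcal{Z}_+) \to \Phi(-C_2/\sqrt 2) > 0$, a fixed positive constant, so choosing $\delta$ below this value makes $k_2 = k_2(N_0,\delta) > 0$ a universal constant, which completes the argument.

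The main obstacle is constant bookkeeping rather than any deep estimate: one must route the factor $2C$ coming from the Borell--TIS tail (Proposition \ref{prop:gmw-ub-ip}) through the Gaussian-mean-width bound (Proposition \ref{prop:bppd-oc-2a}) so that $2\sup_q\ip{q,z}$ lands just below the $C_2\sqrt N$ floor guaranteed by $\mathcal{Z}_+$, all while keeping $\alpha_2 = \Omega(N^{q})$ a genuine power of $N$ — the latter being exactly what the downstream geometric lemma (Lemma \ref{lem:bppd-oc-3a}) needs in order to certify $\alpha_2 > C_3 N^{q}$.
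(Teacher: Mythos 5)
Your proposal is correct and follows essentially the same route as the paper's proof: bound $\sup_{q\in K_2}\ip{q,z}$ via Borell--TIS (Proposition \ref{prop:gmw-ub-ip}), absorb the resulting factor into the constant of Proposition \ref{prop:bppd-oc-2a}, observe that on $\mathcal{Z}_+$ every point of $F$ must have inner product with $z$ exceeding $\tfrac{C_2}{2}\sqrt N$, and intersect the two events via Proposition \ref{prop:bppd-oc-const-whp-prob}. If anything, your explicit routing of the factor $C>1$ through the width bound is tidier than the paper's own constant bookkeeping, which conflates $c_2<1$ with the $C>1$ of Borell--TIS.
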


%%%%%%%%%%%%%%%%%%%%%%%%%%%%%%
% \begin{prop}[prop:bppd-oc-2b]
%   \label{prop:bppd-oc-2b}
%   Let $z \in \reals^{N}$ with $z_{i} \iid \mathcal{N}(0,1)$. Fix
%   $C_{2}, \delta > 0$ and define the event
%   $\mathcal{Z}_{+} := \{ \|z\|_{2}^{2} \geq N + C_{2} \sqrt N \} $. There is an
%   integer $N_{0} := N_{0}^{\eqref{prop:bppd-oc-2b}} \geq 1$ satisfying
%   \begin{align*}
%     N_{0}^{\eqref{prop:bppd-oc-2b}} \geq \max \big\{ %
%     N_{0}^{\eqref{prop:bppd-oc-2a}}, N_{0}^{\eqref{prop:gmw-ub-ip}} \big\}. 
%   \end{align*}
%   and a universal constant $k_{2} = k_{2}(N_{0}, \delta) > 0$ so that for all
%   $N \geq N_{0}$ there is an event $\mathcal{E}\subseteq \mathcal{Z}_{+}$
%   satisfying
%   \begin{align*}
%     K_{2} \cap F = \emptyset \text{ on } \mathcal{E} %
%     \qquad \text{and} \qquad %
%     \mathbb{P}(\mathcal{E}) \geq k_{2} := \mathbb{P} (\mathcal{Z}_{+}) - \delta. 
%   \end{align*}
% \end{prop}

\begin{proof}[Proof of {Proposition \ref{prop:bppd-oc-2b}}]
  By Proposition \ref{prop:gmw-ub-ip}, for any $0 < c_{2} < 1$ there is an event
  $\mathcal{E}_{2}$ that holds with high probability such that
  $\sup_{q \in K_{2}}\ip{q, z} \leq c_{2} w(K_{2})$ on
  $\mathcal{E}_{2}$. Because $K_{2}$ is closed, there is $q \in K_{2}$ realizing
  the supremum when restricted to $\mathcal{E}_{2}$, whence
  \begin{align*}
    {\ip{q, z} \leq \sup_{q' \in K_{2}} \ip{q', z} \leq c_{2} w(K_{2})}.
  \end{align*}
  Now, choose $C_{2}' > 0$ such that $ 0 \leq C_{2} \leq c_{2}C_{2}'$. Then
  $q \in K_{2}$ satisfies
  \begin{align*}
    \ip{q, z} %
    \leq c_{2} w(K_{2}) %
    \leq c_{2}\frac{C_{2}'}{2} \sqrt N %
    \leq \frac{C_{2}}{2} \sqrt N
  \end{align*}
  On the other hand, for any $q' \in F$ on the event $\mathcal{Z}_{+}$,
  \begin{align*}
    C_{2}\sqrt N \leq \|q'\|_{2}^{2} + \|z\|_{2}^{2} - N \leq 2\ip{q', z}
  \end{align*}
  whence $K_{2} \cap F = \emptyset$ on the event
  $\mathcal{Z}_{+} \cap \mathcal{E}_{2}$. Because $\mathcal{E}_{2}$ holds with
  high probability and the probability of $\mathcal{Z}_{+}$ is lower-bounded by
  a universal constant, Proposition \ref{prop:bppd-oc-const-whp-prob} implies
  $\mathbb{P}\big(\mathcal{Z}_{+} \cap \mathcal{E}_{2}\big) >
  k_{2}(N_{0}^{\eqref{prop:bppd-oc-2b}}, \delta)$ for
  $N \geq N_{0}^{\eqref{prop:bppd-oc-1b}}$ where
  \begin{align*}
    N_{0}^{\eqref{prop:bppd-oc-2b}} \geq \max \big\{ %
    N_{0}^{\eqref{prop:bppd-oc-2a}}, N_{0}^{\eqref{prop:gmw-ub-ip}} \big\}. 
  \end{align*}
\end{proof}

\subsubsection{Proof of the geometric lemma}
\label{sec:proof-bppd-oc-3a}

We now have the tools required for Lemma \ref{lem:bppd-oc-3a}. For intuiton of the
result, we refer the reader to Figure \ref{fig:bppd-oc-3a} in
\ref{ssec:overconstrained-bppd}.

%%%%%%%%%%%%%%%%%%%%%%%%%%%%%%
% \begin{lem}[lem:bppd-oc-3a]
%   \label{lem:bppd-oc-3a}
%   Let $z \in \reals^{N}$ with $z_{i} \iid \mathcal{N}(0,1)$ and let
%   $K_{1}, K_{2}, F$ be defined as above. Fix $\delta > 0$ and let
%   $N \geq N_{0}^{\eqref{lem:bppd-oc-3a}} := \max \{
%   N_{0}^{\eqref{prop:bppd-oc-1b}}, N_{0}^{\eqref{prop:bppd-oc-2b}} \}$. There
%   exists a constant $k_{3} = k_{3}(N_{0}^{\eqref{lem:bppd-oc-3a}}, \delta) > 0$,
%   an event $\mathcal{E}$ and constants $C_{3}, q > 0$ such that
  
%   % The list without the annoying minipages in the way 
%   % $K_{1} \cap F \neq \emptyset$
%   % $K_{2} \cap F = \emptyset$ 
%   % $\alpha_{2} > C_{3} N^{q}$
%   % $\mathbb{P}(\mathcal{E}) > k_{3}$. 
%   \begin{minipage}{.225\textwidth}
%     \begin{enumerate}
%     \item $K_{1} \cap F \neq \emptyset$
%     \end{enumerate}
%   \end{minipage}
%   \begin{minipage}{.2255\textwidth}
%     \begin{enumerate}
%       \setcounter{enumi}{1}
%     \item $K_{2} \cap F = \emptyset$
%     \end{enumerate}
%   \end{minipage}
%   \begin{minipage}{.225\textwidth}
%     \begin{enumerate}
%       \setcounter{enumi}{2}
%     \item $\alpha_{2} > C_{3} N^{q}$
%     \end{enumerate}
%   \end{minipage}
%   \begin{minipage}{.225\textwidth}
%     \begin{enumerate}
%       \setcounter{enumi}{3}
%     \item $\mathbb{P}(\mathcal{E}) > k_{3}$. 
%     \end{enumerate}
%   \end{minipage}
% \end{lem}

\begin{proof}[Proof of {Lemma \ref{lem:bppd-oc-3a}}]
  The proof of the first two items follows trivially from
  Proposition \ref{prop:bppd-oc-1b} and Proposition \ref{prop:bppd-oc-2b}. Define the event
  \begin{align*}
  \mathcal{E} := \mathcal{Z}_{-} \cap \mathcal{E}_{1} \cap \mathcal{Z}_{+} \cap
  \mathcal{E}_{2}
  \end{align*}
  To prove the final item, observe that
  $\mathbb{P}\big(\mathcal{E}\big) \geq \mathbb{P}\big(\mathcal{Z}_{-} \cap
  \mathcal{Z}_{+}\big) - 2 \delta \geq k_{3}$ for all sufficiently large
  $N$. This is a direct consequence of Proposition \ref{prop:bppd-oc-const-prob} and
  Proposition \ref{prop:bppd-oc-const-whp-prob}.

  The proof of the third item follows from a note in
  Proposition \ref{prop:bppd-oc-2a}. Specifically, the result holds for any choice of
  $\alpha_{2}$ satisfying
  \begin{align*}
    0 < \alpha_{2} \leq A(N; C_{2}; L) = \frac{LN^{D_{5}}}{2 \sqrt{e \log N }}, %
    \quad D_{5} := \frac{C_{2}^{2}}{32 L^{2}}
  \end{align*}
  Hence, choose $C_{3}, q > 0$ so that $\alpha_{2} > C_{3}N^{q}$ for all
  $N \geq N_{0}^{\eqref{lem:bppd-oc-3a}} \geq N_{0}^{\eqref{prop:bppd-oc-2a}}$.
\end{proof}

\subsubsection{Proofs for overconstrained suboptimality}
\label{sec:proof-bppd-oc-subopt-1}

First we prove a key ingredient in the main results for
$\tilde R(\sigma; x_{0}, N, \eta)$ parameter instability. Then, we prove the
lemma that extends {\bppd} parameter instability from $\sigma = \sqrt N$ and
$x_{0} \equiv 0$ to $\sigma \leq \sqrt N$ and $x_{0} \equiv 0$. Finally, we
prove the restricted maximin result, yielding parameter instability for
overconstrained {\bppd}.
 
%%%%%%%%%%%%%%%%%%%%%%%%%%%%%%
% \begin{coro}[coro:bppd-oc-3a]
%   \label{coro:bppd-oc-3a}
%   Let $x_{0} \equiv 0$ so that $s = 0$ and suppose $z\in \reals^{N}$ with
%   $z_{i} \iid \mathcal{N}(0,1)$. For $\tilde x(\sigma)$ as in {\bppd}
%   and $\sigma = \eta \sqrt N$, there exist constants $C, q > 0$ such that for all
%   $N \geq N_{0}^{\eqref{lem:bppd-oc-3a}}$,
%   \begin{align*}
%     \tilde R(\sigma; 0, N, \eta) \geq C N^{q}.
%   \end{align*}
% \end{coro}

\begin{proof}[Proof of {Corollary \ref{coro:bppd-oc-3a}}]
  Restrict to the event $\mathcal{E}$ as given in the lemma and assume that
  $N \geq N_{0}^{\eqref{lem:bppd-oc-3a}}$.  $K_{1} \cap F$ is non-empty, so
  $\tilde x(\sigma) \in K_{1} \cap F$ by definition.  $K_{2} \cap F = \emptyset$
  thereby implies
  \begin{align*}
    \tilde x(\sigma) %
    \in \lambda B_{1}^{N} \cap %
    \big(\alpha_{1}B_{2}^{N}\setminus \alpha_{2}B_{2}^{N}\big) %
    \cap F %
    = (K_{1} \setminus K_{2}) \cap F. 
  \end{align*}
  Whence follows $\|\tilde x(\sigma)\|_{1} \leq \lambda$ and
  $\alpha_{2} \leq \|\tilde x(\sigma)\|_{2} \leq \alpha_{1}$. Applying Bayes'
  rule to the noise-normalized risk yields:
  \begin{align*}
    \tilde R(\sigma; 0, N, \eta) %
    \geq \frac{\mathbb{P}(\mathcal{E})}{\eta^{2}} %
    \E \big[ \|\tilde x(\sigma)\|_{2}^{2} \mid \mathcal{E} \big] %
    \geq k_{3} C_{3} N^{q} %
    =: C N^{q}. 
  \end{align*}
\end{proof}

%%%%%%%%%%%%%%%%%%%%%%%%%%%%%%
% \begin{lem}[lem:bppd-oc-sigma-lt-sqrtN]
%   \label{lem:bppd-oc-sigma-lt-sqrtN}
%   Let $0 < \sigma_{1} < \sigma_{0} = \sqrt N$ and suppose that $x_{0} \equiv
%   0$. Define $\tilde x(\sigma_{0}), \tilde x(\sigma_{1})$ as in {\bppd}
%   for $\sigma = \sigma_{0}, \sigma_{1}$, respectively. On the event
%   $\mathcal{Z}_{+}$, % as defined in Proposition \ref{prop:bppd-oc-2b},
%   $\|\tilde x(\sigma_{1})\|_{2}^{2} \geq \|\tilde x(\sigma_{0})\|_{2}^{2}$.
% \end{lem}

\begin{proof}[Proof of {Lemma \ref{lem:bppd-oc-sigma-lt-sqrtN}}]
  This result is an immediate consequence of Corollary \ref{coro:projection-lemma}. 
  
  % Restricted to $\mathcal{Z}_{+} = \{ \|z\|_{2}^{2} \geq N + C_{2} \sqrt N \}$
  % for some $C_{2} > 0$, it is a trivial observation that
  % $\tilde x(\sigma_{0}) \not\equiv 0$. By necessity, it follows that
  % $\|\tilde x(\sigma_{1})\|_{2}^{2} \not \equiv 0$, since
  % $\tilde x(\sigma_{1}) \in F(z; \sigma_{1}) \subseteq F(z;
  % \sigma_{0})$. Therefore, by convex duality there exist
  % $\tau_{0}, \tau_{1} > 0$ such that $\tilde x(\sigma_{0})$ and
  % $\tilde x(\sigma_{1})$ solve {\lspd} for $\tau = \tau_{0}, \tau_{1}$,
  % respectively. Moreover $\tau_{0} \leq \tau_{1}$, because
  % $F(z; \sigma_{1}) \subseteq F(z; \sigma_{0})$. Fortunately, solving {\lspd}
  % is equivalent to projecting onto a convex set, namely:
  % \begin{align*}
  %   \tilde x(\sigma_{i}) = \mathcal{P}(z; \tau_{i}), \qquad %
  %   \mathcal{P}(w; \tau) := \argmin \{ \|q - w\|_{2} : q \in \tau B_{1}^{N}\}
  % \end{align*}
  % Thus, $\tau_{0} \leq \tau_{1}$ implies
  % $\|\tilde x(\sigma_{0})\|_{2} \leq \|\tilde x (\sigma_{1})\|_{2}$ on
  % $\mathcal{Z}_{+}$ by Lemma \ref{lem:projection-lemma}. Consequently,
  % \begin{align*}
  %   \E \|\tilde x (\sigma_{1})\|_{2} %
  %   \geq \E\big[ \|\tilde x (\sigma_{1})\|_{2}\mid \mathcal{Z}_{+}\big]
  %   \mathbb{P}\big( \mathcal{Z}_{+} \big) %
  %   \geq \E\big[ \|\tilde x (\sigma_{0})\|_{2} \mid \mathcal{Z}_{+}\big]
  %   \mathbb{P}\big( \mathcal{Z}_{+} \big)
  % \end{align*}

\end{proof}

%%%%%%%%%%%%%%%%%%%%%%%%%%%%%%
% \begin{thm}[thm:bppd-oc-maximin]
%   \label{thm:bppd-oc-subopt}
%   Define $\Sigma_{s}^{N} := \{ x \in \reals^{N} : \|x\|_{0} \leq s\}$ for
%   $s \geq 0$. Given $y = x_{0} + \eta z$ for $x_{0} \in \Sigma_{s}^{N}$ and
%   $z \in \reals^{N}$ with $z_{i} \iid \mathcal{N}(0,1)$ and $\eta >0$, define
%   $\tilde x(\sigma)$ as in {\bppd} with $\sigma > 0$. There exist
%   constants $C > 0, q \in (0, \frac12]$ and an integer $N_{0} \geq 1$ such that
%   for all $N \geq N_{0}$
%   \begin{align*}
%     \max_{x_{0} \in \Sigma_{s}^{N}} \min_{\sigma \leq \sqrt N} \tilde R(\sigma; x_{0}, N, \eta) %
%     \geq C N^{q}
%   \end{align*}
% \end{thm}

\begin{proof}[Proof of {Theorem \ref{thm:bppd-oc-maximin}}]
  Without loss of generality, assume $\eta = 1$. We may trivially lower-bound
  the minimax expression by considering only the case where $x_{0} \equiv 0$,
  \begin{align*}
    \sup_{x_{0}\in \Sigma_{s}^{N}} \inf_{\sigma \leq \sqrt N} %
    \tilde R(\sigma; x_{0}, N, 1)
    \geq
    \inf_{\sigma \leq \sqrt N} \tilde R(\sigma; 0, N, 1)
  \end{align*}
  Lemma \ref{lem:bppd-oc-sigma-lt-sqrtN} and Corollary \ref{coro:bppd-oc-3a} imply in
  turn,
  \begin{align*}
    \inf_{\sigma \leq \sqrt N } \tilde R(\sigma; 0, N, \eta) %
    \geq \tilde R(\sqrt N; 0, N, \eta) %
    \geq C N^{q}
  \end{align*}
  for all $N \geq N_{0}$, where $N_{0} \geq N_{0}^{\eqref{lem:bppd-oc-3a}}$ and
  $C, q > 0$ are chosen according to Lemma \ref{lem:bppd-oc-3a}.
\end{proof}

\subsubsection{Proof of minimax suboptimality}
\label{sssec:proof-minimax-subopt}

%%%%%%%%%%%%%%%%%%%%%%%%%%%%%%
% \begin{thm}[thm:bppd-minimax]
%   \label{thm:bppd-minimax}
%   Define $\Sigma_{s}^{N} := \{ x \in \reals^{N} : \|x\|_{0} \leq s \} $ for
%   $s \geq 0$. Given $y = x_{0} + \eta z$ for $x_{0} \in \Sigma_{s}^{N}$ and
%   $z \in \reals^{N}$ with $z_{i} \iid \mathcal{N}(0, 1)$ and $\eta > 0$, define
%   $\tilde x (\sigma)$ as in {\bppd} with $\sigma > 0$. There exist
%   constants $C>0, q \in (0, \frac12]$ and an integer $N_{0} \geq 1$ such that
%   for all $N \geq N_{0}$
%   \begin{align*}
%     \min_{\sigma > 0} \max_{x_{0} \in \Sigma_{s}^{N}} \tilde R(\sigma; x_{0}, N, \eta) %
%     \geq C N^{q}. 
%   \end{align*}

% \end{thm}

We prove that {\bppd} is minimax suboptimal. 

\begin{proof}[Proof of {Theorem \ref{thm:bppd-minimax}}]
  Without loss of generality, take $\eta = 1$. Observe that
  \begin{align*}
    \inf_{\sigma > 0} \sup_{x_{0} \in \Sigma_{s}^{N}}
    \tilde R(\sigma; x_{0}, N, 1) %
    = \min \Big\{ \inf_{\sigma \leq \sqrt N} S(\sigma) ,
    \inf_{\sigma > \sqrt N} S(\sigma) \Big\}
  \end{align*}
  where
  $S(\sigma) := \sup_{x_{0} \in \Sigma_{s}^{N}} \tilde R(\sigma; x_{0}, N,
  1)$. Next, assume $N \geq N_{0}^{\eqref{lem:bppd-oc-3a}}$. Then one has
  $\inf_{\sigma > \sqrt N} S(\sigma) \geq C_{1} \sqrt N$ by
  Lemma \ref{lem:bppd-uc}. Moreover, a trivial lower bound,
  Lemma \ref{lem:bppd-oc-sigma-lt-sqrtN} and Corollary \ref{coro:bppd-oc-3a}
  successively imply
  \begin{align*}
    \inf_{\sigma \leq \sqrt N} S(\sigma) %
    &\geq \inf_{\sigma \leq \sqrt N} \tilde R(\sigma; 0, N, 1) %
      \geq \tilde R(\sqrt N; 0, N, 1) \geq C_{2} N^{q}.
  \end{align*}
  In particular, there is a universal constant $C > 0$ so that
  \begin{align*}
    \inf_{\sigma > 0} \sup_{x_{0} \in \Sigma_{s}^{N}}
    \tilde R(\sigma; x_{0}, N, 1) %
    \geq \min \{ C_{2} N^{q}, C_{1} \sqrt N \} \geq C N^{q}.
  \end{align*}
  
\end{proof}

\subsubsection{Proof of maximin suboptimality}
\label{sssec:proof-maximin-subopt}

%%%%%%%%%%%%%%%%%%%%%%%%%%%%%%
% \begin{lem}[Overconstrained {\bppd}, $s\geq 1$]
%   \label{lem:bppd-oc-sge1}
%   Let $x_{0} \in \Sigma_{s}^{N}$, let $x_{1} \in \Sigma_{0}^{N-s}$ and fix
%   $\sigma > 0$. Let $\tilde x = \tilde x(\sigma) \in \reals^{N}$ be the solution
%   of {\bppd} where $x_{0}$ is the ground truth, and let
%   $\tilde x' = \tilde x'(\sigma) \in \reals^{N-s}$ be the solution of
%   {\bppd} where $x_{1}$ is the ground truth. Almost surely with respect
%   to the noise $z \in \reals^{N}$ with $z_{i} \iid \mathcal{N}(0,1)$,
%   \begin{align*}
%     \|\tilde x_{T^{C}}\|_{2} \geq \|\tilde x'\|_{2}. 
%   \end{align*}
% \end{lem}

We prove that {\bppd} is maximin suboptimal. 

\begin{proof}[Proof of {Lemma \ref{lem:bppd-oc-sge1}}]
  The proof is completed by the following chain of inequalities. The first and
  last equalities are by definition of the {\bppd} estimator. The first
  inequality follows by relaxing the objective; the second inequality follows by
  relaxing the constraint condition.
  \begin{align*}
    \|\tilde x_{T^{C}}\|_{2} %
    & = \big\| \argmin\{ \|x\|_{1} : \|y - x\|_{2}^{2} \leq \sigma^{2} \}_{T^{C}} \big\|_{2}
    \\
    & \geq \big\| \argmin \{ \|x_{T^{C}}\|_{1} : %
      \|y - x \|_{2}^{2} \leq \sigma^{2}\}_{T^{C}} \big\|_{2} %
    \\
    &\geq \big\| \argmin \{ \|x_{T^{C}}\|_{1} : %
      \|(y - x)_{T^{C}}\|_{2}^{2} \leq \sigma^{2} \}_{T^{C}} \big\|_{2}
    \\
    &\equiv \|\tilde x' \|_{2}
  \end{align*}
\end{proof}

%%%%%%%%%%%%%%%%%%%%%%%%%%%%%%
% \begin{thm}[{\bppd} maximin suboptimality]
%   \label{thm:bppd-maximin}
%   Define $\Sigma_{s}^{N} := \{ x \in \reals^{N} : \|x\|_{0} \leq s\}$ for
%   $s \geq 0$. Given $y = x_{0} + \eta z$ for $x_{0} \in \Sigma_{s}^{N}$ and
%   $z \in \reals^{N}$ with $z_{i} \iid \mathcal{N}(0,1)$ and $\eta >0$, define
%   $\tilde x(\sigma)$ as in {\bppd} with $\sigma > 0$. There exist
%   constants $C > 0, q \in (0, \frac12]$ and an integer $N_{0} \geq 1$ such that
%   for all $N \geq N_{0}$
%   \begin{align*}
%     \max_{x_{0} \in \Sigma_{s}^{N}} \min_{\sigma > 0} \tilde R(\sigma; x_{0}, N, \eta) %
%     \geq C N^{q}
%   \end{align*}
% \end{thm}

\begin{proof}[Proof of {Theorem \ref{thm:bppd-maximin}}]
  We may trivially lower-bound the maximin expression by considering the case
  where $x_{0} := N e_{1}$ where $e_{1}$ is the first standard basis
  vector. Without loss of generality, we may assume that this entry is in the
  first coordinate, and is at least $N$. Again without loss of generality, it
  suffices to consider the case where $\eta = 1$. We write the lower bound as
  \begin{align*}
    \sup_{x \in \Sigma_{s}^{N}} \inf_{\sigma > 0} %
    \tilde R(\sigma; x, N, 1) %
    \geq \inf_{\sigma > 0} \tilde R(\sigma; x_{0}, N, 1). 
  \end{align*}
  If $\sigma \geq \sqrt N$, then the result follows by
  Lemma \ref{lem:bppd-uc}. Otherwise, it must be that $\sigma \leq \sqrt N$, in
  which case the result follows immediately by Lemma \ref{lem:bppd-oc-sge1}. In
  this latter case, we have implicitly assumed that if
  $\sigma \in (\sqrt{N-1}, \sqrt N)$, then the omitted technical exercise of
  adjusting constants in Corollary \ref{coro:bppd-oc-3a} and its constituents has been
  carried out. For further detail on this caveat, see the remark immediately
  succeeding Corollary \ref{coro:bppd-oc-sge1}.
\end{proof}

%%%%%%%%%%%%%%%%%%%%%%%%%%%%%%

\section{Conclusions}
\label{sec:conclusions}

% % % From Yaniv: 
% try to give a summary answer to the following question: "What did we
% accomplish?"  To me, this paper clarified in a strong way some surprising
% instabilities, both using theory and simulations.  It is good to drive that
% point home in the conclusion.  Conclusions are great for this since all of the
% notation etc. has been defined.  Some of the points which seem to be self
% critiques of the numerical section are good to include in the paper, but not in
% the conclusion.

We have illustrated regimes in which each program is unstable. The theory of
section \ref{sec:param-inst-lspd}, section \ref{sec:param-inst-qppd} and
section \ref{sec:param-inst-bppd} proves asymptotic results for each program, while
the numerics of section \ref{sec:numerical-results} supports using the asymptotic
behaviour as a basis for practical intuition. Thus, we hope these results inform
practitioners about which program to use.

In section \ref{sec:param-inst-lspd} and \ref{ssec:lspd-numerics} we observe
that {\lspd} exhibits parameter instability in the low-noise regime. The risk
$\hat R(\tau; x_{0}, N,\eta)$ develops an asymptotic singularity as
$\eta \to 0$, blowing up for any $\tau \neq \|x_{0}\|_{1}$, where
$\hat R(\|x_{0}\|_{1}; x_{0}, N,\eta)$ attains minimax order-optimal
error. Numerical simulations support that \newline$\hat R(\tau; x_{0}, N, \eta)$
develops cusp-like behaviour in the low-noise regime, which agrees with the
asymptotic singularity of Theorem \ref{thm:constr-pd}. Notably, {\lspd} parameter
instability manifests in very low dimensions relative to practical problem
sizes. Outside of the low-noise regime, {\lspd} appears to exhibit better
parameter stability, as exemplified in Figure \ref{fig:parameter-stability}.

In section \ref{sec:param-inst-qppd} and section \ref{sec:qppd-numerics} we observe that
{\qppd} exhibits left-sided parameter instability in the low-noise regime.  When
$\lambda < \bar \lambda$ we prove that $R^{\sharp}(\lambda; s, N)$ scales
asymptotically as a power law of $N$. The suboptimal scaling of the risk
manifests in relatively higher dimensional problems, as suggested by
Figure \ref{fig:qppd-instability-a}. Minimax order-optimal scaling of the risk when
$\lambda \geq \bar \lambda$ is clear from Figure \ref{fig:qppd-instability-b}. The
numerics of section \ref{sec:numerical-results} support that {\qppd} is generally
the most stable of the three programs considered.

In section \ref{sec:param-inst-bppd} and \ref{ssec:bppd-numerics} we observe
that {\bppd} exhibits parameter instability in the very sparse regime. Notably,
$\tilde R(\sigma; x_{0}, N, \eta)$ is maximin suboptimal for \emph{any} choice
of $\sigma > 0$ for $s / N$ sufficiently small. This behaviour is supported by
Figure \ref{fig:bppd-numerics-a}, in which the best average loss of {\bppd} is
a $82.2$ times worse than that for {\lspd} and {\qppd}.  Further, the average
loss for {\bppd} exhibits a clear cusp-like behaviour in Figure
\ref{fig:bppd-numerics-a}, like for that of {\lspd}, which would be an
interesting object of further study. Outside of the very sparse regime, {\bppd}
appears to exhibit parameter stability, as exemplified in Figure
\ref{fig:parameter-stability}.

In section \ref{sec:realistic-denoising} we portray how estimators behave as a
function of the normalized parameter for each program. We show the kinds of
pathologies from which these estimators suffer in unstable regimes, and
demonstrate that estimators for compressed sensing problems can exhibit similar
pathologies (section \ref{sec:lasso-example}). These simulations support the
intuition that our theory may be extended to the compressed sensing setting.

Finally, we demonstrated the usefulness of Lemma \ref{lem:projection-lemma}. By
this result, the size of $\tilde x(\eta \sqrt N)$ controls the size of
$\tilde x (\sigma)$ for $\sigma \leq \eta\sqrt N$ when $x_{0} \equiv 0$. This
was key to demonstrating risk suboptimality for underconstrained
{\bppd}. Moreover, Lemma \ref{lem:projection-lemma} was used to prove
$\hat R(\tau; \tau x_{0}, N, \eta)$ is an increasing function of $\tau$ when
$\|x_{0}\|_{1} = 1$. Thus, the projection lemma was particularly effective for
proving minimax order-optimality of $R^{*}(s, N)$.

Future works include extending the main results to the CS set-up and to more
general atomic norms. These results may also extend to ones under more general
noise models. Some of these extensions are in preparation by the
authors. Lastly, it would be interesting to see what role parameter instability
might play in proximal point algorithms and those algorithms relying on proximal
operators. Conversely, it would be useful to understand rigorously when a PD
program exhibits parameter instability, and to determine systematically the
regime in which that instability arises.

\section*{Funding}
\label{sec:funding}

This work was supported by the Natural Sciences and Engineering Research Council of Canada (NSERC) [CGSD3-489677 to A.B., 22R23068 to Y.P., 22R82411 to O.Y., 22R68054 to O.Y.]; and the Pacific Institute for the Mathematical Sciences (PIMS) [CRG 33: HDDA to Y.P., CRG 33: HDDA to O.Y.]. 

% A. Berk is partially supported by a Natural Sciences and Engineering Research
% Council of Canada (NSERC) Canada Graduate Scholarship --- Doctoral Program
% (CGSD3-489677). Y. Plan is partially supported by an NSERC Discovery Grant
% (22R23068), Tier II CRC in data science and PIMS CRG 33: High-Dimensional Data
% Analysis. O. Yilmaz is funded in part by an NSERC Discovery Grant (22R82411),
% an NSERC Accelerator Award (22R68054) and PIMS CRG 33: HDDA. 

\section*{Acknowledgements}
\label{sec:acknowledgements}

We would like to thank Dr. Navid Ghadermarzy for a careful reading of the manuscript.

\bibliographystyle{plain}
\bibliography{references}

\end{document}